\newcommand*\samethanks[1][\value{footnote}]{\footnotemark[#1]}
\title{\LARGE{A Meta-Complexity Characterization of\\ Minimal Quantum Cryptography}}
\author{
    Bruno Cavalar\thanks{University of Oxford}, 
    Boyang Chen\thanks{Tsinghua University}, 
    Andrea Coladangelo\thanks{University of Washington}, 
    Matthew Gray\samethanks[1], 
    \\
    Zihan Hu\thanks{EPFL}, 
    Zhengfeng Ji\samethanks[2], 
    Xingjian Li\samethanks[2]
}
\begin{document}
\maketitle

\begin{abstract}
We give a meta-complexity characterization of $\EFI$ pairs, which are
considered the ``minimal'' primitive in quantum cryptography (and are equivalent to quantum commitments). More precisely, we show that
the existence of $\EFI$ pairs is equivalent to the following: there exists a
non-uniformly samplable distribution over pure states such that the problem
of estimating a certain Kolmogorov-like complexity measure is hard given a
\emph{single copy}. %

A key technical step in our proof, which may be of independent interest, is
to show that the existence of $\EFI$ pairs is equivalent to the existence of
\emph{non-uniform} single-copy secure pseudorandom state generators (nu
1-PRS). As a corollary, we get an alternative, arguably simpler, 
construction of a universal $\EFI$ pair.

\end{abstract}

\tableofcontents

\section{Introduction}

What is the minimum complexity assumption that implies cryptography? 
This
question has proved extremely fruitful in the classical world and has
sparked recent works in the new worlds of quantum complexity and
cryptography~\cite{DBLP:conf/eurocrypt/CavalarGGH25,DBLP:conf/crypto/HirokaM25, DBLP:conf/stoc/KhuranaT25}. In the classical world, the central cryptographic primitive of study is the one-way function ($\OWF$). One-way functions (functions that are efficiently computable, but hard to invert) are natural, their existence is implied by the security of almost all cryptographic primitives and schemes\footnote{With the exceptions of indistinguishability obfuscation and information theoretic primitives like secret sharing.}, and their existence is equivalent to the existence of the wide variety of cryptographic primitives and schemes found in the crypto-complexity class ``Minicrypt''~\cite{hastadPseudorandomGeneratorAny1999, impagliazzoPseudorandomGenerationOneway1989, DBLP:conf/coco/Impagliazzo95, goldreichNoteComputationalIndistinguishability1990}. Since an $\NP$ oracle can invert one-way functions, it is clear that $\P \neq \NP$ is implied by non-trivial cryptography. But it is much less clear whether it in turn implies anything non-trivial. Proving this converse statement (that $\P \neq \NP$ implies one-way functions) remains the ``holy grail''~\cite{DBLP:conf/crypto/LiuP23} of complexity-theoretic cryptography, and would result in the elimination of Heuristica and Pessiland, two of Impagliazzo's ``five worlds''~\cite{DBLP:conf/coco/Impagliazzo95}.

There have been three main approaches to this question. The first has been to find concrete problems that are believed to be hard, and to build one-way functions from them. These problems include factoring~\cite{DBLP:journals/cacm/RivestSA78}, discrete logarithm~\cite{DBLP:journals/tit/DiffieH76}, and learning with errors~\cite{DBLP:conf/stoc/Regev05}. This approach has enabled our current world of widespread practical cryptography, but is unlikely to ultimately answer our question. While the hardness of these problems does imply the existence of one-way functions, their hardness is not thought to be implied by the existence of OWFs, and they do not seem to capture anything fundamental about computation or complexity. Consequently they are very unlikely to prove to be minimal.

The second approach is due to Levin, who showed that there exists a
specific universal OWF $U$ that is one-way so long as any OWF exists at
all~\cite{Levin87}. This does, to some extent, provide an answer to our
question. The existence of this (or any other) universal one-way functions
is exactly equivalent to the existence of one-way functions at all, and so
``characterizes'' their existence. However these universal one-way
functions are quite unnatural, their hardness has not been of much
independent interest, and they are only ``weakly'' one-way.
Moreover they are difficult to study, and researchers have not been able to
directly connect them to more fundamental complexity questions.

Both of these approaches then seem unlikely to provide a satisfactory answer to our starting question, or bring us closer to a holy grail style result. However, over the last six years, there has been surprising progress coming from a new approach, meta-complexity.

\paragraph{Meta-complexity.}

Meta-complexity refers to the study of problems such as the Minimum Circuit
Size Problem (MCSP) and Kolmogorov complexity estimation, which are themselves concerned with the complexity of other
problems. Some of these problems (such as estimating time-unbounded
Kolmogorov complexity) are undecidable in the worst case, but most (such as
time-bounded Kolmogorov complexity and MCSP) are known to be in NP, but are
not known to be NP-complete. These problems are widely applicable in areas
including learning~\cite{DBLP:conf/focs/HiraharaN23}, hardness
magnification~\cite{DBLP:conf/innovations/ChenHOPRS20}, and sampling
complexity~\cite{DBLP:journals/eccc/Aaronson10a}, and have been studied in
their own right for decades~\cite{DBLP:journals/annals/Trakhtenbrot84}. One
of those applications, and one of the key motivations for their initial
introduction, is that they can be used to measure the amount of randomness
in specific objects. In particular this means that, in general, the
meta-complexity of random and pseudo-random objects diverges significantly.

Using the hardness of meta-complexity for cryptography is relatively new,
but has developed quickly since Santhanam's result showing that (under a
conjecture) we can base the existence of pseudorandomness on the hardness
of MCSP~\cite{DBLP:journals/eccc/Santhanam19}. Since then, a number of
papers have shown that a wide variety of meta-complexity problems can be
used to characterize one-way functions. These include the hardness of
time-bounded Kolmogorov complexity on the universal
distribution~\cite{DBLP:conf/focs/LiuP20}, the hardness of gap Kolmogorov
complexity on any samplable
distribution~\cite{DBLP:journals/eccc/IlangoRS21}, and ``breakdown of
symmetry of information'' for probabilistic time-bounded Kolmogorov
complexity~\cite{DBLP:conf/stoc/HiraharaILNO23}. 

Unlike the problems studied in the first approach, the hardness of these problems is equivalent to $\OWF$. Unlike the universal functions studied in the second approach, these problems are of significant independent interest, have closely related problems which have been shown to be NP-complete~\cite{DBLP:conf/focs/Ilango20}, and seem more amenable to the kind of worst-case results that are required to achieve the holy grail~\cite{DBLP:conf/crypto/LiuP25}. Since these results also tell us that these problems are easy in a world without one-way functions, they have re-contextualized Pessiland from the worst of all worlds into a ``wonderland for learning''~\cite{DBLP:conf/focs/HiraharaN23}.

However, in one important sense, none of these approaches have actually answered our question, which was ``what is the minimal complexity assumption that implies cryptography?''. The rational for studying $\OWF$ was that they have long been thought to be the minimal cryptographic assumption. However the recent explosion of work on quantum cryptographic primitives has weakened $\OWF$ position as minimal. This has raised the possibility of cryptography from weaker assumptions, including assumptions that do not even imply~$ \P \neq \NP$.

\paragraph{Quantum Cryptography}
Following the introduction of pseudorandom states by Ji, Liu, and Song~\cite{DBLP:conf/crypto/JiL018} a wide variety of works have investigated a new world of quantum cryptographic primitives. To date more than thirty such primitives have been introduced and investigated. Their study is significantly motivated by the results of Kretschmer, which showed that there are oracles relative to which these primitives exist even when $\BQP = \QMA$~\cite{DBLP:conf/tqc/Kretschmer21} or $\P = \NP$~\cite{kretschmerQuantumCryptographyAlgorithmica2022} (a world in which one-way functions do not exist). This world of primitives has been named ``Microcrypt'' --- a world of computational cryptography weaker than one-way functions.

As these primitives have been investigated, the picture that has emerged is
very different from the relatively clean world of classical cryptography.
The vast majority of classical cryptographic primitives are equivalent to
$\OWF$, and those that are not mostly fall into an orderly ladder-like
hierarchy of strictly more powerful primitives.
In contrast the current map of
quantum primitives appears much more complicated (and much less is known).

In a recent work~\cite{DBLP:journals/iacr/GoldinMMY24} Goldin et al. split these primitives into three sub-worlds defined by the classical oracle that can be used to break them. Their first world is ``QuantuMania'', and includes primitives such as quantum-computable post-quantum one-way functions, and efficiently verifiable one-way puzzles. The security of these primitives is defined in terms of a classical-input, classical-output problem in 
$\QCMA$ which is quantumly hard, meaning that these primitives can be broken by $\QCMA$ oracles.

Their second world is ``CountCrypt'' and includes primitives such as
pseudo-random states, one-way state generators, and
one-way puzzles. The security of these
primitives is typically defined in terms a quantum-input, classical output
which is hard given many copies of the quantum input. Since many copies of
the input are available, these problems all reduce, via shadow tomography,
to one-way puzzles, which can be broken with a $\PP$
oracle~\cite{DBLP:journals/quantum/CavalarGGHLP25}. Consequently, all
CountCrypt primitives can be broken by $\PP$ oracles. Recent
works~\cite{DBLP:conf/eurocrypt/CavalarGGH25, DBLP:conf/crypto/HirokaM25}
showed a meta-complexity characterization for one-way puzzles, 
and 
left open the question of whether
meta-complexity characterizations could be extended to cryptographic
primitives below one-way puzzles. 
Moreover,
separation between primitives in CountCrypt have been recently
shown~\cite{chen2025power,ananth2024cryptography,DBLP:conf/eurocrypt/BostanciCN25},
which illustrates the
depth of CountCrypt.

Our work focuses on their final world, ``NanoCrypt'' which contains
primitives whose security is defined in terms of hardness of a single-copy
quantum input, classical output problem. These primitives are not known to
be breakable by any classical oracle, and there exists an oracle under
which these primitives are secure against a single query to any classical
oracle~\cite{DBLP:conf/stoc/LombardiMW24}. Moreover, formal black-box
separations exist between NanoCrypt and CountCrypt~\cite{DBLP:conf/eurocrypt/BostanciCN25},
giving NanoCrypt the title of ``minimal'' world of quantum
cryptography.
Within NanoCrypt, $\EFI$ pairs, first proposed by Brakerski,
Canetti, and Qian~\cite{BCQ22}, are the primitive of interest if one wishes
to characterize minimal quantum cryptography, and thus answer the opening
question.

$\EFI$ pairs were first proposed by Brakerski, Canetti, and
Qian~\cite{BCQ22}. An $\EFI$ pair consists of a pair of mixed state families
which are efficient (E) to sample, are statistically far (F), and are
computationally indistinguishable (I). They have been shown to be implied
by almost all quantum cryptographic primitives, and to be equivalent to quantum
bit commitments and quantum multiparty computation, while being simple and
natural. In the handful of years since their introduction they have quickly
become the consensus choice for the minimal quantum cryptographic
primitive. Thus, in this work, we focus on the
following question:\\
\begin{center}
\emph{Can we find a meta-complexity problem whose hardness is equivalent to the existence of the minimal primitive in quantum cryptography, namely $\EFI$ pairs?}
\end{center}

\subsection{Our contributions}
Meta-complexity characterizations are made up of a pair of implications.
First, that the existence of a cryptographic primitive implies that some
meta-complexity problem is hard; second, that the meta-complexity
problem being hard implies the existence of that primitive.  
The first kind of implication is typically proven by
taking advantage of meta-complexity as a measure of randomness. In a
paradigmatic example of this, Ilango et
al.~\cite{DBLP:journals/eccc/IlangoRS21} achieve a meta-complexity
characterization of $\OWF$ in terms of a Kolmogorov complexity estimation
problem. The key is to show that a Kolmogorov complexity estimation
algorithm can distinguish the outputs of a $\PRG$ from uniformly random
strings~\cite{DBLP:journals/eccc/IlangoRS21}. 
Since $\PRG$s are equivalent to
$\OWF$ by the well-known work of H{\aa}stad et
al.~\cite{hastadPseudorandomGeneratorAny1999},
breaking a $\PRG$ is sufficient to break $\OWF$s. 
In the previous quantum characterizations of
meta-complexity~\cite{DBLP:conf/eurocrypt/CavalarGGH25,DBLP:conf/crypto/HirokaM25},
the latter step is achieved by showing that a quantum Kolmogorov complexity
estimation algorithm can distinguish between the outputs of a
pseudo-entropy generator (which can be built from one-way puzzles), and the
fully entropic distribution it should be indistinguishable~from.

So if we are following this roadmap, our first step should be to build some
kind of pseudo-entropy generator from $\EFI$ pairs (which we will refer to as
$\EFI$ from here on, for short). Before our work, it was
known that single-copy secure pseudorandom state generators ($\OnePRS$) imply
$\EFI$, and that they are a NanoCrypt primitive. But showing that $\EFI$ imply
any kind of pseudo-entropy has been an open problem. We spend the first
section of this paper showing our first main result: while $\EFI$ may not
imply $\OnePRS$, they do imply a (only slightly) non-uniform version.

\begin{theorem}\label{thm:intro-EFI-1PRS}
$\EFI$ exist if and only if non-uniform $\OnePRS$ with advice size
  $O(\log \lambda)$ exist.
\end{theorem}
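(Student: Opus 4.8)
The plan is to prove the two directions separately, organized around the following reformulation of the primitive: a single-copy pseudorandom state generator is, by definition, an efficiently samplable ensemble of pure states $\{\ket{\psi_k}\}_k$ keyed by a short seed $k$, whose average $\rho_{\mathrm{avg}} = \sum_k p_k \ket{\psi_k}\bra{\psi_k}$ is computationally indistinguishable, given a \emph{single} copy, from a Haar-random state --- equivalently, from the maximally mixed state $I/2^m$ on the $m$-qubit output. The content of the primitive is that, because the seed is short, $\rho_{\mathrm{avg}}$ has rank at most $2^{|k|} \ll 2^m$ and is therefore \emph{statistically far} from $I/2^m$, yet \emph{computationally close} to it. In other words, a $\OnePRS$ is essentially a samplable state that is pseudo-maximally-mixed, together with a pure-state realization; this viewpoint makes both the source and the target of the equivalence look like $\EFI$ pairs, and it is what I would use throughout.

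For the direction from $\OnePRS$ to $\EFI$, observe that for the correct advice $z^\ast$ the pair $(\rho_{\mathrm{avg}}^{z^\ast},\, I/2^m)$ is immediately an $\EFI$ pair: it is efficiently samplable, statistically far by the rank argument above, and computationally indistinguishable by single-copy security. The only issue is that we do not know $z^\ast$; but since the advice has length $O(\log\lambda)$ there are only $\mathrm{poly}(\lambda)$ candidate strings $z$, and a crucial feature of our candidates is that \emph{every} one of them, $(\rho_{\mathrm{avg}}^{z},\, I/2^m)$, is statistically far (the rank bound holds for all $z$) --- only indistinguishability is in doubt for the wrong advice. I would therefore combine the candidates with the standard parity/secret-sharing $\EFI$ combiner, committing to the $i$-th share of the bit with the $i$-th candidate. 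Indistinguishability of the combined pair follows from a hybrid argument as soon as one candidate (namely $z^\ast$) is indistinguishable, and statistical farness follows because \emph{every} candidate is far: an inefficient measurement may first approximately recover the remaining shares from their far registers and then distinguish on the good register. This yields a uniform $\EFI$ pair.

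For the converse, $\EFI$ to $\OnePRS$, I would first extract quantum pseudo-entropy from the pair $\rho_0,\rho_1$. Consider the classical--quantum states $\tau = \frac12\ket{0}\bra{0}\otimes\rho_0 + \frac12\ket{1}\bra{1}\otimes\rho_1$ and $\tau' = \frac12\bigl(\ket{0}\bra{0}+\ket{1}\bra{1}\bigr)\otimes\bar\rho$ with $\bar\rho = (\rho_0+\rho_1)/2$, both efficiently samplable. They are computationally indistinguishable, since conditioned on each value of the classical bit the quantum parts are $\rho_b$ versus $\bar\rho$, and $\rho_0 \approx_c \rho_1$ forces $\rho_b \approx_c \bar\rho$. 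On the other hand, their von Neumann entropies differ by exactly the Holevo quantity of the ensemble $\{(\frac12,\rho_0),(\frac12,\rho_1)\}$, which by a quantum Pinsker-type bound is at least $1/\mathrm{poly}(\lambda)$ whenever $\rho_0,\rho_1$ are statistically far. Thus $(\tau,\tau')$ is a samplable \emph{pseudo-entropy pair}: computationally indistinguishable, with a noticeable real-entropy gap, the high-entropy state being $\tau'$.

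The main work, and the step I expect to be the principal obstacle, is converting this pseudo-entropy gap into a genuine $\OnePRS$ --- the single-copy quantum analogue of the H{\aa}stad--Impagliazzo--Levin--Luby construction of a generator from pseudo-entropy. Here the $O(\log\lambda)$ advice enters: it encodes the entropies $S(\tau),S(\tau')$ quantized to precision $1/\mathrm{poly}(\lambda)$, each a number in a range of size $\mathrm{poly}(\lambda)$ and hence specifiable with $O(\log\lambda)$ bits. Using these values I would project onto a typical subspace and apply a quantum-proof randomness extractor (a short-seed approximate unitary design) to \emph{flatten} the state: because $\tau'$ genuinely carries entropy $S(\tau')$, flattening sends it to a state statistically close to maximally mixed on roughly $S(\tau')$ qubits, while by indistinguishability the same efficient procedure applied to $\tau$ yields a state computationally indistinguishable from maximally mixed; crucially, the real state still has rank only about $2^{S(\tau)} \ll 2^{S(\tau')}$, so it remains statistically far from maximally mixed --- exactly the pseudo-maximally-mixed property we need. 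Measuring the purifying register of this state then produces the required short-seed ensemble of pure states with the correct average, i.e.\ a $\OnePRS$. The delicate points are controlling the smoothing between von Neumann and min-entropy so that the gap survives, keeping the extractor seed and typical-subspace overhead small enough to preserve statistical farness, and verifying that indistinguishability is preserved through the flattening map.
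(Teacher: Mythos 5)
Your high-level plan for the harder direction coincides with the paper's --- your $(\tau,\tau')$ is exactly the paper's entropic $\EFI$ pair (\cref{thm:EEFI}), your Pinsker-based entropy-gap bound is fine, and your ``flattening with $O(\log\lambda)$ bits of advice encoding the entropy'' is essentially the paper's construction of non-uniform pseudo-mixed states (\cref{thm:EEFI-PMS}) --- but the final step of that direction has a genuine gap. Measuring the purifying register of the flattened state does \emph{not} produce a $\OnePRS$. A $\OnePRS$ requires a QPT algorithm $G$ with $G(1^\lambda,k)=\ket{\psi_k}$, i.e.\ the state must be deterministically regenerable from the key; the ensemble obtained by measuring a purification is only what the paper calls a single-copy samplable family (\cref{def:single-copy-samplable-family}), a strictly weaker object that the paper has to treat separately in \cref{sec:unkeyed}. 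Moreover there is no stretch: an extractor bounds the \emph{entropy} of its output, not its rank (a rank bound of $2^{S(\tau)}$ does not follow from an entropy bound of $S(\tau)$), so the flattened state has essentially full rank and the purifying register --- hence your ``seed'' --- is as long as the output. Nor can you repair this by putting a typical-subspace projection into the construction: that projection is not efficiently implementable, and computational indistinguishability survives only efficient maps (the paper uses the AEP bounds of \cref{cor:min-entropy-explicit-bound} purely in the analysis, never as an operation on the state). The missing idea is the paper's \cref{alg:one-prs}: keep a \emph{fixed}, efficiently preparable purification $\ket{\Psi_\lambda}_{AB}$ of the pseudo-mixed state, and let the key be the classical description of a unitary (a Clifford from a $2$-design plus a quantum one-time pad) applied to the purifying register $B$; the decoupling theorem with $A$ as side information makes $\E_k \ketbra{\phi_k}{\phi_k}$ close to maximally mixed, and the key-length accounting --- which exploits $S(B|A)_{\ket{\Psi_\lambda}}=-S(\rho_\lambda)$, i.e.\ the entropy deficit --- is precisely what yields key length strictly smaller than output length.

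Your converse direction ($\OnePRS \Rightarrow \EFI$) also has a quantitative flaw. For the XOR/secret-sharing combiner the trace distance is exactly multiplicative: $\sigma_0-\sigma_1 = 2^{-(t-1)}\bigotimes_i \bigl(\rho_0^{(i)}-\rho_1^{(i)}\bigr)$, hence $D(\sigma_0,\sigma_1)=\prod_i D\bigl(\rho_0^{(i)},\rho_1^{(i)}\bigr)$. The rank bound only guarantees $D\bigl(\rho_{\mathrm{avg}}^{z}, I/2^m\bigr)\ge 1-2^{\ell-m}$, and the $\OnePRS$ definition allows a stretch of a single qubit, so each factor may be as small as $1/2$; with $t=\poly(\lambda)$ candidates the combined pair is then only $2^{-t}$-far, i.e.\ statistically negligibly far, and is not an $\EFI$ at all. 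This is fixable --- first amplify the stretch by parallel repetition, which is sound for single-copy security because one copy of independent Haar states and one copy of a global Haar state are both maximally mixed (cf.\ \cref{cor:1prs-stretch}); afterwards every candidate is $(1-\negl)$-far and your share-recovery argument goes through --- but the step must be there. The paper instead avoids combiners entirely: it mixes over \emph{all} programs of bounded length, observes that this gives a weak $\EFI$ (statistically far by a rank bound, and distinguishable with advantage at most $1-\Omega(1/\lambda)$ since an inverse-polynomial fraction of programs output $\OnePRS$ states), and then invokes weak-$\EFI$ amplification (\cref{thm:WEFI-EFI}).
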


Our first result immediately implies that, for any notion of ``complexity'' of
states that distinguishes between ``pseudorandom states'' and Haar-random
states, estimating this notion given a single copy of a state
must be hard on non-uniform state families if EFIs exist.
Between 2000 and 2004, four quantum generalizations of Kolmogorov
complexity~\cite{DBLP:conf/coco/Vitanyi00, Gac01, mora2004algorithmic,
DBLP:conf/coco/BerthiaumeDL00} were introduced, each of which extrapolated
one of the equivalent ways of thinking about Kolmogorov complexity. All
four notions are such that the Kolmogorov complexity is much higher for
random states than for pseudorandom states. So, assuming $\EFI$ exist, all of
these notions must be hard to estimate (on some state family) given just a single copy. 

However, to prove a meta-complexity characterization, we need to prove the
converse direction: if $\EFI$ do not exist, then we can efficiently
estimate one of these complexity notions given just a single copy. None of
the above notions turn out to have all of the necessary properties for a
proof of this direction. However, G\'{a}cs' notion (which he called
``quantum algorithmic entropy'', and of the four notions above most closely
resembles a measure of entropy) has several of them, and we show that a
smoothed version of his notion, which we will refer to as $\mathsf{H}$-complexity in this introduction, turns out to have all of them. Overall,
letting $\GapH$ denote the problem of estimating this notion of quantum
Kolmogorov complexity (given some promise gap), we obtain the following
meta-complexity characterization of $\EFI$.

\begin{theorem}[Informal]
\label{thm:informal-main}$\EFI$ exist if and only if there exists a
    non-uniform family of efficiently samplable states $\{\ket{\psi_k}\}$ such that $\GapH$ is hard on average. 
\end{theorem}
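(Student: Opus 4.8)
The plan is to route both directions through the equivalence with non-uniform single-copy pseudorandom state generators supplied by Theorem~\ref{thm:intro-EFI-1PRS}: since $\EFI$ exist iff nu $\OnePRS$ (with $O(\log\lambda)$ advice) exist, it suffices to prove that nu $\OnePRS$ exist iff $\GapH$ is hard on average over some non-uniform family of efficiently samplable pure states. The intuition driving the whole argument is that $\mathsf{H}$-complexity separates ``genuinely complex'' states (Haar-random or random basis states) from ``succinctly describable'' ones (outputs of an efficient generator on a short seed), so deciding $\GapH$ is precisely the task of telling pseudorandom-looking states from truly complex ones.

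For the forward direction (nu $\OnePRS \Rightarrow \GapH$ hard) I would build the hard samplable family as a balanced mixture of two sub-ensembles on $n = \mathrm{poly}(\lambda)$ qubits, with $n$ chosen much larger than the $\OnePRS$ seed length. With probability $1/2$ output a generator state $\ket{\phi_s}$ for a uniform seed $s$; with probability $1/2$ output a uniformly random computational basis state $\ket{x}$, $x\in\{0,1\}^n$. Generator states satisfy $\mathsf{H}(\ket{\phi_s}) \le |s| + O(\log\lambda)$ (they are produced by a fixed efficient circuit from a short seed plus the advice), so they are YES-instances, while a random $\ket{x}$ has $\mathsf{H}(\ket{x}) \ge n - O(1)$ with overwhelming probability (an incompressible $x$), so it is a NO-instance. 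The point is that both sub-ensembles are single-copy computationally indistinguishable from the maximally mixed state $I/2^n$ --- the generator ensemble by $\OnePRS$ security, and the random-basis ensemble because its density matrix equals $I/2^n$ exactly --- and hence indistinguishable from each other. Any efficient single-copy solver for $\GapH$ achieving non-negligible advantage on this family would, by reading off its YES/NO verdict, distinguish generator states from basis states, contradicting $\OnePRS$ security. The advice of the nu $\OnePRS$ is simply absorbed into the non-uniform advice of the samplable family.

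For the converse (no nu $\OnePRS \Rightarrow \GapH$ easy on every samplable family) I would first reduce the decision to a single overlap measurement. Using the smoothing built into $\mathsf{H}$, I would argue that deciding $\GapH(\ket{\psi})$ amounts to testing whether $\langle\psi| E |\psi\rangle$ is close to $1$ or close to $0$, where $E$ is (an approximation of) the projector onto the span of states of $\mathsf{H}$-complexity at most $s$: low-complexity YES-instances lie almost entirely inside this subspace, while high-complexity NO-instances are almost orthogonal to it (this last implication is exactly the \emph{robust} property that the smoothed G\'{a}cs notion has and the earlier three notions lack). Because the gap in acceptance probability is between $\approx 1$ and $\approx 0$ rather than around $1/2$, a \emph{single} application of the measurement $\{E, I-E\}$ to the one available copy of $\ket{\psi}$ suffices to decide, sidestepping the impossibility of amplifying by repeated measurement. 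I would then implement $\{E, I-E\}$ efficiently by block-encoding the efficiently samplable universal low-complexity ensemble and applying a quantum singular-value transformation to project onto its high-eigenvalue subspace, consuming only the single copy of $\ket{\psi}$.

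The \textbf{main obstacle} is the correctness of this single-copy measurement, and this is where the no-$\OnePRS$ hypothesis is indispensable. The resource-bounded universal ensemble used to implement $E$ could in principle accept a high-$\mathsf{H}$ NO-instance if some efficiently samplable ensemble of complex-looking states had large overlap with it; but such an ensemble would be efficiently samplable (hence of low resource-bounded complexity) while masquerading as high-complexity to every single-copy test, i.e.\ it would be a nu $\OnePRS$. Ruling this out under the hypothesis is precisely what forces $\langle\psi|E|\psi\rangle \approx 0$ on NO-instances over any samplable family, and hence makes the one-shot verdict correct with overwhelming probability. Making this quantitative --- choosing the gap of $\GapH$, the smoothing parameters, and the threshold so that ``no nu $\OnePRS$'' yields an acceptance probability bounded away from $1/2$ by a constant on each side of the promise --- is the crux, and is exactly the step for which the special properties of the smoothed G\'{a}cs complexity, rather than the other quantum Kolmogorov notions, are required.
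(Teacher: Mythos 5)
Your forward direction (nu $\OnePRS \Rightarrow \GapH$ hard) is essentially the paper's: the same balanced mixture of generator states and random computational-basis states, with low complexity certified by the short seed and high complexity by incompressibility, and a $\GapH$ solver turned into a $\OnePRS$ distinguisher (the paper's \cref{thm:efi-nu-gaph}, using \cref{lem:knet0-Heps} to get the smoothed bound $\Hbar^{1-\eps} \le \Knet + \log(1/\eps)$, a step you gloss over but which is routine).

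The converse is where there is a genuine gap. Your plan is to implement, as an actual efficient measurement, (an approximation of) the projector $E$ onto the span of states of $\Hbar$-complexity at most $s$, via block-encoding and QSVT. This cannot work as stated. First, the span is defined through the universal semi-density matrix $\udm$, a time-unbounded object (a $2^{-|p|}$-weighted sum over all halting programs); it is not efficiently samplable, and no efficient block-encoding of it exists --- if it did, $\GapH$ would be unconditionally easy and the theorem's forward direction would refute $\EFI$ outright. Second, even if you substitute a time-bounded surrogate ensemble, the relevant eigenvalue threshold is $2^{-r}$ with $r = \Theta(n)$, and separating eigenvalues $\ge 2^{-r}$ from those $\le 2^{-r-\Delta}$ by polynomial transformations requires degree on the order of $2^{r}$: QSVT cannot resolve exponentially small spectral thresholds in polynomial time. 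Third, your use of the hypothesis to patch correctness is not sound: an efficiently samplable ensemble that fools your surrogate measurement is only shown to fool \emph{that one measurement}, which falls far short of being computationally indistinguishable from Haar-random, so you cannot conclude it is a nu $\OnePRS$; and the symmetric failure mode on YES instances (states with a short but astronomically slow program, hence low true $\Hbar$ but no overlap with any time-bounded ensemble) is not addressed at all.

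The paper's converse avoids building any universal measurement. It works with the family's own mixture $\rho = \E_k \ketbra{\psi_k}{\psi_k}$, which \emph{is} efficiently samplable by assumption: \cref{lem:H-low-entropy,lem:H-high-entropy} convert the complexity promise into an entropy statement ($\rho_{\low}$ has low von Neumann entropy, $\rho_{\high}$ has high min-entropy), \cref{lem:extractor-channel} pushes this through an efficiently implementable unitary-$2$-design extractor $\Ext_\ell$ applied to the single copy, so that $\Ext_\ell(\rho)$ is statistically far from the maximally mixed state $\pi$. Only then does the hypothesis enter, and in a different role than you assign it: since $(\Ext_\ell(\rho), \pi)$ is an efficiently samplable, statistically far pair, the non-existence of $\EFI$ guarantees an efficient distinguisher $\Pi$ for it, and \cref{alg:estimate-gaph} runs that $\Pi$ on $\Ext_\ell(\ketbra{\psi_k}{\psi_k})$. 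The hypothesis is used to \emph{conjure a family-dependent distinguisher}, not to certify the correctness of a family-independent universal test. (Your structural intuition about overlap with the span of easy states is not wrong --- the paper proves exactly such statements in \cref{thm:high-complexity-span,thm:low-complexity-span} --- but there too the overlap bounds are only ever used to derive entropy bounds feeding the extractor argument, never to implement a measurement.)
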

In conjunction, we provide a second characterization of $\EFI$ pairs in terms of the hardness of estimating a different, but related, notion of Kolmogorov complexity. We refer to the associated problem as $\GapU$. The formal definitions of these notions and the corresponding estimation problems can be found in Sections~\ref{sec:kolmogorov}.

From the characterizations above, we get several corollaries. First, because of the structure of our
proof, one of the distributions that serves as half of our $\EFI$ pair is also a
$\OnePRS$. This means that, if our initial hard-on-average $\GapH$ instance was uniformly sampled,
this results in a uniform $\OnePRS$. Conversely, since $\OnePRS$ are, almost by definition, hard-on-average $\GapH$ instances, this gives us a meta-complexity characterization of
$\OnePRS$ as well.
\begin{corollary}
    $\OnePRS$ exist if and only if there exists a uniform family of efficiently
    samplable states $\{\ket{\psi_{k}}\}$ such that, for some uniformly
    computable gap, the problem $\GapH$ is hard on average.
\end{corollary}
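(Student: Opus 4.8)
The plan is to obtain both directions by specializing the general characterization of Theorem~\ref{thm:informal-main} (and the construction underlying it) to the uniform setting; the two directions rest on a single complexity dichotomy — that states generated by a fixed polynomial-size circuit have low $\mathsf{H}$-complexity, while Haar-random states have $\mathsf{H}$-complexity close to the qubit count — but exploit it in opposite directions.

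For the ``only if'' direction I would argue directly that a uniform $\OnePRS$ yields a hard $\GapH$ instance, following the standard ``pseudorandomness breaks meta-complexity estimation'' template. Let $G$ be a uniform $\OnePRS$ with uniformly sampled keys, and let $\{\ket{\psi_k}\}$ be its output ensemble; this is by construction a uniform, efficiently samplable family. Each $\ket{\psi_k}$ is produced by a fixed uniform polynomial-size circuit applied to the key $k$, so its $\mathsf{H}$-complexity is bounded above by a quantity much smaller than the qubit count $n$, whereas a Haar-random state on $n$ qubits has $\mathsf{H}$-complexity close to $n$ with overwhelming probability. Both thresholds are uniformly computable functions of $\lambda$, defining a uniformly computable gap. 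If some efficient algorithm solved $\GapH$ on $\{\ket{\psi_k}\}$ with non-negligible advantage, composing it with the obvious ``low $\mapsto$ pseudorandom, high $\mapsto$ Haar'' decision rule would produce a single-copy distinguisher between the $\OnePRS$ output and Haar-random states, contradicting the single-copy security of $G$.

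For the ``if'' direction I would not reprove anything from scratch but read off the needed structure from the proof of Theorem~\ref{thm:informal-main}. That proof, given a hard-on-average $\GapH$ instance on a family $\{\ket{\psi_k}\}$, constructs an $\EFI$ pair one of whose halves is precisely the efficiently samplable ``low-complexity'' family and whose other half is (close to) Haar-random. The content I would extract is that this low-complexity half is \emph{already} a $\OnePRS$: it is efficiently samplable, and the very $\EFI$ indistinguishability between it and its Haar-random partner is exactly single-copy pseudorandomness. The remaining point, specific to this corollary, is uniformity: the $O(\log\lambda)$ advice in the general Theorem~\ref{thm:intro-EFI-1PRS} serves only to encode the non-uniform choice of hard family together with the gap thresholds, so starting from a uniformly sampled family with a uniformly computable gap leaves no advice to encode, and the resulting $\OnePRS$ is uniform.

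The main obstacle I expect lies in the ``if'' direction: certifying that the distinguished half of the $\EFI$ pair is indistinguishable from \emph{Haar-random}, rather than merely from some other efficiently samplable high-entropy distribution, since this is precisely what the definition of $\OnePRS$ demands. This requires the ``high-complexity'' half used in the construction of Theorem~\ref{thm:informal-main} to be Haar-random, or statistically close to it, on a single copy. A secondary technical point, relevant to the forward direction, is checking that the smoothing built into $\mathsf{H}$-complexity spoils neither the upper bound for $\OnePRS$ states nor the lower bound for Haar-random states, so that the uniformly computable gap is respected with overwhelming probability.
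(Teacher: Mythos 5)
There are genuine gaps in both directions of your proposal.

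\textbf{Forward direction.} Your hard instance is the $\OnePRS$ ensemble $\{\ket{\psi_k}\}$ itself, with Haar-random states invoked only as an external comparison. But hardness of $\GapH$ over a family (\cref{def:strong-hardness-gaph}) is a statement \emph{about states drawn from that family}: the promise requires that, with probability close to $1/2$ each, a sampled state has low smoothed $\Hbar$-complexity or high $\Hbar$-complexity, and the hardness clause asks to distinguish these two conditional distributions. A family consisting solely of $\OnePRS$ outputs contains only low-complexity states, so the promise fails and the hardness statement is vacuous. The paper's instance (\cref{thm:efi-nu-gaph}) is instead the mixture keyed by $k = b \parallel k' \parallel j$: output $\ket{\phi_{k'}}$ when $b=0$ and the computational-basis state $\ket{j}$ when $b=1$. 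The random basis states play the role of your ``Haar-random'' states but are efficiently samplable and actually belong to the family; a $\GapH$ solver then distinguishes $\E_{k'}\ketbra{\phi_{k'}}{\phi_{k'}}$ from $I/2^n$, breaking the $\OnePRS$. (Note also that single-copy indistinguishability from Haar \emph{is} indistinguishability from $I/2^n$, since one copy of a Haar-random state is the maximally mixed state --- this observation dissolves what you flag as your ``main obstacle,'' but your write-up leaves it unresolved.)

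\textbf{Backward direction.} Your claim that the low-complexity half of the constructed $\EFI$ pair ``is already a $\OnePRS$'' misidentifies the construction and skips its hardest step. The $\EFI$ pair obtained from a hard $\GapH$ instance is $\bigl(\Ext_\ell(\rho), \pi\bigr)$ with $\rho = \E_k \ketbra{\psi_k}{\psi_k}$ the mixture over the \emph{whole} family and $\pi$ maximally mixed; its first component is a mixed state --- a pseudo-mixed state in the paper's terminology --- not a keyed family of pure states, and it has no stretch guarantee. Converting a pseudo-mixed state into a $\OnePRS$ is precisely the content of \cref{thm:pms-1prs-uniform}: one purifies, applies a Clifford ($2$-design) extractor and quantum one-time pad to the purifying register, and verifies the stretch via smoothed min-entropy bounds over many copies. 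This is where the paper's corollary gets its converse: if $\OnePRS$ do not exist, then (by this equivalence) pseudo-mixed states do not exist, so $\Ext_\ell(\rho)$ can be efficiently distinguished from $\pi$ (\cref{alg:estimate-gaph}), which solves $\GapH$ --- contradicting hardness. Your uniformity remark is directionally right (the advice in the general theorem encodes an entropy estimate, and here $\ell = (n+r+\Delta/2)/2$ is computable from the uniformly computable gap, so no advice is needed), but without the pseudo-mixed-state intermediary and the extractor-based upgrade, the backward direction of your proof does not go through.
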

Second, one direction of Theorem~\ref{thm:informal-main} does not rely on the fact that the state family $\{\ket{\psi_{k}}\}$ is efficiently samplable. In~\cref{sec:unkeyed}, we obtain characterizations of $\EFI$ from the hardness of $\GapH$ over single-copy samplable state families (i.e., state families for which one can only efficiently sample a single copy of a state).

Third, as a corollary of Theorem~\ref{thm:intro-EFI-1PRS}, we get a concrete construction of a ``universal'' $\EFI$ pair. That is, a concrete construction that is an $\EFI$ pair if and only if an $\EFI$ pair exists at all. This construction is arguably simpler than
the universal construction from~\cite{hiroka2024robust}, which requires the use of a ``combiner''.

As a final contribution, we provide a potentially unifying viewpoint on our meta-complexity characterizations of $\EFI$ pairs. A bit more precisely, let $\Pi_r$ denote the span of states with $\Knet$-complexity of size at most $r$ (where informally, $\Knet$-complexity is a notion introduced by Mora and Briegel~\cite{mora2004algorithmic} that captures the minimum description-length of a program that can output the state). We refer to the latter as the ``span of easy states''. Then, we show the following characterization.
\begin{theorem}[Informal]
    $\EFI$ exist if and only if there exists an efficiently computable $r$ and a non-uniform family of efficiently samplable states $\{\ket{\psi_{k}}\}$ such that it is hard on average to decide whether a state from the family is in $\Pi_{r}$, or has small overlap with $\Pi_{r+\omega(\log n)}$, given a single copy.
\end{theorem}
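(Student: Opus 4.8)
The plan is to prove this characterization by exhibiting a tight two-sided dictionary between the \emph{geometric} promise—membership in $\Pi_r$ versus small overlap with $\Pi_{r+\omega(\log n)}$—and the \emph{analytic} smoothed G\'acs entropy $\mathsf{H}$, and then invoking the $\GapH$ characterization of \Cref{thm:informal-main}. Concretely, I will show that, up to additive $O(\log n)$ slack, the condition ``$\ket{\psi}\in\Pi_r$'' forces $\mathsf{H}(\ket{\psi})$ below roughly $r$, whereas ``$\ket{\psi}$ has small overlap with $\Pi_{r+\omega(\log n)}$'' forces $\mathsf{H}(\ket{\psi})$ above roughly $r+\omega(\log n)$. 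Choosing the gap to dominate the slack, every YES-instance of the subspace problem becomes a YES-instance of $\GapH$ and every NO-instance a NO-instance, so an average-case solver for one promise problem over an efficiently samplable non-uniform family yields one for the other over the same family; \Cref{thm:informal-main} then supplies both directions of the equivalence with $\EFI$.

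The first half of the dictionary is the easier one. Given $r$, the projector $\Pi_r$ onto the span of $\Knet$-easy states is lower-semicomputable, so the normalized state $\sigma_r=\Pi_r/\dim\Pi_r$ has description complexity $O(\log r)=O(\log n)$. Since there are at most $2^{r+1}$ programs of length at most $r$, we have $\dim\Pi_r\le 2^{r+O(1)}$, hence $\bra{\psi}\sigma_r\ket{\psi}\ge 2^{-r-O(1)}$ for every $\ket{\psi}\in\Pi_r$. Because the universal density matrix underlying $\mathsf{H}$ dominates $\sigma_r$ up to a $2^{-O(\log n)}$ factor, this yields $\mathsf{H}(\ket{\psi})\le r+O(\log n)$ for \emph{every} state in $\Pi_r$, irrespective of that state's own $\Knet$-complexity. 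This is exactly the feature that makes $\mathsf{H}$, rather than a $\Knet$-based measure such as the one behind $\GapU$, the right bridge on the YES-side, since a superposition of easy states can itself have large $\Knet$ yet small entropy.

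The second half asserts that small overlap with $\Pi_{r'}$, for $r':=r+\omega(\log n)$, implies $\mathsf{H}(\ket{\psi})\ge r'-O(\log n)$. Writing $\bra{\psi}\mu\ket{\psi}=A+B$, where $A$ collects the contributions of programs of length at most $r'$ and $B$ the remaining tail, the term $A$ is at most the overlap with $\Pi_{r'}$ (every such program outputs a state inside $\Pi_{r'}$) and is therefore small by hypothesis; the work is to bound the tail $B$ by $2^{-r'+O(\log n)}$. With both halves in hand, I set the $\omega(\log n)$ gap to exceed all additive slacks and obtain the claimed average-case equivalence between the subspace problem and $\GapH$ over any fixed efficiently samplable family, after which \Cref{thm:informal-main} closes both implications. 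For the forward direction it is instructive to spell out the hard family concretely: by \Cref{thm:intro-EFI-1PRS}, $\EFI$ yield a non-uniform $\OnePRS$, whose outputs have $\Knet\le r$ (and so lie in $\Pi_r$) for $r$ the generation complexity, while Haar-random states on $n$ qubits have overlap $2^{-\Omega(n)}$ with the at-most-$2^{r'+1}$-dimensional $\Pi_{r'}$ whenever $r'<n$; any single-copy solver would thus separate $\OnePRS$ outputs from Haar-random states, contradicting single-copy pseudorandomness.

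The main obstacle is the tail bound in the second half of the dictionary. A state that is genuinely far from $\Pi_{r'}$ but has $\Knet$ only slightly above $r'$ contributes a term of order $2^{-r'}$ to $B$ through the single program that generates it, so $B$ is \emph{not} automatically negligible, and a naive Kraft-inequality bound on $B$ gives nothing useful. This is precisely where the smoothing in the definition of $\mathsf{H}$ must do the work: it should let us discard, at the cost of a controlled trace-distance error, exactly the low-complexity part of $\mu$ responsible for a large $B$, leaving a tail dominated by high-complexity programs whose weighted overlap with any fixed $\ket{\psi}$ is $2^{-r'+O(\log n)}$. Making this hold while simultaneously (i)~keeping the smoothing parameter from inflating $\dim\Pi_r$ beyond $2^{r+O(1)}$ in the first half, and (ii)~controlling every $O(\log n)$ additive term arising from universality, from smoothing, and from the $\Knet$-to-$\mathsf{H}$ translation, so that a clean $\omega(\log n)$ promise gap survives, is the delicate technical core of the argument.
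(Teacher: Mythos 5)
Your forward direction is essentially the paper's own argument: states output by the non-uniform $\OnePRS$ have $\Knet$ at most $r$ and hence lie in $\Pi_r$, while ``random'' states have small overlap with the low-dimensional $\Pi_{r+\Delta}$ with high probability, so a single-copy solver would break the $\OnePRS$. One slip: the theorem requires the hard family to be \emph{efficiently samplable}, so you cannot put Haar-random states in the family; the paper instead uses uniformly random computational basis states $\ket{j}$ (a single copy of either is the maximally mixed state, and the Markov bound $\sum_j \braket{j|\Pi_{r+\Delta}|j} = \Tr\,\Pi_{r+\Delta} \le 2^{r+\Delta+1}$ gives the same overlap conclusion), which is an easy fix.

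The converse direction, however, has a genuine gap: the second half of your ``dictionary'' is false in the parameter regime the theorem lives in, where the overlap bound $\eps$ is a \emph{constant} (the formal statement, \cref{thm:EFI-span}, takes $\eps < 1/100$). Take $\ket{\psi} = \sqrt{\eps}\,\ket{0^n} + \sqrt{1-\eps}\,\ket{\perp}$ with $\ket{\perp}$ orthogonal to $\Pi_{r'}$ (such $\ket{\perp}$ exists since $\dim \Pi_{r'} \le 2^{r'+1} < 2^n$). Then $\braket{\psi|\Pi_{r'}|\psi} = \eps$, yet $\braket{\psi|\udm|\psi} \ge 2^{-\Knet(\ket{0^n})}\,\eps$, so $\Hbar(\ket{\psi}) \le O(\log n) + \log(1/\eps)$, nowhere near $r' - O(\log n)$. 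So the term you call $A$ is only bounded by the constant $\eps$, not by $2^{-r'+O(\log n)}$, and smoothing cannot repair this: every state near $\ket{\psi}$ retains constant overlap with $\ket{0^n}$ and hence also has $O(\log n)$ complexity. (The tail $B$ is independently uncontrollable: a state whose shortest program has length $r'+1$ can be exactly orthogonal to $\Pi_{r'}$.) Consequently a perfectly legitimate family satisfying the span promise---half its states in $\Pi_r$, half of the form above---violates the $\GapH$ promise outright, so hardness cannot be transported instance-by-instance and \cref{thm:informal-main} cannot be invoked as a black box. The paper's proof avoids individual-state complexity lower bounds entirely and works at the level of the \emph{samplable mixture}: in \cref{lem:span-high-entropy}, the heavy eigenvectors of $\rho = \E_k \ketbra{\psi_k}{\psi_k}$ (eigenvalue at least $2^{\Gamma-s-1}$) are themselves describable by the sampler plus an index, hence have low $\Knet$ and lie in $\Pi_s$; the overlap hypothesis then forces the total weight of $\rho$ on those directions to be at most $\eps$, so by \cref{lem:min-entropy-eigenstate} the mixture is $2\eps$-close to a state of min-entropy at least $s-\Gamma$---a statement that tolerates constant $\eps$ because only closeness of mixtures is needed downstream. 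Combined with the dimension count of \cref{lem:span-low-entropy} for the low side, the paper then reruns the extractor algorithm of \cref{thm:efi-nu-gaph} directly: if $\EFI$ do not exist, the extracted low and high mixtures can be distinguished, which solves the span problem. That mixture-level detour, rather than a sharper tail bound on $\udm$, is the idea your proposal is missing.
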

The notion of ``span of easy states'' is unifying in the following sense. There exists an appropriate ``robust'' version of the latter such that the $\mathsf{H}$-complexity of a state is tightly related to its overlap onto this robust span. Roughly speaking, a state has low $\mathsf{H}$-complexity if and only if it has a large overlap onto the ``robust span of easy states''. We refer the reader to~\cref{sec:24} in the technical overview and~\cref{sec:relating-entropy-span} in the main text for more~details.

\subsection{Connections}

\paragraph{$\OWP \Rightarrow \EFI$.}

Combining this result with the meta-complexity characterizations of $\OWP$ gives an alternate proof that $\OWP$ imply $\EFI$. 
All the notions of meta-complexity considered here 
reduce to (resource-unbounded) Kolmogorov complexity when restricted to classical
strings. 
In~\cite{DBLP:conf/eurocrypt/CavalarGGH25,DBLP:conf/crypto/HirokaM25} they show that $\OWP$ exists if and only if there exists a
distribution
$D$ samplable in quantum polynomial-time
such that $\GapK$ is weakly hard. To show this they first show that that $\OWP$ exists if and only if there exists a 
distribution
$D$ samplable in non-uniform quantum polynomial-time
such that $\GapK$ is hard~\cite[Thm 5.1]{DBLP:conf/eurocrypt/CavalarGGH25}, then they combine over the choice of advice. By interpreting this non-uniform distribution as a mixed
state, we obtain that all the notions discussed in this paper must also be non-uniformly 
hard on average over that mixed state, which in turn implies that $\EFI$
exists.

\paragraph{Unitary Synthesis and Quantum Complexity.} A recent work shows
that there exists an oracle relative to which there exists
an $\EFI$ secure against a
single query to any classical oracle~\cite{DBLP:conf/stoc/LombardiMW24}. 
This has been interpreted (for
instance by~\cite{DBLP:conf/eurocrypt/CavalarGGH25}) as a weak but general
barrier to showing any complexity-theoretic consequences of the existence
of $\EFI$. This work in no way breaks the arguments of Lombardi et al. but
it does show that the existence of $\EFI$ implies that these specific
complexity theoretic problems must be hard on some distribution.

This further motivates the study of a complexity theory focused on
inherently quantum tasks. This inherently or ``properly'' quantum
complexity theory should help us better understand the relationships
between tasks with either quantum inputs, outputs, or both. 

\subsection{Open problems} This work leaves open several problems.

\begin{enumerate}
    \item \textbf{Uniform $\OnePRS$.} We are able to show that $\EFI$ are equivalent to non-uniform $\OnePRS$. 
        However, the amount of non-uniformity is quite small and corresponds
        to the amount of entropy of some efficiently preparable mixed
        state. So, in addition to being quite small, this advice is also
        (inefficiently) computable,  contrasting with the more fundamentally
        non-uniform advice used for instance to show that unary halting
        is in $\mathsf{P/poly}$. Proving that $\EFI$ implies $\OnePRS$
        would significantly clean up NanoCrypt, resulting in a broad
        equivalence that more closely resembles Minicrypt. The most obvious
        route to showing this would be to create a robust combiner for
        single-copy PRS; however,  this seems challenging,  and it is unclear
        which other approaches might work.
    \item\textbf{Other characterizations of $\EFI$.} $\OWF$s have been shown to be
        equivalent to the hardness of a wide variety of distinct
        meta-complexity problems. It seems very possible that $\EFI$ may have
        more equivalences than just those shown in this paper. Unlike
        $\OWP$, for which Kretschmer's oracle separations provide a barrier
        to showing an equivalence with time-bounded Kolmogorov complexity
        or $\MCSP$, there does not seem to be any such barriers for $\EFI$.
    \item\textbf{Applications of the non-existence of $\EFI$.} Using
        meta-complexity, Hirahara and Nanashima have shown that the
        non-existence of $\OWF$s has powerful applications for
        learning~\cite{DBLP:conf/focs/HiraharaN23}. Are there similar
        applications for the non-existence of $\EFI$ for quantum circuit
        learning, state synthesis, or other problems?
    \item\textbf{A uniform characterization.} The non-uniformity which
        remains in our final theorem statement is somewhat unsatisfying and
        does not feel fundamental. Even if an equivalence with $\OnePRS$ stays
        out of reach,  it is possible that we could prove that the existence
        of $\EFI$ is equivalent to some form of hardness of some
        meta-complexity notion on a uniformly samplable distribution.
    \item\textbf{Characterizing one-way state generators.} While we focus on $\EFI$ pairs in this work, as they capture ``minimal'' quantum cryptography, there are other quantum cryptographic primitives that still do not have a meta-complexity characterization, for example one-way
        state generators. Previous work has mentioned this as an open
        problem~\cite{DBLP:conf/eurocrypt/CavalarGGH25}. 
        Either
        giving a characterization or providing a formal barrier to doing so
        would be valuable. In~\cite{batra2024commitments}, it is shown that one-way state generators with inefficient verifiers are equivalent to $\EFI$, so the meta-complexity characterization that we prove in this paper is also a characterization of one-way state generators with inefficient verifiers. But one-way state generators with efficient verifiers are plausibly stronger primitives~\cite{DBLP:conf/eurocrypt/BostanciCN25, behera2025new}.
\end{enumerate}

\section{Technical overview}

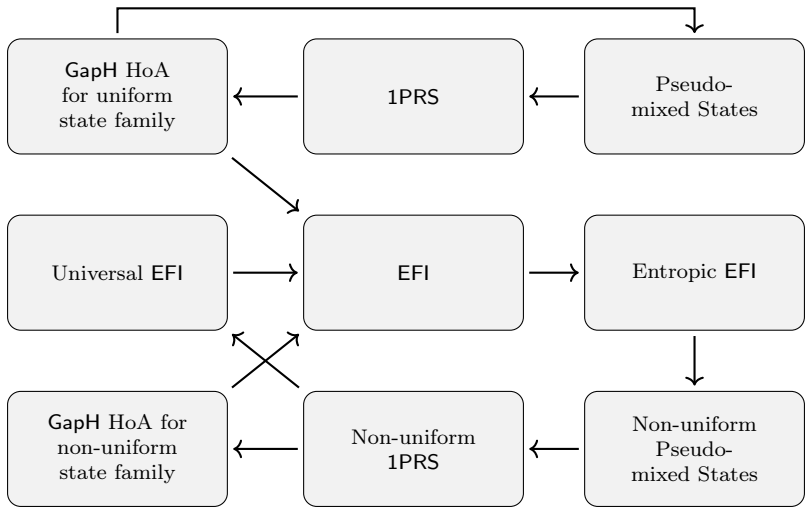
\begin{figure}[ht]
  \centering
  \begin{tikzpicture}[node distance=.8cm,
    arr/.style={->, thick, shorten >= 2pt, shorten <= 2pt},
    arrs/.style={<->, thick, shorten >= 2pt, shorten <= 2pt},
    box/.style={draw, rounded corners=5, fill=gray!10, align=center, font=\scriptsize,
      minimum height=4em, text width=2.5cm, inner sep=6}]

    \node[box] (uEFI) {Universal $\EFI$};
    \node[box, right=1cm of uEFI] (EFI) {$\EFI$};
    \node[box, below=of uEFI] (GapH) {$\GapH$ HoA for non-uniform state family};
    \node[box, right=1cm of GapH] (1PRS) {Non-uniform\\ $\OnePRS$};
    \node[box, right=of 1PRS] (PMS) {Non-uniform\\ Pseudo-mixed States};
    \node[box, right=of EFI] (EEFI) {Entropic $\EFI$};
    \node[box, above=of uEFI] (uGapH) {$\GapH$ HoA for uniform state family};
    \node[box, right=1cm of uGapH] (u1PRS) {$\OnePRS$};
    \node[box, right=of u1PRS] (uPMS) {Pseudo-mixed States};

    \draw[arr] (EFI) -- (EEFI);
    \draw[arr] (EEFI) -- (PMS);
    \draw[arr] (PMS) -- (1PRS);
    \draw[arr] (1PRS) -- (GapH);
    \draw[arr] (GapH.north east) -- (EFI.south west);
    \draw[arr] (uGapH.south east) -- (EFI.north west);

    \draw[arr] (uEFI) -- (EFI);
    \draw[arr] (1PRS.north west) -- (uEFI.south east);
    \draw[arr] (uPMS) -- (u1PRS);
    \draw[arr] (u1PRS) -- (uGapH);

    \draw[arr] (uGapH.north) -- ([yshift=.4cm]uGapH.north) -- ([yshift=.4cm]uPMS.north) -- (uPMS.north);
  \end{tikzpicture}
  \caption{Outline of reductions. ``HoA'' stands for ``hard on average''. The
  six problems at the bottom are all equivalent, as the reductions form
  cycles. The top three are also equivalent and imply the bottom six. }
  \label{fig:reductions}
\end{figure}
In this section, we give a high-level overview of how we prove our results. The starting point for our results is to establish a novel equivalence between $\EFI$ pairs and (non-uniform) single-copy pseudorandom states ($\OnePRS$). We then leverage this equivalence to obtain concrete characterizations of $\EFI$ pairs. %
Recall that a $\OnePRS$ is a family of efficiently generatable quantum states
$\{\ket{\psi_k}\}$ that is  ``stretching'' (i.e., the key size is smaller
than the number of qubits), and computationally indistinguishable, but
statistically far, from a uniformly (Haar) random state given a
\emph{single copy} of the state.\footnote{The expert reader will already
know that the ``multi-copy'' variant of a $\PRS$ is qualitatively stronger
than $\EFI$ pairs (formally, a black-box separation is known), so there is no
hope of proving an equivalence.} 

\subsection{Equivalence of \texorpdfstring{$\EFI$}{EFI} pairs and (non-uniform) \texorpdfstring{$\OnePRS$}{1PRS}} One direction of this equivalence is easy. It is well-known that a $\OnePRS$ implies an $\EFI$ pair, where $\rho_0 = \mathbb{E}_k \ketbra{\psi_k}{\psi_k}$ (where here $\{\ket{\psi_k}\}$ is the family of pseudorandom states) and $\rho_1 = \frac{I}{2^n}$, where $n$ is the number of qubits. 

The converse is the crux. Consider an arbitrary $\EFI$ pair $(\rho_0, \rho_1)$. At a high level, there are two obstacles to turning this into a $\OnePRS$: the first is that $\rho_1$ may not equal $\frac{I}{2^n}$ for $\EFI$ pairs. The second is that instead of a mixed state $\rho_0$, we require a family of \emph{pure} states $\{\ket{\psi_k}\}$ such that $\mathbb{E}_k \ketbra{\psi_k}{\psi_k}$ is computationally indistinguishable from $\frac{I}{2^n}$. So the question becomes: can an arbitrary $\EFI$ pair (where $\rho_1$ may not be maximally mixed) always be ``massaged'' into one where $\rho_1$ is maximally mixed? 

Our approach takes inspiration from Goldreich's construction of a $\PRG$ from an $\EFID$ pair (the classical analogue of an $\EFI$ pair)~\cite{goldreichNoteComputationalIndistinguishability1990}, but requires some uniquely quantum insights\footnote{As pointed out also by Brakerksi, Canetti, and Qian~\cite{BCQ22}, Goldreich's proof relies crucially on the fact that for a $\BPP$ algorithm it is possible to separate the randomness from the rest of the computation. Such techniques cannot work for quantum algorithms, as observed also in the work of Aaronson, Ingram, and Kretschmer comparing $\BPP$ and $\BQP$~\cite{aaronsonAcrobaticsBQP2021}.}. The high-level outline is the following:
\begin{itemize}
\item First, show that, starting from $(\rho_0, \rho_1)$, one can construct a new $\EFI$ pair $(\sigma_0, \sigma_1)$ where $\sigma_0$ and $\sigma_1$ have \emph{noticeably different} von Neumann entropies. The key insight is that, if $\rho_0$ and $\rho_1$ are statistically far (and hence approximately orthogonal), then the states $\sigma_0 = \frac12 \ket{0}\bra{0}\otimes \rho_0 + \frac12 \ket{1}\bra{1}\otimes \rho_1 $ and $\sigma_1 =  \frac{I}{2} \otimes (\frac12 \rho_0 + \frac12 \rho_1)$ have von Neumann entropies that differ approximately by one bit! It is also not too difficult to show that $\sigma_0$ and $\sigma_1$ remain computationally indistinguishable. We refer to $(\sigma_0, \sigma_1)$ as a ``pseudo-entropy''~pair, or more formally as an entropic $\EFI$ pair.
\item Second, ``upgrade'' the pseudo-entropy pair to one where the second state is maximally mixed. This is possible via a combination of parallel repetition (to amplify the entropy gap), and the use of a ``strong quantum randomness extractor'', such as the one from~\cite{dupuisDecouplingApproachQuantum2010}.
\item Finally, given an $\EFI$ pair $(\sigma_0, \sigma_1)$, where $\sigma_0$ has noticeably less than full entropy, and $\sigma_1 = \frac{I}{2^n}$, the idea to obtain a $\OnePRS$ is the following. Since $\sigma_0$ is efficiently generatable, there exists an efficiently generatable purification $\ket{\psi_0}_{\sf AB}$ such that $\tr_{\sf A}[\ket{\psi_0}] = \sigma_0$. One should then choose an appropriate family of ``twirling'' unitaries $U_k$ on $\sf A$ with the property that $\mathbb{E}_k (U_k \otimes I) \ket{\psi_0}\bra{\psi_0} (U_k^{\dagger} \otimes I) \approx \frac{I}{|\sf A|} \otimes \sigma_0$. Why is this useful? This is because, by hypothesis, the latter state is computationally indistinguishable from $\frac{I}{|\sf A|} \otimes \frac{I}{|\sf B|}$. Thus, defining the $\OnePRS$ to be $\ket{\psi_k} = U_k \ket{\psi_0}$ would yield the desired guarantee. Now, such a family of efficiently implementable twirling unitaries always exists, but, to obtain a $\OnePRS$, we crucially need the length of the seed $k$ to be smaller than the number of qubits of the output state. This is possible to achieve thanks to the fact that the von Neumann entropy of $\sigma_0$ (and hence the entanglement entropy of $\ket{\psi_0}$) is less than full (one can again leverage a strong quantum randomness extractor~\cite{dupuisDecouplingApproachQuantum2010}).
\end{itemize}

Note one important point here: in order to pick the appropriate family $\{U_k\}$ (i.e., the extractor), one needs to know the von Neumann entropy of $\sigma_0$. It is not yet clear how this can be circumvented, and this is why this approach only shows that an $\EFI$ pair implies a \emph{non-uniform} $\OnePRS$, where the advice is of logarithmic size (since the entropy is at most $n$).

\subsection{From a (non-uniform) \texorpdfstring{$\OnePRS$}{1PRS} to a ``universal'' \texorpdfstring{$\EFI$}{EFI} pair.} 
Leveraging the above equivalence, we can obtain a concrete construction of
a ``universal'' $\EFI$ pair, reminiscent of Levin's universal $\OWF$
construction. This is a concrete $\EFI$ pair construction that is secure if
and only if an $\EFI$ pair exists at all. We note that our construction will
differ from the universal $\EFI$ construction of~\cite{hiroka2024robust} in
that we do not need combiners. 

Informally, the outline is the following. Assume a non-uniform $\OnePRS$ (with
logarithmic-length advice) exists. Denote the seed length by $\lambda$, and the length of the output state by $n(\lambda)$. Then, if the advice is logarithmic size, for each 1PRS state $\ket{\psi_k}$, there exists a Turing machine with description length $r = \lambda + O(\log{\lambda}) + C$, where $C$ is a universal constant, that there exists a Turing machine of size $C$, takes in $1^\lambda$ and $k$ as input, in some polynomial time $T(\lambda)$, outputs a quantum circuit whose output (when run on the all zero state) is $\ket{\psi_k}$. Then, we argue that the following concrete universal construction must be an $\EFI$ pair. Consider the pair of state families $\left(\{\rho_{r, T}\}, \{\frac{I}{2^n}\}\right)$, implicitly indexed by $\lambda$, where $\rho_{r, T} = \frac{1}{2^r}\sum_{|P|\leq r}
    \ketbra{\psi_P^T}{\psi_P^T}$,
    and $\ket{\psi_P^T}$ is the $n$-qubit state obtained by running Turing machine $P$ for time $T$ to get a quantum circuit $Q$ with an $n$-qubit state output, and then running $Q$ on the all-zero state. The key point is that the states $\rho_{r,T}$ and $\frac{1}{2^n}I$ are statistically far because the rank of $\rho_{r,T}$ is far from being full (since $n = r + \omega(\log \lambda)$). At the same time, any computational distinguisher has at least an inverse-polynomial failure probability due to the fact that an inverse-polynomial fraction of the programs $P$ outputs a state from the $\OnePRS$~family. This kind of ``weak'' $\EFI$ pair can be upgraded to a standard $\EFI$ pair, based on known results.

\subsection{A meta-complexity characterization of \texorpdfstring{$\EFI$}{EFI} pairs}    
While the construction above is a desirable step forward, as it gives a
valid and concrete characterization of minimal quantum cryptography, it is
still not as ``natural'' as one could hope for. Finding characterizations
that are of independent interest, with natural connections to other
well-studied problems in complexity theory is an important step towards
relating (or decoupling) the existence of quantum cryptography to (or from)
other complexity-theoretic statements. Here, we describe how we obtain
a characterization of $\EFI$ pairs in terms of the hardness of a natural
meta-complexity problem.

Recall that the Kolmogorov complexity of a string $x$ captures the description length of the \emph{shortest} program (to be run on some fixed universal Turing machine $U$) that outputs $x$. In the quantum setting, different variants of Kolmogorov complexity of \emph{states} have been proposed (differing mainly in how they account for the program outputting an approximation of the state)~\cite{Gac01, mora2006quantum}. Let $\sf \Hbar$ denote a suitable such notion (which will describe in more detail below).
Informally, here is the problem we propose to consider: let $\{\ket{\psi_k}\}$ be an efficiently sampleable family of states with the promise that each state either has $\sf \Hbar$-complexity above some threshold $r(n)$ or below $r(n)-\omega(\log n)$; given a \emph{single copy} of a state from the family, decide if it has low or high $\sf \Hbar$-complexity. Due to the fact that the equivalence we can prove is between $\EFI$ and \emph{non-uniform} $\OnePRS$, our approach eventually leads to a version of the latter problem with respect to a \emph{non-uniform} family of states. A bit informally, the notion of $\sf \Hbar$-complexity that we are looking for should satisfy the following properties:
\begin{itemize}[leftmargin=1.1em]
\setlength\itemsep{0em}
\item It should be such that pseudorandom states have low complexity, while
    a uniformly random string has high complexity with overwhelming
    probability. If this is the case, then we can build on the previous
    equivalence: assuming an $\EFI$ pair exists, a (non-uniform) $\OnePRS$ also
    exists, and thus we can consider the following family of states. Sample
    a state from the $\OnePRS$ family with probability $\frac12$, and sample a
    uniformly random standard basis state with probability $\frac12$. Thus,
    assuming an $\EFI$ pair exists, this would be a (non-uniform) family of
    states for which the problem of estimating $\sf \Hbar$-complexity is
    hard.
\item It should be such that low complexity states have low von Neumann entropy while high complexity states have high von Neumann (or min-)entropy. This would allow us to argue the converse direction: if there exists a family $\{\ket{\psi_k}\}$ on which the problem of estimating $\sf \Hbar$-complexity is hard, then an $\EFI$ pair exists. The $\EFI$ pair would be $(\rho, \frac{I}{2^n})$, where $\rho$ is the state obtained by applying a suitable strong quantum randomness extractor~\cite{dupuisDecouplingApproachQuantum2010} to the state $\mathbb{E}_k \ketbra{\psi_k}{\psi_k}$. The point is that $\rho$ is efficiently preparable, and, if the appropriate parameters are chosen for the extractor, it is far from the maximally mixed state since half of the states in $\{\ket{\psi_k}\}$ have low von Neumann entropy.
\end{itemize}
We show that a notion of $\sf \Hbar$-complexity that satisfies the two properties above is a ``smooth'' version of (a variant of) G\'{a}cs' complexity~\cite{Gac01} (defined precisely in Definition~\ref{def:hbar}, and \ref{def:hbar-eps} for its smoothed version). We denote the smooth version by $\Hbar^{\eps}$ for $\eps \in [0,1]$. Ultimately, a bit more precisely, the promise problem that we arrive at is: let $\{\ket{\psi_k}\}$ be a non-uniform family of states with the promise that, for almost all states, either $\Hbar^{1-\eps} < r(n)-\omega(\log n)$ or $\Hbar^{0}> r(n)$, given a \emph{single copy} of a state from the family, decide which is the case. The hardness of this problem for some family of states is equivalent to the existence of $\EFI$ pairs. Just like for our $\OnePRS$ construction from an $\EFI$ pair, the length of the advice needed to efficiently sample the family of states is logarithmic.

We note that our proof of the equivalence above does not actually rely on the fact that $\{\ket{\psi_k}\}$ is a keyed family. In fact, it also works for any single-copy samplable state family, i.e., a family such that there exists a QPT algorithm that can sample \emph{a single copy} of a state from the family.  %
Thus as a result, we can also show an equivalence between the existence of
$\EFI$ and hardness of $\Hbar$-complexity estimation over single-copy
samplable state families.

\subsection{A unifying viewpoint on various notions of Kolmogorov complexity of~states}
\label{sec:24}
While so far we have primarily focused on $\Hbar$, we also consider other complexity notions (e.g., $\Umin$) and their robust versions, and prove characterizations of $\EFI$ from the hardness of estimating these complexity notions. As a final contribution, we provide a potentially unifying viewpoint on several of these notions in terms of the following problem.

Let $\Pi_r$ be the span of all the states with $\Knet$-complexity at most $r$ (where informally, $\Knet$-complexity is a notion introduced by Mora and Briegel~\cite{mora2004algorithmic} that captures the minimum-length of a program that can output the state). Let $\{\ket{\psi_k}\}$ be an efficiently samplable family of states with the promise that each state either lies (almost) entirely in $\Pi_{r}$ or it is (almost) orthogonal to $\Pi_{r+\omega(\log n)}$; given a single copy of the state, decide which is the case. 

Note that deciding whether the state is in the ``span of easy states'' or not is the best that one can hope for given only a single copy. One cannot distinguish the easy states from the (potentially hard) states in the span of the easy states given a single copy since their density matrices might be statistically very close. In~\cref{thm:EFI-span}, we prove that the existence of $\EFI$ is equivalent to the existence of a non-uniform family of states on which the above problem is hard.

An important remark is that the ``span of easy states'' is a notion that is not robust against the choice of the universal gate set used to define $\Knet$-complexity. Although, from Solovay-Kitaev, any two universal gate sets can approximate each other to arbitrary precision, the span of easy states can be very different even if we only perturb the easy states a little bit: for example, the span of $\{\ket{0}, \sqrt{1-2^{-200}}\ket 0 + 2^{-100} \ket 1\}$ is exactly $\mathsf{span}\{\ket 0, \ket1\}$, while the span of $\{\ket{0}, \sqrt{1-2^{-200}}\ket 0 + 2^{-100} \ket 2\}$ is exactly $\mathsf{span}\{\ket 0, \ket 2\}$. Two almost identical state families might have very different spans.

In order to relate other notions of complexity to the ``overlap'' on the span of easy states, we need to introduce a notion of ``robust'' span, which characterizes the significant components of the state family. A bit more formally, the ``robust'' span of a state family $\{\ket{\psi_k}\}$ is the subspace spanned by all the significant eigenvalues of $\E_k \ketbra{\psi_k}{\psi_k}$, where the expectation is taken over uniformly random $k$. This subspace is much more robust against perturbations, and, in~\cref{sec:relating-entropy-span}, we are able to tightly relate $\Hbar$ to the ``robust'' span: a state has low $\mathsf{H}$-complexity if and only if it has a large overlap onto the robust span of easy states. This relation allows us to give an alternative proof of the characterization of $\EFI$ from the hardness of $\GapH$. With this relation in hand, we can summarize the proof of the latter characterization as follows: 
\begin{itemize}
\item If $\EFI$ exist, then a non-uniform state family with a $\Hbar$-complexity gap exists (as our~\cref{thm:intro-EFI-1PRS} says that $\EFI$ implies non-uniform $\OnePRS$).
\item For the converse direction, our proof goes through the following steps: we show that a state family with a $\Hbar$-complexity gap also has a gap in overlap with the ``robust span of easy states''; the latter gap implies a gap in von Neumann entropy; finally, if $\EFI$ do not exist, such an entropy gap can be detected using quantum randomness extractors.
\end{itemize}

\section{Preliminaries}

In this section, we introduce the fundamentals of quantum information, quantum
extractors, quantum cryptography, and the necessary tools.

\subsection{Quantum information}

Our notation for quantum information mainly follows~\cite{NC02}.
We refer the reader to~\cite{NC02} for a more detailed discussion.
For a given Hilbert space $\H$, we use $\Lin(\H)$ and $\Pos(\H)$ to denote the
set of linear operators and positive semi-definite operators on $\H$.
For $A, B \in \Lin(\H)$, we write $A \ge B$ or $B \le A$ if
$A - B \in \Pos(\H)$.
The Schatten $1$-norm of a linear operator $A$ is defined as
\begin{equation}\label{eq:1-norm}
  \norm{A}_1 = \Tr(\sqrt{A^\dagger A}).
\end{equation}

A pure quantum state is a unit vector in a Hilbert space $\H$.
A mixed quantum state is a density matrix $\rho$ in $\Pos(\H)$ with unit trace.
We use $\Density(\H)$ to denote the set of density matrices on a Hilbert space
$\H$.
When considering multiple quantum systems, we use labels such as $A$ and $B$ to
refer to different systems and use $\H_A$ and $\H_B$ to denote the corresponding
Hilbert spaces.
For example, $\rho_{AB} \in \Density(\H_A \otimes \H_B)$ is a density matrix
describing a mixed state on the joint $AB$ system.
We also consider sub-normalized states $\rho$ where $\rho \in \Pos(\H)$ and
$\Tr(\rho) \le 1$.
In this case, the matrix $\rho$ is called a semi-density matrix.
We use $\SemiD(\H)$ to denote the set of semi-density matrices.
Every density matrix $\rho_A \in \Density(\H)$ has a purification
$\ket{\psi}_{AB}$ such that $\rho_A = \Tr_B (\ketbra{\psi}{\psi}_{AB})$, where
$\Tr_B$ is the partial trace over $B$.

Various distance measures between two quantum states are needed.
For density matrices $\rho$ and $\sigma$, we use $D(\rho, \sigma)$ to denote
their trace distance:
\begin{equation*}
  D(\rho, \sigma) = \frac{1}{2} \norm{\rho - \sigma}_1.
\end{equation*}
The trace distance generalizes the total variation distance between probability
distributions.
For pure states $\ket{\psi}$ and $\ket{\phi}$, the trace distance can be
computed as $\sqrt{1 - \abs{\braket{\psi | \phi}}^2}$.

Another commonly used quantity measuring the closeness of two quantum states is
the fidelity:
\begin{equation*}
  F(\rho, \sigma)= \norm{\sqrt{\rho} \sqrt{\sigma}}_1.
\end{equation*}
For pure states $\ket{\psi}, \ket{\phi} \in \H$, we have
$F(\ket{\psi}, \ket{\phi}) = \abs{\braket{\psi | \phi}}$.
Fidelity can be seen as a quantum generalization of the Bhattacharyya
coefficient for two probability distributions,
$\mathrm{BC}(p,q) = \sum_i \sqrt{p_i q_i}$.
Uhlmann's theorem characterizes the fidelity of two mixed states
$\rho, \sigma \in \Density(\H_A)$ as the maximum overlap of their purifications:
$F(\rho, \sigma) = \max_{\ket{\psi}, \ket{\phi}} \abs{\braket{\psi|\phi}}$,
where $\ket{\psi}, \ket{\phi} \in \H_A \otimes \H_B$ are purifications of $\rho$
and $\sigma$, respectively, and system $B$ is a copy of system $A$ with the same
dimension.

\begin{lemma}[Fuchs-van de Graaf inequalities, cf., for example, Theorem 3.33
  of~\cite{Wat18}]\label{lem:ineq-fuchs-vdgraff}
  For any mixed states $\rho$ and $\sigma$, we have
  \begin{equation*}
    D (\rho, \sigma) \geq 1 - F(\rho, \sigma).
  \end{equation*}
\end{lemma}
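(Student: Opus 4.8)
The plan is to reduce this quantum inequality to a purely classical statement about probability distributions, exploiting the dual variational characterizations of trace distance and fidelity in terms of measurements. Recall that applying any POVM $\{M_i\}$ to $\rho$ and $\sigma$ yields outcome distributions $p_i = \Tr(M_i \rho)$ and $q_i = \Tr(M_i \sigma)$. The trace distance is the \emph{maximum} over all POVMs of the resulting total variation distance, $D(\rho,\sigma) = \max_{\{M_i\}} \frac12 \sum_i \lvert p_i - q_i\rvert$, whereas the fidelity is the \emph{minimum} over all POVMs of the resulting Bhattacharyya coefficient, $F(\rho,\sigma) = \min_{\{M_i\}} \sum_i \sqrt{p_i q_i}$. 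The latter is the Fuchs--Caves characterization, which follows from the monotonicity of fidelity under quantum channels (a measurement being a particular channel to a classical register) together with attainment of the optimum.

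The core of the argument is a one-line classical inequality: for any two distributions $p,q$,
\[
1 - \sum_i \sqrt{p_i q_i}
= \tfrac12 \sum_i \bigl(\sqrt{p_i} - \sqrt{q_i}\bigr)^2
\le \tfrac12 \sum_i \bigl\lvert\sqrt{p_i} - \sqrt{q_i}\bigr\rvert\,\bigl(\sqrt{p_i} + \sqrt{q_i}\bigr)
= \tfrac12 \sum_i \lvert p_i - q_i\rvert,
\]
where the first equality uses $\sum_i p_i = \sum_i q_i = 1$ and the inequality uses $\lvert\sqrt{p_i} - \sqrt{q_i}\rvert \le \sqrt{p_i} + \sqrt{q_i}$. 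In words, the classical total variation distance always dominates one minus the Bhattacharyya coefficient.

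To conclude, I would fix the POVM $\{M_i^\ast\}$ attaining the fidelity, so that its outcome distributions $p^\ast, q^\ast$ satisfy $\sum_i \sqrt{p_i^\ast q_i^\ast} = F(\rho,\sigma)$. Applying the classical inequality to $p^\ast, q^\ast$ and then bounding the total variation distance below by the trace distance gives
\[
D(\rho,\sigma) \ge \tfrac12 \sum_i \lvert p_i^\ast - q_i^\ast\rvert
\ge 1 - \sum_i \sqrt{p_i^\ast q_i^\ast}
= 1 - F(\rho,\sigma),
\]
which is the claimed bound.

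The main obstacle is justifying the variational characterization of fidelity as the minimum classical Bhattacharyya coefficient over measurements, and in particular that this minimum is \emph{attained}. This is the only nontrivial input, and it is precisely where selecting the \emph{optimal} measurement in the final chain matters: a suboptimal measurement would push the Bhattacharyya coefficient strictly above $F(\rho,\sigma)$, yielding $1 - \sum_i\sqrt{p_i q_i} < 1 - F(\rho,\sigma)$ and breaking the bound. An alternative route that avoids invoking Fuchs--Caves as a black box is to use Uhlmann's theorem to pick purifications $\ket{\psi}, \ket{\phi}$ of $\rho,\sigma$ with $\braket{\psi | \phi} = F(\rho,\sigma)$, and then exhibit an explicit measurement on the purifying system whose outcome statistics realize a Bhattacharyya coefficient equal to $F(\rho,\sigma)$; both routes terminate in the same classical inequality above.
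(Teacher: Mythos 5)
Your proposal is correct, but note that the paper never proves this lemma at all: it is imported wholesale by citation (Theorem 3.33 of~\cite{Wat18}), so there is no internal proof to compare against. Your route --- combining the Holevo--Helstrom characterization $D(\rho,\sigma)=\max_{\{M_i\}}\tfrac12\sum_i|\Tr(M_i\rho)-\Tr(M_i\sigma)|$ with the Fuchs--Caves characterization $F(\rho,\sigma)=\min_{\{M_i\}}\sum_i\sqrt{\Tr(M_i\rho)\Tr(M_i\sigma)}$, then applying the classical identity $1-\sum_i\sqrt{p_iq_i}=\tfrac12\sum_i(\sqrt{p_i}-\sqrt{q_i})^2\le\tfrac12\sum_i|p_i-q_i|$ to the fidelity-optimal measurement --- is a standard textbook proof, and every step checks out, including your observation that the measurement must be chosen fidelity-optimal rather than trace-distance-optimal. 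It also fits the paper's own framing, which already remarks that fidelity is the quantum analogue of the Bhattacharyya coefficient. What your approach buys is a self-contained reduction to a one-line classical inequality; what it costs is reliance on Fuchs--Caves, which is itself roughly as deep as the statement being proved.

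Two caveats on your closing discussion. First, attainment of the fidelity minimum is not actually needed: if Fuchs--Caves is available only as an infimum, pick for each $\eps>0$ a measurement whose Bhattacharyya coefficient is at most $F(\rho,\sigma)+\eps$; the same chain gives $D(\rho,\sigma)\ge 1-F(\rho,\sigma)-\eps$ for every $\eps>0$, hence the claim. So the ``main obstacle'' you identify is not an obstacle. Second, your proposed alternative via Uhlmann purifications has a genuine gap: a measurement on the purifying (or joint) system only lower-bounds the trace distance of the \emph{purifications}, which is $\sqrt{1-F^2(\rho,\sigma)}$, not $D(\rho,\sigma)$, since $D(\rho,\sigma)$ is a maximum over measurements on the original system alone. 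Repairing that route would require exhibiting the optimal measurement on the original system, which is precisely the content of the Fuchs--Caves attainment you were trying to avoid.
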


\begin{lemma}\label{lem:distance-mix}
  Let $\{\rho_{k}\}$ be a family of quantum states.
  If for all $k$, there is a state $\rho'_{k}$ such that
  $D(\rho_{k}, \rho'_{k}) \le \delta$, we have
  $D \bigl(\E_{k} \rho_{k}, \E_{k} \rho'_{k} \bigr) \le \delta$.
\end{lemma}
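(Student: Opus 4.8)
The plan is to reduce the claim to the triangle inequality (equivalently, convexity) of the Schatten $1$-norm. First I would rewrite both sides using the definition $D(\rho,\sigma)=\frac12\norm{\rho-\sigma}_1$, so that the goal becomes the bound $\norm{\E_k \rho_k - \E_k \rho'_k}_1 \le 2\delta$. By linearity of the expectation, the operator inside the norm is exactly $\E_k(\rho_k - \rho'_k)$, so the task is to control the $1$-norm of an average of differences by the average of their $1$-norms.

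The key step is then to move the norm inside the expectation. Since $\norm{\cdot}_1$ is a genuine norm, it is convex, and $\E_k$ is a convex combination (a weighted average, whether finite or over a distribution) of the operators $\rho_k - \rho'_k$; hence Jensen's inequality gives $\norm{\E_k(\rho_k-\rho'_k)}_1 \le \E_k \norm{\rho_k - \rho'_k}_1$. Invoking the hypothesis $D(\rho_k,\rho'_k)\le\delta$, which is just $\norm{\rho_k-\rho'_k}_1 \le 2\delta$ for every $k$, the right-hand side is at most $2\delta$. Dividing by $2$ recovers $D(\E_k\rho_k,\E_k\rho'_k)\le\delta$, as desired.

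I do not expect a genuine obstacle here; the only point that deserves a moment's care is justifying the interchange of norm and expectation, but this is immediate from the convexity of any norm (or, in the case of a finite average, from the ordinary triangle inequality applied term by term). Notably, no quantum-specific property is needed beyond the fact that the trace distance is $\frac12$ times an honest norm, so the same argument would apply verbatim to any other norm-induced distance measure.
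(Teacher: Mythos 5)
Your proof is correct and is essentially the paper's argument: the paper simply cites the joint convexity of the trace distance, i.e.\ the inequality $D\bigl(\E_{k}\rho_{k}, \E_{k}\rho'_{k}\bigr) \le \E_{k} D(\rho_{k}, \rho'_{k})$, which is exactly what you derive via linearity of expectation and convexity of the $1$-norm. You have just unpacked that one-line citation into its standard proof, so there is no substantive difference.
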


\begin{proof}
  This follows directly from the joint convexity of the trace distance.
\end{proof}

The trace distance and fidelity can be generalized to sub-normalized
states~\cite{TCR10}.
For two semi-density matrices $\rho, \sigma \in \SemiD(\H)$,
\begin{align*}
  D (\rho, \sigma) & = \frac{1}{2}\norm{\rho - \sigma}_1 +
                     \frac{1}{2}\abs{\Tr (\rho) - \Tr (\sigma)},\\
  F (\rho, \sigma) & = \norm{\sqrt{\rho}\sqrt{\sigma}}_1 +
                     \sqrt{1 - \Tr(\rho)}\sqrt{1 - \Tr(\sigma)}.
\end{align*}
These generalizations are obtained by extending the system and considering the
normalized states $\rho \oplus (1-\Tr(\rho))$ and $\sigma \oplus (1 - \Tr(\sigma))$,
so many properties of the trace distance and fidelity carry over to the
generalized versions.

For two semi-density matrices $\rho, \sigma \in \SemiD(\H)$, the purified
distance~\cite{TCR10,Tom12} is defined as
\begin{equation*}
  P(\rho, \sigma) = \sqrt{1 - F^2(\rho, \sigma)}.
\end{equation*}
It is known that $P$ is a metric on $\SemiD(\H)$ and that
$P(\rho, \sigma) \ge D(\rho, \sigma)$.
The purified distance is used to define the smoothed versions of min-entropy
considered below.

For our purposes, we require the following definitions of quantum entropies.
The von Neumann entropy of a state $\rho \in \Density(\H)$ is defined as
$S(\rho) = - \Tr (\rho \log \rho)$.
If $\rho \in \Density(\H_A)$ is a density matrix of system $A$, we may also
write $S(A)_{\rho}$ to denote the von Neumann entropy, and may omit the
subscript $\rho$ and simply write $S(A)$ when no confusion arises.
The quantum conditional entropy is defined as
$S(A|B)_\rho = S(AB)_\rho - S(B)_\rho$.
Unlike the classical case, the quantum conditional entropy can be negative.
The relative entropy of $\rho$ with respect to $\sigma$ is
$S(\rho \Vert \sigma) = \Tr(\rho \log \rho - \rho \log\sigma)$.

The quantum min-entropy of a sub-normalized quantum state $\rho \in \SemiD(\H)$
is $\Hmin(\rho) = \sup\{ \lambda \in \real \mid \rho \le 2^{-\lambda}I\}$.
The $\eps$-smooth min-entropy of a sub-normalized quantum state
$\rho \in \SemiD(\H)$ is
\begin{equation*}
  \Hmin^\eps (\rho) = \sup_{\rho' \in \SemiD(\H),
    P(\rho, \rho') \le \eps} \Hmin(\rho').
\end{equation*}

The quantum conditional min-entropy of a semi-density matrix
$\rho_{AB} \in \SemiD(\H_{AB})$ is defined as
\begin{equation*}
  \Hmin(A | B)_{\rho} = \sup_{\sigma_B \in \Density(\H_B)}
  \sup \{ \lambda \in \real \mid \rho \le
  2^{-\lambda}I_A \otimes \sigma_B\}.
\end{equation*}
Its smoothed version is given by
\begin{equation*}
  \Hmin^\eps (A | B)_{\rho} = \sup_{\rho'_{AB} \in \SemiD(\H_{AB}),
      P(\rho, \rho') \le \eps} \Hmin(A | B)_{\rho'}.
\end{equation*}

The quantum max-entropy, according to~\cite{Ren08}, is defined as\footnote{The
  quantum max-entropy has definitions by Renner and Tomamichel.
  Ref~\cite{Ren08} defines the max-entropy as the R\'{e}nyi entropy of order
  $0$, while~\cite{Tom12} defines the max-entropy as the R\'{e}nyi entropy of
  order 1/2.
  We adopt the definition from~\cite{Ren08} as it has a better operational
  meaning.}
\begin{equation*}
  \Hmax(\rho) = \log (\rank(\rho)).
\end{equation*}
Its smoothed version is given by
\begin{equation*}
  \Hmax^\eps(\rho) = \inf_{\rho'\in \SemiD(\H),
    P(\rho, \rho') \le \eps}\Hmax(\rho').
\end{equation*}

The following inequality between smoothed min-entropy and von Neumann entropy is
implicit in~\cite{Tom12}.

\begin{lemma}[Result 5 and Corollary 6.5
  of~\cite{Tom12}]\label{lem:min-entropy-bound}
  Let $\rho$ be an arbitrary quantum state on systems $A$ and $B$ and let $R$ be
  the quantum system of its purification. Let $n$ be the number of qubits of system A.
  For $g(\eps) = - \log(1-\sqrt{1-\eps^2})$, $0<\eps<1$, and
  $m \ge \frac{8}{5} g(\eps)$, we have
  \begin{align*}
    \frac{1}{m} \Hmin^{\eps} (A^m|B^m)_{\rho^{ \otimes m}} & \ge
    S(A|B)_\rho - \frac{2(n + 4)\sqrt{g(\eps)}}{\sqrt{m}},\\
    \frac{1}{m} \Hmax^{\eps} (A^m|B^m)_{\rho^{\otimes m}} & \le
    S(A|B)_\rho + \frac{2(n + 4)\sqrt{g(\eps)}}{\sqrt{m}}.
  \end{align*}
\end{lemma}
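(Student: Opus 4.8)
The plan is to prove this as the non-asymptotic (fully quantum) asymptotic equipartition property. The guiding idea is to sandwich the smooth conditional min-entropy of the i.i.d.\ state $\rho^{\otimes m}$ between a conditional R\'{e}nyi entropy $H_\alpha(A|B)_\rho$ and the von Neumann conditional entropy $S(A|B)_\rho$, exploiting two facts: R\'{e}nyi entropies are \emph{additive} under tensor powers, so all the $m$-dependence becomes explicit; and $\alpha \mapsto H_\alpha(A|B)_\rho$ is smooth with $H_1 = S(A|B)_\rho$, so a Taylor expansion around $\alpha = 1$ converts the single-copy R\'{e}nyi entropy into the von Neumann entropy plus a controlled error.

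First I would establish, for a single copy and for $\alpha \in (1,2]$, a bound of the form
\begin{equation*}
  \Hmin^\eps(A|B)_\rho \ge H_\alpha(A|B)_\rho - \frac{g(\eps)}{\alpha - 1},
\end{equation*}
where the appearance of $g(\eps) = -\log(1-\sqrt{1-\eps^2})$ reflects that smoothing is measured in the purified distance $P$. This is proved by a quantum Markov-type argument: the operator inequality defining $\Hmin$ can fail only on a subspace of small weight, and projecting this subspace away moves $\rho$ by at most $\eps$ in purified distance at the cost of the additive R\'{e}nyi correction $\frac{g(\eps)}{\alpha-1}$. Applying this to $\rho^{\otimes m}$ and using additivity, $H_\alpha(A^m|B^m)_{\rho^{\otimes m}} = m\,H_\alpha(A|B)_\rho$, and dividing by $m$ gives
\begin{equation*}
  \tfrac{1}{m}\Hmin^\eps(A^m|B^m)_{\rho^{\otimes m}} \ge H_\alpha(A|B)_\rho - \frac{g(\eps)}{m(\alpha-1)}.
\end{equation*}
A second-order expansion then yields $H_\alpha(A|B)_\rho \ge S(A|B)_\rho - (\alpha-1)\,c$, where $c$ bounds the first-derivative and remainder terms of $\alpha \mapsto H_\alpha$; crucially $c$ is controlled by the number of qubits $n$ of $A$, and this dimension dependence is the source of the $(n+4)$ factor. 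Substituting and optimizing the trade-off between the two error terms by choosing $\alpha - 1 = \Theta(\sqrt{g(\eps)/m})$ produces the claimed rate $\frac{2(n+4)\sqrt{g(\eps)}}{\sqrt m}$; the side condition $m \ge \tfrac{8}{5}g(\eps)$ is exactly what keeps the optimal $\alpha$ inside $(1,2]$.

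Finally, the max-entropy bound follows by duality rather than a fresh calculation, which is precisely why the statement introduces the purifying system $R$. For a purification $\rho_{ABR}$ one has $\Hmax^\eps(A|B)_\rho = -\Hmin^\eps(A|R)_\rho$ and $S(A|B)_\rho = -S(A|R)_\rho$, so applying the min-entropy bound already proved to the complementary system $R$ and negating gives the matching upper bound (the relevant dimension is still the $n$ qubits of $A$). The main obstacle is the dimension-dependent control of the Taylor remainder in the middle step: bounding the derivatives of $\alpha \mapsto H_\alpha(A|B)_\rho$ uniformly over the relevant range requires matrix-analytic estimates (operator convexity of $t \mapsto t^\alpha$ and bounds on logarithmic derivatives of $\Tr[\rho^\alpha \sigma^{1-\alpha}]$-type quantities), and it is exactly this analysis that pins down the constant $2(n+4)$.
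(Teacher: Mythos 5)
This lemma is not proved in the paper at all: it is imported verbatim from~\cite{Tom12} (Result 5 and Corollary 6.5), and your proposal faithfully reconstructs exactly the proof given in that source --- the one-shot bound $\Hmin^\eps(A|B)_\rho \ge H_\alpha(A|B)_\rho - \frac{g(\eps)}{\alpha-1}$ for $\alpha \in (1,2]$, additivity of the R\'{e}nyi entropy under tensor powers, the dimension-dependent expansion of $H_\alpha$ around $\alpha = 1$ (which produces the $2(n+4)$ constant), the choice $\alpha - 1 = \Theta(\sqrt{g(\eps)/m})$, and duality through the purifying system $R$ for the max-entropy half. Since your route coincides with the proof in the cited source, there is nothing further to compare.
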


\begin{corollary}\label{cor:min-entropy-explicit-bound}
  Let $\rho$ be an arbitrary quantum systems on systems A and B.
  Let $n$ be the number of qubits of system A.
  For any $m \geq 5 \log \frac 1\eps$, $\eps < 1/2$ and $n>10$, we have
  \begin{align*}
    \frac{1}{m}  \Hmin^\eps(A^m|B^m)_{\rho^{\otimes m}} &\ge
    S(A|B)_\rho - 6n \sqrt{\frac{\log (1/\eps)}{m}},\\
    \frac{1}{m}  \Hmax^\eps(A^m|B^m)_{\rho^{\otimes m}} &\le
    S(A|B)_\rho + 6n \sqrt{\frac{\log (1/\eps)}{m}}.\\
  \end{align*}
\end{corollary}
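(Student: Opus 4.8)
The plan is to derive this corollary directly from Lemma~\ref{lem:min-entropy-bound} by replacing the quantity $g(\eps) = -\log(1-\sqrt{1-\eps^2})$ with a clean upper bound in terms of $\log(1/\eps)$, checking that the hypothesis $m \ge 5\log(1/\eps)$ is strong enough to imply the side condition $m \ge \tfrac{8}{5}g(\eps)$ of that lemma, and then absorbing the $(n+4)$ factor and the leftover numerical constants into the single constant $6$ using $n > 10$. No new quantum-information input is needed: the two entropy inequalities are inherited verbatim from Lemma~\ref{lem:min-entropy-bound}, and everything reduces to elementary estimates on elementary functions.

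First I would bound $g(\eps)$. Rationalizing, $1-\sqrt{1-\eps^2} = \eps^2/(1+\sqrt{1-\eps^2}) \ge \eps^2/2$, so $g(\eps) \le -\log(\eps^2/2) = 2\log(1/\eps) + 1$. Since $\eps < 1/2$ forces $\log(1/\eps) > 1$ (base-$2$ logarithm), the additive $1$ is itself at most $\log(1/\eps)$, giving the clean bound $g(\eps) \le 3\log(1/\eps)$. This immediately settles the side condition: $\tfrac{8}{5}g(\eps) \le \tfrac{24}{5}\log(1/\eps) = 4.8\log(1/\eps) \le 5\log(1/\eps) \le m$, so Lemma~\ref{lem:min-entropy-bound} applies under the corollary's hypothesis.

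It remains to simplify the error term. Using $g(\eps) \le 3\log(1/\eps)$, the term $\frac{2(n+4)\sqrt{g(\eps)}}{\sqrt m}$ is at most $2\sqrt{3}\,(n+4)\sqrt{\tfrac{\log(1/\eps)}{m}}$. For $n > 10$ we have $4 < 0.4\,n$, hence $n+4 < 1.4\,n$, and therefore the prefactor satisfies $2\sqrt{3}\,(n+4) < 2.8\sqrt{3}\,n < 6\,n$. Substituting into both inequalities of Lemma~\ref{lem:min-entropy-bound} yields exactly the two stated bounds. The only thing requiring care here is the bookkeeping of constants: the factor $\sqrt{3}$ coming from the $g(\eps)$ bound and the factor $1.4$ coming from $n+4 \le 1.4\,n$ must compose to a product strictly below $6$, which they do ($2.8\sqrt{3} \approx 4.85$), leaving comfortable slack. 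I expect this constant-chasing---rather than any conceptual step---to be the only place where a small slip could occur.
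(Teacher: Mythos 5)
Your proof is correct and follows exactly the paper's approach: the paper's own proof is a one-line ``direct application of \cref{lem:min-entropy-bound} and the observation that $g(\eps) \leq 3\log\frac{1}{\eps}$,'' and your argument simply fills in the constant-chasing (the rationalization bound $1-\sqrt{1-\eps^2} \ge \eps^2/2$, the side-condition check $\tfrac{8}{5}\cdot 3 \le 5$, and the estimate $2\sqrt{3}(n+4) < 6n$ for $n>10$) that the paper leaves implicit. All of your numerical estimates check out.
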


\begin{proof}
  This is just a direct application of~\cref{lem:min-entropy-bound} and the
  observation that
  $g(\eps) = - \log(1-\sqrt{1-\eps^2}) \leq 3\log \frac{1}{\eps}$.
\end{proof}

\begin{remark}
    When we take $B$ as the trivial system in~\cref{lem:min-entropy-bound} and~\cref{cor:min-entropy-explicit-bound}, then we get the bound of $\Hmin^\eps(\rho^{\otimes m})$ and $\Hmax^\eps(\rho^{\otimes m})$ with $S(\rho)$.
\end{remark}

The following helper lemma will be useful in entropy estimation.

\begin{definition}[Almost Orthogonality]\label{def:AO}
  We call two $n$-qubit quantum mixed states $\rho,\sigma$ are almost orthogonal
  if there exists a projector $\Pi$ such that $\Tr((I-\Pi)\rho)\leq \negl(n)$,
  and $\Tr(\Pi\sigma)\leq\negl(n) $.
\end{definition}

\begin{lemma}\label{lem:TDtoAO}
  If two $n$-qubit states $\rho,\sigma$ have trace distance
  $D(\rho,\sigma)\geq 1-\negl(n)$, $\rho$ and $\sigma$ are almost orthogonal.
\end{lemma}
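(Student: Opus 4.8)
The plan is to take $\Pi$ to be the projector furnished by the Helstrom (variational) characterization of trace distance, and then squeeze the two overlaps against their extreme values $1$ and $0$. First I would recall that $D(\rho,\sigma) = \frac{1}{2}\norm{\rho - \sigma}_1$ and decompose the Hermitian operator $\rho - \sigma$ into its positive and negative parts, $\rho - \sigma = P - Q$, where $P, Q \in \Pos(\H)$ have mutually orthogonal supports. Since $\Tr(\rho) = \Tr(\sigma) = 1$, we have $\Tr(\rho - \sigma) = 0$, hence $\Tr(P) = \Tr(Q)$, and therefore $D(\rho,\sigma) = \frac{1}{2}(\Tr P + \Tr Q) = \Tr P$.

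Next I would let $\Pi$ be the projector onto the support of $P$ (i.e., the positive eigenspace of $\rho - \sigma$). Because $\Pi$ acts as the identity on $P$ and annihilates $Q$, we obtain
\begin{equation*}
  \Tr(\Pi(\rho - \sigma)) = \Tr(\Pi P) - \Tr(\Pi Q) = \Tr P - 0 = D(\rho,\sigma) \ge 1 - \negl(n).
\end{equation*}
This is exactly the step where the near-maximal trace distance is converted into a statement about a single, explicit projector.

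Finally, rewriting the displayed inequality as $\Tr(\Pi\rho) - \Tr(\Pi\sigma) \ge 1 - \negl(n)$ and combining with the trivial bounds $\Tr(\Pi\rho) \le 1$ and $\Tr(\Pi\sigma) \ge 0$, I would conclude that $\Tr(\Pi\rho) \ge 1 - \negl(n)$, equivalently $\Tr((I-\Pi)\rho) \le \negl(n)$, and also $\Tr(\Pi\sigma) \le \Tr(\Pi\rho) - (1-\negl(n)) \le \negl(n)$. These are precisely the two conditions required by Definition~\ref{def:AO}, so $\rho$ and $\sigma$ are almost orthogonal with witness $\Pi$.

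There is no genuine obstacle here; the statement is an immediate consequence of the positive-part characterization of trace distance. The only point to handle with a little care is that a \emph{single} projector must simultaneously witness near-unit overlap with $\rho$ and near-zero overlap with $\sigma$. This is automatic precisely because the near-maximal gap $\Tr(\Pi\rho) - \Tr(\Pi\sigma) \ge 1 - \negl(n)$ pins both overlaps against their respective extremes, since each overlap is a probability confined to $[0,1]$.
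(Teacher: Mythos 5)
Your proof is correct and follows essentially the same route as the paper: the paper's one-line argument invokes the operational meaning of trace distance with $\Pi$ taken as the optimal (Helstrom) distinguishing measurement, which is exactly the projector onto the positive eigenspace of $\rho-\sigma$ that you construct explicitly via the Jordan decomposition. Your write-up simply spells out the details the paper leaves implicit.
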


\begin{proof}
  The proof of the lemma follows from the operational meaning of trace distance
  via setting the projector $\Pi$ as the pretty good measurement.
\end{proof}

\begin{lemma}\label{lem:AOadd}
  For two almost orthogonal $n$-qubit states $\rho,\sigma$,
  \begin{equation*}
    \abs{S\Bigl(\frac{\rho+\sigma}{2}\Bigr) -
      \frac{S(\rho)+S(\sigma)}{2} - 1 } \leq \negl(n).
  \end{equation*}
\end{lemma}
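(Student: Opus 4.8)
The plan is to reduce the claim to the clean case of \emph{exactly} orthogonal states, where the identity holds with no error term, and then transfer back to $\rho,\sigma$ using continuity of the von Neumann entropy. Observe first that if $\rho$ and $\sigma$ had \emph{disjoint} supports --- say $\rho$ supported on $\mathrm{range}(\Pi)$ and $\sigma$ on $\mathrm{range}(I-\Pi)$ --- then $\frac{\rho+\sigma}{2}$ would be block-diagonal with spectrum $\{\tfrac12\lambda_i(\rho)\}\cup\{\tfrac12\lambda_j(\sigma)\}$, and a one-line computation gives
\begin{equation*}
  S\Bigl(\tfrac{\rho+\sigma}{2}\Bigr)=\frac{S(\rho)+S(\sigma)}{2}+1
\end{equation*}
\emph{exactly}, the additive $1$ arising from the factor $\tfrac12$ in front of each eigenvalue. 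So the entire content of the lemma is to control the error introduced by the states being only \emph{almost} orthogonal.

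To do this I would first pass to idealized, exactly-orthogonal surrogates. Let $\Pi$ be the projector guaranteed by almost orthogonality, and set
\begin{equation*}
  \tilde\rho=\frac{\Pi\rho\Pi}{\Tr(\Pi\rho)},\qquad
  \tilde\sigma=\frac{(I-\Pi)\sigma(I-\Pi)}{\Tr((I-\Pi)\sigma)}.
\end{equation*}
By construction $\tilde\rho$ is supported on $\mathrm{range}(\Pi)$ and $\tilde\sigma$ on $\mathrm{range}(I-\Pi)$, so the pair $(\tilde\rho,\tilde\sigma)$ has orthogonal supports and the exact identity above applies to it. Since $\Tr((I-\Pi)\rho)\le\negl(n)$ and $\Tr(\Pi\sigma)\le\negl(n)$, the gentle measurement lemma gives $D(\rho,\tilde\rho)\le\negl(n)$ and $D(\sigma,\tilde\sigma)\le\negl(n)$; here I use that the square root of a negligible function is still negligible, and that renormalizing the projected states by a factor $1-\negl(n)$ costs only an extra $\negl(n)$ in trace distance. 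By joint convexity of trace distance (as in \cref{lem:distance-mix}) this also yields $D\bigl(\tfrac{\rho+\sigma}{2},\tfrac{\tilde\rho+\tilde\sigma}{2}\bigr)\le\negl(n)$.

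It then remains to transfer the exact identity from the tilded states to the original ones, for which I would invoke the Fannes--Audenaert continuity bound: for $d$-dimensional density matrices at trace distance $T$, the von Neumann entropies differ by at most $T\log(d-1)+h(T)$, where $h$ denotes the binary entropy. Applying this three times --- to $(\rho,\tilde\rho)$, to $(\sigma,\tilde\sigma)$, and to the two mixtures --- and combining with the exact orthogonal-support identity for $(\tilde\rho,\tilde\sigma)$ via the triangle inequality closes the argument. The one point requiring care, and the closest thing to an obstacle, is that the ambient dimension here is $d=2^n$, so the Fannes term carries a factor $\log(d-1)<n$; this is harmless because $T=\negl(n)$ and a negligible function times the polynomial $n$ is still negligible, while $h(\negl(n))=\negl(n)$ as well (using $h(x)\le\sqrt{x}$ for small $x$). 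Hence every error term remains negligible, and the three approximations together with the exact identity give the claimed bound.
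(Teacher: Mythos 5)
Your proposal is correct and is essentially identical to the paper's proof: the same projected-and-renormalized surrogates $\tilde\rho,\tilde\sigma$, the gentle measurement lemma, joint convexity of trace distance, the exact block-diagonal identity for orthogonal supports, Fannes continuity, and a final triangle inequality. The only difference is that you spell out details the paper leaves implicit (the $\log(d-1)=O(n)$ factor in Fannes--Audenaert and why it stays negligible), which is a fair elaboration rather than a different route.
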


\begin{proof}
  By~\cref{def:AO}, there exists a projector $\Pi$ such that
  $\Tr((I-\Pi)\rho) \leq \negl(n)$, and $\Tr(\Pi\sigma) \leq \negl(n)$.
  We denote $\Tilde{\rho} = \frac{\Pi\rho\Pi}{\Tr(\Pi\rho\Pi)}$ and
  $\Tilde{\sigma} = \frac{(I-\Pi) \sigma(I-\Pi)}{\Tr((I-\Pi) \sigma(I-\Pi))}$.

  By the gentle measurement lemma~\cite{Win99}, we have that
  $D(\rho, \Tilde{\rho}) \leq \negl(n)$ and
  $D(\sigma, \Tilde{\sigma}) \leq \negl(n)$, and thus
  $D((\rho+\sigma)/2, (\Tilde{\rho} + \Tilde{\sigma})/2) \leq \negl(n)$.
  By Fannes inequality~\cite{Fan73}, we have that $\abs{S(\Tilde{\rho}) - S(\rho)}$,
  $\abs{S(\Tilde{\sigma}) - S(\sigma)}$, and
  $\abs{S((\rho+\sigma)/2) - S((\Tilde{\rho} + \Tilde{\sigma})/2)}$ are negligible.
  Since $\Tilde{\rho}$ and $\Tilde{\sigma}$ are orthogonal, we have that
  \begin{equation*}
    S \left(\frac{\Tilde{\rho} + \Tilde{\sigma}}{2}\right)
    = \frac{S(\Tilde{\rho}) + S(\Tilde{\sigma})}{2} + 1.
  \end{equation*}
  The lemma follows from a triangle inequality.
\end{proof}

\begin{definition}
Let $n(\cdot)$ be a polynomial function.  We say that ${\{\ket{\psi_k}\}}_{k \in \bit^{n(\lambda)},\lambda \in \natural}$ is an \emph{efficiently
    samplable distribution of keyed states} if there exists a QPT quantum
  algorithm $G$ such that, for all $k$ of length $n(\lambda)$, we have $G(1^{\lambda}, k) = \ket{\psi_{k}}$.
  We say that ${\{\ket{\psi_{k}}\}}_{{k \in \bit^{n(\lambda)}, \lambda \in \natural}}$ is a
  \emph{non-uniform} efficiently samplable distribution of keyed states if there
  exists a \emph{non-uniform} quantum polynomial-time algorithm $G$ such that, for all $k$ of length $n(\lambda)$, $G(1^{\lambda}, k) = \ket{\psi_{k}}$. We sometimes denote the advice string as $a$, and refer to the length of $a$ as the advice size of the family.
\end{definition}

\begin{definition}\label{def:single-copy-samplable-family}
    We say that $\{\ket{\psi_k}\}_{k\in \{0,1\}^*}$ together with a family distribution $\{\Dist_n\}_{n \in \natural}$ is a single-copy samplable family if $\Dist_n$ is a distribution on $\ket{\psi_k}$ with $k \in \{0,1\}^n$ and there exists a QPT algorithm $\adv$ that takes in $1^n$ as input and outputs a pair $(k, \ket{\psi_k})$ according to the distribution $\Dist_n$.
\end{definition}

\subsection{Quantum cryptographic primitives}

We recall several existing quantum cryptographic primitives used in this paper.
$\EFI$ pairs were first proposed by Brakerski, Canetti, and
Qian~\cite{BCQ22}.

\begin{definition}[$\EFI$]\label{def:EFI}
  We call two families of mixed states $\{\rho_{0, \lambda}\}_\lambda$ and
  $\{\rho_{1, \lambda}\}_\lambda$ an $\EFI$ pair if the following conditions hold:
  \begin{description}
    \item[Efficient Generation:] There exists a QPT algorithm $G$ that takes
          input $(1^\lambda, b)$ for integer $\lambda$ and $b \in \bit$, and
          outputs the mixed state $\rho_{b, \lambda}$.
    \item[Statistically Far:]
          $D(\rho_{0, \lambda}, \rho_{1, \lambda}) \ge 1-\negl(\lambda)$.
    \item[Computational Indistinguishability:] For any QPT adversary $\adv$, we
          have
          \begin{equation*}
            \abs{\Pr[\adv(1^\lambda, \rho_{0, \lambda})=1] -
              \Pr[\adv(1^{\lambda}, \rho_{1, \lambda})=1]} \leq \negl(\lambda).
          \end{equation*}
  \end{description}
\end{definition}

\begin{remark}
    In the original definition of $\EFI$ in~\cite{BCQ22}, they only required $1/\poly(\lambda)$ statistical gap. It is easy to show the equivalence between their definition and the current definition by considering polynomial copies of their states (say, by~\cite{BQSY24}).
\end{remark}

We can relax the definition of $\EFI$ to allow other gaps in the statistical
distance and computational indistinguishability.
\begin{definition}[Weak $\EFI$]\label{def:weakEFI}
  For any functions $\eps(\cdot)$ and $\delta(\cdot)$, we call two families of
  mixed states
  $\{\rho_{0, \lambda}\}_\lambda, \{\rho_{1, \lambda}\}_\lambda$ an
  $(\eps, \delta)$-weak $\EFI$ pair if the following conditions hold:
  \begin{description}
    \item[Efficient Generation:] There exists a QPT algorithm $G$ that takes
          input $(1^\lambda, b)$ for integer $\lambda$ and $b \in \bit$, and
          outputs the mixed state $\rho_{b, \lambda}$.
    \item[Statistically $(1-\eps)$-Far:]
          $\TD(\rho_{0, \lambda}, \rho_{1, \lambda}) \ge 1-\eps(\lambda)$.
    \item[$\delta$-Computational Indistinguishability:] For any QPT adversary
          $\adv$, we have
          \begin{equation*}
            \abs{\Pr[\adv(1^{\lambda}, \rho_{0, \lambda}) = 1] -
              \Pr[\adv(1^{\lambda}, \rho_{1,\lambda}) = 1]} \le
            \delta(\lambda) + \negl(\lambda).
          \end{equation*}
  \end{description}
\end{definition}

In certain parameter regime, weak $\EFI$ pairs imply standard $\EFI$ pairs.
The following theorem is implicit in~\cite{BQSY24}.

\begin{theorem}[Weak $\EFI$ implies $\EFI$]\label{thm:WEFI-EFI}
  The existence of an $(\eps, \delta)$-weak $\EFI$ pair family in any of the following parameter range implies a standard $\EFI$
  \begin{itemize}
    \item  $\eps=\negl(\lambda), \delta=1 - \frac{1}{\poly(\lambda)}$.
    \item $\eps = 1 -\frac{1}{\poly(\lambda)}, \delta = \negl(\lambda)$.
    \item ${(1-\eps)}^2 - \sqrt \delta \geq C$ for some universal constant $C$ independent of $\lambda$.
  \end{itemize}
  implies the existence of a standard $\EFI$ pair family.
  
\end{theorem}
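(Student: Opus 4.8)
The plan is to view a weak $\EFI$ pair as having two independent defects---a statistical-distance deficit, measured by $\eps$, and an indistinguishability (hiding) deficit, measured by $\delta$---and to amplify both down to negligible. The unifying object is the fidelity $F_\lambda = F(\rho_{0,\lambda}, \rho_{1,\lambda})$. Since the pair is statistically $(1-\eps)$-far, the standard Fuchs--van de Graaf bound $F \le \sqrt{1 - D^2}$ (the companion of the inequality displayed in \cref{lem:ineq-fuchs-vdgraff}) gives $F_\lambda \le \sqrt{1 - (1-\eps)^2}$. My first step is the observation that, under this bound, each of the three hypothesized parameter regimes implies one single amplifiable condition, namely that the combined potential
\begin{equation*}
\Phi(\lambda) \;=\; F_\lambda^2 + \sqrt{\delta(\lambda)}
\end{equation*}
is bounded away from $1$ by a noticeable amount. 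Indeed, for $\eps=\negl$ we get $F_\lambda=\negl$ and $\sqrt\delta \le 1-\Omega(1/\poly)$; for $\delta=\negl$ we get $\sqrt\delta=\negl$ and $F_\lambda^2 \le 1-\Omega(1/\poly^2)$; and the third hypothesis $(1-\eps)^2 - \sqrt\delta \ge C$ together with $F_\lambda^2 \le 1-(1-\eps)^2$ gives exactly $\Phi \le 1-C$. Thus it suffices to amplify any weak $\EFI$ pair with $\Phi \le 1-\gamma$, for noticeable $\gamma$, to a standard one.

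To carry out the amplification I would use the two complementary operations whose effect on the two defects is well understood. For the statistical defect, tensoring is ideal: fidelity is multiplicative, $F(\rho_0^{\otimes t},\rho_1^{\otimes t}) = F_\lambda^{t}$, so by \cref{lem:ineq-fuchs-vdgraff} the trace distance of $t$ copies is at least $1 - F_\lambda^{t}$, which is $1-\negl$ once $F_\lambda \le 1-\Omega(1/\poly)$ and $t = \poly\cdot\omega(\log\lambda)$, while a hybrid argument only costs a factor $t$ in the advantage. This already settles the regime $\delta=\negl$, where the extra factor keeps the advantage negligible. For the indistinguishability defect, the right operation is the XOR construction $\tau_{b} = \E_{b_1\oplus\cdots\oplus b_t = b}\,\rho_{b_1}\otimes\cdots\otimes\rho_{b_t}$, which drives the distinguishing advantage from $\delta$ down to $\delta^{\Omega(t)}$ while, when the states are almost orthogonal (\cref{lem:TDtoAO}, applicable when $\eps=\negl$), leaving the parity essentially readable and hence $D(\tau_0,\tau_1)=1-\negl$; this settles the regime $\eps=\negl$. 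For the general trade-off regime I would invoke the quantum parallel-repetition theorem of~\cite{BQSY24}, which amplifies both defects at once: applied to a pair with $\Phi\le 1-\gamma$, a number $\poly(\lambda,1/\gamma)$ of repetitions suffices to make the resulting pair simultaneously $1-\negl$ statistically far and $\negl$-indistinguishable.

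The main obstacle is the indistinguishability-amplification step, i.e.\ proving that the XOR / parallel-repetition construction genuinely multiplies the hiding error rather than merely leaving it bounded. Against a \emph{classical} distinguisher this is Yao's XOR lemma, whose proof rewinds the distinguisher and separates its internal randomness from its computation; as noted in the context of Goldreich's $\EFID$-to-$\PRG$ transformation~\cite{goldreichNoteComputationalIndistinguishability1990}, neither move is available against a quantum adversary. Establishing the multiplicative bound therefore requires a genuinely quantum argument, and this is precisely the content imported from~\cite{BQSY24}; everything else---the Fuchs--van de Graaf translation, fidelity multiplicativity, the hybrid bound, and the parity-readability of almost-orthogonal states---is routine. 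Finally I would verify that each construction is efficiently generatable (it is, being a polynomial number of invocations of $G$ together with sampling a random parity-constrained string), so that the amplified pair is a bona fide standard $\EFI$ pair per \cref{def:EFI}.
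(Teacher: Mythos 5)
Your proposal is correct, and in substance it follows the same route as the paper: the paper gives no proof of \cref{thm:WEFI-EFI} at all, stating only that the theorem is ``implicit in~\cite{BQSY24}'', and your argument is exactly an unpacking of that citation --- Fuchs--van de Graaf to translate all three bullets into a bound of the form $F^2 + \sqrt{\delta} \le 1-\gamma$, elementary tensoring plus a hybrid argument for the $\delta = \negl(\lambda)$ regime, the XOR construction for the $\eps = \negl(\lambda)$ regime (where, as you correctly flag, the quantum indistinguishability-amplification step is precisely what must be imported from~\cite{BQSY24}, since Yao-style rewinding fails quantumly), and the weak-commitment amplification of~\cite{BQSY24} for the trade-off regime. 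The one point to be careful about is your side claim that~\cite{BQSY24} amplifies any pair with noticeable gap $\gamma$ using $\poly(\lambda, 1/\gamma)$ repetitions: the paper's third bullet insists on a \emph{universal constant} $C$ rather than an inverse-polynomial gap, which suggests the general trade-off statement available from~\cite{BQSY24} may only be invoked at constant gap; fortunately your proof does not rely on the stronger claim, because the two inverse-polynomial regimes (bullets one and two) are already covered by your separate XOR and tensoring arguments.
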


\begin{definition}[Single-copy Pseudorandom State ($\OnePRS$)]
  A state family $\{\ket{\phi_{k}}\}$ of $n(\lambda)$ qubits and key length
  $\ell(\lambda)$ is called single-copy pseudorandom if the following conditions
  hold:
  \begin{description}
    \item[Efficient Preparation:] There is a QPT algorithm $G$ that on input
          $(1^{\lambda}, k)$ prepares the state $\ket{\phi_{k}}$.
    \item[Single-copy Pseudorandom Property:] For any QPT adversary $\adv$, there
          exists a negligible function $\negl(\cdot)$ such that for all $\lambda \in \natural$ and
          $n=n(\lambda)$,
          \begin{equation}\label{eq:1prs-def}
            \abs{\Pr_{k} \bigl[ \adv(1^{\lambda}, \ket{\phi_k}) = 1 \bigr] -
              \Pr_{\ket{\phi}\gets \mu_{n} } \bigl[ \adv(1^{\lambda},
              \ket{\phi}) = 1 \bigr] } = \negl(\lambda),
          \end{equation}
          where $\mu_{n}$ is the Haar measure on $n(\lambda)$-qubit states.
    \item [Stretch Property:] The length $\ell(\lambda)$ of the key $k$ is
          strictly less than the number of qubits $n(\lambda)$.
  \end{description}
\end{definition}

\subsection{Quantum extractors}

As a natural generalization of classical extractors, there are also studies of quantum extractors in the context of quantum information~\cite{BFW14,DBWR14}. We rephrase their definition and results to show their similarity with classical extractors.

\begin{definition}[$(k,\eps,\delta)$-Quantum Strong Extractor]\label{def:strong-quantum-extractor}
  Let $\ell \in \natural$ and $A = A_1A_2$ be a quantum system with $A_1$ and $A_2$ as subsystems, where the subsystem $A_1$ consists of $\ell$ qubits. A collection of quantum unitaries $\{U_j\}_{j\in L}$ acting on system $A$ is
  called a $(k,\eps, \delta)$-quantum strong extractor that extracts $\ell$
  qubits if for any quantum state $\rho_{AE}\in \Density(\H_{A} \otimes \H_E)$
  with $H^{\delta}_{\min}{(A|E)}_\rho\geq k$,
  \begin{equation*}
    D \biggl(\frac{1}{\abs{L}} \sum_{j\in L} \ket{j}\bra{j}\otimes \Tr_{A_2} \bigl(U_j\rho_{AE} U_j^\dagger \bigr),
    \frac{I_L}{\abs{L}} \otimes \frac{I_{A_1}}{\abs{A_1}} \otimes \rho_E \biggr) \le \eps.
  \end{equation*}
\end{definition}

To construct quantum extractors, we need the unitary $t$-design.

\begin{definition}
    An ensemble of quantum unitaries $\{U_r\}_{r\in R}$ is called a unitary $t$-design if for all $M\in\Lin(A)$:
    \begin{align*}
        \frac{1}{|R|}\sum_{r\in R}{U_r^{\otimes t}M {U_r^\dag}^{\otimes t}}=\int U^{\otimes t}M {U^\dag}^{\otimes t}{\rm d}\eta(U),
    \end{align*}
    where $\eta$ is the Haar measure over $U(A)$.
\end{definition}

Unitary $t$-designs can be viewed as the quantum analog of classical $t$-wise independent hash functions.

We use the Choi-Jamio\l kowski isomorphism.

\begin{definition}
    The Choi-Jamio\l kowski map $J$ takes maps $\mathcal{T}_{A \rightarrow B}: \Lin(\H_A) \rightarrow \Lin(\H_B)$ to operators $J(\mathcal{T}_{A \rightarrow B})$ in $\Lin(\H_A \otimes \H_B)$. It is defined as
    \[J(\mathcal{T}_{A \rightarrow B}) = (\mathcal{I}_A \otimes \mathcal{T}_{A' \rightarrow B})(\ketbra{\phi}{\phi}_{AA'}),\]
    where $\ket{\phi}_{AA'} = \frac{1}{\sqrt{|A|}}\sum_i \ket{i}_A \otimes \ket{i}_{A'}$.
\end{definition}

\begin{remark}
It is well known that this map is in fact an isomorphism: The Choi-Jamio\l kowski map $J$ bijectively maps the set of completely positive maps from $\H_A$ to $\H_B$ to the set $\Pos(\H_A \otimes \H_B)$, and its inverse maps any $\gamma_{AB} \in \Pos(\H_A \otimes \H_B)$ to \[\mathcal{T}_{A \rightarrow B}:M_A \mapsto |A|\Tr_{A}\left[\gamma_{AB}M_A^T\right]\]

        Using this, $J(\mathcal{T}_{A \rightarrow B})$ is called the Choi-Jamio\l kowski representation of $\mathcal{T}_{A \rightarrow B}$.
\end{remark}

We recall the decoupling theorem proved in~\cite[Theorem 3.1]{DBWR14}.

\begin{theorem}[Decoupling Theorem]\label{thm:decoupling}
  Let $\mathcal{T}_{A \to B}$ be a completely-positive
  map with Choi-Jamio\l kowski representation $\tau_{AB} = J(\mathcal{T})$ such
  that $\Tr(\tau_{AB}) \le 1$.
  Let $E$ be a quantum system for the environment.
  Then, for $\eps > 0$, $\rho_{AE} \in \Density(\H_A\otimes \H_E)$, and any unitary
  $2$-design $\{U_j\}_{j \in L}$ on $A$, we have
  \begin{equation*}
    \frac{1}{|L|} \sum_{j \in L} \norm{ \mathcal{T} \bigl(U_j \rho_{AE}
    U_j^\dagger \bigr), \tau_B \otimes \rho_E}_1
    \le 2^{-\frac{1}{2}\Hmin^{\eps}(A | E)_{\rho} -
      \frac{1}{2}\Hmin^{\eps}(A | B)_{\tau}} + 12 \eps.
  \end{equation*}
\end{theorem}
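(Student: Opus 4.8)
Since this statement is quoted from~\cite{DBWR14}, the plan is to reproduce the standard two-stage decoupling argument: first establish a non-smooth bound governed by conditional collision (R\'{e}nyi-$2$) entropies, and then lift it to smooth min-entropies, which is where the additive $12\eps$ term originates. Throughout, write $\E_j$ for the uniform average $\frac1{|L|}\sum_{j\in L}$ over the $2$-design.

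The first step is to reduce the trace norm to a weighted Hilbert--Schmidt norm. For any full-rank auxiliary states $\sigma_B \in \Density(\H_B)$ and $\sigma_E \in \Density(\H_E)$ with $\sigma := \sigma_B \otimes \sigma_E$, the Cauchy--Schwarz inequality $\norm{M}_1 \le \sqrt{\Tr(\sigma)}\,\norm{\sigma^{-1/4} M \sigma^{-1/4}}_2$ yields $\norm{M}_1 \le \norm{\sigma^{-1/4} M \sigma^{-1/4}}_2$. Combining this with Jensen's inequality (concavity of $\sqrt{\cdot}$), it then suffices to bound $\E_j \norm{\sigma^{-1/4}\bigl(\mathcal{T}(U_j\rho_{AE}U_j^\dagger) - \tau_B\otimes\rho_E\bigr)\sigma^{-1/4}}_2^2$, after which I take a square root and optimize over $\sigma_B,\sigma_E$.

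The second step evaluates this averaged squared $2$-norm using the design property. Expanding $\norm{X}_2^2=\Tr(X^\dagger X)$, the ``mean'' part of $\E_j \mathcal{T}(U_j\rho_{AE}U_j^\dagger)$ is exactly $\tau_B\otimes\rho_E$, since any $2$-design is a $1$-design, so $\E_j (U_j\otimes I_E)\rho_{AE}(U_j^\dagger\otimes I_E) = \tfrac{I_A}{|A|}\otimes\rho_E$ and $\mathcal{T}(\tfrac{I_A}{|A|}) = \tau_B$ (read off the Choi state $\tau_{AB}$). This cancels the subtracted term, and the remaining second moment depends only on $\E_j U_j^{\otimes 2}(\cdot)(U_j^{\dagger})^{\otimes 2}$, which by definition of the $2$-design equals its Haar value and is computed via the swap trick (the symmetric/antisymmetric decomposition). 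Optimizing over $\sigma_B,\sigma_E$ collapses the result to the collision-entropy bound $\E_j \norm{\mathcal{T}(U_j\rho_{AE}U_j^\dagger) - \tau_B\otimes\rho_E}_1 \le 2^{-\frac12 H_2(A|E)_\rho - \frac12 H_2(A|B)_\tau}$, where $H_2(A|E)_\rho$ and $H_2(A|B)_\tau$ denote the conditional collision entropies.

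The final step is smoothing. Choose $\rho'_{AE}$ with $P(\rho,\rho')\le\eps$ attaining $\Hmin^\eps(A|E)_\rho$ and $\tau'_{AB}$ with $P(\tau,\tau')\le\eps$ attaining $\Hmin^\eps(A|B)_\tau$; applying the collision bound to the primed states and using $H_2 \ge \Hmin$ replaces the exponents with the two smooth min-entropies. The passage from $\rho',\tau'$ back to $\rho,\tau$ is handled by the triangle inequality: since $P \ge D$, and both $\mathcal{T}$ and the partial traces are contractive under trace distance (so the $2$-design average of such maps is as well), each smoothing contributes a constant multiple of $\eps$. The \emph{main obstacle} is thus concentrated in two places: in the second step, choosing $\sigma_B,\sigma_E$ so that the Haar/swap computation collapses cleanly to the two conditional collision entropies without leaving spurious dimension factors; and in the last step, bookkeeping the triangle inequalities (and the factors of two between $D$ and $\norm{\cdot}_1$) precisely enough to land on the stated constant $12$.
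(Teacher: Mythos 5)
This theorem is not proved in the paper at all: it is imported verbatim from~\cite{DBWR14} (Theorem 3.1 there), so the only proof to compare against is the one in that reference. Your outline tracks that proof faithfully in its first two stages: the reduction $\norm{M}_1 \le \norm{\sigma^{-1/4}M\sigma^{-1/4}}_2$ for a normalized auxiliary $\sigma = \sigma_B\otimes\sigma_E$, Jensen's inequality, the mean cancellation via the 1-design property (indeed $\mathcal{T}(I_A/\abs{A}) = \tau_B$, read off the Choi state), and the second-moment/swap computation over the 2-design followed by optimization over $\sigma$, which yields the non-smoothed bound with conditional collision entropies --- this is exactly Theorem 3.2 of~\cite{DBWR14}, and your use of $H_2 \ge \Hmin$ to pass to smooth min-entropies of the primed states is also correct.

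The genuine gap is in your smoothing step, and it is not mere bookkeeping. Contractivity of trace-nonincreasing CP maps disposes only of the $\rho$-replacement: $\E_j\norm{\mathcal{T}\bigl(U_j(\rho-\rho')U_j^\dagger\bigr)}_1 \le \norm{\rho-\rho'}_1 \le 2\eps$. The $\tau$-replacement changes the \emph{map itself}, from $\mathcal{T}$ to the map $\mathcal{T}'$ whose Choi state is $\tau'$, and you must bound $\E_j \norm{(\mathcal{T}-\mathcal{T}')(U_j\rho' U_j^\dagger)}_1$. No contractivity statement applies here, and closeness of Choi states does not in general control the induced trace-norm distance between maps (the conversion to diamond norm loses a factor of $\abs{A}$), so as written this step fails. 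What rescues the argument in~\cite{DBWR14} is the design average itself: take the Jordan decomposition $\tau_{AB}-\tau'_{AB} = \Delta_+-\Delta_-$ with $\Delta_\pm\ge 0$, regard $\Delta_\pm$ as Choi states of CP maps $\mathcal{D}_\pm$, bound $\norm{(\mathcal{T}-\mathcal{T}')(X)}_1 \le \Tr[\mathcal{D}_+(X)]+\Tr[\mathcal{D}_-(X)]$ for $X\ge 0$, and then combine the 1-design property $\E_j U_j\rho'_A U_j^\dagger \le I_A/\abs{A}$ with the inverse Choi formula $\mathcal{D}_\pm(M_A) = \abs{A}\Tr_A[\Delta_\pm M_A^T]$ to conclude $\E_j\bigl(\Tr[\mathcal{D}_+(U_j\rho'U_j^\dagger)]+\Tr[\mathcal{D}_-(U_j\rho'U_j^\dagger)]\bigr) \le \Tr\Delta_+ + \Tr\Delta_- = \norm{\tau-\tau'}_1 \le 2\eps$. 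Only with this idea, together with the separate term $\norm{\tau_B\otimes\rho_E-\tau'_B\otimes\rho'_E}_1$ for the subtracted marginals, does the additive $O(\eps)$ error --- and hence the stated constant $12$ --- materialize.
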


\begin{lemma}\label{lem:extractor_bound}
  Let $k \in \interval{-n}{n}$ and $\eps \in \interval[open]{0}{1}$.
  Let $E$ be a quantum system for the environment and $A=A_1A_2$ be an $n$-qubit quantum system with $A_1$ and $A_2$ as subsystems, where the subsystem $A_1$ consists of at most $\frac{n+k}{2} - \log (1/\eps)$ qubits.
  Let $\rho_{AE} \in \Density(\H_A \otimes \H_E)$ be a density matrix on systems
  $A$ and $E$ having smoothed conditional min-entropy $\Hmin^{\eps/12}(A|E) \ge k$.
  Then, for any unitary $2$-design $\{U_j\}_{j \in L}$ on $A$,
  \begin{equation*}
    \frac{1}{2|L|} \sum_{j \in L} \norm{\Tr_{A_2} \Bigl(U_j \rho_{AE} U_j^\dagger \Bigr) -
      \frac{I_{A_1}}{\abs{A_1}} \otimes \rho_E}_1 \le \eps.
  \end{equation*}
\end{lemma}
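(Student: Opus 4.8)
The statement to prove is `\cref{lem:extractor_bound}`: given an $n$-qubit system $A = A_1 A_2$ with $|A_1| \le \frac{n+k}{2} - \log(1/\eps)$ qubits, a state $\rho_{AE}$ with $\Hmin^{\eps/12}(A|E) \ge k$, and a unitary 2-design $\{U_j\}$, the averaged trace distance between $\Tr_{A_2}(U_j \rho_{AE} U_j^\dagger)$ and $\frac{I_{A_1}}{|A_1|} \otimes \rho_E$ is at most $\eps$. Let me think about how I would derive this from the Decoupling Theorem stated just above.

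The plan is to instantiate the Decoupling Theorem (\cref{thm:decoupling}) with a carefully chosen completely-positive map $\mathcal{T}_{A \to B}$, namely the partial trace over $A_2$, so that $B$ is a copy of $A_1$ and $\mathcal{T}(U_j \rho_{AE} U_j^\dagger) = \Tr_{A_2}(U_j \rho_{AE} U_j^\dagger)$ is exactly the object appearing in the lemma. The first task is to compute the Choi-Jamio\l kowski representation $\tau_{AB} = J(\mathcal{T})$ of this map. Writing $A = A_1 A_2$ and $A' = A_1' A_2'$ and tracing out $A_2'$ from the maximally entangled state $\ket{\phi}_{AA'}$, a direct computation yields
\[
\tau_{AB} = \frac{I_{A_2}}{\abs{A_2}} \otimes \ketbra{\Phi}{\Phi}_{A_1 B},
\]
where $\ket{\Phi}_{A_1 B}$ is the normalized maximally entangled state between $A_1$ and its copy $B$. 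In particular $\Tr(\tau_{AB}) = 1$ as required, and tracing out $A$ gives $\tau_B = \frac{I_{A_1}}{\abs{A_1}}$, which is precisely the target state in the lemma.

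The second, and most delicate, step is to lower-bound the conditional min-entropy $\Hmin^{\eps/12}(A|B)_\tau$. Since smoothing only increases min-entropy, it suffices to bound the unsmoothed quantity $\Hmin(A|B)_\tau$. Here I would use two facts: the conditional min-entropy of a maximally entangled state is negative, with $\Hmin(A_1|B)_{\Phi} = -\log\abs{A_1}$ (witnessed by the maximally mixed $\sigma_B = I_B/\abs{A_1}$), while the maximally mixed factor on $A_2$ contributes its full min-entropy $\log\abs{A_2}$. By additivity over the tensor product, the $A_2$ register being uncorrelated with $B$,
\[
\Hmin(A|B)_\tau = \log\abs{A_2} - \log\abs{A_1}.
\]
Substituting the qubit bound $\abs{A_1} \le \frac{n+k}{2} - \log(1/\eps)$, and hence $\log\abs{A_2} \ge \frac{n-k}{2} + \log(1/\eps)$, gives $\Hmin(A|B)_\tau \ge -k + 2\log(1/\eps)$.

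Finally, I would apply \cref{thm:decoupling} with smoothing parameter $\eps/12$, so that the hypothesis $\Hmin^{\eps/12}(A|E)_\rho \ge k$ plugs directly into the first entropy term and the additive error becomes $12 \cdot (\eps/12) = \eps$. The exponent then satisfies
\[
-\frac{1}{2}\Hmin^{\eps/12}(A|E)_\rho - \frac{1}{2}\Hmin^{\eps/12}(A|B)_\tau \le -\frac{1}{2} k - \frac{1}{2}\bigl(-k + 2\log(1/\eps)\bigr) = -\log(1/\eps),
\]
so the first term is at most $2^{-\log(1/\eps)} = \eps$, and the decoupling bound reads $\frac{1}{\abs{L}}\sum_j \norm{\mathcal{T}(U_j \rho_{AE} U_j^\dagger) - \tau_B \otimes \rho_E}_1 \le 2\eps$. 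Dividing by $2$ and substituting $\tau_B = I_{A_1}/\abs{A_1}$ yields exactly the claimed inequality. The main obstacle is the second step: correctly identifying the Choi state and computing its \emph{negative} conditional min-entropy, and then tuning the dimension of $A_1$ so that the two min-entropy contributions cancel the factor of $k$ and leave precisely the slack $2\log(1/\eps)$ needed to drive the exponential term down to $\eps$.
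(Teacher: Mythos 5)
Your proposal is correct and follows essentially the same route as the paper's proof: instantiate the decoupling theorem with $\mathcal{T} = \Tr_{A_2}$, compute the Choi state $\tau_{AB}$ (the paper writes it in sum form, equal to your $\frac{I_{A_2}}{\abs{A_2}} \otimes \ketbra{\Phi}{\Phi}_{A_1B}$), bound $\Hmin(A|B)_\tau = \log\abs{A_2} - \log\abs{A_1} \ge 2\log(1/\eps) - k$ using that smoothing only increases min-entropy, and conclude with the same parameter accounting giving $2\eps$ before halving.
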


\begin{proof}[Proof of \cref{lem:extractor_bound}]
  We apply \cref{thm:decoupling} to prove the statement and choose $B$ to be the
  subsystem $A_1$, and $\mathcal{T}$ to be the partial trace over subsystem
  $A_2$.

  By definition, we compute the Choi-Jamio\l kowski state
  \begin{equation*}
    \tau_{AB} = \frac{1}{\abs{A_1}} \sum_{x,y} {\ket{x}\bra{y}}_{A_1}
    \otimes \frac{I_{A_2}}{\abs{A_2}} \otimes \ket{x}\bra{y}_{B}.
  \end{equation*}
  Its reduced density matrix $\tau_B = \frac{I_B}{\abs{B}}$ is maximally mixed.
  The conditional min-entropy $\Hmin{(A|B)}_\tau$ of $\tau_{AB}$ can be computed
  as
  \begin{equation*}
    \begin{split}
      \Hmin{(A|B)}_\tau & = \log\abs{A_2} + \Hmin{(A_1|B)} \\
                        & = \log\abs{A_2} - \log\abs{A_1} \\
                        & = n - 2 \log\abs{A_1} \\
                        & \ge 2\log(1/\eps) - k,
    \end{split}
  \end{equation*}
  where the last step follows from the condition on the number of qubits in
  system $A_1$.
  Hence, we can bound
  \begin{equation}\label{eq:ext_bound}
    \Hmin^{\eps/12}(A|E)_\rho + \Hmin^{\eps/12}(A|B)_\tau \ge
    \Hmin^{\eps/12}(A|E)_\rho + \Hmin(A|B)_\tau \ge 2 \log (1/\eps),
  \end{equation}
  where the first inequality follows from the definition of smoothed min-entropy
  taking the maximum over close states.

  The conditions of \cref{thm:decoupling} are all met and we have
  \begin{equation*}
    \frac{1}{|L|} \sum_{j \in L} \norm{\Tr_{A_2} \Bigl(U_j \rho_{AE} U_j^\dagger \Bigr) -
      \frac{I_{A_1}}{\abs{A_1}}\otimes \rho_E }_1
    \le 2^{-\frac{1}{2}\Hmin^{\eps/12}{(A | E)}_{\rho} -
      \frac{1}{2}\Hmin^{\eps/12}{(A | B)}_{\tau}} + \eps \le 2\eps,
  \end{equation*}
  where the last step follows from \cref{eq:ext_bound}.
\end{proof}

\begin{theorem}\label{thm:extractor}
    Let $n,\ell \in \natural$, $k \in \interval{-n}{n}$, and $\eps \in \interval[open]{0}{1}$ such that $\ell \leq \frac{n+k}{2}-\log(1/\eps)$. Then any unitary $2$-design on $n$-qubit system is a $(k,\eps, \eps/12)$-quantum strong extractor that extracts $\ell$ qubits.
\end{theorem}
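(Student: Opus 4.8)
The plan is to observe that \cref{thm:extractor} is essentially a repackaging of \cref{lem:extractor_bound}: the only gap between the two statements is that \cref{lem:extractor_bound} bounds an \emph{average} of trace norms over the design index $j$, whereas \cref{def:strong-quantum-extractor} phrases the extractor guarantee as a \emph{single} trace distance involving the classical seed register $\ket{j}\bra{j}$. First I would match the parameters. Setting $\delta = \eps/12$ makes the hypothesis $\Hmin^{\delta}(A|E)_\rho \ge k$ of \cref{def:strong-quantum-extractor} coincide with the hypothesis $\Hmin^{\eps/12}(A|E) \ge k$ of \cref{lem:extractor_bound}. Since $A_1$ consists of exactly $\ell$ qubits and the theorem assumes $\ell \le \frac{n+k}{2} - \log(1/\eps)$, the requirement that $A_1$ have ``at most $\frac{n+k}{2} - \log(1/\eps)$ qubits'' is met, so \cref{lem:extractor_bound} applies verbatim to any admissible $\rho_{AE}$.

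The key step is to compute the trace distance appearing in \cref{def:strong-quantum-extractor} by exploiting the block-diagonal (classical-quantum) structure of both states in the seed register. Writing $\sigma_j = \Tr_{A_2}(U_j \rho_{AE} U_j^\dagger)$ and $\omega = \frac{I_{A_1}}{\abs{A_1}} \otimes \rho_E$, the difference of the two states is
\begin{equation*}
  \frac{1}{\abs{L}} \sum_{j \in L} \ket{j}\bra{j} \otimes (\sigma_j - \omega),
\end{equation*}
which is block-diagonal with respect to the orthonormal family $\{\ket{j}\}_{j \in L}$. Hence its Schatten $1$-norm is the sum of the $1$-norms of the individual blocks, and the trace distance of interest equals
\begin{equation*}
  \frac{1}{2} \cdot \frac{1}{\abs{L}} \sum_{j \in L} \norm{\sigma_j - \omega}_1
  = \frac{1}{2\abs{L}} \sum_{j \in L} \norm{\Tr_{A_2}(U_j \rho_{AE} U_j^\dagger) - \frac{I_{A_1}}{\abs{A_1}} \otimes \rho_E}_1.
\end{equation*}

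Finally, the right-hand side is exactly the quantity bounded by \cref{lem:extractor_bound}, which gives that it is at most $\eps$. This holds for every $\rho_{AE}$ with $\Hmin^{\eps/12}(A|E) \ge k$, which is precisely the condition in the definition of a $(k,\eps,\eps/12)$-quantum strong extractor, so any unitary $2$-design on the $n$-qubit system extracts $\ell$ qubits with these parameters. The only nontrivial ingredient is the classical-quantum decomposition of the trace distance, which is elementary; I expect no real obstacle here beyond correctly aligning the smoothing parameter $\delta = \eps/12$ with the qubit count of $A_1$.
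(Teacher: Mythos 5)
Your proposal is correct and takes essentially the same route as the paper: both proofs reduce the theorem to \cref{lem:extractor_bound} by noting that the two states in \cref{def:strong-quantum-extractor} are block-diagonal in the seed register, so the single trace distance equals the average over $j$ of the per-seed trace distances, which that lemma bounds by $\eps$. Your additional bookkeeping (matching $\delta = \eps/12$ and checking that $A_1$ with exactly $\ell \leq \frac{n+k}{2} - \log(1/\eps)$ qubits satisfies the lemma's hypothesis) is implicit in the paper's proof and is handled correctly.
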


\begin{proof}
  By \Cref{def:strong-quantum-extractor}, the condition for the unitary $2$-design $\{U_j\}_{j\in L}$ on $n$-qubit system $A$ to be such a quantum strong extractor is that for every state $\rho$ on systems $A$ and $E$ having smoothed conditional min entropy $\Hmin^{\eps/12}{(A|E)}_\rho \ge k$,
  \begin{equation*}
    D \biggl(\frac{1}{\abs{L}} \sum_{j\in L} \ket{j}\bra{j}\otimes
    \Tr_{A_2} \bigl(U_j\rho_{AE} U_j^\dagger \bigr),
    \frac{I_L}{\abs{L}} \otimes \frac{I_{A_1}}{\abs{A_1}}
    \otimes \rho_E \biggr) \le \eps,
  \end{equation*}
  where $A_1$ consists of $\ell$ qubits, and $A_2$ consists of the other $n - \ell$ qubits.
  The left hand side is the trace distance of two block diagonal matrices
  indexed by $j$ and can hence be simplified to
  \begin{equation*}
    \frac{1}{|L|} \sum_{j\in L} D\biggl(\Tr_{A_2} \Bigl(U_j \rho_{AE} U_j^\dagger \Bigr),
      \frac{I_{A_1}}{\abs{A_1}} \otimes \rho_E\biggr).
  \end{equation*}
  The statement then follows from
  \cref{lem:extractor_bound}.
\end{proof}

For a given unitary $2$-design $\{U_j\}_{j \in L}$ on an $n$-qubit quantum
system $A = A_1A_2$, we introduce the notation $\Ext_{\ell}^{A \to A_{1}}(\cdot)$, or
simply $\Ext_\ell^{A}(\cdot)$, to denote the extractor that extracts $\ell$
qubits (on $A_{1}$) from the input system $A$.
That is,
\begin{equation*}
  \Ext^{A \to A_{1}}_\ell(\rho_{AE}) = \frac{1}{\abs{L}} \sum_{j\in L}
  \ket{j}\bra{j}\otimes \Tr_{A_2} \bigl(U_j\rho_{AE} U_j^\dagger \bigr).
\end{equation*}

The claim in the above theorem can be written as
\begin{equation*}
  D \Bigl(\Ext^{A}_\ell(\rho_{AE}), \frac{I_L}{\abs{L}} \otimes
  \frac{I_{A_1}}{\abs{A_1}} \otimes \rho_E \Bigr) \le \eps,
\end{equation*}
for $\ell \leq \bigl(n + \Hmin^{\eps/12}(A|E)\bigr)/2 - \log(1/\eps)$.

\subsection{Kolmogorov complexity}
\label{sec:kolmogorov}

In this paper we will use several notions of Kolmogorov
complexity. The most well-known of these is the standard prefix-free
classical Kolmogorov complexity of binary strings,
which we denote by $\Kc(x)$. 

\begin{definition}\label{def:K-complexity}
    Let $U$ be a universal 
    prefix-free 
    Turing machine.
    For strings $x \in \{0,1\}^*$, 
    the 
    Kolmogorov complexity $\Kc_U(x)$ of $x$ is the length of
    the shortest program $p$ such that $U(p)$ will halt and output $x$
    after a finite number of steps.
\end{definition}
We clarify that, here and in the rest of the paper, by ``length of a
program $p$'' we mean the length of the string corresponding to $p$ when
viewed as an input to the universal 
Turing machine $U$ (we do not mean the
size of the program $p$ in terms of some set of gates).

Because for any two universal Turing machines $U$ and $V$ there exists some
constant $c$ such that for all $x$, $|\Kc_U(x)-\Kc_V(x)|<c$, the choice of
universal Turing machine is unimportant to us. So, we choose to fix some
universal Turing machine $U$, and simply write $\Kc(x)$, dropping the
subscript.

In this work we are interested in the complexity of quantum states,
and we will use several generalizations of Kolmogorov complexity that allow
us to measure their complexity. The first of these was introduced by Mora
and Briegel~\cite{mora2004algorithmic,mora2006quantum} and can be thought
of as measuring the amount of classical information required to generate a
good approximation of the state of interest. It is defined relative to some
choice of a universal classical Turing machine $U$ and a universal quantum
gate set $B$. It measures the length of the shortest program on which the
universal classical Turing machine outputs a description of a quantum
circuit $C$ that, when 
given $\ket{0 \cdots 0}$ as input, outputs a state
which is $\eps$-close to the state of interest. 
\begin{definition}[\emph{$\Knet$-complexity}~\cite{mora2004algorithmic, mora2006quantum}]\label{def:Knet-complexity}
Let $U$ be a universal Turing machine, $B$ a universal set of quantum
gates, and $\mathcal{C}^B$ be the set of circuits composed of gates from $B$. Let $\eps \in [0,1]$. For a pure state $\ket{\psi}$, we define
its \emph{$\Knet$-complexity} as:
\begin{equation*}
  \Knet^{U,B,\eps} (\ket{\psi}) = \min_{p}
  \bigl\{\abs{p} : C = U(p) \in 
      \mathcal{C}^B
      \text{ and }
  \abs{\bra{\psi}C \ket{0^{m}}}^2 \geq 1 - \eps \bigr\},
\end{equation*}
where the minimum is taken over program descriptions $p$, and 
$\mathcal{C}^B$
is the set of quantum circuits of finite size consisting of gates from $B$ (here, $m$ denotes the size of inputs to $C$, which can depend on $C$).
\end{definition}

Since $B$ and $U$ are universal, this definition changes only by a just barely superconstant factor when we
change our choice of $U$ or $B$ (see
\cref{sec:robust-qk-gateset} for details). So, going forward, we will
simply fix a choice of $U$ and $B$ and write
$\Knet^{\eps}(\ket{\psi})$, dropping the superscripts.

The second notion that we will use was introduced by G\'{a}cs~\cite{Gac01} and
generalizes the definition of the classical Kolmogorov complexity $\Kc(x)$
when viewed as the negative logarithm of the probability of $x$ being
output by the ``universal distribution''.
When $U$ is prefix-free,
the (classical) universal semi-distribution\footnote{
    A semi-distribution is a distribution where the total probability adds up to
some value less than one, which here corresponds to the probability that
we sample a program which halts. 
This is possible because $U$ is prefix-free.
} $D_U$ is defined as follows: sample 
a
uniformly random program $p$,

run $U(p)$, and return its
output. Defining $\Kc(x) = -\log(\Pr[x \sim D_U])$ results in a notion
equivalent to the one from Definition \ref{def:K-complexity} 
up to an additive constant.

We can generalize $D_U$ to the universal \emph{semi-density
matrix}\footnote{
    Similar to a semi-distribution,
    a semi-density matrix is defined as
    some $\sum_i c_i\ketbra{\phi_i}{\phi_i}$ where the $c_i$'s add up to less
    than or equal to 1.
} $\udm$ in one
of several equivalent ways. G\'{a}cs chooses to take the outputs of $U(p)$ and
interpret them as vectors of complex numbers describing the amplitudes of a
state. He then takes $\udm$ to be the resulting semi-density
matrix from
picking state $\ket{\psi}$ with the probability that $D_U$ would output the
vector corresponding to $\ket{\psi}$. We can take an approach closer to
that of Mora and Briegel and equivalently define $\udm_n$ to be the
semi-density matrix 
resulting from picking state $\ket{\psi}$ over $n$ qubits with the
probability that $D_U$ outputs a classical description of a quantum circuit
$C$ such that $C\ket{0} = \ket{\psi}$ (so we only consider circuits
outputting $n$-qubit states). Towards this,
we fix some finite universal set of quantum gates, and consider circuits
consisting of gates from this set. Given this notion of universal semi-density matrix,
G\'{a}cs' notion of state complexity is the following. 
\begin{definition}[``$\Hbar$-complexity''~\cite{Gac01}]
\label{def:hbar}
Let $U$ be a universal Turing machine, and $B$ a universal set of quantum
gates. For a pure state $\ket{\psi}$ over $n$ qubits, we define its $\Hbar$-complexity as
\begin{equation*}
\Hbar^{U,B}(\ket{\psi}) = - \log \braket{\psi | \udm_n | \psi},
\end{equation*}
where $\udm_n$ is the universal semi-density matrix defined with respect to $U$ and $B$.
\end{definition}

Since this variant of the notion is new, we include proofs of its invariance
with respect to $U$ and $B$ and its equivalence with the notion introduced
by G\'{a}cs in \Cref{sec:robust-qk-gateset}. Given its invariance we will
fix a choice of $U$ and $B$ and write $\Hbar(\ket{\psi})$, dropping the
superscripts. 
Furthermore, whenever $n$ is clear from context we will omit it and
simply write $\udm$.
While the notion described here is equivalent to the notion
defined by G\'{a}cs, our notion is more natural in a setting like ours
where we are interested in quantum algorithms.

We also introduce a robust version of G\'{a}cs' complexity $\Hbar$.

\begin{definition}
\label{def:hbar-eps}
  For any $\eps \in [0,1]$, we define
  \begin{equation*}
    \Hbar^\eps(\ket{\psi}) = \max_{\ket{\phi}: D(\ket{\psi}, \ket{\phi}) \le \eps}
    \Hbar(\ket{\phi}).
  \end{equation*}
\end{definition}

\begin{remark}
  The definition of $\Hbar^\eps$ takes the maximum of $\Hbar$ in the
  $\eps$-neighborhood of $\ket{\psi}$, analogous to $\eps$-smoothed
  min-entropies, since the purified distance becomes the trace distance for pure
  states.
  For a state to have small $\Hbar^{\eps}$, all nearby states must have small
  $\Hbar$.
  We note that only taking the maximum is meaningful here, as $\Hbar$ is
  always small when taking the minimum in the following sense: any state is
  negligibly close to a state with $\Hbar$ less than $O(\log^2 n)$.
  Specifically, for any $n$-qubit state $\ket{\psi}$, it is
  $2^{-\log^{2} n}$-close in purified distance to a state of the form
  $\ket{\psi'} = a e^{\ii \theta} \ket{0^{n}} + \sqrt{1 - a^{2}} \ket{\phi}$,
  where $\braket{0^n | \phi} = 0$ and $a \ge 2^{-\log^{2} n}$.
  According to the definition of $\udm$, we have
  $\frac{1}{cn} \ket{0^n} \bra{0^n} \leq \udm$ for some constant $c$; thus,
  \begin{equation*}
    \braket{\psi'|\udm|\psi'} \geq \frac{1}{cn} \abs{\braket{\psi'|0^n}}^2
    \geq 2^{-2 \log^2 n - \log(cn)},
  \end{equation*}
  which implies $\Hbar(\ket{\psi'}) \leq 2\log^2 n + \log(cn)$.
\end{remark}

G\'{a}cs' also considered a dual state complexity measure
$\Hol(\ket{\psi}) = -\braket{\psi | \log \udm | \psi}$ which we do not use.
What we will instead use is a new complexity measure $\Umin$, which is overlooked by G\'{a}cs'
work. As we will see, this notion is closely related to the measures $\Hbar$ and $\Hol$.
Moreover, we find that this new measure is a much better dual of $\Hbar$ as the their properties
demonstrate.

Consider the relative min-entropy, which is defined as
\begin{align*}
  D_\infty(\rho \| \sigma) = \min\{\lambda \mid \rho \leq 2^\lambda \sigma\}.
\end{align*}
The new measure $\Umin(\ket{\psi})$ is the relative min-entropy of $\ket{\psi}$
with respect to the universal density matrix $\udm$:
\begin{equation*}
  \Umin(\ket{\psi}) = D_\infty(\rho \| \udm).
\end{equation*}
We can define the smoothed version of it as:
\begin{equation*}
  \Umin^{\eps}(\ket{\psi}) = \min_{D(\ket{\psi}, \ket{\phi}) \le \eps} \Umin(\ket{\phi}).
\end{equation*}
We prove some simple properties of $\Umin$.

\begin{lemma}
  $\Umin (\ket{\psi}) = \log \braket{\psi | \udm^{-1} | \psi}$.
\end{lemma}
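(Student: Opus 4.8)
The statement to prove is the identity $\Umin(\ket{\psi}) = \log\braket{\psi|\udm^{-1}|\psi}$, where by definition $\Umin(\ket{\psi}) = D_\infty(\rho\,\|\,\udm)$ with $\rho = \ketbra{\psi}{\psi}$ the rank-one projector onto $\ket{\psi}$, and $D_\infty(\rho\,\|\,\sigma) = \min\{\lambda : \rho \le 2^\lambda\sigma\}$.

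The plan is to unwind the definition of relative min-entropy for the special case where the first argument is a rank-one projector. First I would write the defining inequality $\rho \le 2^\lambda\udm$, i.e.\ $\ketbra{\psi}{\psi} \le 2^\lambda\udm$. The key observation is that $\udm$ is a positive semi-definite operator, and we may assume it is invertible on the relevant support (or work within the support of $\udm$; since $\ket{\psi}$ has $\Umin$ finite only when $\ket{\psi}$ lies in the support of $\udm$, conjugation by $\udm^{-1/2}$ is well-defined). Conjugating both sides of the operator inequality by the positive operator $\udm^{-1/2}$ preserves the order, yielding
\begin{equation*}
\udm^{-1/2}\ketbra{\psi}{\psi}\,\udm^{-1/2} \le 2^\lambda I.
\end{equation*}
The left-hand side is again a rank-one positive operator, namely $\ketbra{\phi}{\phi}$ where $\ket{\phi} = \udm^{-1/2}\ket{\psi}$ (unnormalized). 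A rank-one operator $\ketbra{\phi}{\phi}$ satisfies $\ketbra{\phi}{\phi}\le cI$ if and only if its single nonzero eigenvalue $\braket{\phi|\phi}$ is at most $c$. Hence the smallest valid $2^\lambda$ equals this eigenvalue, which is exactly $\braket{\phi|\phi} = \braket{\psi|\udm^{-1/2}\,\udm^{-1/2}|\psi} = \braket{\psi|\udm^{-1}|\psi}$.

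Therefore the minimal $\lambda$ satisfying the operator inequality is $\lambda = \log\braket{\psi|\udm^{-1}|\psi}$, giving $D_\infty(\rho\,\|\,\udm) = \log\braket{\psi|\udm^{-1}|\psi}$ as claimed. The only subtlety worth flagging—and the step I expect to require the most care—is the invertibility of $\udm$: if $\ket{\psi}$ has a component outside the support of $\udm$, then no finite $\lambda$ makes $\ketbra{\psi}{\psi}\le 2^\lambda\udm$ hold, so $\Umin(\ket{\psi}) = +\infty$, and correspondingly $\braket{\psi|\udm^{-1}|\psi}$ is interpreted as $+\infty$ (reading $\udm^{-1}$ as the Moore–Penrose pseudoinverse would instead give a finite value, so one must use the genuine inverse on the full space or adopt the convention that the overlap is infinite off-support). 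Since $\udm$ has full support in the present setting (as noted in the remark following Definition~\ref{def:hbar-eps}, $\frac{1}{cn}\ketbra{0^n}{0^n}\le\udm$ and by universality every basis direction is reachable with positive weight), $\udm^{-1}$ is genuinely well-defined and this edge case does not arise, making the identity hold verbatim.
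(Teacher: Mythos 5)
Your proof is correct and follows essentially the same route as the paper's: conjugate the defining operator inequality $\ketbra{\psi}{\psi} \le 2^\lambda \udm$ by $\udm^{-1/2}$, observe that the resulting rank-one operator is bounded by $2^\lambda I$ exactly when its unique nonzero eigenvalue $\braket{\psi|\udm^{-1}|\psi}$ is at most $2^\lambda$, and take the minimal such $\lambda$. Your extra discussion of the support/invertibility edge case is a careful addition, but the paper handles it the same way by noting $\udm$ is full-rank.
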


\begin{proof}
  $\ket{\psi}\bra{\psi} \leq 2^{r} \udm$ is equivalent to
  $\udm^{-1/2}\ket{\psi}\bra{\psi}\udm^{-1/2} \leq 2^r I$.
  (note that $\udm$ is invertible as $\udm$ is a full-rank Hermitian
  matrix). Then, the inequality holds if and only if
  $\|\udm^{-1/2}\ket{\psi}\|^2 \leq 2^{r}$, which can be reformulated as
  $\braket{\psi|\udm^{-1}|\psi}\leq 2^{r}$, so we are done.
\end{proof}

\begin{lemma}\label{lem:U-bound}
  For any quantum pure state $\ket{\psi}$ satisfying
  $\udm \ge 2^{-\kappa} \ketbra{\psi}{\psi}$, we have $\Umin(\ket{\psi}) \le \kappa$.
\end{lemma}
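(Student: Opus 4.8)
The plan is to observe that the statement follows immediately by unwinding the two definitions involved. Recall that $\Umin(\ket{\psi}) = D_\infty(\ketbra{\psi}{\psi} \| \udm)$ and that $D_\infty(\rho \| \sigma) = \min\{\lambda \mid \rho \le 2^\lambda \sigma\}$. Specializing the latter to $\rho = \ketbra{\psi}{\psi}$ and $\sigma = \udm$, we see that $\Umin(\ket{\psi})$ is precisely the smallest $\lambda$ for which the operator inequality $\ketbra{\psi}{\psi} \le 2^\lambda \udm$ holds. So the whole task reduces to exhibiting a single admissible value of $\lambda$, namely $\lambda = \kappa$, and invoking the definition of the minimum.

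To that end, I would take the hypothesis $\udm \ge 2^{-\kappa}\ketbra{\psi}{\psi}$ and multiply both sides by the positive scalar $2^{\kappa}$. Since scaling by a positive constant preserves the Loewner (positive-semidefinite) ordering, this yields $2^{\kappa}\udm \ge \ketbra{\psi}{\psi}$, equivalently $\ketbra{\psi}{\psi} \le 2^{\kappa}\udm$. This exactly says that $\kappa$ lies in the set $\{\lambda \mid \ketbra{\psi}{\psi} \le 2^\lambda \udm\}$ over which the minimum in $D_\infty(\ketbra{\psi}{\psi}\|\udm)$ is taken, so that minimum is at most $\kappa$. Hence $\Umin(\ket{\psi}) \le \kappa$, as claimed.

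There is essentially no obstacle here; the only point worth stating explicitly is that multiplying a positive-semidefinite operator inequality by a positive real number preserves it, which is immediate from the definition of $\ge$ on Hermitian operators. I would therefore keep the written proof to the two short steps above, without any further calculation.
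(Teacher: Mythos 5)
Your proof is correct, and it takes a genuinely different and more elementary route than the paper's. The paper first establishes the characterization $\Umin(\ket{\psi}) = \log \braket{\psi | \udm^{-1} | \psi}$ in a separate lemma, and then proves this bound by writing $\udm = 2^{-\kappa}\ketbra{\psi}{\psi} + \udmr$, assuming $\udmr$ strictly positive (via a perturbation argument), and applying the Sherman--Morrison inversion formula to compute $\braket{\psi | \udm^{-1} | \psi} = \frac{w}{1 + 2^{-\kappa} w} \le 2^{\kappa}$ where $w = \braket{\psi | \udmr^{-1} | \psi}$. You instead work directly from the definition $\Umin(\ket{\psi}) = D_\infty(\ketbra{\psi}{\psi} \,\|\, \udm) = \min\{\lambda \mid \ketbra{\psi}{\psi} \le 2^{\lambda} \udm\}$: rescaling the hypothesis by $2^{\kappa}$ shows $\kappa$ is a feasible point of this minimization, so the minimum is at most $\kappa$. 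Your argument is shorter, needs no invertibility or perturbation step, and in fact never uses that $\ket{\psi}$ is pure or that $\udm$ is a semi-density matrix---it holds for arbitrary positive semidefinite operators. What the paper's computation buys in exchange is an exact expression for $\braket{\psi | \udm^{-1} | \psi}$, consistent with how $\Umin$ is manipulated via $\udm^{-1}$ elsewhere in the paper (e.g.\ in the proof of the low-entropy lemma for $\Umin$), but for the statement at hand your definitional argument fully suffices.
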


\begin{proof}
  Write $\udm = 2^{-\kappa} \ketbra{\psi}{\psi} + \udmr$.
  We can assume without loss of generality that $\udmr$ is strictly positive;
  otherwise, we can consider a small perturbation of it.
  Using the Sherman-Morrison formula
  \begin{equation*}
    {(A + uv^{\dagger})}^{-1} = A^{-1} -
    \frac{A^{-1} uv^{\dagger} A^{-1}}{1 + v^{\dagger} A^{-1} u}
  \end{equation*}
  with $A = \udmr$, $u = v = 2^{-\kappa/2} \ket{\psi}$, we obtain
  \begin{equation*}
    \udm^{-1} = \udmr^{-1} - \frac{2^{-\kappa} \udmr^{-1}
      \ketbra{\psi}{\psi} \udmr^{-1}}{1 + 2^{-\kappa}
      \braket{\psi | \udmr^{-1} | \psi}}.
  \end{equation*}
  Define $w = \braket{\psi | \udmr^{-1} | \psi}$ and take the expectation value
  of $\ket{\psi}$ on both sides:
  \begin{equation*}
    \begin{split}
      \braket{\psi | \udm^{-1} | \psi}
      &= w - \frac{2^{-\kappa} w^{2}}{1 + 2^{-\kappa} w}\\
      &= \frac{w}{1 + 2^{-\kappa} w}\\
      &\le 2^{\kappa}.
    \end{split}
  \end{equation*}
  This completes the proof.
\end{proof}

\begin{lemma}\label{lem:U-upper-bound}
  For any pure quantum state $\ket{\psi}$, we have
  \begin{equation*}
    \Umin(\ket{\psi}) \le \Knet(\ket{\psi}).
  \end{equation*}
\end{lemma}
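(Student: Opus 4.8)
The plan is to reduce the claim directly to \cref{lem:U-bound}, which guarantees that whenever $\udm \ge 2^{-\kappa}\ketbra{\psi}{\psi}$ we have $\Umin(\ket{\psi}) \le \kappa$. Thus it suffices to exhibit a single term in the mixture defining $\udm_n$ whose weight is at least $2^{-\Knet(\ket{\psi})}$ and whose associated pure state is exactly $\ket{\psi}$; then setting $\kappa = \Knet(\ket{\psi})$ finishes the argument.

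First I would unfold the definition of $\Knet(\ket{\psi})$ in its exact ($\eps = 0$) form, matching the unsmoothed $\Umin$ on the left-hand side: let $p^*$ be an optimal program, so that $|p^*| = \Knet(\ket{\psi})$ and $C^* = U(p^*) \in \mathcal{C}^B$ is a circuit with $C^* \ket{0^m} = \ket{\psi}$. Recall that $\udm_n$ is, by construction, the semi-density matrix obtained by picking the state $C\ket{0^m}$ with the probability that the universal semi-distribution $D_U$ outputs a description of the circuit $C$, where the sum ranges over circuits with $n$-qubit output. Since $C^*$ has $n$-qubit output, it contributes one such term.

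Next I would lower bound the weight of the $C^*$ term. Because $U$ is prefix-free and $p^*$ is a program producing $C^*$, the coding bound for the universal semi-distribution gives $\Pr_{D_U}[C^*] \ge 2^{-|p^*|} = 2^{-\Knet(\ket{\psi})}$. Dropping the remaining (positive semidefinite) terms of the mixture then yields the operator inequality $\udm_n \ge 2^{-\Knet(\ket{\psi})}\ketbra{\psi}{\psi}$. Applying \cref{lem:U-bound} with $\kappa = \Knet(\ket{\psi})$ immediately gives $\Umin(\ket{\psi}) \le \Knet(\ket{\psi})$.

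The argument is essentially bookkeeping, so the only points demanding care are the conventions, and these are where I expect the subtleties (rather than genuine difficulty) to lie. I must ensure that $\udm$ and $\Knet$ are defined with respect to the \emph{same} universal machine $U$ and gate set $B$, so that the optimal $\Knet$-program really does index a term of $\udm$; and I must use the exact preparation guarantee, so that the contributed pure state is precisely $\ket{\psi}$ rather than an $\eps$-approximation (otherwise only a nearby state, not $\ketbra{\psi}{\psi}$ itself, would appear, and the clean operator inequality would break). The prefix-freeness of $U$ is the one structural property genuinely invoked, as it is what licenses $\Pr_{D_U}[C^*] \ge 2^{-|p^*|}$.
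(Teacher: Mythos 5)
Your proposal is correct and matches the paper's proof: the paper likewise writes $\udm = 2^{-\Knet(\ket{\psi})}\ketbra{\psi}{\psi} + \udmr$ for positive semi-definite $\udmr$ (exactly the decomposition you obtain by isolating the optimal program's contribution to the universal semi-density matrix) and then invokes \cref{lem:U-bound}. Your write-up simply makes explicit the bookkeeping — the optimal exact program, the prefix-free weight bound, and the matching of conventions — that the paper compresses into the phrase ``by the definition of $\udm$''.
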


\begin{proof}
  By the definition of $\udm$, we have
  \begin{equation*}
    \udm = 2^{-\Knet(\ket{\psi})} \ketbra{\psi}{\psi} + \udmr
  \end{equation*}
  for some positive semi-definite $\udmr$.
  \Cref{lem:U-bound} then completes the proof.
\end{proof}

\begin{lemma}\label{lem:U-lower-bound}
  For all pure quantum states $\ket{\psi}$ and
  $\eps \in \interval[open left]{0}{1}$, the following two bounds hold:
  \begin{align*}
    \Umin^{1-\eps}(\ket{\psi}) & \ge \Hbar(\ket{\psi}) + \log\eps,\\
    \Hbar^{1-\eps}(\ket{\psi}) & \le \Umin(\ket{\psi}) - \log\eps.
  \end{align*}
\end{lemma}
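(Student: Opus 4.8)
The plan is to reduce both inequalities to a single application of the generalized Cauchy--Schwarz inequality for the positive-definite operator $\udm$, combined with the elementary conversion between trace distance and overlap for pure states.

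First I would record the generalized Cauchy--Schwarz bound. Since $\udm$ is full-rank and positive definite (as already noted in the proof of the preceding lemma, so that $\udm^{\pm 1/2}$ are well-defined), writing $\braket{\psi|\phi} = \braket{\psi|\udm^{1/2}\udm^{-1/2}|\phi}$ and applying ordinary Cauchy--Schwarz to the vectors $\udm^{1/2}\ket{\psi}$ and $\udm^{-1/2}\ket{\phi}$ gives
\[
  \abs{\braket{\psi|\phi}}^2 \le \braket{\psi|\udm|\psi} \cdot \braket{\phi|\udm^{-1}|\phi}.
\]
Next I would translate the smoothing radius into a statement about overlaps: for pure states $D(\ket{\psi},\ket{\phi}) = \sqrt{1 - \abs{\braket{\psi|\phi}}^2}$, so the constraint $D(\ket{\psi},\ket{\phi}) \le 1-\eps$ is equivalent to $\abs{\braket{\psi|\phi}}^2 \ge 1-(1-\eps)^2 = \eps(2-\eps) \ge \eps$, where the final step uses $\eps \in (0,1]$. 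This $\eps$ is precisely the factor that will produce the additive $\log\eps$ term.

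For the first inequality, take any $\ket{\phi}$ with $D(\ket{\psi},\ket{\phi}) \le 1-\eps$. Rearranging Cauchy--Schwarz gives $\braket{\phi|\udm^{-1}|\phi} \ge \abs{\braket{\psi|\phi}}^2 / \braket{\psi|\udm|\psi} \ge \eps/\braket{\psi|\udm|\psi}$, and taking logarithms turns this into $\Umin(\ket{\phi}) \ge \log\eps + \Hbar(\ket{\psi})$. Since this holds for every admissible $\ket{\phi}$, minimizing over $\ket{\phi}$ (as in the definition of $\Umin^{1-\eps}$) yields $\Umin^{1-\eps}(\ket{\psi}) \ge \Hbar(\ket{\psi}) + \log\eps$. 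The second inequality is symmetric: applying Cauchy--Schwarz with the roles of $\udm$ and $\udm^{-1}$ interchanged gives $\braket{\phi|\udm|\phi} \ge \eps / \braket{\psi|\udm^{-1}|\psi}$, hence $\Hbar(\ket{\phi}) = -\log\braket{\phi|\udm|\phi} \le \Umin(\ket{\psi}) - \log\eps$; maximizing over admissible $\ket{\phi}$ (as in the definition of $\Hbar^{1-\eps}$) gives $\Hbar^{1-\eps}(\ket{\psi}) \le \Umin(\ket{\psi}) - \log\eps$.

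I do not expect a genuine obstacle here. The only points requiring care are: (i) keeping straight which state carries $\udm$ and which carries $\udm^{-1}$ in each case, since the $\min$ in the definition of $\Umin^{\eps}$ and the $\max$ in the definition of $\Hbar^{\eps}$ pull the two bounds in opposite directions; and (ii) confirming that the overlap bound $\abs{\braket{\psi|\phi}}^2 \ge \eps$ is exactly what is needed to produce the $\log\eps$ correction after taking logarithms.
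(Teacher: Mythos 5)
Your proof is correct and follows essentially the same route as the paper: the Cauchy--Schwarz inequality $\abs{\braket{\psi|\phi}}^2 \le \braket{\psi|\udm|\psi}\braket{\phi|\udm^{-1}|\phi}$ combined with the conversion $D(\ket{\psi},\ket{\phi}) \le 1-\eps \Rightarrow \abs{\braket{\psi|\phi}}^2 \ge \eps$, then optimizing over the admissible $\ket{\phi}$. The only difference is that you spell out the second (symmetric) inequality explicitly, which the paper leaves as ``similar reasoning.''
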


\begin{proof}
  By definition,
  $\Umin^{1-\eps}(\ket{\psi}) = \min_{\ket{\phi} : D(\ket{\psi},
    \ket{\phi}) \le 1 - \eps} \Umin(\ket{\phi})$.
  From the condition of the minimization, we have
  \begin{equation*}
    D(\ket{\psi}, \ket{\phi}) = \sqrt{1 - \abs{\braket{\psi | \phi}}^{2}} \le 1 - \eps,
  \end{equation*}
  and
  \begin{equation*}
    \abs{\braket{\psi | \phi}}^{2} \ge 1 - {(1-\eps)}^{2} = 2\eps - \eps^{2} \ge \eps.
  \end{equation*}
  Using the Cauchy-Schwarz inequality, we have
  \begin{equation*}
    \braket{\psi | \udm | \psi} \braket{\phi | \udm^{-1} | \phi}
    \ge \abs{\braket{\psi | \phi}}^{2} \ge \eps.
  \end{equation*}
  Taking the logarithm on both sides completes the proof.
  The other inequality follows from a similar reasoning.
\end{proof}

\begin{definition}[The $\GapH$ problem]
    \label{def:gaph}
    Let $r, \Delta, n \in \natural$. Let $\eps \in [0,1]$. We define~$\GapH^\eps(r, r+~\Delta)$ as the following (promise) problem: given a \emph{single} copy of a state $\ket{\psi}$ on some number $n$ of qubits, decide whether 
\begin{itemize}
    \item $\Hbar^{1-\eps}(\ket{\psi}) \leq r$, or
    \item $\Hbar(\ket{\psi}) \geq r+\Delta$.
\end{itemize}
\end{definition}

\begin{definition}[Hardness of $\GapH$ over a ``promise'' family]\label{def:strong-hardness-gaph}
  Let $r, \Delta, n \in \natural$ be functions of $\lambda$. Let $\eps \in [0,1]$.
  We say that $\GapH^\eps(r, r+\Delta)$ is hard over a family of states
  ${\{\ket{\psi_k}: k \in \bit^{n(\lambda)}\}}_{\lambda \in \natural}$ if the following hold: 
  \begin{itemize}
  \item (\emph{promise}) There exists a negligible function $\negl$ such that, for all $\lambda \in \natural$,\\ $\Pr_k[\Hbar^{1 - \eps} (\ket{\psi_{k}})\le r] \geq \frac12 - \negl(\lambda)$ and $\Pr_k[\Hbar(\ket{\psi_{k}}) \ge r + \Delta] \geq \frac12 - \negl(\lambda)$\,.
  \item (\emph{hardness of distinguishing}) For any QPT adversary
  $\adv$, there exists a negligible function $\negl'$ such that, for all
  $\lambda \in \natural$,
  \begin{equation*}
    \left| \Pr_{k} [\adv(1^{\lambda}, \ket{\psi_{k}}) = 0|  C_{\high}] -
    \Pr_{k} [\adv(1^{\lambda}, \ket{\psi_{k}}) = 0 | C_{\low}] \right| \leq \negl(\lambda) \,,
  \end{equation*}
  where $C_{\high}$ and $C_{\low}$ are events standing for
  $\Hbar(\ket{\psi_{k}}) \ge r + \Delta$ and
  $\Hbar^{1 - \eps} (\ket{\psi_{k}})\le r$, respectively.
  \end{itemize} 
If the hardness is against \emph{non-uniform} quantum polynomial-time adversaries, then we say that  $\GapH^\eps(r, r+\Delta)$ is non-uniformly hard over the family of states.
\end{definition}

\begin{definition}[The $\GapU$ problem]
    \label{def:gapu}
    Let $r, \Delta, n \in \natural$. Let $\eps \in [0,1]$. We define $\GapU^\eps(r, r+~\Delta)$ as the following (promise) problem: given a \emph{single} copy of a state $\ket{\psi}$ on some number $n$ of qubits, decide whether
\begin{itemize}
    \item $\Umin(\ket{\psi}) \leq r$, or
    \item $\Umin^{1-\eps}(\ket{\psi}) \geq r+\Delta$.
\end{itemize}
\end{definition}

We then define the hardness of $\GapU$ similarly as we did for $\GapH$ in~\cref{def:strong-hardness-gaph}.

\section{Entropic \texorpdfstring{$\EFI$}{EFI} and pseudo-mixed states}

In this section, we introduce two variants of the $\EFI$ primitive called
entropic $\EFI$ and (non-uniform) pseudo-mixed states.
As the main results of this section, we show that $\EFI$ implies both of these
variants.

\subsection{Entropic \texorpdfstring{$\EFI$}{EFI} from \texorpdfstring{$\EFI$}{EFI}}

Entropic $\EFI$ uses the entropy difference as a measure of distance between the
state pair, rather than the trace distance.
The more formal definition is given in \cref{def:EEFI}.
\begin{definition}[Entropic $\EFI$]\label{def:EEFI}
  We call two families of mixed states $\{\sigma_{0, \lambda}\}_{\lambda}$,
  $\{\sigma_{1, \lambda}\}_\lambda$ an entropic $\EFI$ pair, if the following
  condition holds:
  \begin{description}
    \item[Efficient Generation:] There exists a QPT algorithm $G$ that takes
          input $(1^\lambda, b)$ for security parameter $\lambda$ and
          $b \in \bit$, and outputs the mixed state $\sigma_{b, \lambda}$.
    \item[Entropy Gap:]
          $S(\sigma_{1, \lambda}) > S(\sigma_{0, \lambda}) + 1/\poly(\lambda)$.
    \item[Computational Indistinguishability:] For any QPT adversary algorithm
          $\adv$, we have that
          \begin{equation*}
            \abs{\Pr[\adv(1^{\lambda}, \sigma_{0, \lambda})=1]
              - \Pr[\adv(1^{\lambda}, \sigma_{1, \lambda})=1]}
            \le \negl(\lambda).
          \end{equation*}
    \end{description}
\end{definition}

We remark that by Fannes' inequality, every entropic $\EFI$ is automatically an
$\EFI$, as for any two states $\sigma_{0, \lambda}$ and $\sigma_{1, \lambda}$
satisfying $S(\sigma_{1, \lambda}) > S(\sigma_{0, \lambda}) + 1/\poly(\lambda)$,
we have $D(\sigma_{1, \lambda}, \sigma_{0, \lambda}) \ge 1/\poly(\lambda)$.
An $\EFI$ is not necessarily an entropic $\EFI$, as there are states with large
trace distance but no entropy difference.
However, we can show that the existence of an $\EFI$ implies that of an entropic
$\EFI$ by slightly modifying the state generation procedure.

\begin{theorem}\label{thm:EEFI}
  The existence of $\EFI$ implies the existence of entropic $\EFI$.
\end{theorem}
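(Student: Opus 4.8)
The plan is to adjoin a single control qubit to the given $\EFI$ pair $(\rho_0,\rho_1)$ and define
\begin{align*}
  \sigma_0 &= \tfrac12 \ketbra{0}{0}\otimes\rho_0 + \tfrac12\ketbra{1}{1}\otimes\rho_1, \\
  \sigma_1 &= \frac{I}{2}\otimes\frac{\rho_0+\rho_1}{2}.
\end{align*}
Intuitively, $\sigma_0$ keeps the control bit perfectly correlated with which of $\rho_0,\rho_1$ occupies the second register, while $\sigma_1$ fully decorrelates them. Because the two states are statistically far and hence nearly orthogonal, this correlation is worth almost one full bit of von Neumann entropy, whereas computational indistinguishability should be inherited from the original pair.

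Efficient generation is immediate: $\sigma_0$ is obtained by sampling a uniform bit $a$, writing $\ket a$ into the control register, and running $G(1^\lambda,a)$ into the second register; $\sigma_1$ is obtained by preparing $\frac{I}{2}$ on the control together with an independent $G(1^\lambda,b)$ for a fresh uniform $b$. For the entropy gap, since the control register of $\sigma_0$ is classical we have $S(\sigma_0)=1+\tfrac12 S(\rho_0)+\tfrac12 S(\rho_1)$, whereas $\sigma_1$ is a product state and so $S(\sigma_1)=1+S\!\left(\tfrac{\rho_0+\rho_1}{2}\right)$. The statistical-farness guarantee $D(\rho_0,\rho_1)\ge 1-\negl(\lambda)$ together with \cref{lem:TDtoAO} makes $\rho_0$ and $\rho_1$ almost orthogonal, so \cref{lem:AOadd} yields $S\!\left(\tfrac{\rho_0+\rho_1}{2}\right)=\tfrac12 S(\rho_0)+\tfrac12 S(\rho_1)+1-\negl(\lambda)$. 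Subtracting gives $S(\sigma_1)-S(\sigma_0)\ge 1-\negl(\lambda)>1/\poly(\lambda)$, with $\sigma_1$ correctly being the higher-entropy state.

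The main work, and the step I expect to be most delicate, is showing that $\sigma_0$ and $\sigma_1$ are computationally indistinguishable. Writing $\pi_0=\sigma_0$ and the \emph{anti-correlated} state $\pi_1=\tfrac12\ketbra{0}{0}\otimes\rho_1+\tfrac12\ketbra{1}{1}\otimes\rho_0$, one checks that $\sigma_1=\tfrac12\pi_0+\tfrac12\pi_1$, so any distinguisher separating $\sigma_0$ from $\sigma_1$ with advantage $\delta$ separates $\pi_0$ from $\pi_1$ with advantage $2\delta$. I would then prove that $\pi_0$ and $\pi_1$ are computationally indistinguishable by a two-step hybrid through $\frac{I}{2}\otimes\rho_0$: passing from $\pi_0$ to this hybrid flips only the $\ket1$-branch from $\rho_1$ to $\rho_0$, and passing from the hybrid to $\pi_1$ flips only the $\ket0$-branch from $\rho_0$ to $\rho_1$. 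In each step the reduction generates the unchanged branch itself (using efficient generation of $\rho_0$ and $\rho_1$) and embeds its single challenge copy into the changed branch with the appropriate control value, so that the challenge being $\rho_0$ produces one state of the pair and its being $\rho_1$ produces the other; a distinguishing advantage would thus contradict the indistinguishability of the original $\EFI$ pair. The points requiring care are verifying that these embeddings are genuine QPT single-copy reductions for $(\rho_0,\rho_1)$ and tracking the constant factors, after which the triangle inequality over the two negligible hybrid gaps finishes the proof.
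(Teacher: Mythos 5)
Your proposal is correct and takes essentially the same approach as the paper: the identical construction $(\sigma_0,\sigma_1)$, the same entropy-gap analysis via \cref{lem:TDtoAO} and \cref{lem:AOadd}, and a hybrid argument through the intermediate state $\frac{I}{2}\otimes\rho_0$ for computational indistinguishability. Your extra symmetrization step (decomposing $\sigma_1 = \tfrac12\pi_0 + \tfrac12\pi_1$ and reducing to distinguishing $\pi_0$ from $\pi_1$) is a valid but inessential detour; the paper hybridizes $\sigma_0 \approx_c \frac{I}{2}\otimes\rho_0 \approx_c \sigma_1$ directly, and your single-copy reductions are exactly the ones that make that standard argument go through.
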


\begin{proof}

  Let $\{\rho_{0, \lambda}\}_\lambda$ and $\{\rho_{1, \lambda}\}_\lambda$ be an
  $\EFI$ pair that can be generated by a QPT algorithm $G^*$.
  We consider the quantum states
  \begin{align*}
    \sigma_{0, \lambda} & = \frac{1}{2} \ketbra{0}{0} \otimes \rho_{0, \lambda} +
                          \frac{1}{2} \ketbra{1}{1} \otimes \rho_{1, \lambda},\\
    \sigma_{1, \lambda} & = \frac{I}{2} \otimes \frac{\rho_{0, \lambda} + \rho_{1, \lambda}}{2}.
  \end{align*}
  We claim that $\{\sigma_{0, \lambda}\}_{\lambda}$ and
  $\{\sigma_{1, \lambda}\}_{\lambda}$ form an entropic $\EFI$ family.

  \paragraph{Efficient Generation} It is not difficult to see that the following
  QPT algorithm $G$ in \cref{alg:EEFI} outputs the mixed state
  $\sigma_{b, \lambda}$ on input $(1^\lambda, b)$.

  \begin{algorithm}[htbp!]
    \caption{Construction of the entropic $\EFI$ state generation algorithm
      $G$}\label{alg:EEFI}
    \begin{algorithmic}[1]
      \Require Inputs $1^\lambda$ and a state generation algorithm $G^{*}$
      for $\EFI$
      \State If $b = 0$, initialize the registers $A$ and $B$ as the mixed state
      $\frac{\ketbra{00}{00}_{AB}+\ketbra{11}{11}_{AB}}{2}$.
      \State If $b = 1$, initialize the registers $A$ and $B$ as the mixed state
      $\frac{I_{A}}{2} \otimes \frac{I_{B}}{2}$.
      \State Run $G^*(1^\lambda, \cdot)$ on the register $B$ and store the
      obtained state in register $C$.
      \State Output registers $A$ and $C$.
    \end{algorithmic}
  \end{algorithm}

  \paragraph{Entropy Gap}
  We can directly calculate the entropy of $\sigma_{0, \lambda}$, since it can
  be block diagonalized:
  \begin{equation*}
    S(\sigma_{0, \lambda}) = \frac{S(\rho_{0, \lambda}) +
      S(\rho_{1, \lambda})}{2} + 1.
  \end{equation*}
  Note that by~\cref{def:EFI},
  $\TD(\rho_{0,\lambda}, \rho_{1,\lambda}) \geq 1 - \negl(\lambda)$.
  Thus by~\cref{lem:TDtoAO,lem:AOadd}, we have that
  \begin{equation*}
    \begin{split}
      S(\sigma_{1,\lambda})
      & = 1+S\left(\frac{\rho_{0,\lambda} + \rho_{1,\lambda}}{2}\right)\\
      & \geq 2+ \frac{S(\rho_{0,\lambda}) +
        S(\rho_{1,\lambda})}{2} - \negl(\lambda).
    \end{split}
  \end{equation*}
  This implies an $1-\negl(\lambda)$ entropy gap between $\sigma_{0,\lambda}$
  and $\sigma_{1,\lambda}$.

  \paragraph{Computational Indistinguishability}
  By~\cref{def:EFI}, $\rho_{0,\lambda}$ and $\rho_{1,\lambda}$ are
  computationally indistinguishable.
  A standard hybrid argument shows that
  $\sigma_{0, \lambda} \approx_c \frac{I}{2} \otimes \rho_{0,\lambda} \approx_c \sigma_{1,\lambda}$,
  which concludes the proof of \Cref{thm:EEFI}.

\end{proof}

\subsection{Non-uniform pseudo-mixed states from entropic \texorpdfstring{$\EFI$}{EFI}}\label{sec:PMS}

Next, we study a variant of $\EFI$ pair called pseudo-mixed states.
Informally, a pseudo-mixed state is an efficiently preparable state $\rho$ that,
together with the maximally mixed state, forms an $\EFI$.
That is, $\rho$ is far from $I/2^n$, yet no QPT algorithm can distinguish them
with non-negligible advantage.
We require a non-uniform version of $\PMS$ for which the generation algorithm
uses a classical advice string.
The formal definition of (non-uniform) $\PMS$ is given in \cref{def:PMS}.

\begin{algorithm}[htbp!]
  \caption{Construction of non-uniform $\PMS$ from entropic $\EFI$ }\label{alg:pms}
  \begin{algorithmic}[1]
    \Require Inputs $1^\lambda$, a classical advice $a(\lambda) \in [\lambda^2n^2(\lambda)p^2(\lambda)]$, and a state generation algorithm $G$ for the entropic $\EFI$.
    \State Let $\rho_{0,\lambda}$, $\rho_{1,\lambda}$ be the $n(\lambda)$-qubit entropic $\EFI$ family pair with entropy gap $1/p(\lambda)$ for polynomial $p$, namely $S(\rho_{1,\lambda}) - S(\rho_{0,\lambda}) \geq 1/p(\lambda)$.
    \State Let $m(\lambda) = \lambda^2n^2(\lambda)p^2(\lambda)$, and $\rho_{0,\lambda}' = \rho_{0, \lambda}^{\otimes m}, \rho_{1, \lambda}' = \rho_{1, \lambda}^{\otimes m}$.
    \State Let $\eps(\lambda) = 2^{-\lambda} k(\lambda) = a(\lambda) - \lambda^2n^2(\lambda)p(\lambda)/2$ and $\ell(\lambda) = (n(\lambda)m(\lambda)+k(\lambda))/2 - \log(1/\eps)$.
    \State Let $\{C_j\}_{j\in L}$ be the Clifford family over $n(\lambda)m(\lambda)$ qubits. By~\cref{thm:extractor}, it's a $(k, \eps, \eps/12)$ quantum strong extractor $\Ext_\ell^{A \to A_1}$ that extracts $\ell$ qubits. Let $A = A_1A_2$ where subsystem $A_1$ consists of the first $\ell$ qubits of system $A$ and subsystem $A_2$ consists of the last $n(\lambda)m(\lambda)-\ell(\lambda)$ qubits.
    \State Output state
  \begin{equation*}
    \tau_{0,\lambda}  = \Ext^{A \to A_{1}}_{\ell} (\rho'_{0,\lambda}),\quad
    \tau_{1, \lambda} = \Ext^{A \to A_{1}}_{\ell} (\rho'_{1,\lambda}).
  \end{equation*}
  \end{algorithmic}
\end{algorithm}

\begin{definition}[Pseudo-mixed States]\label{def:PMS}
  A family of mixed states $\{\rho_{\lambda}\}_\lambda$ of $n(\lambda)$ qubits
  is called a pseudo-mixed state family if the following conditions hold:
  \begin{description}
    \item[Efficient Generation:] There exists a QPT algorithm $G$ that, on input
          $1^\lambda$ for integer $\lambda$, outputs the mixed state
          $\rho_{\lambda}$.
    \item[Entropy Gap:] $S(\rho_{\lambda}) < n(\lambda) - 1/\poly(\lambda)$.
    \item[Computational Indistinguishability:] For any QPT adversary $\adv$,
          \begin{equation*}
            \abs{\Pr \bigl[\adv(1^{\lambda}, \rho_{\lambda}) = 1 \bigr] -
            \Pr \bigl[\adv(1^{\lambda}, I / 2^{n(\lambda)}) = 1 \bigr]}
            \leq \negl(\lambda).
          \end{equation*}
  \end{description}
\end{definition}

\begin{definition}[Non-uniform pseudo-mixed states]\label{def:nu-PMS}
  A family of mixed states $\{\rho_{\lambda, a}\}_\lambda$ of $n(\lambda)$
  qubits is a non-uniform pseudo-mixed state family if it satisfies the
  condition of \cref{def:PMS} with the only change that the state generation
  algorithm $G$ takes an additional advice string $a$ as input.
  The length of the string $a$ is called the advice size of the pseudo-mixed
  state.
\end{definition}

Our main theorem concerning pseudo-mixed states is the following:
\begin{theorem}\label{thm:EFI-PMS}
  The existence of $\EFI$ implies the existence of non-uniform pseudo-mixed states
  with advice size $O(\log \lambda)$.
\end{theorem}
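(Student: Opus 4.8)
The plan is to instantiate the construction in Algorithm~\ref{alg:pms} and show that the state $\tau_{0,\lambda}$ it outputs is a non-uniform pseudo-mixed state. First, by Theorem~\ref{thm:EEFI} I may replace the given $\EFI$ pair with an \emph{entropic} $\EFI$ pair $(\rho_{0,\lambda},\rho_{1,\lambda})$ on $n=n(\lambda)$ qubits whose von Neumann entropies satisfy $S(\rho_{1,\lambda})-S(\rho_{0,\lambda})\ge 1/p(\lambda)$ for some polynomial $p$. Passing to $m=\lambda^2 n^2 p^2$ copies amplifies the gap to $m/p$, while Corollary~\ref{cor:min-entropy-explicit-bound} (with trivial conditioning system) controls the relevant one-shot entropies up to a small additive correction: taking $\eps=2^{-\lambda}$ one gets $\Hmax^{\eps}(\rho_{0,\lambda}^{\otimes m})\le m\,S(\rho_{0,\lambda})+6n\sqrt{m\lambda}$ and $\Hmin^{\eps/12}(\rho_{1,\lambda}^{\otimes m})\ge m\,S(\rho_{1,\lambda})-6n\sqrt{m\lambda}$. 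The candidate pseudo-mixed state is $\tau_{0,\lambda}=\Ext^{A\to A_1}_{\ell}(\rho_{0,\lambda}^{\otimes m})$, living on $N=\log|L|+\ell$ qubits, where $\{C_j\}_{j\in L}$ is the Clifford $2$-design on $nm$ qubits and $\ell=(nm+k)/2-\log(1/\eps)$.

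The sole purpose of the logarithmic advice is to supply a good threshold $k$. Concretely, I want to choose an integer $k$ with
\[
\Hmax^{\eps}(\rho_{0,\lambda}^{\otimes m})+3\lambda \;\le\; k \;\le\; \Hmin^{\eps/12}(\rho_{1,\lambda}^{\otimes m}).
\]
By the two entropy estimates above, such a $k$ exists once $m(S(\rho_{1,\lambda})-S(\rho_{0,\lambda}))\ge 12n\sqrt{m\lambda}+3\lambda$, which holds for all large $\lambda$ since $m/p=\lambda^2 n^2 p$ dominates $12n\sqrt{m\lambda}=12\lambda^{3/2}n^2 p$. As $k$ is an integer in $[0,\,nm+O(\lambda)]$, encoding it costs $O(\log(nm))=O(\log\lambda)$ bits; this is the advice $a$ (the offset appearing in Algorithm~\ref{alg:pms} is a fixed known quantity, so fixing $a$ is equivalent to fixing $k$). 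This is the only place non-uniformity enters.

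With $k$ fixed I verify the three defining properties of Definition~\ref{def:nu-PMS}. Efficient generation is immediate, as sampling a Clifford $C_j$, applying it to $\rho_{0,\lambda}^{\otimes m}$, tracing out $A_2$, and recording $j$ in register $L$ is a QPT procedure. For computational indistinguishability, a standard hybrid argument over the $m$ copies upgrades $\rho_{0,\lambda}\approx_c\rho_{1,\lambda}$ to $\rho_{0,\lambda}^{\otimes m}\approx_c\rho_{1,\lambda}^{\otimes m}$ (the advantage degrades only by the polynomial factor $m$), and since $\Ext^{A\to A_1}_{\ell}$ is an efficient channel this yields $\tau_{0,\lambda}\approx_c\tau_{1,\lambda}$. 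Meanwhile $k\le\Hmin^{\eps/12}(\rho_{1,\lambda}^{\otimes m})$ means exactly that $\ell\le \bigl(nm+\Hmin^{\eps/12}(\rho_{1,\lambda}^{\otimes m})\bigr)/2-\log(1/\eps)$, so Theorem~\ref{thm:extractor} gives $D(\tau_{1,\lambda},I/2^{N})\le\eps=\negl(\lambda)$. Chaining the two facts shows $\tau_{0,\lambda}$ is computationally indistinguishable from $I/2^{N}$.

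The entropy gap is the crux, and is where I expect the real work. The key is that $k$ sits strictly \emph{above} $\Hmax^{\eps}(\rho_{0,\lambda}^{\otimes m})$, so $\rho_{0,\lambda}^{\otimes m}$ is $\eps$-close in purified distance to a state $\tilde\rho$ of rank at most $2^{\Hmax^{\eps}(\rho_{0,\lambda}^{\otimes m})}\le 2^{k-3\lambda}$. For every $C_j$, the reduced state $\Tr_{A_2}(C_j\tilde\rho\,C_j^{\dagger})$ on the $\ell$-qubit system $A_1$ has rank at most $\rank(\tilde\rho)\cdot 2^{nm-\ell}\le 2^{k-3\lambda+nm-\ell}=2^{\ell-\lambda}$, using $\ell=(nm+k)/2-\lambda$; hence $\Ext^{A\to A_1}_{\ell}(\tilde\rho)$ has rank at most $|L|\cdot 2^{\ell-\lambda}=2^{N-\lambda}$, and by monotonicity of the purified distance under channels it is $\eps$-close to $\tau_{0,\lambda}$. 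Therefore $\tau_{0,\lambda}$ is $\negl(\lambda)$-close to a state of rank $\le 2^{N-\lambda}$, and Fannes' inequality forces $S(\tau_{0,\lambda})\le N-\lambda+\negl(\lambda)<N-1/\poly(\lambda)$, completing the argument. The delicate point here—and the reason a weaker choice of $k$ fails—is that a converse based only on the von Neumann or min-entropy of $\rho_{0,\lambda}^{\otimes m}$ is too weak: if $k$ were merely placed between the two min-entropies, the rank bound $\rank(\Tr_{A_2}(\cdot))\le\rank(\cdot)\cdot 2^{nm-\ell}$ would exceed $2^{\ell}$ and be vacuous. It is essential both to smooth $\rho_{0,\lambda}^{\otimes m}$ to a genuinely low-rank state via its smooth \emph{max}-entropy and to give $k$ a $\Theta(\lambda)$ slack above that max-entropy while keeping $k\le\Hmin^{\eps/12}(\rho_{1,\lambda}^{\otimes m})$; both are simultaneously affordable precisely because the amplified gap $m/p$ was chosen to dominate the max-to-min smoothing correction $6n\sqrt{m\lambda}$ together with the slack.
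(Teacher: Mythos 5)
Your proof is correct, and the overall construction is the same as the paper's (entropic $\EFI$ via \cref{thm:EEFI}, $m=\lambda^2n^2p^2$ copies, Clifford $2$-design extractor with $\ell=(nm+k)/2-\log(1/\eps)$, $O(\log\lambda)$-bit inefficiently computable advice fixing a threshold $k$, and the identical hybrid-plus-extractor argument for computational indistinguishability). Where you genuinely diverge is the entropy-gap step. The paper pegs the advice to $S(\rho_{1,\lambda})$ alone, setting $a\approx m\,S(\rho_{1,\lambda})$ and $k=a-\lambda^2n^2p/2$, and then bounds the entropy of $\tau_{0,\lambda}$ directly on the \emph{unsmoothed} states $\rho^{(j)}=C_j\,\rho_{0,\lambda}^{\otimes m}\,C_j^{\dagger}$ via the subadditivity (Araki--Lieb) inequality $S(A_1)_{\rho^{(j)}}\le S(A)_{\rho^{(j)}}+S(A_2)_{\rho^{(j)}}\le m\,S(\rho_{0,\lambda})+(nm-\ell)$, which immediately gives $S(\tau_{0,\lambda})\le\log\abs{L}+\ell-\Omega(\lambda^2n^2p)$. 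You instead anchor $k$ above the smooth max-entropy $\Hmax^{\eps}(\rho_{0,\lambda}^{\otimes m})$, pass to a rank-$2^{k-3\lambda}$ smoothing state, push a rank bound through the channel, and finish with Fannes continuity; this is sound (modulo the routine step of renormalizing the sub-normalized smoothing state before invoking Fannes, which costs only a factor of $2$ in the distance), and it makes the ``low rank in, low rank out'' intuition explicit. What each approach buys: the paper's argument is shorter and needs only the gap between $m\,S(\rho_{0,\lambda})$ and the advice, whereas yours needs the advice window to clear both smoothed one-shot quantities, i.e., knowledge about both halves of the pair; both remain $O(\log\lambda)$ bits, so the difference is cosmetic. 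One correction, though: your closing claim that a converse ``based only on the von Neumann entropy of $\rho_{0,\lambda}^{\otimes m}$ is too weak,'' so that smoothing to the max-entropy is \emph{essential}, is overstated. It is essential only for your rank-based bookkeeping (min-entropy indeed says nothing about rank). The paper's proof shows von Neumann entropy alone suffices, because under tracing out $A_2$ the entropy can exceed that of the global state by at most $\log\abs{A_2}=nm-\ell$, which is exactly the counting your rank argument replicates in one-shot form.
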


As established by \cref{thm:EEFI}, $\EFI$ implies entropic $\EFI$, so it
suffices to prove \cref{thm:EEFI-PMS}.
\begin{theorem}\label{thm:EEFI-PMS}
  The construction in~\cref{alg:pms} is a secure non-uniform $\PMS$ with advice size $O(\log \lambda)$.
\end{theorem}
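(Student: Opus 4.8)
The plan is to verify the three defining properties of a (non-uniform) $\PMS$ for the state $\tau_{0,\lambda}$ produced by \cref{alg:pms}, using $\tau_{1,\lambda}$ only as an auxiliary ``pivot'' that is statistically close to maximally mixed. Throughout write $N = \log\abs{L} + \ell$ for the number of (qu)bits of the output register $L \otimes A_1$, and recall $m = \lambda^2 n^2 p^2$, $\eps = 2^{-\lambda}$, $k = a - \lambda^2 n^2 p/2$, and $\ell = (nm+k)/2 - \log(1/\eps)$. The only free parameter is the advice $a$ (equivalently $k$), and the crux is to exhibit a single value of $k$, of bit-length $O(\log\lambda)$, making all three properties hold at once. \emph{Efficient generation} is immediate: a QPT algorithm prepares $\rho'_{0,\lambda} = \rho_{0,\lambda}^{\otimes m}$ with $m=\poly(\lambda)$ calls to $G$, samples a Clifford $C_j$, applies it, traces out $A_2$, and outputs the index register $L$ together with $A_1$, which is exactly $\Ext^{A\to A_1}_\ell(\rho'_{0,\lambda})$; the advice $a$ is used only to fix $k$ and $\ell$.

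For \emph{computational indistinguishability} from $I/2^N$ I would chain two steps. First, $\tau_{1,\lambda}$ is statistically close to $I/2^N$: choosing $k \le \Hmin^{\eps/12}(\rho'_{1,\lambda})$ lets \cref{thm:extractor} apply (with trivial environment $E$ and the equality $\ell = (nm+k)/2 - \log(1/\eps)$), giving $D\bigl(\tau_{1,\lambda}, \frac{I_L}{\abs L}\otimes\frac{I_{A_1}}{\abs{A_1}}\bigr) \le \eps = \negl(\lambda)$. Second, $\tau_{0,\lambda}\approx_c \tau_{1,\lambda}$: since $\rho_{0,\lambda}\approx_c\rho_{1,\lambda}$ and $m$ is polynomial, a standard hybrid argument (the reduction prepares the remaining copies itself via $G$) yields $\rho'_{0,\lambda}\approx_c \rho'_{1,\lambda}$, and composing any QPT distinguisher with the efficient channel $\Ext^{A\to A_1}_\ell$ transfers this to $\tau_{0,\lambda}\approx_c\tau_{1,\lambda}$. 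Combining the two steps gives $\tau_{0,\lambda}\approx_c I/2^N$.

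The heart of the argument is the \emph{entropy gap} $S(\tau_{0,\lambda}) < N - 1/\poly(\lambda)$. Since $\tau_{0,\lambda} = \frac{1}{\abs L}\sum_j \ketbra{j}{j}\otimes \omega_j$ with $\omega_j = \Tr_{A_2}(C_j\rho'_{0,\lambda}C_j^\dagger)$ is block-diagonal, one has $S(\tau_{0,\lambda}) = \log\abs L + \E_j S(\omega_j)$. For every $j$, the Araki--Lieb inequality $S(A_1)\le S(A_1A_2)+S(A_2)$ gives $S(\omega_j) \le S(C_j\rho'_{0,\lambda}C_j^\dagger) + S\bigl(\Tr_{A_1}(C_j\rho'_{0,\lambda}C_j^\dagger)\bigr) \le m\,S(\rho_{0,\lambda}) + (nm - \ell)$, using unitary invariance of entropy and $\dim A_2 = 2^{nm-\ell}$. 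Hence $N - S(\tau_{0,\lambda}) \ge 2\ell - nm - m\,S(\rho_{0,\lambda}) = k - 2\log(1/\eps) - m\,S(\rho_{0,\lambda}) = k - 2\lambda - m\,S(\rho_{0,\lambda})$. Taking $k = \lceil m\,S(\rho_{0,\lambda})\rceil + 2\lambda + 1$ makes the right-hand side at least $1$, so $S(\tau_{0,\lambda}) \le N - 1 < N - 1/\poly(\lambda)$.

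Finally I would check that this single $k$ is consistent with the indistinguishability step and that the advice is short. By \cref{cor:min-entropy-explicit-bound} (with trivial $B$), $\Hmin^{\eps/12}(\rho'_{1,\lambda}) \ge m\,S(\rho_{1,\lambda}) - E$ with $E = 6n\sqrt{m\log(12/\eps)} = O(\lambda^{3/2} n^2 p)$, whereas the chosen $k$ is at most $m\,S(\rho_{0,\lambda}) + 2\lambda + 2$; using the entropic gap $S(\rho_{1,\lambda}) - S(\rho_{0,\lambda}) \ge 1/p$, the required $k \le \Hmin^{\eps/12}(\rho'_{1,\lambda})$ reduces to $m/p \ge E + 2\lambda + 2$, which holds for large $\lambda$ since $m/p = \lambda^2 n^2 p$ dominates $E$. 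As $S(\rho_{0,\lambda}) \le n$, this good $k$ (hence $a$) is a non-negative integer bounded by a fixed polynomial in $\lambda$, i.e.\ specified by $O(\log\lambda)$ advice bits. I expect the parameter balancing here to be the main obstacle: the parallel repetition to $m = \lambda^2 n^2 p^2$ copies is exactly what lifts the $1/p$ entropy gap to $m/p$ so that it beats the additive extractor/smoothing loss $E$, and the dependence of the correct $k$ on the unknown quantity $S(\rho_{0,\lambda})$ is precisely what forces the use of (logarithmic) non-uniform advice. A minor technicality to dispatch is that $\abs L$ need not be a power of two, so $\frac{I_L}{\abs L}\otimes\frac{I_{A_1}}{\abs{A_1}}$ must be identified with $I/2^N$ after a routine padding/encoding of the Clifford index register.
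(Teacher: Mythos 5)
Your proof is correct and follows essentially the same route as the paper's: parallel repetition to $m$ copies so that smoothed min-entropy tracks von Neumann entropy (\cref{cor:min-entropy-explicit-bound}), the Araki--Lieb/subadditivity bound giving the entropy deficit of $\tau_{0,\lambda}$, \cref{thm:extractor} giving $\tau_{1,\lambda}$ statistically close to maximally mixed, and a hybrid argument giving $\tau_{0,\lambda} \approx_c \tau_{1,\lambda}$. The only difference is bookkeeping: the paper anchors the advice to an estimate of $m\,S(\rho_{1,\lambda})$ and spends the built-in slack $\lambda^2 n^2 p/2$ in $k = a - \lambda^2 n^2 p/2$ to verify the extractor condition, whereas you anchor $k$ to $\lceil m\,S(\rho_{0,\lambda})\rceil + 2\lambda + 1$ (getting an entropy gap of just $1$ rather than $\Omega(\lambda^2 n^2 p)$) and instead spend the entropic-$\EFI$ gap $m/p$ to verify $k \le \Hmin^{\eps/12}(\rho'_{1,\lambda})$ --- mirror-image uses of the same slack, both valid.
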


\begin{proof}[Proof of \Cref{thm:EEFI-PMS}]

  We will show that the construction in~\cref{alg:pms} is a secure $\PMS$ in case that $a(\lambda)$ is a $1/m(\lambda)$ of $S(\rho_{1,\lambda})$, i.e., $(a(\lambda)-1)/m(\lambda) < S(\rho_{1,\lambda}) \leq a(\lambda)/m(\lambda)$

  We consider a large number of copies of the state both to amplify the entropy
  gap and to better approximate the min-entropy with von Neumann entropy. According to
  \cref{cor:min-entropy-explicit-bound}, we have that
  \begin{equation*}
    S(\rho'_{0,\lambda}) = m(\lambda)S(\rho_{0,\lambda}) \leq S(\rho'_{1,\lambda})
    - \lambda^2 n^2(\lambda) p(\lambda).
  \end{equation*}

  From~\cref{cor:min-entropy-explicit-bound}, we have that
  \begin{equation*}
    \begin{split}
      \Hmin^{\eps/12}(\rho_{1,\lambda}')
      & \geq S(\rho'_{1,\lambda}) - O\left(n(\lambda)\sqrt{\log(1/\eps) m(\lambda)}\right) \\
      & = S(\rho'_{1,\lambda}) - O\left(\lambda^{1.5}n^2(\lambda)p(\lambda)\right).
    \end{split}
  \end{equation*}

  We proceed to show that $\tau_{0,\lambda}$ forms a pseudo-mixed state of
  $\log\abs{L}+\ell$ qubits.

  \paragraph{Efficient Generation} Since we can prepare $m(\lambda)$ copies of
  $\rho_{0,\lambda}$ and apply the unitary $U_r$ in polynomial time,
  $\tau_{0,\lambda}$ can be prepared efficiently.

  \paragraph{Entropy Gap}
  Define states $\rho^{(j)} = U_j \rho'_{0, \lambda} U_j^\dagger$.
  By the subadditivity of von Neumann entropy, we have
  \begin{equation*}
    \begin{split}
      S(A_{1})_{\rho^{(j)}}
      & \le S(A)_{\rho^{(j)}} + S(A_{2})_{\rho^{(j)}}\\
      & \le S(\rho'_{0, \lambda}) + \log \abs{A_{2}}\\
      & \le S(\rho'_{1, \lambda}) - \lambda^{2} n^{2}(\lambda) p(\lambda) +
        (n'(\lambda) - \ell(\lambda))\\
      & \le a(\lambda)  - \lambda^{2} n^{2}(\lambda) p(\lambda) +
        (n'(\lambda) - \ell(\lambda)).
    \end{split}
  \end{equation*}
  Together with the definition of $\ell$, this proves that for all $j$
  \begin{equation*}
    S(A_{1})_{\rho^{(j)}} \le \ell(\lambda) - \Omega\bigl(\lambda^{2} n^{2}(\lambda)
    p(\lambda)\bigr).
  \end{equation*}

  We can compute the entropy of $\tau_{0,\lambda}$ as
  \begin{equation*}
    \begin{split}
      S(\tau_{0,\lambda})
      & = \log\abs{L} + \frac{1}{\abs{L}}\sum_{j \in L}
        S\left(\Tr_{A_2} \left[ U_j \rho'_{0, \lambda} U_j^\dagger \right] \right)\\
      & \le \log\abs{L} + \frac{1}{\abs{L}} \sum_{j} S(A_{1})_{\rho^{(j)}}\\
      & \le \log\abs{L} + \ell(\lambda) - \Omega\bigl(\lambda^{2} n^{2}(\lambda)
        p(\lambda)\bigr),
    \end{split}
  \end{equation*}
  which has a non-negligible gap with the entropy of the
  $(\log\abs{L}+\ell)$-qubit maximally mixed state.

  \paragraph{Computational Indistinguishability}
  Notice that the smoothed min entropy of $\rho_{1,\lambda}'$ satisfies
  \begin{equation*}
    \Hmin^{\eps/12}(\rho_{1,\lambda}') \geq S(\rho_{1,\lambda}') -
    O\left(\lambda^{1.5}n^2(\lambda)p(\lambda)\right) \geq k(\lambda).
  \end{equation*}

  By the definition of $(k,\eps,\eps/12)$-strong extractor, we have
  $D\left(\tau_{1, \lambda}, \frac{I_L}{\abs{L}} \otimes \frac{I_{A_1}}{\abs{A_{1}}}\right)\leq \eps$,
  where $\log\abs{A_{1}}=\ell$.
  By our parameter choice $\eps = 2^{-\lambda}$, it follows that
  $\tau_{1, \lambda}$ and
  $\frac{I_L}{\abs{L}} \otimes \frac{I_{A_1}}{\abs{A_{1}}}$ are statistically
  indistinguishable.

  By~\cref{def:EFI}, $\rho_{0,\lambda}$ and $\rho_{1,\lambda}$ are
  computationally indistinguishable.
  A standard hybrid argument shows that
  $\rho_{0,\lambda}'\approx_c\rho_{1,\lambda}'$ and thus
  $\tau_{0,\lambda}\approx_c\tau_{1, \lambda}$.
  Therefore,
  $\tau_{0,\lambda} \approx_c \frac{I_L}{\abs{L}} \otimes \frac{I_{A_1}}{\abs{A_{1}}}$,
  yielding our pseudo-mixed state.

  Our construction relies on knowing an estimate $a(\lambda)/m(\lambda)$ of the
  von Neumann entropy of our entropic $\EFI$ state $\rho_{1, \lambda}$.
  To address this, we introduce non-uniformity and take $a(\lambda)$ as advice,
  which can be represented by a bit string of length $O(\log \lambda)$.
\end{proof}

\begin{remark}
  Actually, we have constructed a special type of imbalanced $\EFI$ defined
  in~\cite{KT24}.
  We can show that if
  $\log \abs{A_{1}} \leq \frac{n+\Hmin^{\eps/12}(\rho_{1,\lambda}')}{2}-\lambda$,
  the state $\tau_{0, \lambda}$ should be computationally indistinguishable from
  $\frac{I_L}{\abs{L}} \otimes \frac{I_{A_1}}{\abs{A_{1}}}$; while if
  $\log \abs{A_{1}} \geq \frac{n+S(\rho_{0,\lambda}')}{2} + 1/\poly(\lambda)$,
  we can show that
  $S(\tau_\lambda)\leq \log\abs{L}+\log\abs{A_{1}}-1/\poly(\lambda)$.
\end{remark}

\section{Single-copy pseudorandom states from pseudo-mixed states}\label{sec:1prs}

In this section, we show how to construct $\OnePRS$ from a pseudo-mixed state.
If the pseudo-mixed state is non-uniform, then so is the resulting $\OnePRS$.

Assume $\{\rho_\lambda\}_{\lambda}$ is a family of pseudo-mixed states on system $A$.
By applying the tensoring method to amplify the gap if necessary, we can assume
without loss of generality that $\rho_\lambda$ is an $n(\lambda)$-qubit state
with entropy $S(\rho_\lambda) < n(\lambda) - 1$.

To construct single-copy pseudorandom states that are pure, a natural approach
is to consider the purification $\ket{\Psi_\lambda}_{AB}$ of the state
$\rho_\lambda$, such that
$\Tr_B (\ketbra{\Psi_\lambda}{\Psi_\lambda}) = \rho_\lambda$.
Assume without loss of generality that system $B$ consists of
$n'(\lambda) \ge n(\lambda)$ qubits.
Then the state $\ket{\Psi_\lambda}_{AB}$ is computationally indistinguishable
from the maximally mixed state on system $A$, but there is no guarantee
regarding system $B$.

To make the system $B$ also indistinguishable from the maximally mixed state, we
apply the quantum extractor to the system $B = B_1B_2$, treating the system $A$
as the environment.
We use the Clifford group, a unitary 2-design, as the quantum extractor, and use
a quantum one-time pad on the subsystem $B_2$ to effectively trace out $B_2$
when the keys are sampled uniformly at random:
\begin{equation*}
  \ket{\phi_{k}} = \Bigl(I_A \otimes \bigl(\bigl(I_{B_1} \otimes
  (X^{\alpha}Z^{\beta})_{B_2} \bigr) C_B \bigr)\ket{\Psi_\lambda}_{AB} \Bigr)
  \otimes \ket{C}_L
\end{equation*}
where the key $k = C \parallel \alpha \parallel \beta$, $C$ ranges over all Clifford
gates on $B$, $B_1$ consists of $\ell$ qubits, and $\alpha, \beta \in \bit^{n'-\ell}$ are
the quantum one-time pad keys on the $n'-\ell$ qubits of $B_2$ for some $\ell$
to be chosen later.

The goal is to show that our extractor makes the subsystem $B$ indistinguishable from the maximally mixed state while using a short key.
However, notice that the quantum strong extractor works on states with low min-entropy, but a pseudo-mixed state only has a low von Neumann entropy, rather than a low min-entropy.
Thus, we must consider a sufficiently large number $m$ of copies of
$\rho_\lambda$ so that the min-entropy and von Neumann entropy are close
asymptomatically.
The complete construction is provided in \cref{alg:one-prs}.

\begin{algorithm}[htbp!]
  \caption{Construction of $\OnePRS$ from pseudo-mixed states}\label{alg:one-prs}
  \begin{algorithmic}[1]
    \Require Inputs $1^\lambda$, $k \in \{0, 1\}^{r(\lambda)}$.
    \State Let $G$ be the generation algorithm for a pseudo-mixed states family $\{\rho_{\lambda}\}_{\lambda}$ of entropy gap at least 1.
    \State Obtain from $G$ a circuit $V_{\lambda}$ that prepares a purification $\ket{\Psi_{\lambda}}_{AB}$ of $\rho_{\lambda}$. Let the number of qubits of $A$ be $n$ and the number of qubits of $B$ be $n'$. Without loss of generality, we assume that $n' \geq n$. %
    \State Pick $m = 50(n')^2\lambda$, $\ell = (n'-n) m/2 + 1$.
    \State Parse $k$ as $C \parallel \alpha \parallel \beta$ where $C$ is a
    Clifford gate over $B^m$, and $\alpha, \beta \in \bit^{n'm-\ell}$. %
    \State Let $B^m=B_1B_2$ where $B_1$ consists of the first $\ell$ qubits of $B^m$. %
    \State Output state
    \begin{equation*}
      \ket{\phi_{k}} = \Bigl(I_{A^m} \otimes \bigl(\bigl(I_{B_1} \otimes
      (X^\alpha Z^\beta)_{B_2} \bigr) C_{B^m} \bigr) {\bigl( V_{\lambda}\ket{0^{n+n'}}_{AB}
        \bigr)}^{\otimes m} \Bigr) \otimes \ket{C}_L.
    \end{equation*}
  \end{algorithmic}
\end{algorithm}

\begin{theorem}\label{thm:pms-1prs-uniform}
  Assuming that pseudo-mixed states family exists, then $\OnePRS$ exists.
\end{theorem}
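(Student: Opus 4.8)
The plan is to verify the three defining properties of a $\OnePRS$ for the family $\{\ket{\phi_k}\}$ produced by \cref{alg:one-prs}: efficient preparation, the single-copy pseudorandom property, and the stretch property. Efficient preparation is immediate, since the algorithm only prepares $m$ copies of the efficiently generatable purification $\ket{\Psi_\lambda}_{AB}$, applies a Clifford $C$ (a $2$-design whose elements are efficiently implementable) together with a Pauli one-time pad $X^\alpha Z^\beta$ on $B_2$, and writes the classical label $\ket{C}_L$. The heart of the argument is the single-copy pseudorandom property, and the key observation is that, since the adversary receives only one copy with the key $k$ hidden, its acceptance probability depends only on the averaged density matrix $\E_k \ketbra{\phi_k}{\phi_k}$; likewise, a single copy of a Haar-random state has density matrix exactly $I/2^N$, where $N$ is the output length. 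So it suffices to show $\E_k \ketbra{\phi_k}{\phi_k} \approx_c I/2^N$.

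First I would compute $\E_k \ketbra{\phi_k}{\phi_k}$. Averaging the one-time pad $X^\alpha Z^\beta$ over uniform $\alpha, \beta$ acts as the completely depolarizing channel on $B_2$, i.e.\ it replaces the $B_2$ marginal by $I_{B_2}/\abs{B_2}$ while tracing it out of the rest. Hence $\E_k \ketbra{\phi_k}{\phi_k}$ equals $\Ext^{B^m \to B_1}_\ell\bigl(\ketbra{\Psi_\lambda}{\Psi_\lambda}^{\otimes m}\bigr) \otimes I_{B_2}/\abs{B_2}$, where the extractor is built from the Clifford $2$-design on $B^m$ with environment $A^m$. By \cref{thm:extractor} (via \cref{lem:extractor_bound}), provided the parameters satisfy $\ell \le (n'm + \Hmin^{\eps/12}(B^m \mid A^m))/2 - \log(1/\eps)$, this extractor output is $\eps$-close in trace distance to $\frac{I_L}{\abs{L}} \otimes \frac{I_{B_1}}{\abs{B_1}} \otimes \rho_\lambda^{\otimes m}$. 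Choosing $\eps$ negligible makes this statistical. Finally, the pseudo-mixed state property $\rho_\lambda \approx_c I/2^n$ (\cref{def:PMS}) together with a standard hybrid over the $m$ copies yields $\rho_\lambda^{\otimes m} \approx_c I_{A^m}/2^{nm}$, so that $\E_k \ketbra{\phi_k}{\phi_k} \approx_c \frac{I_L}{\abs{L}} \otimes \frac{I_{B_1}}{\abs{B_1}} \otimes \frac{I_{A^m}}{2^{nm}} \otimes \frac{I_{B_2}}{\abs{B_2}} = I/2^N$, which is exactly what we wanted.

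Two ingredients remain, and I expect the parameter bookkeeping to be the main obstacle. The first is the stretch property: writing the key as $k = C \Vert \alpha \Vert \beta$, its length is $\log\abs{L} + 2(n'm - \ell)$, while the output has $N = nm + n'm + \log\abs{L}$ qubits; the $\log\abs{L}$ terms cancel, and substituting $\ell = (n'-n)m/2 + 1$ gives $N - \abs{k} = nm - n'm + 2\ell = 2 > 0$, so the key is strictly shorter than the output. The second, and more delicate, ingredient is verifying that this $\ell$ satisfies the extractor's min-entropy condition. Here I would use that $\ket{\Psi_\lambda}$ is pure, so that $S(B \mid A)_{\ket{\Psi_\lambda}} = -S(\rho_\lambda)$, and invoke \cref{cor:min-entropy-explicit-bound} to get $\Hmin^{\eps/12}(B^m \mid A^m) \ge -m\,S(\rho_\lambda) - O\bigl(n'\sqrt{m\log(1/\eps)}\bigr)$. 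Combining this with the pseudo-mixed entropy gap $S(\rho_\lambda) < n - 1$ shows $(n'm + \Hmin^{\eps/12}(B^m \mid A^m))/2 \ge (n'-n)m/2 + m/2 - O\bigl(n'\sqrt{m\log(1/\eps)}\bigr)$. The crux is that the choice $m = 50(n')^2\lambda$ makes the linear-in-$m$ surplus $m/2$ dominate the $O(n'\sqrt{m\log(1/\eps)})$ fluctuation term (for negligible $\eps$, say $\eps = 2^{-\lambda}$), leaving room for $\ell = (n'-n)m/2 + 1$ to comfortably satisfy the required inequality. This is precisely where the one-bit-per-copy entropy gap of the pseudo-mixed state is converted, after amplification over $m$ copies, into the slack needed both to extract enough randomness and to keep the key short.
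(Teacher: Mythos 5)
Your proposal is correct and takes essentially the same route as the paper's proof: the same reduction of single-copy security to $\E_k \ketbra{\phi_k}{\phi_k} \approx_c I/2^N$, the same one-time-pad/Clifford-extractor decomposition with $A^m$ as environment, the same invocation of \cref{cor:min-entropy-explicit-bound} via $S(B|A)_{\ket{\Psi_\lambda}} = -S(\rho_\lambda)$, and the same stretch and parameter bookkeeping. The one step you leave implicit is that \cref{def:PMS} only guarantees a $1/\poly(\lambda)$ entropy gap, so before running \cref{alg:one-prs} one must tensor $\poly(\lambda)$ copies to boost the gap to at least $1$ (as the paper does without loss of generality); your parameter analysis genuinely needs this, since with only a $1/\poly(\lambda)$ gap the $O(n'\sqrt{m\log(1/\eps)})$ fluctuation term would dominate the entropy surplus.
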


\begin{proof}
  We prove that the procedure in \cref{alg:one-prs} constructs
  $\OnePRS$.

  We first establish that the construction in \cref{alg:one-prs} possesses a
  non-trivial stretch property.
  Let $L$ denote the set of Clifford gates over $B^m$ of $mn'$ qubits.
  The key $k$ has length $r = 2(n'm - \ell) + \log \abs{L}$.
  The state $\ket{\phi_k}$ consists of $(n + n')m + \log \abs{L}$ qubits.
  Therefore, the stretch is
  \begin{equation*}
    (n - n')m + 2\ell = 2.
  \end{equation*}

  Next, we show that $\E_k \left[\ketbra{\phi_k}{\phi_k}\right]$ is computationally
  indistinguishable from the maximally mixed state on the system $A^mB^mL$.
  By the properties of the quantum one-time pad, we have
  \begin{equation*}
    \E_k \left[\ketbra{\phi_k}{\phi_k}\right] = \E_C\, \left[\ketbra{C}{C}_L \otimes \Tr_{B_2}
    \Bigl( C_{B^m} {\ketbra{\Psi_{\lambda}}{\Psi_{\lambda}}}^{\otimes m}
    C^\dagger_{B^m} \Bigr)\right] \otimes \frac{I_{B_2}}{\abs{B_2}}.
  \end{equation*}
  It therefore suffices to prove that
  \begin{equation}\label{eq:1prs-1}
    \E_C\, \left[\ketbra{C}{C}_L \otimes \Tr_{B_2} \Bigl( C_{B^m}
    {\ketbra{\Psi_{\lambda}}{\Psi_{\lambda}}}^{\otimes m} C^\dagger_{B^m} \Bigr)\right]
  \end{equation}
  is computationally indistinguishable from the maximally mixed state on
  $LA^mB_1$.

  Define $\sigma = {\ketbra{\Psi_\lambda}{\Psi_\lambda}}^{\otimes m}$.
  \Cref{eq:1prs-1} can be written as $\Ext^{B^m}_\ell \bigl(\sigma \bigr)$.
  
  Recall that $\Tr_{B^m}(\sigma) = \left(\rho_{\lambda}^{\otimes m}\right)_{A^m}$, which, by standard hybrid argument, is computationally indistinguishable with the maximally mixed state on $A^m$. By \cref{thm:extractor}, we have that
  \begin{equation*}
   \Ext^{B^m}_\ell \bigl(\sigma \bigr) \approx_{s} \frac{I_L}{\abs{L}} \otimes \frac{I_{B_1}}{\abs{B_1}} \otimes (\rho_{\lambda}^{\otimes m})_{A^m} \approx_c \frac{I_L}{\abs{L}} \otimes \frac{I_{B_1}}{\abs{B_1}} \otimes \frac{I_{A^m}}{\abs{A^m}},
  \end{equation*}
  as long as there exists $\eps = \negl(\lambda)$ such that 
  \begin{equation}\label{eq:1prs-2}
      \ell \leq \frac{mn' + \Hmin^{\eps/12}{(B^m|A^m)}_{\sigma}}{2} - \log(1/\eps).
  \end{equation}

  It remains to prove inequality \ref{eq:1prs-2}.

  Let $\eps = 2^{-\lambda}$. We apply \cref{cor:min-entropy-explicit-bound} to bound the smoothed conditional min entropy of the state $\sigma$. Since $m \geq 5\log \frac{1}{\eps}$, we have that
  \begin{equation*}
  \begin{split}
    \Hmin^{\eps/12}{(B^m|A^m)}_{\sigma}
    \ge&\; m S{(B|A)}_{\ket{\Psi_{\lambda}}} - 6n'\sqrt{(\lambda + 4)m}\\
    =& - m S{(\rho_{\lambda})}- 6n'\sqrt{(\lambda + 4)m}\\
    \ge&\; m(1 - n) - 6n'\sqrt{(\lambda + 4)m}.
  \end{split}
  \end{equation*}

  A direct calculation shows that
  \begin{equation*}
    \begin{split}
      & \frac{mn' + \Hmin^{\eps/12}{(B^m|A^m)}_{\sigma}}{2} - \log(1/\eps)\\
      \ge \;& \frac{m(n' - n)}{2} + \frac{m - 6n'\sqrt{(\lambda + 4)m} - 2\lambda}{2}\\
      \ge \;& \ell
    \end{split}
  \end{equation*}
  for large $\lambda$, which concludes the proof.
\end{proof}

\begin{theorem}\label{thm:pms-1prs-no-uniform}
  Assuming that non-uniform pseudo-mixed states family with advice size $s(\lambda)$ exists, then non-uniform $\OnePRS$ with advice size $s(\lambda)$ exists.
\end{theorem}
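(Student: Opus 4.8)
The plan is to verify that the reduction of \cref{thm:pms-1prs-uniform} is fully black-box in the pseudo-mixed state generator, so that non-uniformity simply passes through. Concretely, I would instantiate \cref{alg:one-prs} with the non-uniform family $\{\rho_{\lambda,a}\}_\lambda$ from \cref{def:nu-PMS}, letting the $\OnePRS$ generator receive the same advice string $a$ of length $s(\lambda)$ and use it to run $G$ and extract the purification circuit $V_{\lambda,a}$ of $\rho_{\lambda,a}$. Every parameter in the construction---the number of copies $m$, the extracted length $\ell$, the Clifford $2$-design $L$, and the one-time-pad lengths---depends only on the qubit counts $n,n'$ and on $\lambda$, never on $a$. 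Hence the efficient-generation and stretch properties are literally those of the uniform proof: the arithmetic identity giving stretch $2$ is unchanged, and $\ket{\phi_k}$ is still prepared in polynomial time once $a$ is supplied. In particular, the advice size of the resulting $\OnePRS$ generator is exactly $s(\lambda)$.

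For the single-copy pseudorandom property I would rerun the same indistinguishability chain, checking that each link is insensitive to non-uniformity. The statistical link is \cref{thm:extractor} applied to $\sigma = \ketbra{\Psi_{\lambda,a}}{\Psi_{\lambda,a}}^{\otimes m}$; its hypothesis \cref{eq:1prs-2} is the conditional min-entropy bound from \cref{cor:min-entropy-explicit-bound}, which depends only on the fixed value $S(\rho_{\lambda,a})$ and on the parameters, so it holds verbatim. The single computational link is $(\rho_{\lambda,a}^{\otimes m})_{A^m} \approx_c I/\abs{A^m}$, which I would obtain from the non-uniform $\PMS$ guarantee by the usual $m$-copy hybrid argument.

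The one place that needs care---and the main potential obstacle---is keeping the advice under control in the security reduction. Given a non-uniform $\OnePRS$ distinguisher with advice $a$, the induced $\PMS$ distinguisher must, in the $i$-th hybrid, embed the single challenge copy among $i$ maximally mixed registers and $m-i-1$ honestly generated copies of $\rho_{\lambda,a}$. The maximally mixed registers are free, and each copy of $\rho_{\lambda,a}$ is prepared by invoking $G$ on the \emph{same} advice $a$; thus the reduction reuses $a$ rather than accumulating $m$ independent advice strings, and the only extra non-uniform information is the hybrid index $i$, costing $O(\log\lambda)$ bits. Consequently a successful non-uniform $\OnePRS$ adversary yields a non-uniform $\PMS$ adversary of advice size $s(\lambda)+O(\log\lambda)$, contradicting the assumed security of the pseudo-mixed state family. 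Since the generator advice is preserved exactly at $s(\lambda)$, this establishes the claimed non-uniform $\OnePRS$.
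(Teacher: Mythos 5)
Your proposal is correct and follows essentially the same route as the paper: the paper's proof of \cref{thm:pms-1prs-no-uniform} simply supplies \cref{alg:one-prs} with the same advice string as the non-uniform pseudo-mixed state generator and observes that the proof of \cref{thm:pms-1prs-uniform} then goes through without modification. Your extra care about the security reduction---that the hybrid distinguisher reuses the single advice string $a$ (plus a hybrid index) rather than accumulating $m$ copies of it, so the generator's advice size stays exactly $s(\lambda)$---is a detail the paper leaves implicit but is fully consistent with its argument.
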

\begin{proof}
In the non-uniform setting, we provide \cref{alg:one-prs} with the same advice as that of the non-uniform pseudo-mixed states family. Using this advice, the algorithm can generate the corresponding circuits for producing the pseudo-mixed states. The rest of the proof then follows without modification.
\end{proof}

\begin{theorem}\label{thm:efi-1prs}
  Assuming that $\EFI$ exist, then non-uniform $\OnePRS$ with advice size
  $O(\log \lambda)$ exists.
\end{theorem}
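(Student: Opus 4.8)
The plan is to obtain this theorem as a direct composition of the two pipelines already established in this section, so the proof itself should be short. Starting from an $\EFI$ pair, the first step is to apply \cref{thm:EFI-PMS} (itself a consequence of \cref{thm:EEFI} and \cref{thm:EEFI-PMS}) to produce a non-uniform pseudo-mixed state family with advice size $O(\log\lambda)$. The second step is to feed this family into \cref{thm:pms-1prs-no-uniform} with $s(\lambda) = O(\log\lambda)$, which yields a non-uniform $\OnePRS$ with advice of the same size. Since that reduction preserves advice length verbatim, the claimed $O(\log\lambda)$ bound follows at once.

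Concretely, I would first recall that \cref{thm:EEFI} converts the trace-distance gap of the $\EFI$ pair into a noticeable von Neumann entropy gap, producing an entropic $\EFI$ pair in which one state has strictly larger entropy. \cref{thm:EEFI-PMS} then tensors many copies (so that the min-entropy tracks the von Neumann entropy through \cref{cor:min-entropy-explicit-bound}) and applies the Clifford-based strong quantum extractor of \cref{thm:extractor}: on the higher-entropy state this yields a state statistically close to maximally mixed, whereas on the lower-entropy state the extracted output retains noticeably sub-maximal entropy. This latter output is the pseudo-mixed state, and a hybrid argument transfers the computational indistinguishability of the two entropic $\EFI$ states to it. The only non-uniform ingredient is an $O(\log\lambda)$-bit estimate of the higher entropy value, needed to fix the extractor's output length.

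For the final step I would invoke \cref{thm:pms-1prs-no-uniform}: take a purification $\ket{\Psi}_{AB}$ of the pseudo-mixed state, whose $A$-marginal is already computationally indistinguishable from maximally mixed, and apply a Clifford extractor together with a quantum one-time pad to the purifying register $B$ (again over many copies, so that \cref{cor:min-entropy-explicit-bound} makes the conditional min-entropy usable). This symmetrizes $B$ to maximally mixed as well, while the stretch accounting from \cref{thm:pms-1prs-uniform} shows the key stays slightly shorter than the output, so the resulting pure-state family is a genuine $\OnePRS$, with advice inherited unchanged from the pseudo-mixed state. There is no genuine obstacle at this stage, since all of the heavy lifting---amplifying the entropy gap and the two invocations of the decoupling/extractor machinery---has already been carried out; the only point to check is the purely quantitative fact that the advice size remains $O(\log\lambda)$ across the composition, which holds because neither reduction inflates it beyond $O(\log\lambda)$.
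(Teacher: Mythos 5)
Your proposal is correct and matches the paper's own proof exactly: the paper also derives \cref{thm:efi-1prs} as an immediate composition of \cref{thm:EFI-PMS} (EFI implies non-uniform pseudo-mixed states with $O(\log\lambda)$ advice) and \cref{thm:pms-1prs-no-uniform} (non-uniform PMS with advice $s(\lambda)$ yields non-uniform $\OnePRS$ with the same advice size). The additional detail you give about the internals of those two theorems is accurate but not needed, since the composition is the entire argument.
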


\begin{proof}
Assuming that $\EFI$ exist, by \cref{thm:EFI-PMS}, there exist non-uniform pseudo-mixed states
  with advice size $O(\log \lambda)$. Then by \cref{thm:pms-1prs-no-uniform}, there exists non-uniform $\OnePRS$ with advice size $O(\log \lambda)$.
\end{proof}

\begin{corollary}\label{cor:1prs-stretch}
  Assuming the existence of $\EFI$, there exists non-uniform $\OnePRS$ of $n(\lambda)$ qubits with advice size $O(\log \lambda)$ such that the stretch is at least $\sqrt{n(\lambda)}$.
\end{corollary}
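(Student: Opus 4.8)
The plan is to obtain the large-stretch generator by tensoring many independent copies of the stretch-$2$ family produced in \cref{thm:efi-1prs}. By that theorem, applied through the construction of \cref{alg:one-prs}, the existence of $\EFI$ yields a non-uniform $\OnePRS$ $\{\ket{\phi_k}\}$ with advice size $O(\log\lambda)$ whose stretch is exactly $2$ (this is the value $(n-n')m + 2\ell = 2$ computed in the proof of \cref{thm:pms-1prs-uniform}); write $n_0 = n_0(\lambda)$ for its qubit count and $\ell_0 = n_0 - 2$ for its key length, both $\poly(\lambda)$. I would set $t = t(\lambda) := \lceil n_0/4\rceil$ and define the new family on keys $K = (k_1,\dots,k_t)$ by $\ket{\Phi_K} = \ket{\phi_{k_1}} \otimes \cdots \otimes \ket{\phi_{k_t}}$, a state on $N := t\,n_0$ qubits with key length $t\,\ell_0 = t(n_0-2)$.

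Three of the four required properties are then immediate. Efficient preparation holds since $\ket{\Phi_K}$ is a tensor product of $t = \poly(\lambda)$ efficiently preparable states all sharing the same advice, so the advice size stays $O(\log\lambda)$. For the stretch, $N - t\ell_0 = t n_0 - t(n_0 - 2) = 2t$; since $t \ge n_0/4$ gives $4t \ge n_0$ and hence $4t^2 \ge t n_0 = N$, we get $2t \ge \sqrt{N}$, as required. What remains is single-copy pseudorandomness, i.e.\ that $\E_K \ketbra{\Phi_K}{\Phi_K}$ is computationally indistinguishable, given a single copy, from $\E_{\ket{\phi}\gets\mu_N}\ketbra{\phi}{\phi} = I/2^N$.

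The crux, and the only genuine subtlety, is that although a tensor product of Haar-random states is \emph{not} Haar-random on the larger system, the averaged comparison state factorizes: $\E_{\ket{\phi}\gets\mu_N}\ketbra{\phi}{\phi} = I/2^N = (I/2^{n_0})^{\otimes t}$. Writing $\bar\rho = \E_k \ketbra{\phi_k}{\phi_k}$, the single-copy security of the base family says exactly that $\bar\rho \approx_c I/2^{n_0}$, and independence of the keys gives $\E_K \ketbra{\Phi_K}{\Phi_K} = \bar\rho^{\otimes t}$. I would then run a standard hybrid argument over the $t$ factors, taking the $i$-th hybrid to be $(I/2^{n_0})^{\otimes i} \otimes \bar\rho^{\otimes (t-i)}$: a distinguisher separating two consecutive hybrids yields a single-copy distinguisher for $\bar\rho$ versus $I/2^{n_0}$, once the reduction itself prepares the remaining $t-1$ factors (each $\bar\rho$ by sampling a key and running the base generator, each $I/2^{n_0}$ trivially), which it does efficiently using the same advice. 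Summing the $t = \poly(\lambda)$ hybrid gaps gives total advantage $t\cdot\negl(\lambda) = \negl(\lambda)$, so $\bar\rho^{\otimes t} \approx_c I/2^N$ and the family is single-copy pseudorandom. The main thing to get right is precisely this hybrid argument together with the factorization of the Haar-averaged state; everything else is a direct parameter count.
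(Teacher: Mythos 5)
Your proposal is correct and takes essentially the same approach as the paper: the paper also proves this corollary by parallel repetition, tensoring $n'(\lambda)$ copies of the $\OnePRS$ from \cref{thm:efi-1prs} (using only the definitional stretch of at least $1$ per copy, rather than your $\lceil n_0/4\rceil$ copies of the stretch-$2$ construction) to get $n = n'^2$ qubits and stretch at least $\sqrt{n}$. The factorization $\E_{\ket{\phi}\gets\mu_N}\ketbra{\phi}{\phi} = I/2^N = (I/2^{n_0})^{\otimes t}$ and the hybrid argument you spell out are exactly what the paper leaves implicit in its ``it is clear'' step.
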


\begin{proof}
  Assuming the existence of $\EFI$, by \Cref{thm:efi-1prs}, there exists
  non-uniform $\OnePRS$ $\{\ket{\phi_{k}}\}$ with advice size $O(\log \lambda)$.
  Denote the number of qubits as $n'(\lambda)$.

  We can construct another non-uniform $\OnePRS$ with the same advice size by
  parallel repetition:
  $\ket{\Phi_K} = \ket{\phi_{k_1}}\ket{\phi_{k_2}}\cdots \ket{\phi_{k_{n'(\lambda)}}}$,
  where $K = k_1 \parallel k_2 \parallel \cdots \parallel k_{n'(\lambda)}$.
  Let $n(\lambda) = n'^{2}(\lambda)$.
  It is clear that $\{\ket{\Phi_K}\}$ is a $n$-qubit state that can be generated
  with an advice of size $O(\log \lambda)$ and that it is a non-uniform
  $\OnePRS$ of $n(\lambda)$ qubits whose stretch is at least
  $\sqrt{n(\lambda)}$, which concludes the proof.
\end{proof}

\section{A natural universal \texorpdfstring{$\EFI$}{EFI}}

In cryptography, it is sometimes possible to have universal constructions of a
cryptographic primitive, meaning that it is secure as long as such primitives
exist.
For example, a universal one-way function is known to exist as an early result
in the field~\cite{Levin87}.
Such universal constructions are of theoretical importance and often reveal the
essential reason and understanding of the corresponding primitive.
In this section, we prove the existence of a weak form of universal $\EFI$ as an
interesting corollary of the construction of non-uniform $\OnePRS$ from $\EFI$
proved in the previous section.

\begin{theorem}
  The following are equivalent:
  \begin{enumerate}
    \item $\EFI$ exist.
    \item There exist (efficiently computable) functions $T(\lambda) = \poly(\lambda)$, $n(\lambda) = \poly(\lambda)$,
          $r(\lambda) = n - \omega(\log n)$, $\eps = \negl(\lambda)$,
          $\delta = 1 - \Omega(\frac{1}{\lambda})$ such that the pair
          $\left(\{\rho_{r}^T\}, \{\frac{1}{2^n}I\}\right)$ is an
          $(\eps, \delta)$-weak $\EFI$, where
          $\rho_{r}^T = \frac{1}{2^r}\sum_{\abs{P} \leq r} \ket{\psi_P^T}\bra{\psi_P^T}$,
          and $\ket{\psi_P^T}$ is the $n$-qubit state output by program $P$ in
          time at most $T$ (if the number of output qubits exceeds $n$, abort;
          if the output state $\ket{\psi_P}$ has $k < n$ qubits, replace it with
          $\ket{\psi_P}\ket{0^{n-k}}$).\footnote{Some care needs to be taken to
          ensure that this output is an $n$-qubit pure state.
          If a program, outputs a state that is longer than $n$ qubits we can
          output some canonical state such as $\ket{0}^{\otimes n}$.
         If a program outputs a state that may be mixed, there are ways to
          check if the output is far from pure, for instance by doing two runs of
          the program, and performing a swap test on their outputs.
          If a program output a state that is noticeably mixed, we can
          again output a canonical state.
          Finally, programs that output fewer than $n$ qubits can be padded with
          $\ket{0}$'s.
          }

  \end{enumerate}
\end{theorem}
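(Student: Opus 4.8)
The plan is to prove both implications, with the reverse implication $(2)\Rightarrow(1)$ being immediate and the forward implication $(1)\Rightarrow(2)$ carrying all the content. For $(2)\Rightarrow(1)$, observe that the stated parameters $\eps=\negl(\lambda)$ and $\delta = 1-\Omega(1/\lambda) = 1-1/\poly(\lambda)$ fall exactly into the first parameter regime of \cref{thm:WEFI-EFI}, so the existence of the claimed $(\eps,\delta)$-weak $\EFI$ pair immediately yields a standard $\EFI$ pair. The rest of the argument establishes $(1)\Rightarrow(2)$. The first step is to invoke \cref{cor:1prs-stretch}: assuming $\EFI$ exist, there is a non-uniform $\OnePRS$ family $\{\ket{\psi_k}\}_{k\in\{0,1\}^{\ell(\lambda)}}$ on $n(\lambda)$ qubits, with advice $a$ of size $O(\log\lambda)$ and stretch at least $\sqrt{n}$, i.e. $\ell \le n-\sqrt{n}$. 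The key observation is that each $\ket{\psi_k}$ is produced by a short, self-contained program $P_k$ that hardcodes the constant-size $\OnePRS$ generator, the advice $a$, the encoding of $\lambda$, and the key $k$, then runs the generator in time $T(\lambda)=\poly(\lambda)$. With self-delimiting encodings $\abs{P_k} \le \ell + O(\log\lambda)$, so setting $r(\lambda) = \ell + O(\log\lambda)$ gives $n-r \ge \sqrt{n}-O(\log\lambda) = \omega(\log n)$, as required. Note that $\rho_r^T$ is efficiently generatable by \emph{sampling} a single random program of length at most $r$ and running it for time $T$ (rather than enumerating all of them), which gives Efficient Generation and makes $\rho_r^T$ a genuine density matrix.

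For Statistically $(1-\eps)$-Far I would use a dimension count. The state $\rho_r^T$ is supported on the span of the at most $2^{r+1}$ states $\{\ket{\psi_P^T} : \abs{P}\le r\}$, a subspace of dimension at most $2^{r+1}$ inside the $2^n$-dimensional space. Letting $\Pi$ be the projector onto this span, $\Tr(\Pi\rho_r^T)=1$ while $\Tr(\Pi\,I/2^n) \le 2^{r+1}/2^n$, so
\begin{equation*}
D\!\left(\rho_r^T, \tfrac{I}{2^n}\right) \ge 1 - 2^{r+1-n} = 1-\negl(\lambda),
\end{equation*}
since $n-r = \omega(\log n)$. This yields $\eps = \negl(\lambda)$.

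The main work, and the main obstacle, is $\delta$-Computational Indistinguishability. Here I would write $\rho_r^T$ as a convex combination $\rho_r^T = p\,\rho_{\OnePRS} + (1-p)\,\rho_{\mathrm{rest}}$, where $\rho_{\OnePRS} = \E_k \ketbra{\psi_k}{\psi_k}$ is the (conditional) contribution of the ``good'' programs $\{P_k\}$ and $p$ is the probability that a uniformly sampled program of length $\le r$ equals some $P_k$; conditioning on this event makes $k$ uniform, so the conditional state is exactly $\rho_{\OnePRS}$, and since there are $2^{\ell}$ keys among at most $2^{r+1}$ programs, $p \ge 2^{\ell-r-1} = 2^{-O(\log\lambda)} = 1/\poly(\lambda)$. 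The crucial point is that single-copy $\OnePRS$ security makes $\rho_{\OnePRS}$ computationally indistinguishable from $I/2^n$: for any QPT accept operator $M$, $\OnePRS$ security gives $\abs{\Tr(M\rho_{\OnePRS}) - \E_{\phi\sim\mu_n}\Tr(M\ketbra{\phi}{\phi})} = \negl(\lambda)$, and $\E_{\phi\sim\mu_n}\ketbra{\phi}{\phi} = I/2^n$ by unitary invariance of the Haar measure, so $\rho_{\OnePRS}\approx_c I/2^n$ (against uniform QPT adversaries, since the $\OnePRS$ of \cref{cor:1prs-stretch} inherits security against exactly the adversary class of the underlying $\EFI$). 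Combining, for any QPT $\adv$ with accept operator $M$,
\begin{equation*}
\left|\Tr\!\left(M\bigl(\rho_r^T - \tfrac{I}{2^n}\bigr)\right)\right|
\le p\left|\Tr\!\left(M\bigl(\rho_{\OnePRS}-\tfrac{I}{2^n}\bigr)\right)\right| + (1-p)
\le 1 - p + \negl(\lambda),
\end{equation*}
so the pair is $\delta$-indistinguishable with $\delta = 1-p = 1-1/\poly(\lambda)$.

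Finally, since $p = 2^{-c\log\lambda} = \lambda^{-c}$ for a constant $c$ fixed by the advice and encoding overhead, a routine reparametrization of the security parameter (running the construction at parameter $\lambda^{1/c}$) brings this into the claimed form $\delta = 1-\Omega(1/\lambda)$ without affecting the statistical bound. The only genuinely delicate points are bounding the program description length so that \emph{simultaneously} $p \ge 1/\poly(\lambda)$ and $r = n-\omega(\log n)$ hold---which is exactly what the $\sqrt{n}$ stretch of \cref{cor:1prs-stretch} buys---and verifying that the $O(\log\lambda)$ advice can be folded into each $P_k$ without inflating $r$.
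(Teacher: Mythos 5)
Your proof is correct and follows essentially the same route as the paper's: the reverse direction is the same appeal to \cref{thm:WEFI-EFI}, and the forward direction likewise encodes each state of the non-uniform $\OnePRS$ (with $\omega(\log n)$ stretch, as in \cref{cor:1prs-stretch}) as a short program of length $r = n - \omega(\log n)$, so that $\rho_r^T$ places inverse-polynomial weight on pseudorandom states (giving $\delta = 1 - 1/\poly(\lambda)$ via single-copy security and the fact that the Haar average is maximally mixed), while a rank/dimension count gives $\eps = \negl(\lambda)$. Your write-up is if anything slightly more explicit than the paper's about the decomposition $\rho_r^T = p\,\rho_{\OnePRS} + (1-p)\,\rho_{\mathrm{rest}}$, the uniform-adversary security class, and the reparametrization needed to match $\delta = 1-\Omega(1/\lambda)$ exactly.
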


\begin{proof}
    Suppose there exists $r$, $T$, $\eps$, and $\delta$ as in 2 such that $\rho_r^T$ and
    $\frac{1}{2^n}I$ is an $(\eps, \delta)$-weak $\EFI$ pair, then, by Theorem~\ref{thm:WEFI-EFI}, an $\EFI$ pair also exists.

    Assume that $\EFI$ exist, then, by Theorem~\ref{thm:efi-1prs}, there exists a non-uniform $\OnePRS$ with
    $\log(n)$-size advice, say $m$-to-$n$ $\Gen_{\lambda}$ where $\lambda
    \in [n]$ is the advice. 
    For any $k \in \{0,1\}^m$ and $\lambda \in
    [n]$, $\Gen_\lambda\ket{k}$ can be generated by a Turing machine with
    size $m+\log n + C$, where $C$ is a constant. Thus for $r=m+\log n+C$,
    and $T$ be the running time of $\Gen$, we can view $\rho_{r}^T$ as
    sampling a state from the non-uniform $\OnePRS$ with probability
    $\Omega(\frac 1n)$, and sampling from some other state with the
    remaining probability. Thus, any adversary can distinguish $\rho_r^T$ from $\frac{1}{2^n}I$ with advantage at most $1-\Omega(\frac 1n) + \negl$, and they are $(1-2^{r-n})$-statistically far (since the entropy of $\rho_r^T$ is bounded by $r$, while the entropy of $\frac{1}{2^n}I$ is $n$). Thus, the pair $\left(\{\rho_{r}^T\}, \{\frac{1}{2^n}I\}\right)$ is an $(\eps, \delta)$-weak $\EFI$ pair, for $\eps = \negl(n)$, and $\delta = 1 - \Omega(\frac{1}{n})$.
\end{proof}

\section{Equivalence with \texorpdfstring{$\GapH$}{GapH} hardness}
\label{sec:gaph}

In this section, we characterize $\EFI$ with the hardness of estimating the robust G\'{a}cs' complexity. The proof goes in three steps. In~\cref{sec:quantum-alg-info}, we show the entropy gap for the mixture of high and low complexity states. In~\cref{sec:extraction}, we show how to extract randomness from the mixture of high complexity states. And then in~\cref{sec:pieces-together}, we put pieces together and propose an algorithm for estimating the non-uniform $\GapH$ problem.

\subsection{Quantum algorithmic information}\label{sec:quantum-alg-info}

\begin{lemma}[Mixtures over low-complexity states are approximately
  low-entropy]\label{lem:H-low-entropy}
  For any family of states $\{\ket{\psi_k}\}$ such that for all $k$,
  $\Hbar^{1-\eps}(\ket{\psi_k}) \leq r$, the mixed state
  $\E_k \ketbra{\psi_k}{\psi_k}$ is $\sqrt{2\eps}$-close to a state with von
  Neumann entropy at most $r$.
\end{lemma}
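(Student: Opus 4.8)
The plan is to exhibit an explicit low-rank projector that captures almost all of the mass of every state in the family, project onto it, and bound the resulting trace-distance error. Concretely, diagonalize the universal semi-density matrix as $\udm_n = \sum_i \mu_i \ketbra{e_i}{e_i}$ with $\mu_i \ge 0$ and $\Tr \udm_n \le 1$, and let $\Pi$ be the projector onto $\mathrm{span}\{\ket{e_i} : \mu_i \ge 2^{-r}\}$. Since the eigenvalues sum to at most $1$, at most $2^r$ of them can be as large as $2^{-r}$, so $\rank \Pi \le 2^r$. Any density matrix supported on the range of $\Pi$ therefore has von Neumann entropy at most $\log \rank \Pi \le r$, so $\Pi$ is the natural target subspace. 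We may assume $\eps < 1/2$, since otherwise $\sqrt{2\eps} \ge 1$ and the statement is trivial.

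The key step, and the place where the smoothing hypothesis does all the work, is to show that $p_k := \norm{(I-\Pi)\ket{\psi_k}}^2 < 2\eps$ for every $k$. First I would let $\ket{v_k}$ denote the normalized orthogonal component $(I-\Pi)\ket{\psi_k}/\sqrt{p_k}$ (the case $p_k = 0$ being trivial). Because $\ket{v_k}$ is supported entirely on eigenvectors of $\udm_n$ with eigenvalue strictly below $2^{-r}$, we have $\braket{v_k | \udm_n | v_k} < 2^{-r}$, i.e. $\Hbar(\ket{v_k}) > r$. Now the hypothesis $\Hbar^{1-\eps}(\ket{\psi_k}) \le r$ says, by \cref{def:hbar-eps}, that \emph{every} pure state within trace distance $1-\eps$ of $\ket{\psi_k}$ has $\Hbar \le r$; contrapositively, the high-complexity witness $\ket{v_k}$ must lie outside that ball, so $D(\ket{\psi_k}, \ket{v_k}) > 1 - \eps$. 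A direct computation gives $\braket{\psi_k | v_k} = \sqrt{p_k}$ and hence $D(\ket{\psi_k}, \ket{v_k}) = \sqrt{1 - p_k}$, so $\sqrt{1-p_k} > 1-\eps$, which rearranges to $p_k < 2\eps - \eps^2 < 2\eps$.

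To conclude, I would set $\ket{\phi_k} = \Pi\ket{\psi_k}/\norm{\Pi\ket{\psi_k}}$, which is well defined since $p_k < 1$. Then $\abs{\braket{\psi_k | \phi_k}}^2 = 1 - p_k$, so $D(\ket{\psi_k}, \ket{\phi_k}) = \sqrt{p_k} < \sqrt{2\eps}$ for every $k$. Applying \cref{lem:distance-mix} (joint convexity of trace distance) gives $D\bigl(\E_k \ketbra{\psi_k}{\psi_k}, \E_k \ketbra{\phi_k}{\phi_k}\bigr) < \sqrt{2\eps}$. Since each $\ket{\phi_k}$ lies in the range of $\Pi$, the mixture $\E_k \ketbra{\phi_k}{\phi_k}$ is supported on a subspace of dimension at most $2^r$ and therefore has von Neumann entropy at most $r$, which is exactly the claimed $\sqrt{2\eps}$-close low-entropy state.

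The main subtlety to get right is the use of the \emph{max}-smoothing: it is essential that producing a single pure witness $\ket{v_k}$ of high $\Hbar$-complexity forces the orthogonal mass to be small, and that the \emph{bare} bound $\Hbar(\ket{\psi_k}) \le r$ would \emph{not} suffice here, since a state with large $\braket{\psi | \udm_n | \psi}$ need not have large overlap with the top eigenspace of $\udm_n$. Beyond that, I would only need to be careful about strictness at the eigenvalue boundary $\mu_i = 2^{-r}$ (placing these eigenvectors inside $\Pi$ keeps the orthogonal component's Rayleigh quotient strictly below $2^{-r}$, so $\Hbar(\ket{v_k}) > r$ holds) and the degenerate endpoints $p_k \in \{0,1\}$, both of which are handled by the assumption $\eps < 1/2$.
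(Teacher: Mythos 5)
Your proof is correct and follows essentially the same route as the paper's: the same spectral split of $\udm$ at the eigenvalue threshold $2^{-r}$, the same contrapositive use of the max-smoothing (a normalized high-eigenspace component would be a high-complexity state inside the $(1-\eps)$-ball), and the same conclusion via joint convexity of trace distance plus the rank-at-most-$2^r$ bound. The only differences are cosmetic — you phrase the key step as bounding $p_k = \norm{(I-\Pi)\ket{\psi_k}}^2 < 2\eps$ rather than lower-bounding $\norm{\Pi_{\low}\ket{\psi_k}}$, and you handle the edge cases ($\eps \ge 1/2$, degenerate $p_k$, boundary eigenvalues) slightly more explicitly than the paper does.
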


\begin{proof}
  Let $\udm = \sum_{i} \mu_{i} \ketbra{\phi_{i}}{\phi_{i}}$ be the spectral
  decomposition of the universal density matrix $\udm$.
  Define $\Pi_{\low}$ to be the projection onto the span of low complexity
  eigenstates $\ket{\phi_{i}}$ of $\udm$ with eigenvalue $\mu_{i}$ at least
  $2^{-r}$, and $\Pi_{\high}$ to be the projection onto the span of high
  complexity eigenstates, i.e.\ those with eigenvalue less than $2^{-r}$.
  Since $\udm$ has trace at most $1$, the number of $\mu_{i}$'s at least
  $2^{-r}$ is at most $2^{r}$, and therefore the low-complexity space defined by
  $\Pi_{\low}$ has dimension at most $2^{r}$.

  For any state $\ket{\psi}$, define
  $\ket{\psi_{\high}} = \Pi_{\high} \ket{\psi} / \norm{\Pi_{\high} \ket{\psi}}$
  and
  $\ket{\psi_{\low}} = \Pi_{\low} \ket{\psi} / \norm{\Pi_{\low} \ket{\psi}}$.

  We first show that for any state $\ket{\psi}$ satisfying
  $\Hbar^{1-\eps}(\ket{\psi}) \le r$, the trace distance between $\ket{\psi}$
  and $\ket{\psi_{\low}}$ is at most $\sqrt{2\eps}$.
  To establish this, observe that
  \begin{equation*}
    D(\ket{\psi}, \ket{\psi_{\high}}) =
    \sqrt{1 - \abs{\braket{\psi | \psi_{\high}}}^{2}} = \norm{\Pi_{\low} \ket{\psi}}.
  \end{equation*}
  If $\norm{\Pi_{\low} \ket{\psi}} \le 1-\eps$, then by the definition of
  $\Hbar^{1-\eps}(\ket{\psi})$, we would have
  \[
    \Hbar^{1-\eps}(\ket{\psi}) \ge \Hbar(\ket{\psi_{\high}}) > r,
  \]
  since $\ket{\psi_{\high}}$ is $(1-\eps)$-close to $\ket{\psi}$ and has
  $\Hbar$-complexity greater than $r$.
  This contradicts the assumption that $\Hbar^{1-\eps}(\ket{\psi}) \le r$.

  Therefore, it must be that $\norm{\Pi_{\low} \ket{\psi}} > 1 - \eps$.
  Consequently,
  \begin{equation*}
    D(\ket{\psi}, \ket{\psi_{\low}}) = \sqrt{1 - \norm{\Pi_{\low}\ket{\psi}}^{2}}
    < \sqrt{1 - {(1 - \eps)}^{2}} < \sqrt{2\eps}.
  \end{equation*}

  Next, we apply the result from the first step to $\ket{\psi_{k}}$, since it
  holds that $\Hbar^{1-\eps}(\ket{\psi_{k}}) \le r$.
  This implies that, for all $k$,
  $D(\ket{\psi_{k}}, \ket{\psi_{k, \low}}) < \sqrt{2 \eps}$.
  By \cref{lem:distance-mix}, $\E_{k} \ketbra{\psi_{k}}{\psi_{k}}$ is
  $\sqrt{2\eps}$-close to $\E_{k} \ketbra{\psi_{k, \low}}{\psi_{k, \low}}$.
  Since $\E_{k} \ketbra{\psi_{k, \low}}{\psi_{k, \low}}$ is supported on the
  low-complexity subspace $\Pi_{\low}$, a linear subspace of dimension at most
  $2^{r}$, its von Neumann entropy is at most $r$.
\end{proof}

\begin{lemma}\label{lem:min-entropy-eigenstate}
    Let $\rho = \sum \alpha_i \ketbra{\varphi_i}{\varphi_i}$ be a density matrix with its eigenvalue decomposition. Let $\tilde \Pi_\low$ be the projector of subspace spanned by $\ket{\varphi_i}$ with $\alpha_i\geq 2^{-s}$. If $\Tr (\tilde \Pi_\low \rho) \leq \eps$, then we have $\Hmin^{2\eps}(\rho) \geq s+\log(1-\eps)$.
\end{lemma}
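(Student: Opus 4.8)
The plan is to exhibit an explicit nearby state that witnesses the smoothed min-entropy bound. Working in the eigenbasis $\{\ket{\varphi_i}\}$ of $\rho$, I would split the spectrum at the threshold $2^{-s}$ exactly as in the definition of $\tilde\Pi_\low$: let $\tilde\Pi_\high = I - \tilde\Pi_\low$ be the projector onto the eigenstates $\ket{\varphi_i}$ with $\alpha_i < 2^{-s}$, and set $q = \Tr(\tilde\Pi_\high \rho) = 1 - \Tr(\tilde\Pi_\low\rho) \ge 1-\eps$. The candidate witness is the renormalized truncation $\rho' = \tilde\Pi_\high \rho\, \tilde\Pi_\high / q$, which is supported entirely on the high-min-entropy subspace. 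It then remains to verify two things: that $\rho'$ has the claimed min-entropy, and that it lies within the smoothing ball around $\rho$.

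The min-entropy half is essentially forced by the threshold. Every eigenvalue of $\tilde\Pi_\high \rho\, \tilde\Pi_\high$ is strictly below $2^{-s}$, so after renormalizing by $q \ge 1-\eps$ the largest eigenvalue of $\rho'$ is at most $2^{-s}/(1-\eps)$. Hence $\rho' \le \frac{2^{-s}}{1-\eps} I = 2^{-(s+\log(1-\eps))} I$, which by the definition of $\Hmin$ gives $\Hmin(\rho') \ge s + \log(1-\eps)$. This is exactly the source of the $\log(1-\eps)$ correction in the statement, and it is robust to the details of the construction.

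The distance half is the heart of the argument, and it is where the commuting structure of $\rho$ and $\rho'$ pays off. Since the two states are simultaneously diagonal, $\sqrt\rho\sqrt{\rho'}$ is positive semidefinite and supported on the high subspace, so its trace norm equals its trace; a direct computation gives $\norm{\sqrt\rho\sqrt{\rho'}}_1 = \frac{1}{\sqrt q}\sum_{i:\alpha_i < 2^{-s}} \alpha_i = \sqrt q$. As both states are normalized this equals the fidelity, so $F(\rho,\rho') = \sqrt q \ge \sqrt{1-\eps}$ and therefore the purified distance satisfies $P(\rho,\rho') = \sqrt{1 - F^2(\rho,\rho')} \le \sqrt{\eps}$. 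Combining the two halves places $\rho'$ inside the smoothing ball and yields the smoothed min-entropy lower bound $s + \log(1-\eps)$ at radius $\sqrt\eps$.

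The main obstacle I anticipate is precisely the bookkeeping of the smoothing radius. The overlap hypothesis $\Tr(\tilde\Pi_\low\rho)\le\eps$ controls the fidelity defect only to first order, and because the purified distance is the square root of that defect, the truncation comes out $O(\sqrt\eps)$-close rather than $O(\eps)$-close; one can check on a single-spike example that this square-root behavior is intrinsic to sharp truncation. Matching the precise radius quoted in the lemma therefore requires some additional care — for instance a sub-normalized witness $\tilde\Pi_\high\rho\,\tilde\Pi_\high$ where the trace slack $1-q$ is folded into the generalized fidelity (giving $F = q$ and $\Hmin \ge s$), a gentler eigenvalue-capping witness, or a careful tracking of constants in the purified-distance relations. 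By contrast, the min-entropy estimate itself is immediate once the threshold $2^{-s}$ is fixed, so essentially all of the delicacy is concentrated in relating the subspace overlap to the smoothing parameter.
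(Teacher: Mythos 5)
Your witness and your min-entropy bound are exactly the paper's: the paper sets $p = \Tr(\tilde\Pi_\low\rho) \le \eps$, $\tilde\rho = (I-\tilde\Pi_\low)\rho(I-\tilde\Pi_\low)$, $\hat\rho = \tilde\rho/(1-p)$, and caps the eigenvalues of $\hat\rho$ by $2^{-s}/(1-p)$ to get $\Hmin(\hat\rho)\ge s+\log(1-\eps)$. The only divergence is the distance step, which is precisely where you stalled. The paper never touches fidelity: it bounds $\norm{\rho-\hat\rho}_1 \le \norm{\rho-\tilde\rho}_1+\norm{\tilde\rho-\hat\rho}_1 = p+p \le 2\eps$ and concludes that $\hat\rho$ lies within \emph{trace distance} $2\eps$ of $\rho$, i.e.\ it reads the smoothing ball as a trace-distance ball. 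You instead computed the purified distance --- the metric the paper's preliminaries actually use to define $\Hmin^{\eps}$ --- and correctly obtained $P(\rho,\hat\rho)=\sqrt{p}\le\sqrt{\eps}$.

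The ``additional care'' you are hoping for does not exist under the purified-distance reading, and your own single-spike intuition shows why: when $2^{-s}\ll\eps$, any state whose min-entropy is near $s$ must give the spike weight close to $0$, so by data processing its fidelity with $\rho$ is at most about $\sqrt{1-\eps}$, forcing purified distance $\Omega(\sqrt{\eps})$; radius $2\eps$ is therefore unattainable for small $\eps$. Your proposed fixes do not escape this either: the sub-normalized witness $\tilde\Pi_\high\rho\tilde\Pi_\high$ has generalized fidelity $F=1-p$ and hence $P\le\sqrt{1-(1-p)^2}\le\sqrt{2\eps}$, still a square root. The honest summary is that you proved the purified-distance statement $\Hmin^{\sqrt{\eps}}(\rho)\ge s+\log(1-\eps)$, the paper proved the trace-distance statement at radius $2\eps$, and the lemma as literally stated (purified-distance definition with radius $2\eps$) only follows from your bound when $\eps\ge 1/4$. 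The discrepancy is immaterial downstream --- every application, e.g.\ \cref{lem:H-high-entropy}, has $\omega(\log n)$ slack, so a square root on a negligible smoothing parameter changes nothing --- but to match the lemma verbatim you should simply replace your fidelity computation with the paper's two-line trace-norm triangle inequality.
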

\begin{proof}
    Let $p \coloneq \Tr (\tilde \Pi_\low \rho) \leq \eps$, and $\tilde \rho \coloneq (I-\tilde \Pi_\low)\rho(I-\tilde \Pi_\low), \hat \rho \coloneqq \frac{\tilde \rho}{1-p}$, then we have
    \begin{align*}
        \|\hat \rho \| \leq \frac {2^{-s}}{1-p} \leq \frac{2^{-s}}{1-\eps},
    \end{align*}
    so
    \begin{align*}
        \Hmin(\hat \rho) \geq s + \log (1-\eps)
    \end{align*}
    Also
    \begin{align*}
        \|\rho - \hat\rho\| \leq \| \rho - \tilde \rho \| + \| \tilde \rho - \hat \rho \| = p + p = 2p \leq 2\eps,
    \end{align*}
    so $\hat \rho$ lies within trace distance $2\eps$ of $\rho$, so we have $\Hmin^{2\eps}(\rho) \geq s + \log(1-\eps)$
\end{proof}

\begin{lemma}[Mixtures over high complexity states are approximately high min-entropy]\label{lem:H-high-entropy}
  For any (not necessarily efficiently) samplable state family $\ket{\psi_k}$
  such that $\forall k: \Hbar^0(\ket{\psi_k}) \geq s$, there exists a constant
  $C$ such that for any $\Gamma > \log n + C$, the mixed state
  $\rho = \E_k \ketbra{\psi_k}{\psi_k}$ is $2^{-\Gamma+\log n + C}$-close to a
  state with min-entropy at least $s-\Gamma$.
\end{lemma}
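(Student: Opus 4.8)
The plan is to follow the template of \cref{lem:H-low-entropy}, working in the spectral decomposition $\udm = \sum_i \mu_i \ketbra{\phi_i}{\phi_i}$ of the universal density matrix, and then to convert a bound on how much $\rho$ overlaps a \emph{small} subspace into a smooth min-entropy bound via \cref{lem:min-entropy-eigenstate}. First I would fix a threshold $t$ (chosen around $s-\Gamma$, up to the additive $\log n + C$ slack) and let $\Pi_{\geq t}$ be the projector onto the span of the eigenstates $\ket{\phi_i}$ with $\mu_i \geq 2^{-t}$. Since $\Tr(\udm)\le 1$, this ``low-complexity'' subspace has dimension at most $2^t$.

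The elementary estimate driving everything is that each family member has tiny overlap with this subspace. Because $\Hbar^0(\ket{\psi_k}) = \Hbar(\ket{\psi_k}) \ge s$ is exactly $\braket{\psi_k | \udm | \psi_k} \le 2^{-s}$, and because $\udm \ge 2^{-t}\Pi_{\geq t}$ on the relevant support, we get
\[
2^{-t}\,\norm{\Pi_{\geq t}\ket{\psi_k}}^2 \le \braket{\psi_k | \udm | \psi_k} \le 2^{-s},
\]
so $\norm{\Pi_{\geq t}\ket{\psi_k}}^2 \le 2^{t-s}$ for every $k$. Averaging over $k$ yields $\Tr(\Pi_{\geq t}\,\rho)\le 2^{t-s}$; choosing $t = s - \Gamma + \log n + C$ makes this exactly the target closeness $2^{-\Gamma+\log n + C}$. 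One then wants to pass to the state $\hat\rho$ obtained by projecting $\rho$ onto the orthogonal complement $I - \Pi_{\geq t}$ (a gentle-measurement step shows $\hat\rho$ is close to $\rho$), and to apply \cref{lem:min-entropy-eigenstate} with eigenvalue cutoff $2^{-(s-\Gamma)}$ to conclude $\Hmin^{2\eps}(\rho)\ge s-\Gamma+\log(1-\eps)$ for $\eps = 2^{t-s}$. The $\log n + C$ overhead is precisely where the remark $\frac{1}{cn}\ketbra{0^n}{0^n}\le\udm$ enters, calibrating the $\udm$-eigenvalue threshold against the complexity parameter $s$.

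The hard part, and the step I would scrutinize most carefully, is the passage from ``$\rho$ has little mass on the fixed subspace $\Pi_{\geq t}$'' to ``$\rho$ is close to a state whose \emph{every} eigenvalue is at most $2^{-(s-\Gamma)}$''. The bound $\Tr(\Pi_{\geq t}\rho)\le 2^{t-s}$ only controls one fixed small subspace, whereas min-entropy is a statement about the largest eigenvalue of $\rho$ in \emph{all} directions; moreover \cref{lem:min-entropy-eigenstate} is phrased in terms of $\rho$'s own high-eigenvalue eigenspace, not that of $\udm$. To make the two match I would need to argue that any eigenvector of $\rho$ with eigenvalue exceeding $2^{-(s-\Gamma)}$ is (almost) contained in $\Pi_{\geq t}$ — equivalently, that $\norm{(I-\Pi_{\geq t})\,\rho\,(I-\Pi_{\geq t})} \le 2^{-(s-\Gamma)}$. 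This does not seem to follow from the per-state complexity bound alone and appears to require that the mixture genuinely spreads out rather than concentrating on a single high-complexity direction; isolating and formalizing the structure that guarantees this is where I expect the real work to lie.
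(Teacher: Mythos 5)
Your first step (the overlap bound $\Tr(\Pi_{\geq t}\,\rho)\le 2^{t-s}$ against the high-eigenvalue subspace of $\udm$) is correct, but the gap you flag at the end is real, and it is exactly where the paper's proof lives --- so the proposal as it stands does not prove the lemma. The paper does not work in the eigenbasis of $\udm$ at all. It takes the spectral decomposition of $\rho = \sum_i \beta_i \ketbra{\varphi_i}{\varphi_i}$ itself, and lets $\tilde\Pi_{\low}$ project onto the eigenvectors with $\beta_i \ge 2^{\Gamma-s-1}$. The key observation, which your sketch never uses, is the \emph{samplability} hypothesis: since the family (hence $\rho$) is generated by some fixed, possibly inefficient, algorithm, each such eigenvector has a short classical description --- the sampling algorithm ($c$ bits), the number of qubits ($\log n$ bits), and its index among the at most $2^{s-\Gamma+1}$ eigenvalues above the cutoff --- so $\Knet(\ket{\varphi_i}) \le s-\Gamma+\log n + c$. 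Consequently $\udm \ge 2^{-(s-\Gamma+\log n + c)}\,\tilde\Pi_{\low}$, and the hypothesis $\braket{\psi_k|\udm|\psi_k}\le 2^{-s}$ forces $\braket{\psi_k|\tilde\Pi_{\low}|\psi_k}\le 2^{-\Gamma+\log n + c}$ for every $k$; averaging gives $\Tr(\tilde\Pi_{\low}\,\rho)\le 2^{-\Gamma+\log n+c}$, and then \cref{lem:min-entropy-eigenstate} applies verbatim, because now the projector is the one built from $\rho$'s \emph{own} eigenvectors. This inverts the direction of your argument: rather than trying to show that $\rho$'s heavy eigenvectors are (almost) contained in a fixed low-complexity subspace of $\udm$, one shows directly that $\rho$'s heavy eigenvectors are themselves low-complexity states, which the high-complexity states $\ket{\psi_k}$ therefore cannot overlap.

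Your closing intuition (``the mixture must genuinely spread out'') is also not the right mechanism, and chasing it would fail: nothing needs to be assumed about spreading, because concentration is automatically ruled out --- any direction carrying large weight in a samplable $\rho$ is describable and hence low-complexity, contradicting $\Hbar^0(\ket{\psi_k})\ge s$. Note also that samplability is not a removable convenience: for a family consisting of a single Haar-random state, your overlap bound against $\udm$ still holds, yet $\rho$ is pure and has min-entropy $0$, so any route that (like yours) never invokes samplability cannot close the gap.
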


\begin{proof}
  Let $\rho = \sum_{i} \beta_{i} \ketbra{\varphi_{i}}{\varphi_{i}}$ be the
  spectral decomposition of $\rho$.
  We prove the lemma by showing that removing the large eigenvalues does not
  significantly change the state.
  Let $\tilde{\Pi}_\low$ be the projection onto the subspace spanned by
  $\ket{\varphi_{i}}$ with $\beta_{i} \ge 2^{\Gamma - s -1}$.
  Each of these eigenvectors of $\rho$ has a short program of length at most
  $s - \Gamma + \log n + c$ for some constant $c$.
  That is, for all such $\ket{\varphi_{i}}$,
  $\Knet(\varphi_{i}) \le s - \Gamma + \log n + c$.
  Thus, we can write $\udm  = \sum_{i} 2^{-\Knet(\ket{\varphi_{i}})}
             \ketbra{\varphi_{i}}{\varphi_{i}} + \udm'$, where $\udm'$ is a semi-density matrix representing the remaining part of $\udm$, and the sum is only over the eigenvectors of $\rho$ corresponding to $\tilde\Pi_{\low}$.  Then, we have
  \begin{equation}\label{eq:high-entropy-1}
    \begin{split}
      \udm & = \sum_{i} 2^{-\Knet(\ket{\varphi_{i}})}
             \ketbra{\varphi_{i}}{\varphi_{i}} + \udm'\\
           & \ge \sum_{i} 2^{-\Knet(\ket{\varphi_{i}})}
             \ketbra{\varphi_{i}}{\varphi_{i}}\\
           & \ge 2^{-(s - \Gamma + \log n + c)}\sum_{i}
             \ketbra{\varphi_{i}}{\varphi_{i}}\\
           & = 2^{-(s - \Gamma + \log n + c)}\, \tilde\Pi_{\low}\,.
    \end{split}
  \end{equation}
  Now, for any state $\ket{\psi}$ with $\Hbar^0(\ket{\psi}) \ge s$, we have
  $\braket{\psi | \udm | \psi} \le 2^{-s}$.
  So, it follows from \cref{eq:high-entropy-1} that
  \begin{equation*}
    \braket{\psi | \tilde \Pi_{\low} | \psi} \le 2^{s - \Gamma + \log n + c}
    \braket{\psi | \udm | \psi} \le 2^{- \Gamma + \log n + c},
  \end{equation*}
  and, as a result,
  \begin{equation}
    \label{eq:high-entropy-2}
    \Tr (\tilde \Pi_\low \rho) = \E_k \braket{\psi_k | \tilde \Pi_\low | \psi_k}
    \le 2^{-\Gamma + \log n+c}.
  \end{equation}
  According to~\cref{lem:min-entropy-eigenstate}, set $\eps = 2^{-\Gamma+\log n + c}$, we have
  \begin{align*}
      \Hmin^{2\eps}(\rho) \geq s - \Gamma + 1 - \log (1-\eps) \geq s-\Gamma
  \end{align*}
  As long as $\eps \leq 1/2$ (this is the case in case that $\Gamma > \log n + C$). Choose $C = c + 1$ to be the constant in the lemma statement, then we have $\Hmin^{2^{-\Gamma+\log n +C}}(\rho) \geq s-\Gamma$.
\end{proof}

\begin{lemma}\label{lem:knet0-Heps}
  For any state $\ket{\psi}$ and $\eps \in \interval[open right]{0}{1}$, we have
  $H^{1-\eps}(\ket{\psi}) \leq \Knet(\ket{\psi}) + \log\frac{1}{\eps}$.
\end{lemma}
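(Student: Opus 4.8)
The plan is to reduce everything to the single operator inequality $\udm \geq 2^{-\Knet(\ket{\psi})}\,\ketbra{\psi}{\psi}$, which is exactly the fact already exploited in the proof of \cref{lem:U-upper-bound}: by construction of the universal semi-density matrix, the (exact) $\Knet$-optimal program for $\ket{\psi}$ contributes a term $2^{-\Knet(\ket{\psi})}\ketbra{\psi}{\psi}$ to $\udm$, and all the remaining contributions are positive semi-definite. So I would open the proof by recalling this decomposition.

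Next I would unwind the definition of the smoothing. By \cref{def:hbar-eps}, $\Hbar^{1-\eps}(\ket{\psi}) = \max_{\ket{\phi}: D(\ket{\psi}, \ket{\phi}) \le 1-\eps} \Hbar(\ket{\phi})$, so it suffices to bound $\Hbar(\ket{\phi})$ for an arbitrary pure state $\ket{\phi}$ with $D(\ket{\psi}, \ket{\phi}) \le 1-\eps$. For pure states the trace distance is $D(\ket{\psi}, \ket{\phi}) = \sqrt{1 - \abs{\braket{\psi | \phi}}^2}$, so the constraint gives $\abs{\braket{\psi | \phi}}^2 \ge 1 - (1-\eps)^2 = 2\eps - \eps^2 \ge \eps$, where the last step uses $\eps \le 1$ (this is the same overlap computation already appearing in the proof of \cref{lem:U-lower-bound}).

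Combining the two ingredients, I sandwich $\udm$ between $\bra{\phi}$ and $\ket{\phi}$:
\begin{equation*}
  \braket{\phi | \udm | \phi} \ge 2^{-\Knet(\ket{\psi})}\,\abs{\braket{\phi | \psi}}^2 \ge 2^{-\Knet(\ket{\psi})}\,\eps.
\end{equation*}
Taking negative logarithms gives $\Hbar(\ket{\phi}) = -\log\braket{\phi | \udm | \phi} \le \Knet(\ket{\psi}) + \log\frac{1}{\eps}$. Since this bound holds for \emph{every} admissible $\ket{\phi}$, taking the maximum over the $(1-\eps)$-ball yields $\Hbar^{1-\eps}(\ket{\psi}) \le \Knet(\ket{\psi}) + \log\frac{1}{\eps}$, as claimed.

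There is no real obstacle here; the argument is essentially the observation that a state which is only $(1-\eps)$-close to $\ket{\psi}$ (a very weak requirement, since $1-\eps$ may be nearly $1$) still retains overlap at least $\eps$ with $\ket{\psi}$, and that this overlap is lower-bounded inside $\udm$ directly by the $\Knet$-optimal program. The one point I would be careful about is the direction of the smoothing: because $\Hbar^{1-\eps}$ is a \emph{maximum} over nearby states, I must control $\Hbar(\ket{\phi})$ \emph{uniformly} over all such $\ket{\phi}$, and it is precisely the operator inequality $\udm \ge 2^{-\Knet(\ket{\psi})}\ketbra{\psi}{\psi}$ (rather than any property special to $\ket{\psi}$ itself) that delivers this uniform bound.
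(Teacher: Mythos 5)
Your proof is correct and follows essentially the same route as the paper's: both rest on the operator inequality $\udm \ge 2^{-\Knet(\ket{\psi})}\ketbra{\psi}{\psi}$ coming from the $\Knet$-optimal program, combined with the observation that any $\ket{\phi}$ in the $(1-\eps)$-trace-distance ball retains overlap $\abs{\braket{\psi|\phi}}^2 \ge \eps$. If anything, your write-up is slightly more careful than the paper's, which leaves the overlap computation implicit and contains a couple of sign/notation slips (e.g.\ writing $\eps\, 2^{\Knet(\ket{\psi})}$ where $\eps\, 2^{-\Knet(\ket{\psi})}$ is meant); your version of the chain $\braket{\phi|\udm|\phi} \ge 2^{-\Knet(\ket{\psi})}\abs{\braket{\phi|\psi}}^2 \ge 2^{-\Knet(\ket{\psi})}\eps$ is the cleaner way to state it.
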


\begin{proof}
    By the definition of $\udm$ we know that for all $\ket{\psi}$, 
    \begin{equation*}
        \udm = 2^{-\Knet(\ket{\psi})} \ketbra{\psi}{\psi} + \udm' \geq 2^{-\Knet(\ket{\psi})}\ket{\psi}\bra{\psi},
    \end{equation*}
    where $\udm'$ is the residual part of the universal density matrix.
    
    Recall that $\Hbar^0$ is defined as $\Hbar^0(\ket{\psi'}) = \braket{\psi' | \udm |\psi'}$. So in case that $|\braket{\psi|\psi'}|^2 \geq \eps$ then we have 
    \begin{equation*}
    \braket{\psi' | \udm |\psi'} \geq \eps \braket{\psi | \udm |\psi} \geq \eps 2^{\Knet(\ket{\psi})},
    \end{equation*}
    and $\Hbar^0(\ket{\psi'}) \leq -\log(\eps 2^{\Knet(\ket{\psi})}) = \Knet(\ket{\psi}) + \log 1/\eps$. %
\end{proof}

\subsection{Extraction}\label{sec:extraction}

\begin{lemma}\label{lem:extractor-channel}
  Let $\rho$ be a mixed state over an $n$-qubit system $A$. Let $A_{1}$ be a
  subsystem of $A$ with $\ell$ qubits, and let $A_2$ be the remaining $n-l$ qubits.
  Let $\Ext^{A \to A_{1}}_{\ell}$ be an extractor on $A$ mapping states on $A$
  to $A_{1}$ by applying a unitary $2$-design $\{U_{j}\}_{j\in L}$.
  Define
  $\rho' = \Ext^{A \to A_{1}}_{\ell}(\rho) \otimes \frac{I_{A_{2}}}{\abs{A_{2}}}$.
  Then for any $\Delta$, the following holds:
  \begin{itemize}
    \item If $S(\rho) \leq 2\ell-n-\Delta$, then
          $S(\rho') \le n + \log\abs{L} -\Delta$.
    \item If $\Hmin(\rho) \ge 2\ell-n+\Delta$, then $\rho'$ is
          $2^{-\Delta/2}$-close to $\frac{I_{A}}{\abs{A}} \otimes \frac{I_{L}}{\abs{L}}$.
  \end{itemize}
\end{lemma}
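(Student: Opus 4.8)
The plan is to treat the two bullets separately, since they concern different entropic quantities, but to first reduce both to a statement purely about $\Ext^{A \to A_1}_\ell(\rho)$. The key preliminary observation is that $\rho'$ is obtained from the extractor output by tensoring with the fixed state $I_{A_2}/\abs{A_2}$. For the von Neumann entropy this splits additively, and for the trace distance it drops out entirely (tensoring both arguments with the same state preserves trace distance). So I would first record that $S(\rho') = S(\Ext^{A \to A_1}_\ell(\rho)) + (n-\ell)$ and that $D(\rho', \frac{I_A}{\abs{A}} \otimes \frac{I_L}{\abs{L}}) = D(\Ext^{A \to A_1}_\ell(\rho), \frac{I_{A_1}}{\abs{A_1}} \otimes \frac{I_L}{\abs{L}})$.

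For the first bullet, I would compute $S(\Ext^{A\to A_1}_\ell(\rho))$ directly. Since the extractor output is block-diagonal in the design index $j$ with uniform weights, its entropy is $\log\abs{L} + \E_j S(\Tr_{A_2}(U_j \rho U_j^\dagger))$. Writing $\rho^{(j)} = U_j \rho U_j^\dagger$, the same entropy triangle inequality $S(A_1) \le S(A) + S(A_2)$ used in the proof of \cref{thm:EEFI-PMS} bounds the marginal entropy by $S(A_1)_{\rho^{(j)}} \le S(A)_{\rho^{(j)}} + S(A_2)_{\rho^{(j)}} \le S(\rho) + (n-\ell)$, using unitary invariance of entropy and $S(A_2)_{\rho^{(j)}} \le \log\abs{A_2} = n-\ell$. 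Chaining these through the preliminary identity gives $S(\rho') \le \log\abs{L} + S(\rho) + 2(n-\ell)$, and substituting the hypothesis $S(\rho) \le 2\ell - n - \Delta$ collapses this to $\log\abs{L} + n - \Delta$, as claimed.

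For the second bullet, I would invoke \cref{thm:extractor} with a trivial environment $E$ and $\eps = 2^{-\Delta/2}$. The theorem requires $\ell \le (n + \Hmin^{\eps/12}(A))/2 - \log(1/\eps)$; since smoothing only increases min-entropy, $\Hmin^{\eps/12}(\rho) \ge \Hmin(\rho) \ge 2\ell - n + \Delta$, so the right-hand side is at least $\frac{n + (2\ell-n+\Delta)}{2} - \frac{\Delta}{2} = \ell$ and the condition is met exactly. The theorem then yields $D(\Ext^{A\to A_1}_\ell(\rho), \frac{I_{A_1}}{\abs{A_1}} \otimes \frac{I_L}{\abs{L}}) \le \eps = 2^{-\Delta/2}$, which by the preliminary observation is precisely the desired closeness of $\rho'$. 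Neither step presents a genuine obstacle; the only delicate point, and the one I would double-check, is this parameter bookkeeping in the second bullet, making sure the smoothing parameter $\eps/12$ and the $\log(1/\eps)$ term line up so that the saturated condition of \cref{thm:extractor} leaves exactly $\ell$ on the right and produces exactly the stated $2^{-\Delta/2}$.
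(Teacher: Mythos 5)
Your proposal is correct and follows essentially the same route as the paper's proof: the first bullet via the block-diagonal entropy decomposition of the extractor output combined with the triangle inequality $S(A_1) \le S(A_2) + S(A)$, and the second bullet via a direct application of \cref{thm:extractor} with $\eps = 2^{-\Delta/2}$, noting that smoothing only increases min-entropy. The only cosmetic difference is that you factor out the tensoring with $I_{A_2}/\abs{A_2}$ as a preliminary reduction (for both entropy and trace distance), whereas the paper handles this inline; the parameter bookkeeping you flag checks out exactly as you describe.
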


\begin{proof}
  For the first statement, we write
  \begin{equation*}
    \begin{split}
      \rho' & = \Ext^{A \to A_{1}}_{\ell}(\rho) \otimes
              \frac{I_{A_{2}}}{\abs{A_{2}}}\\
            & = \frac{1}{L}\sum_{j\in L} \ketbra{j}{j}_{L} \otimes \Tr_{A_{2}}
              \bigl( U_{j} \rho U_{j}^{\dagger} \bigr) \otimes
              \frac{I_{A_{2}}}{\abs{A_{2}}}.
    \end{split}
  \end{equation*}
  Define $\rho^{(j)}_{A} =  U_{j} \rho U_{j}^{\dagger}$.
  Noticing that $\rho'$ is a cq-state, we have
  \begin{equation}\label{eq:extractor-channel-1}
    S(\rho') \le \log \abs{L} + n - \ell +
    \max_j S \bigl(\rho^{(j)}_{A_{1}}\bigr),
  \end{equation}
  where $\rho^{(j)}_{A_{1}} = \Tr_{A_{2}} \rho^{(j)}_{A} $ is the reduced
  density matrix of $\rho^{(j)}_{A}$ on $A_{1}$.
  Using subadditivity of the von Neumann entropy,
  $S(A_{1}) \le S(A_{2}) + S(A)$, we have for all $j$
  \begin{equation*}
    \begin{split}
      S \bigl( \rho^{(j)}_{A_{1}} \bigr)
      & \le S \bigl( \rho^{(j)}_{A_{2}} \bigr) + S \bigl( \rho^{(j)}_{A} \bigr)\\
      & \le (n - \ell) + (2\ell - n - \Delta)\\
      & = \ell - \Delta.
    \end{split}
  \end{equation*}
  Together with the bound in \cref{eq:extractor-channel-1}, this proves the
  first statement.

  The second statement is a direct application of \cref{thm:extractor} by taking
  $\eps = 2^{-\Delta/2}$.
  Since $\Hmin^{\eps/12}(\rho) \ge \Hmin(\rho) \ge 2\ell - n + \Delta$, the theorem guarantees that
  the number of qubits that one can extract is at least
  \begin{equation*}
    \frac{n + (2\ell - n + \Delta)}{2} - \log(1/\eps) = \ell.
  \end{equation*}
  Thus, by \cref{thm:extractor}, we have
  \begin{equation*}
    \begin{split}
      D \biggl(\rho', \frac{I_{A}}{\abs{A}} \otimes \frac{I_{L}}{\abs{L}} \biggr)
      & = D \biggl(\Ext_{\ell}^{A}(\rho) \otimes \frac{I_{A_{2}}}{\abs{A_{2}}},
        \frac{I_{A}}{\abs{A}} \otimes \frac{I_{L}}{\abs{L}} \biggr) \\
      & \le \eps = 2^{-\Delta/2}.
    \end{split}
  \end{equation*}
\end{proof}

\subsection{Putting the pieces together}\label{sec:pieces-together}

\begin{algorithm}[htbp!]
  \caption{Algorithm for solving $\GapH[r,r+\Delta]$}\label{alg:estimate-gaph}
  \begin{algorithmic}[1]
    \Require A single-copy input state $\ket{\psi_k} \in \H_{A}$.
    \State Let $\rho = \E_k \ket{\psi_k}\bra{\psi_k}$, and let $\ell = (n+r+\Delta/2)/2$.
    \State Let $\Pi$ be the projector that distinguishes $\Ext^{A \to A_1}_{\ell}(\rho)$
    from the maximally mixed state, where $A_1$ is the first $\ell$ qubits of system $A$.
    \State Test the projector $\Pi$ on $\Ext^{A \to A_1}_{\ell}(\ket{\psi_k}\bra{\psi_k})$.
    \State If the test passes report $\low$, otherwise report $\high$.
  \end{algorithmic}
\end{algorithm}

\begin{theorem}\label{thm:efi-nu-gaph}
  The following two statements are equivalent:
  \begin{itemize}
    \item $\EFI$ exist.
    \item There exists a non-uniform family of efficiently samplable states with advice of size $O(\log \lambda)$
          $\{\ket{\psi_{k}}\}$, efficiently computable functions
          $r \in [n(\lambda)], \Delta = \omega(\log\lambda)$, and a universal
          constant $\eps < \frac{1}{100}$, such that $\GapH^\eps(r, r+\Delta)$
          is non-uniformly hard on average over $\{\ket{\psi_{k}}\}$.
    \end{itemize}
\end{theorem}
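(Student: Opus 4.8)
The plan is to prove the two directions separately, reusing the reductions already in place.

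For the forward direction ($\EFI \Rightarrow$ hardness), assume $\EFI$ exist. By \cref{thm:efi-1prs} and \cref{cor:1prs-stretch} there is a non-uniform $\OnePRS$ $\{\ket{\phi_s}\}$ with advice $O(\log\lambda)$, $n$ qubits, and seed length $\ell_{\mathrm{key}} \le n - \sqrt{n}$. I would build the hard family by tossing a bit $b$: on $b=0$ output the pseudorandom state $\ket{\phi_s}$, and on $b=1$ output a uniformly random computational basis state $\ket{s}$ with $s \in \bit^n$. Since $\ket{\phi_s}$ is produced by a program of length $\ell_{\mathrm{key}} + O(\log\lambda)$, \cref{lem:knet0-Heps} gives $\Hbar^{1-\eps}(\ket{\phi_s}) \le \Knet(\ket{\phi_s}) + \log(1/\eps) \le r$ for a constant $\eps < \tfrac1{100}$ and a suitable $r = \ell_{\mathrm{key}} + O(\log\lambda)$, placing all $b=0$ states on the low side. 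For the high side, $\sum_s \braket{s | \udm | s} = \Tr(\udm) \le 1$, so at most a $2^{r+\Delta-n}$ fraction of basis states can have $\Hbar(\ket{s}) = -\log\braket{s|\udm|s} < r+\Delta$; taking $\Delta = \Theta(\sqrt n) = \omega(\log\lambda)$ makes $n - (r+\Delta) = \omega(\log\lambda)$, so all but a negligible fraction of the $b=1$ states lie on the high side. This gives the promise with each side of weight $\approx \tfrac12$. Hardness is then immediate: conditioning on the low (resp. high) side yields the mixed state $\E_s\ketbra{\phi_s}{\phi_s}$ (resp. $I/2^n$), and the $\OnePRS$ property makes the former computationally indistinguishable from $I/2^n$; hence the two conditional states are computationally indistinguishable, so no (non-uniform) QPT adversary can tell which side a single copy came from, which is exactly \cref{def:strong-hardness-gaph}.

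For the converse (hardness $\Rightarrow \EFI$), write $\rho = \E_k\ketbra{\psi_k}{\psi_k}$, set $\ell = (n + r + \Delta/2)/2$, and let $\tau = \Ext^{A \to A_1}_{\ell}(\rho)$, which is efficiently preparable from the family's $O(\log\lambda)$ advice. By the promise, $\rho$ is negligibly close to $\tfrac12\rho_{\low} + \tfrac12\rho_{\high}$, where $\rho_{\low}, \rho_{\high}$ are the conditional mixtures over the low- and high-complexity states. I claim $(\tau, \tfrac{I_L}{\abs L}\otimes\tfrac{I_{A_1}}{\abs{A_1}})$ is a (non-uniform) pseudo-mixed state. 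Statistical farness: \cref{lem:H-low-entropy} gives $S(\rho_{\low}) \le r = 2\ell - n - \Delta/2$, so \cref{lem:extractor-channel} yields $S(\Ext_\ell(\rho_{\low})) \le \log\abs L + \ell - \Delta/2$; mixing with the at-most-full-entropy high half and paying the $\le 1$ bit of mixing entropy gives $S(\tau) \le \log\abs L + \ell - \Delta/4 + 1$, an $\omega(\log\lambda)$ entropy deficit, hence trace distance $1/\poly$ from maximally mixed by Fannes. Computational indistinguishability: \cref{lem:H-high-entropy} shows $\rho_{\high}$ is negligibly close to a state of smoothed min-entropy $\ge r + 3\Delta/4 = 2\ell - n + \Delta/4$, so the second bullet of \cref{lem:extractor-channel} gives $\Ext_\ell(\rho_{\high}) \approx_s \tfrac{I_L}{\abs L}\otimes\tfrac{I_{A_1}}{\abs{A_1}}$; the hardness of $\GapH$ is precisely the statement that $\rho_{\low} \approx_c \rho_{\high}$ as single-copy states, and since $\Ext_\ell$ is an efficient channel, $\Ext_\ell(\rho_{\low}) \approx_c \Ext_\ell(\rho_{\high}) \approx_s \tfrac{I_L}{\abs L}\otimes\tfrac{I_{A_1}}{\abs{A_1}}$; averaging the two halves gives $\tau \approx_c \tfrac{I_L}{\abs L}\otimes\tfrac{I_{A_1}}{\abs{A_1}}$. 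Thus $\tau$ is a non-uniform $\PMS$ with $O(\log\lambda)$ advice, and by \cref{thm:pms-1prs-no-uniform} together with the universal-$\EFI$ construction of the preceding section we obtain $\EFI$ (alternatively, the pair is already a $1/\poly$-far, computationally indistinguishable weak $\EFI$, which upgrades via \cref{thm:WEFI-EFI}).

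The main obstacle is the statistical-farness step in the converse: the entropy deficit contributed by the low-complexity half must survive dilution by the near-maximally-mixed high-complexity half and the extra bit of mixing entropy. This is exactly what pins $\ell = (n+r+\Delta/2)/2$ strictly between $(n+S(\rho_{\low}))/2$ and $(n+\Hmin(\rho_{\high}))/2$, and it is why a gap of $\Delta = \omega(\log\lambda)$ (rather than $\omega(1)$) is required. A secondary nuisance is the $O(\log\lambda)$ non-uniformity: since only the $\tau$ side of the pair carries advice, it is cleanest to package $\tau$ as a pseudo-mixed state and re-enter the established equivalence chain rather than attempt to strip the advice directly, which would run into the usual combiner obstruction.
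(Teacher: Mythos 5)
Your forward direction is the paper's proof, essentially verbatim: the same two-part family (non-uniform $\OnePRS$ outputs on $b=0$, uniformly random computational basis states on $b=1$), the same use of \cref{lem:knet0-Heps} on the low side and of $\Tr(\udm)\le 1$ plus Markov on the high side, and your conditional-state phrasing of hardness is a cleaner but equivalent rendering of the paper's explicit probability manipulations. Your converse is organized differently from the paper's (the paper assumes $\EFI$ do not exist and constructs an explicit $\GapH$ solver from a distinguisher; you run the contrapositive directly, packaging $\tau = \Ext_\ell(\rho)$ as a non-uniform $\PMS$ and re-entering the $\PMS \to \OnePRS \to$ universal-$\EFI$ chain to strip the advice), and its computational-indistinguishability half is correct and clean: hardness per \cref{def:strong-hardness-gaph} is exactly $\rho_{\low}\approx_c\rho_{\high}$, and pushing this through the efficient channel $\Ext_\ell$ together with $\Ext_\ell(\rho_{\high})\approx_s\pi$ gives $\tau\approx_c\pi$. (One caveat: your parenthetical alternative, treating $(\tau,\pi)$ directly as a weak $\EFI$ and invoking \cref{thm:WEFI-EFI}, does not work as stated, since $\tau$ is only generable with advice and weak $\EFI$ is a uniform notion; your primary route through \cref{thm:pms-1prs-no-uniform} and the universal construction is the correct one.)

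The genuine gap is in your statistical-farness step, and it is fatal as written. You quote \cref{lem:H-low-entropy} as giving $S(\rho_{\low})\le r$; it does not. It gives only that $\rho_{\low}$ is $\sqrt{2\eps}$-close in trace distance to a state $\rho'_{\low}$ with $S(\rho'_{\low})\le r$, and in this theorem $\eps$ is a \emph{universal constant} (anything below $1/100$), so $\sqrt{2\eps}\approx 0.14$ is a constant. Constant trace distance gives no control over von Neumann entropy — by Fannes the entropies of $\rho_{\low}$ and $\rho'_{\low}$ may differ by $\Theta(n)$ — so your bound $S(\tau)\le\log\abs{L}+\ell-\Delta/4+1$ does not follow. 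Nor can the Fannes route be repaired by inserting $\rho'_{\low}$ correctly: writing $N=\log\abs{L}+\ell$, an entropy deficit of $\Delta/2$ inverted through Fannes yields only $D\bigl(\Ext_\ell(\rho'_{\low}),\pi\bigr)\ge(\Delta/2-1)/N$, an inverse-polynomial quantity, and the unavoidable triangle-inequality correction $-\sqrt{2\eps}$ then makes the lower bound negative. This is exactly why the paper, despite citing the same lemmas, keeps the bound in the form $D\bigl(\Ext_\ell(\rho'_{\low}),\pi\bigr)\ge 1-2^{-\Delta/2}$: the real justification (implicit in the proof of \cref{lem:H-low-entropy}) is a \emph{support} bound, not an entropy bound. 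The state $\rho'_{\low}$ is supported on a subspace of dimension $2^{r}$; tracing out $A_2$ multiplies rank by at most $2^{n-\ell}$, so each block of $\Ext_\ell(\rho'_{\low})$ has rank at most $2^{r+n-\ell}=2^{\ell-\Delta/2}$. Hence $\Ext_\ell(\rho'_{\low})$ occupies a $2^{-\Delta/2}$ fraction of the output space, and measuring its support projector gives the near-unity distance, which \emph{does} survive subtracting the constant $\sqrt{2\eps}$ and the factor-$\tfrac12$ dilution by the high half, leaving $D(\tau,\pi)>1/4$.

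If you still want the entropy gap demanded by \cref{def:PMS} rather than a trace-distance gap, recover it from this constant distance via Pinsker, $N-S(\tau)=S(\tau\Vert\pi)\ge \frac{2}{\ln 2}D(\tau,\pi)^2$; the implication runs only in that direction (distance to the maximally mixed state controls entropy deficit, not conversely at these parameter scales). With the support-based argument substituted for the Fannes-based one, your converse goes through and is a legitimate repackaging of the paper's proof.
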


\begin{proof}

  First, assuming $\EFI$ exist, we prove that the $\GapH$ problem for some
  non-uniform state family is hard on average.
  By the existence of $\EFI$ and \cref{cor:1prs-stretch}, there exists a non-uniform $n(\lambda)$-qubit $\OnePRS$ family $\{\ket{\phi_{k'}}\}$ where the advice $a$ has size $O(\log \lambda)$ and the stretch is at least $\sqrt{n(\lambda)}$ (i.e.\ the seed is of length at most $n(\lambda)- \sqrt{n(\lambda)}$)

  Define
  $k = b \parallel k' \parallel j$ with $j \in \bit^{n}$ and $b \in \bit$.
  Consider the non-uniform state family $\ket{\psi_{k}}$ defined as
  \begin{equation}\label{eq:efi-gaph-states}
    \ket{\psi_{k}} =
    \begin{cases*}
      \ket{\phi_{k'}} & \text{ if } $b = 0$,\\
      \ket{j} & \text{ otherwise.}
    \end{cases*}
  \end{equation}
  We will show that this state family is a hard instance of $\GapH$.

  When $b = 0$, the state $\ket{\psi_{k}}$ is $\ket{\phi_{k'}}$, and thus it can be
  described by a program of size at most
  $n(\lambda) - \sqrt{n(\lambda)} + O(\log \lambda) +C$ for some constant $C$, i.e.\ $\Knet^0(\ket{\psi_{k}}) \leq  n(\lambda) - \sqrt{n(\lambda)} + O(\log \lambda) +C$.
  Hence, by \cref{lem:knet0-Heps}, the state has low $\Hbar^{1-\epsilon}$ complexity:
  $\Hbar^{1-\eps}(\ket{\psi_{k}}) \le \Knet^{0} (\ket{\psi_{k}}) +
  \log (1/\eps)\le n(\lambda) - \sqrt{n(\lambda)} + O(\log \lambda) +
  C + \log(1/\eps)$.
  On the other hand, when $b = 1$, the state $\ket{\psi_{k}}$ is $\ket{j}$
  for a uniformly random $j \in \bit^{n}$.
  Since $\sum_{j} \braket{j | \udm |j} \le \Tr(\udm) \le 1$, we have for all $\delta > 0$
  \begin{equation*}
    \Pr_{j \in \bit^{n}} \bigl[ \braket{j | \udm | j}
    \ge 2^{-n(\lambda)+ \delta} \bigr] \le 2^{-\delta}.
  \end{equation*}
  Taking, for example, $\delta = \sqrt{n(\lambda)} / 2$,
  the state has high $\Hbar$ complexity with high probability:
  \begin{equation}\label{eq:efi-nu-gaph-1}
    \Pr_{j \in \bit^{n}} \bigl[\Hbar(\ket{j}) > n(\lambda) - \delta \bigr]
    \ge 1 - 2^{-\delta} = 1 - \negl(\lambda).
  \end{equation}

  Choose
  $r = n(\lambda) - \sqrt{n(\lambda)} + O(\log \lambda) + C + \log(1/\eps)$ and
  $r + \Delta = n(\lambda) - \delta$.
  We have
  $\Delta = \sqrt{n(\lambda)} / 2 - O(\log \lambda) - C - \log(1/\eps)
  = \omega(\log \lambda)$.
  We claim that if there is an adversary $\adv$ that solves the $\GapH$ for this
  non-uniform family, the same adversary breaks the $\OnePRS$
  $\{\ket{\phi_{k'}}\}$.

  Define events $C_{\low}$ and $C_{\high}$
  $\Hbar^{1 - \eps}(\ket{\psi_{k}}) \le r$ and
  $\Hbar(\ket{\psi_{k}}) \ge r + \Delta$ over the random key $k$,
  representing the complexity of the state being low and high respectively.
  Define events $A_{\low}$ and $A_{\high}$ for $\adv$ outputting $0$ and $1$ on
  input $\ket{\psi_{k ,a}}$.
  The discussion above indicates that $\Pr[C_{\low} | b = 0] = 1$ and
  $\Pr[C_{\high} | b = 0] > 1 - \negl(\lambda)$.
  Under these two condition, we can easily show that for all possible events
  $A$,
  \begin{align*}
    \abs{\Pr[A \land b = 0] - \Pr [A \land C_{\low}]} &\le \negl(\lambda),\\
    \abs{\Pr[A \land b = 1] - \Pr [A \land C_{\high}]} &\le \negl(\lambda).
  \end{align*}

  By the linearity of quantum operations,
  \begin{equation*}
    \begin{split}
      \Pr\bigl[ \adv \bigl(1^{\lambda}, \E\nolimits_{k'}
      \ketbra{\phi_{k'}}{\phi_{k'}}\bigr) = 0\bigr]
      & = \Pr_{k'}\bigl[ \adv \bigl(1^{\lambda},
        \ketbra{\phi_{k'}}{\phi_{k'}}\bigr) = 0\bigr]\\
      & = \Pr_{k: b=0}\bigl[ \adv \bigl(1^{\lambda},
        \ketbra{\phi_{k}}{\phi_{k}}\bigr) = 0\bigr]\\
      & = \Pr_{k} \bigl[A_{\low} | b = 0\bigr]\\
      & = 1 - \Pr_{k} \bigl[A_{\high} | b = 0\bigr]\\
      & = 1 - 2 \Pr_{k} \bigl[A_{\high} \land b = 0\bigr]\\
    \end{split}
  \end{equation*}
  On the other hand,
  \begin{equation*}
    \begin{split}
      \Pr\biggl[ \adv \Bigl(1^{\lambda}, \frac{I}{2^{n}}\Bigr) = 0\biggr]
      & = \Pr_{k: b=1}\bigl[ \adv \bigl(1^{\lambda},
        \ketbra{\phi_{k}}{\phi_{k}}\bigr) = 0\bigr]\\
      & = \Pr_{k} \bigl[A_{\low} | b = 1\bigr]\\
      & = 2 \Pr_{k} \bigl[A_{\low} \land b = 1\bigr]\\
    \end{split}
  \end{equation*}

  So when the failure probability of $\adv$ for the $\GapH$ problem is at most $
  \frac{1}{2} - \frac{1}{\lambda^{c}}$ for some constant $c$,
  the advantage of $\adv$ for the $\OnePRS$ is bounded by
  \begin{equation*}
    \begin{split}
      & \abs{1 - 2 \Bigl(\Pr_{k} \bigl[A_{\low} \land b = 1\bigr] +
        \Pr_{k} \bigl[A_{\high} \land b = 0\bigr]\Bigr)}\\
      \ge\; & \abs{1 - 2 \Bigl(\Pr_{k} \bigl[A_{\low} \land C_{\high}\bigr] +
        \Pr_{k} \bigl[A_{\high} \land C_{\low}\bigr]\Bigr)} - \negl(\lambda)\\
      \ge\; & \frac{2}{\lambda^{c}} - \negl(\lambda).
    \end{split}
  \end{equation*}
  That is, $\adv$ breaks $\OnePRS$ and therefore also $\EFI$.
  This completes the proof for the direction that $\EFI$ implies the average
  hardness of $\GapH$.

  Next, we establish the converse direction by demonstrating that if no $\EFI$
  exists, then non-uniform $\GapH$ can be solved efficiently.\
  Define the subsets $K_{\high}$ and $K_{\low}$ of keys $k$ as follows:
  \begin{align*}
    K_{\high} & = \{k : \Hbar (\ket{\psi_k}) \ge r+\Delta\},\\
    K_{\low} & = \{k : \Hbar^{1-\eps}(\ket{\psi_{k}}) \le r\}.
  \end{align*}
  Define the states
  \begin{align*}
    \rho_{\high} & = \E_{k \in K_{\high}} \ketbra{\psi_{k}}{\psi_{k}},\\
    \rho_{\low} & = \E_{k \in K_{\low}} \ketbra{\psi_{k}}{\psi_{k}},\\
    \rho_{\midd} & = \E_{k \not\in K_{\low} \cup K_{\high}} \ketbra{\psi_{k}}{\psi_{k}}.
  \end{align*}
  Then we can express
  \begin{equation*}
    \rho = p_{\low} \rho_{\low} + p_{\midd} \rho_{\midd} + p_{\high} \rho_{\high}\,,
  \end{equation*}
where $p_{\low}$, $p_{\midd}$, and $p_{\high}$ are the respective probabilities.
  We begin by examining a few simple cases.
  First, if
  \begin{equation*}
    \Pr_{k} [\Hbar(\ket{\psi_{k}}) \ge r + \Delta]
    \le \frac{1}{2} - \textnormal{non-negl}(\lambda)\,
  \end{equation*}
where $\textnormal{non-negl}(\lambda)$ is some non-negligible function, then one can always guess that the state has low
  complexity and obtain a non-negligible advantage.
  Thus, we assume without loss of generality that
  \begin{equation}\label{eq:efi-nu-gaph-2}
    p_{\high} = \Pr_{k} [\Hbar(\ket{\psi_{k}}) \ge r + \Delta]
    \ge \frac{1}{2} - \negl(\lambda).
  \end{equation}
  Similarly, we may also assume
  \begin{equation}\label{eq:efi-nu-gaph-3}
    p_{\low} = \Pr_{k} [\Hbar^{1-\eps}(\ket{\psi_k}) \le r]
    \ge \frac{1}{2} - \negl(\lambda).
  \end{equation}
  A direct consequence is that $p_{\midd} = \negl(\lambda)$.

  Now, we use \cref{alg:estimate-gaph} and prove that it will provide a
  non-negligible advantage in solving non-uniform $\GapH$ assuming $\EFI$'s do not exist. Note that all the parameters in~\cref{alg:estimate-gaph} is efficiently computable: $\ell = (n+r+\Delta/2)/2$ is efficiently computable as $r(\lambda)$ and $\Delta(\lambda)$ are efficiently computable functions. (\cref{alg:estimate-gaph} is a uniform $\GapH$ estimator, but it can be also used for a non-uniform $\GapH$ estimation by replacing the $\EFI$ distinguisher with the non-uniform distinguisher)
  Define $\ell = (n + r + \Delta/2)/2$ as in the algorithm.
  Applying the extractor $\Ext_{\ell}$ to $\rho$, we have
  \begin{equation*}
    \Ext_{\ell}(\rho) = p_{\low} \Ext_{\ell}(\rho_{\low}) +
    p_{\midd} \Ext_{\ell}(\rho_{\midd}) + p_{\high} \Ext_{\ell}(\rho_{\high}).
  \end{equation*}
  By \cref{lem:H-low-entropy}, we know that $\rho_{\low}$ is
  $\sqrt{2\eps}$-close to a state of von Neumann entropy at most
  $r = 2\ell - n - \Delta/2$.
  We denote this state by $\rho'_{\low}$.
  By the first part of \cref{lem:extractor-channel}, we have
  $S(\Ext_{\ell}(\rho_{\low}')) \leq n + \log \abs{L} - \Delta/2$.
  This implies that
  \begin{equation*}
    D \bigl(\Ext_{\ell}(\rho_{\low}'), \pi \bigr) \geq 1 - 2^{-\Delta/2}.
  \end{equation*}
  By the triangle inequality, it follows that
  \begin{equation*}
    D \bigl(\Ext_{\ell}(\rho_{\low}), \pi \bigr)
    \geq 1 - 2^{-\Delta/2} - \sqrt{2\eps},
  \end{equation*}
  where $\pi = \frac{I_{L}}{\abs{L}}\otimes \frac{I_{A_{1}}}{\abs{A_{1}}}$
  is the maximally mixed state.

  Taking $\Gamma = \Delta/4$ in \cref{lem:H-high-entropy}, we have that
  $\rho_{\high}$ is $2^{-\Delta/4 + \log n + C}$-close to a state with
  min-entropy at least $r + \Delta - \Gamma = r + 3\Delta/4$.
  Let this state be $\rho_{\high}'$.
  By \cref{lem:extractor-channel}, $\Ext_{\ell}(\rho_{\high}')$ is
  $2^{-\Delta/8}$-close to the maximally mixed state $\pi$.
  By the triangle inequality, we have
  \begin{equation}\label{eq:efi-nu-gaph-4}
    D \bigl(\Ext_{\ell}(\rho_{\high}), \pi \bigr)
    \le 2^{-\Delta/4+\log n+C} + 2^{-\Delta/8} = \negl(\lambda).
  \end{equation}

  So, we can conclude that
  \begin{equation*}
    D \bigl(\Ext_{\ell}(\rho), \pi \bigr) > 1/2-\sqrt{2\eps}-\negl(\lambda) > 1/4
  \end{equation*}
  when $\eps < 1/100$.
  As a result, the two states $\Ext_{\ell}(\rho)$ and $\pi$ are statistically
  far and efficiently samplable pairs.
  According to the non-existence of $\EFI$, there exists a projector $\Pi$ that
  can distinguish these two with non-negligible probability.
  In more detail,
  \begin{equation*}
    \Tr (\Pi\, \Ext_{\ell}(\rho)) -
    \Tr ( \Pi\, \pi ) = p(\lambda)
  \end{equation*}
  for some non-negligible probability $p(\lambda)$.
  From \cref{eq:efi-nu-gaph-4}, it follows that
  \begin{equation}\label{eq:efi-nu-gaph-5}
    \Tr (\Pi\,  \Ext_{\ell}(\rho)) -
    \Tr ( \Pi\, \Ext_{\ell}(\rho_{\high})) \ge p(\lambda) - \negl(\lambda).
  \end{equation}
  By \cref{eq:efi-nu-gaph-2,eq:efi-nu-gaph-3}, we have
  \begin{equation*}
    D \Bigl( \Ext_{\ell}(\rho), \frac{1}{2} \Ext_{\ell}(\rho_{\low}) +
    \frac{1}{2}\Ext_{\ell}(\rho_{\high}) \Bigr) \le \negl(\lambda).
  \end{equation*}
  Using this in \cref{eq:efi-nu-gaph-5}, we have
  \begin{equation*}
    \Tr (\Pi\,  \Ext_{\ell}(\rho_{\low})) -
    \Tr ( \Pi\, \Ext_{\ell}(\rho_{\high})) \ge 2 p(\lambda) - \negl(\lambda),
  \end{equation*}
  which is also non-negligible.
  Thus $\Pi$ provides non-negligible advantage in estimating $\GapH$ given only
  a single copy of $\ket{\psi_k}$.
\end{proof} 

In the previous theorem we build the equivalence of non-uniform hardness of
$\GapH$. The uniform harness of $\GapH$ in turn is a characterization of
$\OnePRS$.

\begin{corollary}\label{cor:gaph-1prs}
    The following two statements are equivalent:
    \begin{itemize}
        \item $\OnePRS$ exists.
        \item There exists a uniform family of efficiently samplable states
            $\{\ket{\psi_k}\}$, a universal constant $\eps < 1/100$, and
            $\Delta = \omega(\log \lambda)$ such that $\GapH$ is 
            hard over $\set{\ket{\psi_k}}$.
    \end{itemize}
\end{corollary}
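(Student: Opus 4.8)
\emph{The plan} is to prove this corollary as the uniform analogue of Theorem~\ref{thm:efi-nu-gaph}, reusing the same two reductions while tracking uniformity throughout; concretely I would close the cycle $\PMS \Rightarrow \OnePRS \Rightarrow \GapH \Rightarrow \PMS$ in the uniform world (cf.\ Figure~\ref{fig:reductions}). For the forward direction ($\OnePRS \Rightarrow$ uniform $\GapH$ hardness) I would mirror the first half of the proof of Theorem~\ref{thm:efi-nu-gaph}, the only change being that the $\OnePRS$ is now given directly (and uniformly) rather than built from $\EFI$. First I would take the given uniform $\OnePRS$ and amplify its stretch to $\sqrt{n(\lambda)}$ exactly as in Corollary~\ref{cor:1prs-stretch} (whose proof is just parallel repetition and preserves uniformity), so that each pseudorandom state has $\Knet^0$-complexity at most $n - \sqrt n + O(\log\lambda)$. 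Then I would form the family where $\ket{\psi_k} = \ket{\phi_{k'}}$ when $b=0$ and $\ket{\psi_k}=\ket{j}$ for a uniformly random basis state when $b=1$ (with $k = b\parallel k'\parallel j$). The promise holds because $b=0$ forces low $\Hbar^{1-\eps}$ via Lemma~\ref{lem:knet0-Heps}, while $b=1$ makes $\Hbar(\ket j)$ large with overwhelming probability; and a $\GapH$ solver would distinguish $\E_{k'}\ketbra{\phi_{k'}}{\phi_{k'}}$ from $\E_j\ketbra jj = I/2^n$, i.e.\ the $\OnePRS$ mixture from the Haar average, breaking the $\OnePRS$. Since the solver is uniform, this yields uniform $\GapH$ hardness with a universal $\eps<1/100$ and $\Delta = \sqrt n/2 - O(\log\lambda) = \omega(\log\lambda)$.

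For the converse (uniform $\GapH$ hardness $\Rightarrow \OnePRS$) the key observation is that the second half of the proof of Theorem~\ref{thm:efi-nu-gaph} already constructs a genuine \emph{uniform} pseudo-mixed state, after which Theorem~\ref{thm:pms-1prs-uniform} delivers a uniform $\OnePRS$. Setting $\rho = \E_k\ketbra{\psi_k}{\psi_k}$ and $\ell = (n+r+\Delta/2)/2$ as in Algorithm~\ref{alg:estimate-gaph}, I would show that the efficiently preparable state $\tau = \Ext^{A\to A_1}_\ell(\rho)$ together with the maximally mixed $\pi = I_L/\abs{L}\otimes I_{A_1}/\abs{A_1}$ satisfies Definition~\ref{def:PMS}. \textbf{Efficient generation} is immediate since the family is uniformly samplable and the extractor is efficient (here I use that $r,\Delta$ are efficiently computable). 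For \textbf{computational indistinguishability} I would invoke the $\GapH$ hardness \emph{directly}: it gives $\rho_{\low}\approx_c\rho_{\high}$, hence, applying the efficient map $\Ext_\ell$, $\Ext_\ell(\rho_{\low})\approx_c\Ext_\ell(\rho_{\high})$; since Lemmas~\ref{lem:H-high-entropy} and~\ref{lem:extractor-channel} give $\Ext_\ell(\rho_{\high})\approx_s\pi$ and the promise forces $p_{\midd}=\negl$, combining these yields $\tau\approx_c\pi$. For the \textbf{entropy gap}, the same chain of lemmas (with Lemma~\ref{lem:H-low-entropy} controlling the low part) reproduces the \emph{unconditional} bound $D(\tau,\pi)>1/4$ already established inside Theorem~\ref{thm:efi-nu-gaph}, and Pinsker's inequality then upgrades this to $S(\tau) < (\log\abs{L}+\ell) - \Omega(1) < (\log\abs{L}+\ell)-1/\poly(\lambda)$. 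Thus $\tau$ is a uniform $\PMS$, and Theorem~\ref{thm:pms-1prs-uniform} produces a uniform $\OnePRS$.

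\emph{The main obstacle} is the converse, specifically extracting an actual uniformly generatable $\PMS$ rather than merely contradicting the non-existence of $\EFI$. In Theorem~\ref{thm:efi-nu-gaph} the statistical gap $D(\tau,\pi)>1/4$ is used only to invoke a hypothetical $\EFI$-breaker, so two departures are needed. First, I must obtain computational indistinguishability of $\tau$ from $\pi$ from the hardness of $\GapH$ itself; this works precisely because $\Ext_\ell$ is efficient and $\GapH$ hardness is exactly the statement $\rho_{\low}\approx_c\rho_{\high}$, so no appeal to $\EFI$ is required. Second, I must convert the statistical gap into a genuine \emph{entropy} gap to meet Definition~\ref{def:PMS}, which I would handle via Pinsker (equivalently via the rank bound $\rank\Ext_\ell(\rho'_{\low})\le 2^{\log\abs{L}+\ell-\Delta/2}$ implicit in the proof, showing $\Ext_\ell(\rho_{\low})$ is rank-deficient and hence low-entropy). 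Once these are in place, the forward direction and the composition $\GapH \Rightarrow \PMS \Rightarrow \OnePRS$ give the claimed equivalence, with every object uniform.
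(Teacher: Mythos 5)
Your proposal is correct, and its core machinery is the same as the paper's: the forward direction (instantiate the $b\parallel k'\parallel j$ family of Theorem~\ref{thm:efi-nu-gaph} with the given uniform $\OnePRS$, stretch-amplified as in Corollary~\ref{cor:1prs-stretch}) matches the paper exactly, and the converse runs through the same extractor algorithm (Algorithm~\ref{alg:estimate-gaph}) and the $\PMS \Rightarrow \OnePRS$ reduction (Theorem~\ref{thm:pms-1prs-uniform}). The one organizational difference is that the paper argues the converse contrapositively --- if $\OnePRS$ do not exist then pseudo-mixed states do not exist, so the distinguisher needed by Algorithm~\ref{alg:estimate-gaph} exists and $\GapH$ is easy --- whereas you argue it directly, exhibiting $\tau = \Ext_\ell(\rho)$ as an explicit uniform $\PMS$: computational indistinguishability from $\pi$ comes straight from $\GapH$ hardness ($\rho_{\low}\approx_c\rho_{\high}$ composed with the efficient channel $\Ext_\ell$, plus $\Ext_\ell(\rho_{\high})\approx_s\pi$ and $p_{\midd}=\negl$), and the entropy gap comes from Pinsker applied to the unconditional bound $D(\tau,\pi)>1/4$. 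This direct formulation is a genuine (if modest) improvement in rigor: the paper's two-line proof implicitly needs $\Ext_\ell(\rho)$ to satisfy the $\PMS$ entropy-gap condition before non-existence of $\PMS$ yields a distinguisher, and your Pinsker step is precisely the missing justification. One caveat: your parenthetical ``equivalent'' route via the rank bound does not work as stated --- only $\Ext_\ell(\rho'_{\low})$ (the smoothed state) is rank-deficient, not $\Ext_\ell(\rho_{\low})$ itself, and since the smoothing parameter $\eps$ is a constant, a Fannes-type correction of order $\sqrt{2\eps}\,(\log\abs{L}+\ell)$ would swamp the $\Delta/2$ gap; so the Pinsker argument is not merely cleaner but is the one that actually closes the entropy-gap step. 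Keep Pinsker as the primary (indeed only) route and the proof stands.
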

\begin{proof}
    In case that $\OnePRS$ exist, then define the same state family as in~\cref{thm:efi-nu-gaph}. Then we can define a uniform state family on which $\GapH^\eps[r, r+\Delta]$ for some function $r$ and $\Delta = \omega (\log n)$.

    On the other hand, in case that $\OnePRS$ do not exist, then as the proof of~\cref{thm:efi-1prs} implicitly shows that pseudo-mixed states are equivalent to $\OnePRS$, we can conclude that pseudo-mixed states do not exist. Then we can apply~\cref{alg:estimate-gaph} (as pseudo-mixed states do not exist, we can distinguish mixed states from the maximally mixed state efficiently), which provides a non-negligible advantage for estimating $\GapH$.
\end{proof}

\section{Equivalence with \texorpdfstring{$\GapU$}{GapU} hardness}
\label{sec:gapu}

In this section, we obtain a characterization of $\EFI$ analogous to that of
\Cref{sec:gaph},
but using $\Umin$ instead of $\Hbar$. In particular, we build up equivalence of $\EFI$ with the hardness of $\GapU$ (\cref{def:gapu}). The proof goes in the same way as~\cref{sec:gaph}: first we show that the complexity gap implies the entropy gap. Then we extract the entropy from the state with quantum extractors and apply~\cref{alg:estimate-gaph} to estimate the complexity.

\begin{lemma}[Mixture over low-complexity states is approximately low-entropy
  for $\Umin$]\label{lem:U-low-entropy}
  For any family of states $\{\ket{\psi_k}\}$ such that for all $k$,
  $\Umin(\ket{\psi_k}) \leq r$, the mixed state
  $\E_k \ketbra{\psi_k}{\psi_k}$ is $2^{-\Gamma/2}$-close to a state with von
  Neumann entropy at most $r + \Gamma$.
\end{lemma}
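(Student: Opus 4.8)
The plan is to mirror the proof of~\cref{lem:H-low-entropy}, but to exploit the fact that a bound on $\Umin$ gives a clean \emph{operator} inequality rather than a statement about nearby states. First I would unfold the definition: $\Umin(\ket{\psi_k}) \le r$ means precisely that $\ketbra{\psi_k}{\psi_k} \le 2^r \udm$. As in~\cref{lem:H-low-entropy}, I would take the spectral decomposition $\udm = \sum_i \mu_i \ketbra{\phi_i}{\phi_i}$ and split the space by an eigenvalue threshold, now set at $2^{-(r+\Gamma)}$: let $\Pi_{\low}$ project onto the span of eigenstates with $\mu_i \ge 2^{-(r+\Gamma)}$ and $\Pi_{\high} = I - \Pi_{\low}$. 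Since $\Tr(\udm)\le 1$, at most $2^{r+\Gamma}$ eigenvalues can exceed $2^{-(r+\Gamma)}$, so $\Pi_{\low}$ has rank at most $2^{r+\Gamma}$, and any state supported on its range has von Neumann entropy at most $r+\Gamma$.

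The crux is to bound the weight each $\ket{\psi_k}$ places on the high-complexity subspace. Here the naive estimate $\braket{\psi_k|\Pi_{\high}|\psi_k} \le 2^r \Tr(\Pi_{\high}\udm)$ is useless, since the right-hand side can be as large as $2^r$. Instead I would bootstrap the operator inequality against the vector $\ket{v} \coloneqq \Pi_{\high}\ket{\psi_k}$ itself. Because $\Pi_{\high}$ is a projector, $\braket{v|\psi_k} = \braket{\psi_k|\Pi_{\high}|\psi_k} = \norm{v}^2$, and sandwiching $\ketbra{\psi_k}{\psi_k} \le 2^r\udm$ between $\bra v$ and $\ket v$ gives $\norm{v}^4 = \abs{\braket{v|\psi_k}}^2 \le 2^r\braket{v|\udm|v}$. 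Since $\Pi_{\high}$ commutes with $\udm$ and all of $\udm$'s eigenvalues on this subspace are below $2^{-(r+\Gamma)}$, we have $\braket{v|\udm|v} \le 2^{-(r+\Gamma)}\norm{v}^2$; combining the two inequalities and dividing by $\norm v^2$ yields $\norm{v}^2 \le 2^{-\Gamma}$.

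With this weight bound in hand, the rest is routine. Define $\ket{\psi_{k,\low}} = \Pi_{\low}\ket{\psi_k}/\norm{\Pi_{\low}\ket{\psi_k}}$; a short computation gives $\abs{\braket{\psi_k|\psi_{k,\low}}}^2 = \norm{\Pi_{\low}\ket{\psi_k}}^2 = 1 - \norm{v}^2 \ge 1 - 2^{-\Gamma}$, so $D(\ket{\psi_k}, \ket{\psi_{k,\low}}) = \norm{v} \le 2^{-\Gamma/2}$. Applying~\cref{lem:distance-mix} then shows that $\E_k\ketbra{\psi_k}{\psi_k}$ is $2^{-\Gamma/2}$-close to $\E_k\ketbra{\psi_{k,\low}}{\psi_{k,\low}}$, which is supported entirely on the range of $\Pi_{\low}$ and therefore has von Neumann entropy at most $r+\Gamma$.

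The main obstacle, and the one genuinely new idea relative to~\cref{lem:H-low-entropy}, is the self-bootstrapping step $\norm{v}^4 \le 2^{-\Gamma}\norm{v}^2$: one must resist the temptation to bound the overlap via a trace and instead feed the extremal vector $\ket v$ back into the operator inequality. As a sanity check, one could alternatively route through~\cref{lem:U-lower-bound}, which turns $\Umin(\ket{\psi_k}) \le r$ into $\Hbar^{1-\eps}(\ket{\psi_k}) \le r + \log(1/\eps)$ and then invokes~\cref{lem:H-low-entropy}; but choosing $\eps$ to match the target closeness $2^{-\Gamma/2}$ costs an extra additive constant in the entropy bound, so the direct argument is what delivers the clean $r+\Gamma$ estimate.
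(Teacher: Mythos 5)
Your proposal is correct, and it follows the paper's proof almost step for step: the same spectral decomposition of $\udm$, the same eigenvalue threshold $2^{-(r+\Gamma)}$ defining $\Pi_{\low}$ and $\Pi_{\high}$, the same intermediate bound $\braket{\psi_k|\Pi_{\high}|\psi_k}\le 2^{-\Gamma}$, and the same endgame via \cref{lem:distance-mix} and the rank bound $2^{r+\Gamma}$ on $\Pi_{\low}$. The single point of divergence is how the overlap bound is derived: the paper invokes the characterization $\Umin(\ket{\psi}) = \log\braket{\psi|\udm^{-1}|\psi}$ and uses block-diagonality of $\udm^{-1}$ with respect to $\Pi_{\low},\Pi_{\high}$ to get $2^{r+\Gamma}\braket{\psi|\Pi_{\high}|\psi}\le\braket{\psi|\Pi_{\high}\udm^{-1}\Pi_{\high}|\psi}\le\braket{\psi|\udm^{-1}|\psi}\le 2^{r}$, whereas you work directly from the operator-inequality form of the definition, $\ketbra{\psi_k}{\psi_k}\le 2^{r}\udm$, and sandwich it with $\ket{v}=\Pi_{\high}\ket{\psi_k}$ to obtain $\norm{v}^{4}\le 2^{-\Gamma}\norm{v}^{2}$. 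The two computations are equally short and yield the identical bound; yours has the minor advantage of bypassing the $\udm^{-1}$ characterization lemma entirely (and so does not need $\udm$ to be invertible), while the paper's reuses machinery it has already established.
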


\begin{proof}
  Consider the spectral decomposition
  $\udm = \sum_{i} \mu_{i} \ketbra{\phi_{i}}{\phi_{i}}$ of $\udm$.
  Let $\Pi_{\low}$ be the projection onto the subspace spanned by the
  low-complexity eigenvectors $\ket{\phi_{i}}$ with eigenvalues
  $\mu_{i} \ge 2^{-r-\Gamma}$ and let $\Pi_{\high} = I - \Pi_{\low}$.
  For any state $\ket{\psi}$ satisfying $\Umin(\ket{\psi}) \le r$, we write
  \begin{equation*}
    \ket{\psi} = \Pi_{\low} \ket{\psi} + \Pi_{\high} \ket{\psi}.
  \end{equation*}

  By the condition $\Umin(\ket{\psi}) \le r$, we have
  $\braket{\psi | \udm^{-1} | \psi} \le 2^{r}$.
  Using $\braket{\psi | \Pi_{\low} \udm^{-1} \Pi_{\high} | \psi} = 0$ and
  $\braket{\psi | \Pi_{\high} \udm^{-1} \Pi_{\low} | \psi} = 0$, we obtain
  \begin{equation*}
    \begin{split}
      2^{r + \Gamma} \braket{\psi | \Pi_{\high} | \psi}
      & \le \braket{\psi | \Pi_{\high} \udm^{-1} \Pi_{\high} | \psi}\\
      & \le \braket{\psi | \Pi_{\low} \udm^{-1} \Pi_{\low} | \psi} +
        \braket{\psi | \Pi_{\high} \udm^{-1} \Pi_{\high} | \psi}\\
      & = \braket{\psi | \udm^{-1} | \psi}\\
      & \le 2^{r}.
    \end{split}
  \end{equation*}
  Thus, for all $k$,
  $\braket{\psi_{k} | \Pi_{\high} | \psi_{k}} \le 2^{-\Gamma}$.
  Define
  $\ket{\psi_{k, \low}} = \Pi_{\low} \ket{\psi_{k}}/\norm{\Pi_{\low} \ket{\psi_{k}}}$,
  we have
  \begin{equation*}
    D(\ket{\psi_{k}}, \ket{\psi_{k, \low}})
    \le \norm{\Pi_{\high}\ket{\psi_{k}}} \le 2^{-\Gamma/2}.
  \end{equation*}
  It follows from \cref{lem:distance-mix} that
  \begin{equation*}
    D \Bigl(\E_{k} \ketbra{\psi_{k}}{\psi_{k}},
    \E_{k} \ketbra{\psi_{k,\low}}{\psi_{k, \low}}\Bigr) \le 2^{-\Gamma/2},
  \end{equation*}
  which completes the proof by noting that
  $\E_{k} \ketbra{\psi_{k,\low}}{\psi_{k, \low}}$ is supported on $\Pi_{\low}$
  of dimension $2^{r + \Gamma}$ and has entropy at most $r+\Gamma$.
\end{proof}

\begin{lemma}[Mixture over high complexity states is approximately high
  min-entropy for $\Umin$]\label{lem:U-high-entropy}
  For any (not necessarily efficiently) samplable state family $\ket{\psi_k}$
  such that $\forall k: \Umin^{1-\eps}(\ket{\psi_k}) > s$, then there is
  constant $C$ such that the mixed state $\rho = \E_k \ketbra{\psi_k}{\psi_k}$
  is $2\eps$-close to a state with min-entropy at least $s - \log n - C$.
\end{lemma}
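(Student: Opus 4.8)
The plan is to mirror the low-complexity lemma \cref{lem:U-low-entropy} and to work in the eigenbasis of the universal density matrix $\udm$, which is the natural basis for reasoning about $\Umin$ since $\Umin(\ket\phi) = \log\braket{\phi|\udm^{-1}|\phi}$. Write $\udm = \sum_i \mu_i \ketbra{\phi_i}{\phi_i}$, let $\Pi_{\low}$ be the projection onto the eigenvectors with $\mu_i \ge 2^{-s}$ (the ``low-complexity'' subspace, of dimension at most $2^s$), and set $\Pi_{\high} = I - \Pi_{\low}$. The crucial elementary fact is that every unit vector $\ket\phi$ in the range of $\Pi_{\low}$ satisfies $\braket{\phi|\udm^{-1}|\phi} = \sum_{i : \mu_i \ge 2^{-s}} \mu_i^{-1} \abs{\braket{\phi_i|\phi}}^2 \le 2^s$, i.e.\ $\Umin(\ket\phi) \le s$.

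The first step is to show that the hypothesis $\Umin^{1-\eps}(\ket{\psi_k}) > s$ forces each $\ket{\psi_k}$ to have small overlap with this low-complexity subspace, namely $\norm{\Pi_{\low}\ket{\psi_k}}^2 < 2\eps$. Indeed, if $\norm{\Pi_{\low}\ket{\psi_k}}^2 \ge 2\eps - \eps^2$, then the normalized projection $\ket{\psi_{k,\low}} = \Pi_{\low}\ket{\psi_k}/\norm{\Pi_{\low}\ket{\psi_k}}$ lies within trace distance $1-\eps$ of $\ket{\psi_k}$ and, being in the range of $\Pi_{\low}$, has $\Umin(\ket{\psi_{k,\low}}) \le s$; by definition of the smoothed quantity this would give $\Umin^{1-\eps}(\ket{\psi_k}) \le s$, a contradiction. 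Averaging over $k$ then yields $\Tr(\Pi_{\low}\,\rho) = \E_k \norm{\Pi_{\low}\ket{\psi_k}}^2 < 2\eps$. Note that the large smoothing radius $1-\eps$ is exactly what makes this work: it lets us project all the way onto the low-complexity subspace while staying ``nearby''.

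For the second step I would use that the (inefficiently) samplable family is dominated by $\udm$: the mixture $\rho = \E_k \ketbra{\psi_k}{\psi_k}$ is the output of a fixed sampling procedure, so $\udm \ge 2^{-\ell_0}\rho$, equivalently $\rho \le 2^{\ell_0}\udm$, where $\ell_0 = \log n + C$ absorbs the description length of the sampler together with the usual $\Knet$ overhead. Since $\Pi_{\high}$ commutes with $\udm$ and $\Pi_{\high}\udm\Pi_{\high} \le 2^{-s}\Pi_{\high}$, the operator inequality gives $\Pi_{\high}\rho\Pi_{\high} \le 2^{\ell_0}\Pi_{\high}\udm\Pi_{\high} \le 2^{\ell_0 - s}\Pi_{\high}$, so the sub-normalized state $\tilde\rho = \Pi_{\high}\rho\Pi_{\high}$ has every eigenvalue at most $2^{\ell_0 - s}$, i.e.\ $\Hmin(\tilde\rho) \ge s - \ell_0 = s - \log n - C$. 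Finally, $\Tr(\Pi_{\low}\rho) < 2\eps$ lets a gentle-measurement argument, exactly as in \cref{lem:min-entropy-eigenstate} and \cref{lem:U-low-entropy}, conclude that $\rho$ is close to $\tilde\rho$, completing the proof.

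The main conceptual obstacle is the reconciliation of the $\udm$-eigenbasis (which is forced on us by $\Umin$) with the min-entropy of $\rho$ (which concerns the eigenvalues of $\rho$ in a completely different basis). A naive attempt to copy \cref{lem:H-high-entropy} verbatim -- projecting onto the span of the large eigenvectors of $\rho$ and arguing that this span is low-$\Umin$ -- runs into genuine trouble, because $\Umin$ is controlled by $\udm^{-1}$ and a superposition of low-$\Umin$ states need not be low-$\Umin$ (the off-diagonal terms of $\udm^{-1}$ cost roughly a factor of two in the exponent, degrading the bound to $s/2$). The operator inequality $\rho \le 2^{\ell_0}\udm$ is precisely what sidesteps this: it controls the eigenvalues of $\rho$ on the $\udm$-high subspace directly, with no appeal to the eigenbasis of $\rho$. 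The remaining delicate point is the domination constant $\ell_0$, which is where the samplability/short-description assumption really enters (for a pathological family such as a single Haar-random state, $\ell_0$ is enormous and the conclusion is vacuous, as it must be); and one should track that the gentle-measurement step yields closeness of order $\sqrt{\eps}$, so the precise smoothing parameter in the statement should be read up to this standard square-root loss.
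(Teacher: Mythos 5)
Your proposal is structurally sound but takes a genuinely different route from the paper, and as written it proves a quantitatively weaker statement. The paper works in the eigenbasis of $\rho$ itself: it takes $\tilde\Pi_{\low}$ to be the projection onto the eigenvectors of $\rho$ with eigenvalue at least $2^{-(s-\log n-c)}$, notes that these at most $2^{s-\log n-c}$ mutually orthogonal eigenvectors can each be indexed by a program of length at most $(s-\log n-c)+\log n+c=s$, so that their \emph{distinct} rank-one program contributions to $\udm$ sum to $\udm \ge 2^{-s}\tilde\Pi_{\low}$, and then applies \cref{lem:U-bound} to conclude that every state in the range of $\tilde\Pi_{\low}$ has $\Umin$ at most $s$. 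Your contradiction argument (which matches the paper's) then gives $\Tr(\tilde\Pi_{\low}\rho)<2\eps$, and because $\tilde\Pi_{\low}$ commutes with $\rho$, deleting that block perturbs $\rho$ by at most $2\eps$ in trace distance, with no square root. You instead work in the eigenbasis of $\udm$, where low-$\Umin$-ness of the low subspace is free, and push all of the Kolmogorov content into the dominance inequality $\udm \ge 2^{-\ell_0}\rho$. That inequality is correct and provable for keyed samplable families (concatenating the sampler description with the key $k$ gives a program of length $\abs{k}+\log n+C$ preparing $\ket{\psi_k}$; summing over uniformly weighted keys yields the dominance), but you assert it rather than prove it, and it is precisely the step where samplability enters; note also that it silently requires each $\ket{\psi_k}$ to be re-preparable from its key, which holds in \cref{thm:efi-un-gapu} but not for the single-copy samplable families of \cref{sec:unkeyed}, where the paper's eigenvector-indexing argument still goes through.

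Two substantive issues. First, your construction yields only $\sqrt{2\eps}$-closeness, not the stated $2\eps$-closeness: since your $\Pi_{\low}$ does not commute with $\rho$, the per-state bound $\norm{\Pi_{\low}\ket{\psi_k}}^2<2\eps$ converts into per-state trace distance $\sqrt{2\eps}$, and averaging via \cref{lem:distance-mix} (or gentle measurement on $\rho$) cannot recover the linear rate. You flag this as a ``standard square-root loss'' to be read into the statement, but it is not inherent to the lemma---it is an artifact of your choice of basis, and the paper's commuting cutoff avoids it entirely (for the downstream use with constant $\eps<1/100$ your weaker bound would still suffice, so the application survives, but the lemma as stated is not what you proved). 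Second, your stated reason for rejecting the $\rho$-eigenbasis route is mistaken: you argue that the span of the large eigenvectors of $\rho$ need not be low-$\Umin$ because combining per-state $\Umin$ bounds costs off-diagonal Cauchy--Schwarz factors. That degradation indeed occurs if one combines $\Umin$ bounds, but the paper never does this; it uses per-state $\Knet$ bounds on \emph{orthogonal} states, whose separate program contributions to $\udm$ add up to dominate the whole projector $\tilde\Pi_{\low}$, after which \cref{lem:U-bound} bounds $\Umin$ of every superposition with no loss. That observation is exactly what makes the paper's proof both work and achieve the claimed $2\eps$.
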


\begin{proof}
  Consider the spectral decomposition of
  $\rho = \sum_{i} \beta_{i} \ketbra{\varphi_{i}}{\varphi_{i}}$.
  Let $c$ be the length of the program describing the sampling algorithm for
  $\ket{\psi_{k}}$.
  Let $\tilde\Pi_{\low}$ be the projection onto the span of eigenvectors
  $\ket{\varphi_{i}}$ with corresponding eigenvalue
  $\beta_{i} \ge 2^{-(s - \log n - c)}$.
  All such vectors can be indexed by programs of length at most
  $(s - \log n - c) + \log n + c = s$ and therefore have
  $\Knet(\ket{\varphi_{i}}) \le s$.
  Thus, we can write
  $\udm = \sum_{i} 2^{-\Knet(\ket{\varphi_{i}})} \ketbra{\varphi_{i}}{\varphi_{i}} + \udm'$
  for some positive semi-definite $\udm'$.
  From this, we have $\udm \ge 2^{-s} \tilde\Pi_{\low}$ and, by \cref{lem:U-bound},
  for any state $\ket{\psi}$ in the space that $\Pi_{\good}$ projects onto,
  $\Umin(\ket{\psi}) \le s$.

  For any state $\ket{\psi}$, define
  $\ket{\psi_{\low}} = \tilde\Pi_{\low}\ket{\psi} / \norm{\tilde\Pi_{\low} \ket{\psi}}$.
  When $\norm{\tilde\Pi_{\low} \ket{\psi}} \le 1 - \eps$, we have
  $D(\ket{\psi}, \ket{\psi_{\low}}) = \norm{\tilde\Pi_{\low} \ket{\psi}} \le 1 - \eps$.
  So we have $\Umin^{1-\eps}(\ket{\psi}) \le s$ as $\ket{\psi_{\low}}$ is in
  the space $\tilde\Pi_{\low}$ projects onto.
  Thus for any $\Umin^{1-\eps}(\ket{\psi}) > s$, we have
  $\norm{\tilde\Pi_{\low} \ket{\psi}} > 1 - \eps$ and consequently
  \begin{equation*}
    \braket{\psi |\tilde \Pi_{\low} | \psi} < 1 - (1-\eps)^{2} \le 2\eps.
  \end{equation*}
  Using the condition $\Umin^{1-\eps}(\ket{\psi_{k}}) > s$ for all $k$, we have
  $\Tr(\Pi_{\good} \E_{k} \ketbra{\psi_{k}}{\psi_{k}}) < 2\eps$.

  Define projection $\tilde\Pi_\high = I - \tilde\Pi_\low$ and state
  $\rho' = \frac{\tilde\Pi_\high \rho \tilde\Pi_\high}{\Tr(\tilde\Pi_\high \rho \tilde\Pi_\high)}$.
  We have $D(\rho, \rho') = \Tr(\tilde\Pi_{\low} \rho) < 2\eps$.
  Any eigenvalue of $\rho'$ is bounded above by
  $2^{- s + \log n + c}/(1 - 2^{- s + \log n + c}) \le 2^{-s + \log n + c + 1}$.
  Taking $C = c+1$ completes the proof.
\end{proof}

We are now ready to prove that the average-case hardness of estimating $\Umin$
is equivalent to the existence of $\EFI$.

\begin{theorem}\label{thm:efi-un-gapu}
  The following two statements are equivalent:
  \begin{itemize}
    \item $\EFI$ exist.
    \item There exists a non-uniform family of states $\{\ket{\psi_{k}}\}$ with advice size $O(\log \lambda)$
          such that $\GapU[r,r+\Delta]$ is hard on average.
  \end{itemize}
\end{theorem}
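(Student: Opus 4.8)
The plan is to follow the proof of \cref{thm:efi-nu-gaph} essentially line for line, replacing the $\Hbar$-specific ingredients with their $\Umin$ counterparts: \cref{lem:U-upper-bound} and \cref{lem:U-lower-bound} in the forward direction, and \cref{lem:U-low-entropy} and \cref{lem:U-high-entropy} in place of \cref{lem:H-low-entropy} and \cref{lem:H-high-entropy} in the converse. As before I fix a small universal constant $\eps < 1/100$ and work with $\GapU^\eps(r, r+\Delta)$.

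For the direction $\EFI \Rightarrow$ hardness, I would reuse the exact hard family of \cref{thm:efi-nu-gaph}. By \cref{cor:1prs-stretch}, $\EFI$ yields a non-uniform $n(\lambda)$-qubit $\OnePRS$ $\{\ket{\phi_{k'}}\}$ of advice size $O(\log\lambda)$ and stretch at least $\sqrt{n(\lambda)}$; writing $k = b \parallel k' \parallel j$, I set $\ket{\psi_k}$ to be $\ket{\phi_{k'}}$ when $b=0$ and the basis state $\ket{j}$ when $b=1$. The only new point is to recheck the $\GapU$ promise. When $b=0$, the state has an exact generating program of length $n(\lambda)-\sqrt{n(\lambda)}+O(\log\lambda)+C$, so by \cref{lem:U-upper-bound}, $\Umin(\ket{\psi_k}) \le \Knet(\ket{\psi_k}) \le r$ with $r := n(\lambda)-\sqrt{n(\lambda)}+O(\log\lambda)+C$. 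When $b=1$, \cref{eq:efi-nu-gaph-1} gives $\Hbar(\ket{j}) \ge n(\lambda)-\delta$ with probability $1-\negl(\lambda)$ for $\delta = \sqrt{n(\lambda)}/2$, and \cref{lem:U-lower-bound} lifts this to the smoothed measure, $\Umin^{1-\eps}(\ket{j}) \ge \Hbar(\ket{j})+\log\eps \ge r+\Delta$, where $\Delta := \sqrt{n(\lambda)}/2 + \log\eps - O(\log\lambda) = \omega(\log\lambda)$ since $\log\eps$ is a negative constant. With the promise established, the rest is identical to \cref{thm:efi-nu-gaph}: a $\GapU$ solver must output $\low$ on $b=0$ and $\high$ on $b=1$ up to negligible error, hence distinguishes $\E_{k'}\ketbra{\phi_{k'}}{\phi_{k'}}$ from $I/2^n$ and breaks the $\OnePRS$, so no such solver exists if $\EFI$ do.

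For the converse I would run \cref{alg:estimate-gaph} with $\ell = (n+r+\Delta/2)/2$, arguing it estimates non-uniform $\GapU$ whenever $\EFI$ do not exist. Partitioning keys into $K_\low = \{k:\Umin(\ket{\psi_k})\le r\}$ and $K_\high = \{k:\Umin^{1-\eps}(\ket{\psi_k})> r+\Delta\}$, I write $\rho = p_\low\rho_\low + p_\midd\rho_\midd + p_\high\rho_\high$ and reduce, exactly as before, to the case $p_\low, p_\high \ge 1/2-\negl(\lambda)$ and $p_\midd = \negl(\lambda)$ (otherwise a constant guess already wins). Taking the slack parameter $\Gamma = \Delta/8 = \omega(\log\lambda)$, \cref{lem:U-low-entropy} makes $\rho_\low$ negligibly ($2^{-\Gamma/2}$-) close to a state of von Neumann entropy at most $r+\Gamma = 2\ell-n-\Delta/2+\Gamma \le 2\ell-n-\Delta/4$, so by the first part of \cref{lem:extractor-channel} $\Ext_\ell(\rho_\low)$ is within $\negl(\lambda)$ of trace distance $1$ from $\pi := \frac{I_L}{|L|}\otimes\frac{I_{A_1}}{|A_1|}$; meanwhile \cref{lem:U-high-entropy} puts $\rho_\high$ within $2\eps$ of a state of min-entropy at least $r+\Delta-\log n - C \ge 2\ell-n+\Delta/4$, so by the second part of \cref{lem:extractor-channel} $\Ext_\ell(\rho_\high)$ is within $2\eps + \negl(\lambda)$ of $\pi$. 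Using $\rho \approx \frac12\rho_\low+\frac12\rho_\high$ and evaluating the optimal distinguishing projector between $\Ext_\ell(\rho_\low)$ and $\pi$ yields $D(\Ext_\ell(\rho),\pi) \ge 1/2 - \eps - \negl(\lambda) > 1/4$; since $\Ext_\ell(\rho)$ and $\pi$ are efficiently preparable and $\EFI$ do not exist, some efficient projector $\Pi$ distinguishes them, which \cref{alg:estimate-gaph} converts into a $\GapU$ estimator of non-negligible advantage.

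The one genuine difference from the $\GapH$ proof, and the point deserving care, is that \cref{lem:U-high-entropy} only guarantees closeness $2\eps$ (a constant) rather than a negligible quantity; so for $\Umin$ it is the \emph{high}-complexity side, not the low side, that contributes the constant loss. This is precisely why $\eps < 1/100$ must be a small universal constant, keeping $1/2-\eps$ safely above $1/4$. The remaining work is purely bookkeeping: with $\ell = (n+r+\Delta/2)/2$ one must verify that the additive slacks $\Gamma$ and $\log n + C$ appearing in the two $\Umin$ entropy lemmas are both $o(\Delta)$, so that the thresholds $2\ell-n\pm\Delta/4$ required by \cref{lem:extractor-channel} are simultaneously met and all smoothing errors stay negligible; this holds because $\Delta = \omega(\log\lambda)$ dominates them.
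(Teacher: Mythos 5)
Your forward direction is correct and is exactly the paper's: you use the same hard family (\cref{eq:efi-gaph-states}), with \cref{lem:U-upper-bound} giving $\Umin \le r$ on the $\OnePRS$ branch and \cref{eq:efi-nu-gaph-1} plus \cref{lem:U-lower-bound} giving $\Umin^{1-\eps} \ge r+\Delta$ on the random-basis-state branch; the paper makes the identical parameter choices. The problem is in the converse, at the very last step, and it is precisely the point your closing paragraph identifies but then misjudges.

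In the $\GapH$ proof, converting the $\EFI$-distinguisher $\Pi$ for the pair $(\Ext_\ell(\rho), \pi)$ into a solver rests on \cref{eq:efi-nu-gaph-4}, i.e.\ $D(\Ext_\ell(\rho_{\high}), \pi) \le \negl(\lambda)$: writing $a = \Tr(\Pi\,\Ext_\ell(\rho_{\low})) - \Tr(\Pi\,\pi)$ and $b = \Tr(\Pi\,\Ext_\ell(\rho_{\high})) - \Tr(\Pi\,\pi)$, one has $a + b \ge 2p(\lambda) - \negl(\lambda)$ and $\abs{b} \le \negl(\lambda)$, hence the solver's advantage $\abs{a-b} \ge 2p(\lambda) - \negl(\lambda)$. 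In your $\GapU$ version you have established only $\abs{b} \le 2\eps + \negl(\lambda)$ with $\eps$ a \emph{constant}, so the same computation yields advantage at least $2p(\lambda) - 4\eps - \negl(\lambda)$. But non-existence of $\EFI$ only guarantees that $p(\lambda)$ is non-negligible (say, $\ge 1/\lambda^{c}$ infinitely often); it does not give constant advantage, so $2p(\lambda) - 4\eps$ can be negative. This is not mere slack in the analysis: the scenario $a = b = p(\lambda)$ is consistent with everything you proved, because the $2\eps$-weight component of $\rho_{\high}$ that \cref{lem:U-high-entropy} cannot control is supported on low-complexity eigenvectors whose rank can be as large as $2^{r+\Delta}$, which the extractor at length $\ell = (n+r+\Delta/2)/2$ can neither flatten to $\pi$ nor kill by a rank argument --- so it can be arranged to replicate $\Pi$'s entire advantage on the high side, leaving your solver with advantage $0$. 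In other words, ``keeping $1/2-\eps$ above $1/4$'' is not where the constant loss bites; it bites in the advantage-transfer step, which genuinely needs the $\pi$-side error to be negligible. The natural repair is to take the smoothing parameter sub-constant, e.g.\ $\eps(\lambda) = 2^{-\sqrt{n}/100}$, so that $2\eps$ is negligible: the converse then goes through exactly as in \cref{thm:efi-nu-gaph}, and the forward direction tolerates this because the only cost is $\log(1/\eps) = O(\sqrt{n})$ in the choice $r+\Delta = n - \delta + \log \eps$, which still leaves $\Delta = \omega(\log\lambda)$. (The paper is admittedly terse here --- it says the converse is ``essentially the same'' and omits the details, and its statement of \cref{thm:efi-un-gapu} leaves $\eps$ unspecified --- but your explicit choice of a universal constant $\eps < 1/100$ makes the final claim false as written.)
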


\begin{proof}
  The proof is similar to that of \cref{thm:efi-nu-gaph} and we only outline the
  differences.

  We first prove that the existence of $\EFI$ implies the hardness of $\GapU$
  for a non-uniform family of states.
  We use the same construction in \cref{eq:efi-gaph-states}.
  Now define events $C_{\low}$ and $C_{\high}$ to be
  $\Umin(\ket{\psi_{k}}) \le r$ and
  $\Umin^{1-\eps}(\ket{\psi_{k}}) \ge r + \Delta$ over the random keys.
  Define events $A_{\low}$ and $A_{\high}$ be the events that the algorithm
  $\adv$ for $\GapU$ outputs $0$ and $1$ respectively.
  As in the proof of \cref{thm:efi-nu-gaph}, we prove that
  $\Pr[C_{\low} | b = 0 ] = 1$.
  That is, when the state is sampled from the $\OnePRS$ family
  $\{\ket{\phi_{k'}}\}$, we have by \cref{lem:U-upper-bound},
  $\Umin(\ket{\psi_{k}}) \le \Knet(\ket{\psi_{k}})
  \le n(\lambda) - \sqrt{n(\lambda)} + O(\log \lambda) + C$.
  Similarly, we can prove $\Pr[C_{\high} | b = 1] = 1 - \negl(\lambda)$.
  When $b = 1$, the state is a random computational basis state $\ket{j}$ of $n$
  qubits, and by \cref{eq:efi-nu-gaph-1} and \cref{lem:U-lower-bound}, we have
  \begin{equation*}
    \Pr_{j \in \bit^{n}} \bigl[ \Umin^{1-\eps}(\ket{j}) \ge n(\lambda) - \delta + \log\eps \bigr]
    \ge \Pr_{j \in \bit^{n}} \bigl[ \Hbar(\ket{j}) \ge n(\lambda) - \delta \bigr] \ge 1 - 2^{-\delta}.
  \end{equation*}
  Choosing $r = n(\lambda) - \sqrt{n(\lambda)} + O(\log \lambda) + C$ and
  $r + \Delta = n(\lambda) - \delta + \log \eps$, a similar argument as in the
  rest of the proof of \cref{thm:efi-nu-gaph} shows that $\adv$ also breaks
  $\EFI$.

  Next, assume that $\EFI$ do not exist.
  We can prove, for any state family $\{\ket{\psi_{k}}\}$, there exists an
  algorithm that solves $\GapU(r, r+\Delta)$ efficiently.
  The proof is essentially the same as that for $\Umin^{1-\eps}$ and $\Umin^{0}$
  in place of $\Hbar^{0}$ and $\Hbar^{1-\eps}$, respectively, and using
  \cref{lem:U-high-entropy,lem:U-low-entropy} in place of
  \cref{lem:H-high-entropy,lem:H-low-entropy}.
  We omit the details.
\end{proof}

\section{Equivalence with the hardness of identifying the ``span of easy~states''}

We now give a different characterization of EFIs, showing that they are equivalent to the hardness of deciding if a state is in the span of states of low $\Knet$ complexity.
We then introduce a notion of ``robust span'' of states, and briefly sketch how this can give a unified perspective into the proofs of \Cref{sec:gaph,sec:gapu}.

\subsection{Characterization of \texorpdfstring{$\EFI$}{EFI} from the hardness of identifying the ``span of easy states''}
In this section, we characterize $\EFI$ with the hardness of learning whether a state in the span of easy states or not. We will denote $\Pi_r$ as the span of states with $\Knet$ at most $r$.
\begin{remark}
    The definition of $\Pi_r$ is not robust over choice of the universal gate set: different choice of the gate set would correspond to different sets of easy states, and different sets of easy states (even with a very little deviation) have different spans. According to Solovay-Kitaev, universal gate set can simulate any state up to arbitrary precision, but the span of two states might differ significantly even if two state families are very close to each other. For example, span of $\{\ket{0}, \ket{0}\}$ is one-dimensional, and span of $\{\ket{0}, \sqrt{1-2^{-2n}}\ket{0}+2^{-n}\ket{1}\}$ is two-dimensional, albeit these two states are almost identical. But we will see that the non-robustness does not matter a lot for our arguments.
\end{remark}

We'll need lemma on the entropy bound related to $\Pi_r$.

\begin{lemma}\label{lem:span-low-entropy}
    For any family of states $\{\ket{\psi_k}\}$ such that 
  $\ket{\psi_k}$ lies in $\Pi_r$
    for all $k$,
  the mixed state
  $\E_k \ketbra{\psi_k}{\psi_k}$ 
  has von
  Neumann entropy at most $r+1$.
\end{lemma}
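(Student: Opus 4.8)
The plan is to exploit that the support of the mixture is confined to a low-dimensional subspace, and then bound the von Neumann entropy by the logarithm of the rank.

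First I would observe that, since every $\ket{\psi_k}$ lies in $\Pi_r$ by hypothesis, each rank-one term $\ketbra{\psi_k}{\psi_k}$ is supported on $\Pi_r$, and hence so is the mixture $\rho = \E_k \ketbra{\psi_k}{\psi_k}$ (the support of a convex combination of positive semidefinite operators is contained in the span of the individual supports). Therefore $\rank(\rho) \le \dim \Pi_r$. Since the von Neumann entropy of any density matrix is at most the logarithm of its rank, i.e.\ $S(\rho) \le \Hmax(\rho) = \log \rank(\rho)$, it suffices to bound $\dim \Pi_r$.

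Second, I would bound $\dim \Pi_r$ by a direct counting argument. By definition $\Pi_r$ is the span of all states $\ket{\psi}$ with $\Knet(\ket{\psi}) \le r$, and each such state is produced by running the universal machine $U$ on some program $p$ of length $\abs{p} \le r$. There are at most $\sum_{\ell=0}^{r} 2^{\ell} = 2^{r+1}-1$ binary strings of length at most $r$, so there are fewer than $2^{r+1}$ distinct low-complexity states, giving $\dim \Pi_r \le 2^{r+1}-1 < 2^{r+1}$.

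Combining the two steps yields $S(\rho) \le \log \rank(\rho) \le \log \dim \Pi_r < r+1$, which is the claimed bound. There is essentially no obstacle here; the only point that requires a moment's attention is the off-by-one in the dimension count, and I note that whether $U$ is taken to be prefix-free is immaterial, since the crude count of all strings of length at most $r$ already suffices to obtain the $r+1$ bound.
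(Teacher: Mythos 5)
Your proposal is correct and follows essentially the same argument as the paper: bound $\dim \Pi_r$ by counting the at most $2^{r+1}-1$ programs of length at most $r$, note that the mixture is supported on $\Pi_r$, and conclude $S(\rho) \le \log(2^{r+1}-1) < r+1$. The extra care you take (rank versus dimension, the prefix-free remark) is fine but adds nothing beyond the paper's proof.
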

\begin{proof}
    As $\Pi_r$ is spanned by states with $\Knet$ at most $r$, and there are at most $2^{r+1}-1$ such states,
    the dimension of $\Pi_r$ is at most $2^{r+1}-1$. 
    Since $\E_k \ketbra{\psi_k}{\psi_k}$ is a state supported in $\Pi_{r}$, its entropy is at most $\log (2^{r+1}-1) < r+1$.
\end{proof}

\begin{lemma}\label{lem:span-high-entropy}
    For any efficiently family of states $\{\ket{\psi_k}\}$ such that
    $\braket{\psi_k|\Pi_s|\psi_k} \leq \eps$
    for all $k$,
    the mixed state $\E_k \ketbra{\psi_k}{\psi_k}$ is $2\eps$-close to a state with min-entropy at least $s-\Gamma$ for any $\Gamma = \omega(\log n)$.
\end{lemma}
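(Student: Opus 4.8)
The plan is to mirror the proof of \cref{lem:H-high-entropy}, but to replace the universal-density-matrix bound with a direct description-length argument tailored to $\Pi_s$. Write the target state in its eigenbasis as $\rho = \E_k \ketbra{\psi_k}{\psi_k} = \sum_i \beta_i \ketbra{\varphi_i}{\varphi_i}$, and fix a constant $c$ equal to the length of a program that encodes both the (efficient) sampling algorithm for the family and the integer $n$. First I would define $\tilde\Pi_\low$ to be the projector onto the span of those eigenvectors $\ket{\varphi_i}$ whose eigenvalue satisfies $\beta_i \ge 2^{-(s - \log n - c)}$.

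The crux is to show that every such eigenvector lies in $\Pi_s$. Since the $\beta_i$ sum to at most $1$, there are at most $2^{s - \log n - c}$ indices $i$ with $\beta_i \ge 2^{-(s - \log n - c)}$, so each of these eigenvectors is pinned down by an index of at most $s - \log n - c$ bits. A program of this length, prepended with the description of the sampling circuit and of $n$ (an extra $\log n + c$ bits), can recompute $\rho$, diagonalize it, and output a circuit preparing the $i$-th eigenvector; this gives $\Knet(\ket{\varphi_i}) \le s$ and hence $\ket{\varphi_i} \in \Pi_s$. Running time is irrelevant for $\Knet$, so the (exponential) cost of forming and diagonalizing $\rho$ is harmless. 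Consequently $\tilde\Pi_\low \le \Pi_s$ as operators, and therefore for every $k$ we have $\braket{\psi_k | \tilde\Pi_\low | \psi_k} \le \braket{\psi_k | \Pi_s | \psi_k} \le \eps$, which averages to $\Tr(\tilde\Pi_\low \rho) = \E_k \braket{\psi_k | \tilde\Pi_\low | \psi_k} \le \eps$.

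With this in hand I would invoke \cref{lem:min-entropy-eigenstate} at threshold $2^{-(s - \log n - c)}$, obtaining $\Hmin^{2\eps}(\rho) \ge (s - \log n - c) + \log(1-\eps) \ge s - \log n - c - 1$, where the last step uses $\eps \le 1/2$. Because $\Gamma = \omega(\log n)$ exceeds $\log n + c + 1$ for all large $n$, this is at least $s - \Gamma$; unfolding the definition of the smoothed min-entropy (and using $P \ge D$) then produces a state within trace distance $2\eps$ of $\rho$ with min-entropy at least $s - \Gamma$, as claimed. The one genuinely delicate point is the bound $\Knet(\ket{\varphi_i}) \le s$: as in \cref{lem:H-high-entropy}, one must verify that the index $i$, together with $n$ and the sampling algorithm, determines a sufficiently good approximation of the eigenvector through a finite computation, and that the overhead beyond the index is only $\log n + O(1)$. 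Everything else is routine parameter bookkeeping.
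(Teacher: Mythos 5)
Your proof is correct and follows essentially the same route as the paper's: eigendecompose $\rho$, project onto the large eigenvalues, argue those eigenvectors have $\Knet$ at most $s$ via an index-plus-sampler description so that $\tilde\Pi_\low \le \Pi_s$, and then invoke \cref{lem:min-entropy-eigenstate}. The only difference is the eigenvalue threshold — you cut at $2^{-(s-\log n - c)}$ (as the paper itself does in \cref{lem:U-high-entropy}) while the paper's proof of this lemma cuts at $2^{\Gamma - s - 1}$ — which is an immaterial parameter choice, since both yield $\Hmin^{2\eps}(\rho) \ge s - \Gamma$.
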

\begin{proof}
    According to the same argument as in~\cref{lem:H-high-entropy}, we can define $\tilde \Pi_\low$ as the eigenstates $\ket{\varphi_i}$ of
    $\rho = \E_k \ketbra{\psi_k}{\psi_k}$ 
    with eigenvalues $\beta_i\geq 2^{\Gamma - s-1}$. Then all such states
    can be encoded with the index of eigenstates in the decreasing order of
    eigenvalues, so $\Knet(\ket{\varphi_i}) \leq s-\Gamma +1 + O(\log n) <
    s$, and thus $\tilde \Pi_\low $ is a subspace contained in $\Pi_s$. 
    As a
    result, we have $\braket{\psi_k|\tilde \Pi_\low|\psi_k} \leq
    \braket{\psi_k|\Pi_s|\psi_k} \leq \eps$ for all $k$, so
    $\Tr (\tilde \Pi_\low \rho) = \E_k \braket{\psi_k | \tilde
    \Pi_\low|\psi_k} \leq \eps$. Now, according
    to~\cref{lem:min-entropy-eigenstate}, we have
    \begin{align*}
        \Hmin^{2\eps}(\rho) \geq s-\Gamma +1 + \log(1-\eps) \geq s-\Gamma.
    \end{align*}
    So we can conclude that $\rho$ is $2\eps$-close to a state with min-entropy at least $s-\Gamma$.
\end{proof}

\begin{theorem}\label{thm:EFI-span}
    $\EFI$ exist if and only if there exists a non-uniform family of efficiently samplable states $\{\ket{\psi_{k}}\}$ with advice size $O(\log \lambda)$, efficiently computable functions $r \in [n(\lambda)], \Delta = \omega(\log \lambda)$, and $\eps < 1/100$ such that it is non-uniformly hard to distinguish the following two cases given a single copy of a state from the family (sampled uniformly at random over $k$):
    \begin{itemize}
        \item The states lies in $\Pi_r$
        \item The overlap of the state with $\Pi_{r+\Delta}$ is at most
            $\eps$.
    \end{itemize}
\end{theorem}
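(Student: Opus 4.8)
\emph{Overview and forward direction.} The plan is to mirror the proof of \Cref{thm:efi-nu-gaph}, replacing the $\Hbar$-complexity gap by the ``membership in $\Pi_r$ versus small overlap with $\Pi_{r+\Delta}$'' gap and using \Cref{lem:span-low-entropy,lem:span-high-entropy} in place of \Cref{lem:H-low-entropy,lem:H-high-entropy}. For the forward direction, assume $\EFI$; by \Cref{cor:1prs-stretch} there is a non-uniform $n$-qubit $\OnePRS$ family $\{\ket{\phi_{k'}}\}$ with $O(\log\lambda)$-size advice and stretch at least $\sqrt{n}$. I reuse the mixture family of \Cref{thm:efi-nu-gaph}: writing $k = b \parallel k' \parallel j$ with $b \in \bit$ and $j \in \bit^n$, set $\ket{\psi_k} = \ket{\phi_{k'}}$ when $b = 0$ and $\ket{\psi_k} = \ket{j}$ when $b = 1$. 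For $b = 0$ the generator outputs $\ket{\phi_{k'}}$ exactly, so $\Knet^0(\ket{\phi_{k'}}) \le n - \sqrt n + O(\log\lambda) + C =: r$ and $\ket{\psi_k} \in \Pi_r$. For $b = 1$, I bound the overlap of a uniform $\ket{j}$ with $\Pi_{r+\Delta}$ by a dimension count: since $\dim \Pi_{r+\Delta} \le 2^{r+\Delta+1}$, we have $\E_j \braket{j | \Pi_{r+\Delta} | j} = \Tr(\Pi_{r+\Delta})/2^n \le 2^{r+\Delta+1-n}$, so by Markov $\Pr_j[\braket{j | \Pi_{r+\Delta} | j} > \eps] \le 2^{r+\Delta+1-n}/\eps$. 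Taking $r + \Delta = n - \sqrt n/2$ makes this negligible and gives $\Delta = \sqrt n/2 - O(\log\lambda) = \omega(\log\lambda)$. The promise then holds with probability $\frac12 - \negl$ in each case, and any distinguisher between the two cases distinguishes $\E_{k'}\ketbra{\phi_{k'}}{\phi_{k'}}$ from $\E_j \ketbra{j}{j} = I/2^n$, breaking the $\OnePRS$ and hence $\EFI$ exactly as in \Cref{thm:efi-nu-gaph}.

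\emph{Converse direction.} Assume no $\EFI$ exists and run \Cref{alg:estimate-gaph} with $\ell = (n + r + \Delta/2)/2$. Write $\rho = \E_k \ketbra{\psi_k}{\psi_k} = p_\low \rho_\low + p_\midd \rho_\midd + p_\high \rho_\high$; as in \Cref{thm:efi-nu-gaph} we may assume $p_\low, p_\high \ge \frac12 - \negl$ and $p_\midd = \negl$. For the low case I use a rank (support-dimension) argument: by the proof of \Cref{lem:span-low-entropy}, $\rho_\low$ is supported on the at most $2^{r+1}$-dimensional space $\Pi_r$, so for each Clifford $C_j$ in the extractor the block $\Tr_{A_2}(C_j \rho_\low C_j^\dagger)$ has rank at most $2^{r+1} \cdot 2^{n-\ell} = 2^{\ell - \Delta/2 + 1}$ (using $\ell > n/2$, so the relevant Schmidt-rank bound is $2^{n-\ell}$); hence each block, and therefore $\Ext_\ell(\rho_\low)$, is at trace distance at least $1 - 2^{-\Delta/2+1}$ from the maximally mixed state $\pi = \frac{I_L}{\abs{L}} \otimes \frac{I_{A_1}}{\abs{A_1}}$. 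For the high case, \Cref{lem:span-high-entropy} with $\Gamma = \Delta/4 = \omega(\log n)$ shows $\rho_\high$ is $2\eps$-close to a state of min-entropy at least $r + 3\Delta/4 = 2\ell - n + \Delta/4$, so by the second part of \Cref{lem:extractor-channel} together with monotonicity of trace distance under the extractor channel, $D(\Ext_\ell(\rho_\high), \pi) \le 2\eps + 2^{-\Delta/8}$.

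\emph{Combining.} Since $\Ext_\ell(\rho) \approx \frac12 \Ext_\ell(\rho_\low) + \frac12 \Ext_\ell(\rho_\high)$ up to $\negl$, the reverse triangle inequality gives
\[
  D(\Ext_\ell(\rho), \pi) \ge \tfrac12 D(\Ext_\ell(\rho_\low), \pi) - \tfrac12 D(\Ext_\ell(\rho_\high), \pi) - \negl \ge \tfrac12 - \eps - \negl > \tfrac14,
\]
using $\eps < \frac{1}{100}$. Thus $\Ext_\ell(\rho)$ and $\pi$ are efficiently preparable and statistically far, so the non-existence of $\EFI$ yields a projector distinguishing them with non-negligible advantage; pushing this distinguisher through the same bookkeeping as in \Cref{thm:efi-nu-gaph} shows it decides the span problem with non-negligible advantage.

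\emph{Main obstacle.} The one genuinely new point relative to \Cref{thm:efi-nu-gaph} is that the ``high'' promise here yields only a \emph{constant} $2\eps$-closeness to high min-entropy, rather than a negligible one. Consequently a von Neumann entropy bound for the low case would be insufficient: via Fannes it forces $\Ext_\ell(\rho_\low)$ only inverse-polynomially far from $\pi$, which the constant high-case error could swamp. I expect the crux to be the observation that membership in $\Pi_r$ supplies a genuine rank bound, forcing $\Ext_\ell(\rho_\low)$ to be $(1 - \negl)$-far from $\pi$; the constant $2\eps$ loss is then absorbed precisely because $\eps < \frac{1}{100}$ keeps $\frac12 - \eps$ above $\frac14$.
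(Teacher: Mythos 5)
Your proposal follows the same route as the paper: the forward direction (the $b=0/b=1$ mixture of the non-uniform $\OnePRS$ from \cref{cor:1prs-stretch} with random computational basis states, plus a dimension count and Markov for the high case) is the paper's argument essentially verbatim, and the converse runs \cref{alg:estimate-gaph} with the same decomposition $\rho = p_{\low}\rho_{\low} + p_{\midd}\rho_{\midd} + p_{\high}\rho_{\high}$ and the same pair of lemmas. Your one deviation is a genuine improvement rather than a divergence: the paper's proof simply cites \cref{lem:span-low-entropy}, whose statement is only a von Neumann entropy bound, and as you correctly observe an entropy deficit of $\omega(\log\lambda)$ forces (via Fannes) only inverse-polynomial distance from the maximally mixed state, which the constant $2\eps$ error from \cref{lem:span-high-entropy} would swamp. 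Your rank argument (support in $\Pi_r$ has dimension at most $2^{r+1}$, so each extractor block has rank at most $2^{\ell-\Delta/2+1}$ and is $(1-2^{-\Delta/2+1})$-far from maximally mixed) is exactly the strengthening needed, and it is what the proof of \cref{lem:span-low-entropy} implicitly establishes (support, not merely entropy); making it explicit is the right call.

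However, the constant-versus-negligible tension you resolved for statistical farness reappears, unresolved, in your final sentence ("pushing this distinguisher through the same bookkeeping"). The non-existence of $\EFI$ (via the second bullet of \cref{thm:WEFI-EFI}) only guarantees a distinguisher $\Pi$ for the pair $\bigl(\Ext_\ell(\rho), \pi\bigr)$ with \emph{some} non-negligible advantage $p(\lambda)$. The bookkeeping of \cref{thm:efi-nu-gaph} converts this into a span-problem advantage of
$\Tr\bigl(\Pi\,\Ext_\ell(\rho_{\low})\bigr) - \Tr\bigl(\Pi\,\Ext_\ell(\rho_{\high})\bigr) \ge 2p(\lambda) - 2\,D\bigl(\Ext_\ell(\rho_{\high}),\pi\bigr) - \negl(\lambda)$;
in \cref{thm:efi-nu-gaph} the subtracted term is negligible, but here it is $4\eps + \negl(\lambda)$ with $\eps$ a constant, so the bound is vacuous whenever $p(\lambda) \le 2\eps$ --- and nothing prevents this. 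The missing idea is to run the argument in contrapositive through the \emph{third} bullet of \cref{thm:WEFI-EFI}: if the span problem were non-uniformly hard, then (i) by your rank bound the pair $\bigl(\Ext_\ell(\rho),\pi\bigr)$ is statistically $(\tfrac12 - \eps - \negl)$-far, and (ii) reading your bookkeeping inequality in reverse, every efficient distinguisher for this pair has advantage at most $2\eps + \negl$ (otherwise composing it with $\Ext_\ell$ would solve the span problem). The pair is therefore an $(\eps',\delta)$-weak $\EFI$ with $(1-\eps')^2 - \sqrt{\delta} \ge \bigl(\tfrac12-\eps\bigr)^2 - \sqrt{2\eps} - \negl > 0.09$, which by \cref{thm:WEFI-EFI} implies $\EFI$ exist (provided the universal constant $C$ lies below this value --- presumably the reason the theorem fixes $\eps < 1/100$), a contradiction. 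This detour through weak-$\EFI$ amplification with a constant gap, rather than extracting a distinguisher and transferring its unknown advantage, is what the final step actually requires.
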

\begin{proof}
    First, assuming $\EFI$ exist, we consider the same state family as
    in~\cref{thm:efi-nu-gaph}. When $b=0$, the state can be described by a
    program of size at most $n(\lambda)-\sqrt{n(\lambda)} + O(\log \lambda)
    +C$ for some constant $C$, i.e.. $\Knet(\ket{\psi_{k}}) \leq
    n(\lambda) - \sqrt{n(\lambda)} + O(\log \lambda) +C$. Hence, the state
    has low $\Knet$ complexity and lies in $\Pi_{n-\Delta}$ for $r =
    n-\sqrt n + O(\log n) + C$. On the other hand, when $b=1$, the state
    $\ket{\psi_{k}}$ is $\ket{j}$ for a uniformly random $j\in
    \{0,1\}^n$. Let $\Delta = n-\sqrt{n}/2-r$, then $r+\Delta =
    n-\sqrt{n}/2$. Since $\sum_j \braket{j|\Pi_{n-\sqrt{n}/2}|j} = \Tr
    \Pi_{n-2\sqrt{n}/3} \leq 2^{n-2\sqrt{n}/3}$,
    we obtain from
    Markov's inequality:
    \begin{align*}
        \Pr_{j\in \{0,1\}^n} [\braket{j|\Pi|j} \geq 2^{-\sqrt{n}/3}] \leq 2^{-\sqrt n/3} = \negl(\lambda).
    \end{align*}
    This means that the state has little overlap with $\Pi_{r+\Delta} =
    \Pi_{n-2\sqrt{n}/3}$ with high probability. Thus, with the same argument
    in~\cref{thm:efi-nu-gaph}, we conclude that $\ket{\psi_{k}}$ is a non-uniformly hard
    instance to decide whether it's in the span or not.

    Second, assuming $\EFI$ do not exist, we can apply the algorithm as
    in~\cref{thm:efi-nu-gaph}. Note that the same randomness extractor
    works as we have the entropy bound from~\cref{lem:span-high-entropy}
    and~\cref{lem:span-low-entropy}.
\end{proof}
\begin{remark}
    If we replace the non-uniform family of states with a uniform family, we will build up a characterization of $\OnePRS$. The argument goes in the same way as~\cref{cor:gaph-1prs}.
\end{remark}
\begin{remark}
    Here we adapt the strong version of promise: we require that the state lies
    exactly in $\Pi_r$. We can also adapt the robust version, modifying
    the requirement to be so that the state almost lies in the span of states.
\end{remark}

\subsection{Relating algorithmic entropy with the ``\emph{robust} span of easy states''}\label{sec:relating-entropy-span}

The definitions of $\Hbar$ and its robust version $\Hbar^\eps$ are
presented in~\cref{sec:kolmogorov}, but arguably the intuition behind the definition
of $\Hbar^\eps$ is still vague. In this section, we relate $\Hbar$ with the
robust span (we'll give the definition of the robust span later) of the
easy states: if $\Hbar$ is low, then the state almost lies in the robust
span of easy states; if $\Hbar$ is high, then the state almost lies in the
complement of the robust span of easy states.

\begin{definition}
    Let $\{\ket{\psi_k}\}_{k\in[L]}$ be a family of quantum sates. Then the
    \emph{$\eps$-robust span} of $\{\ket{\psi_k}\}$ is defined as the subspace
    spanned by eigenstates of
    $\frac{1}{L}\sum_{k\in [L]}\ketbra{\psi_k}{\psi_k}$
    whose corresponding eigenvalues are at least $\frac \eps L$.
\end{definition}

\begin{theorem}[High $\Hbar$ complexity implies low overlap with robust span of easy states]\label{thm:high-complexity-span}
    If a state $\ket{\psi}$ satisfies $\Hbar^0(\ket{\psi}) \geq r$ with $r \leq n$, and
    $\Pi_{r,\gamma}$ is the projector on the $\gamma$-robust span of states with $\Knet$ at most $r-\Delta$, then we have $\braket{\psi|\Pi_{r,\gamma}|\psi} \leq \poly(n) \gamma^{-1} 2^{-\Delta}$.
\end{theorem}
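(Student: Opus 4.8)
The plan is to prove the statement through a single chain of operator inequalities that sandwiches the robust-span projector $\Pi_{r,\gamma}$ between a scalar multiple of the mixture defining the robust span and a scalar multiple of the universal density matrix $\udm$. Write $\sigma = \frac{1}{L}\sum_{i\in[L]}\ketbra{\phi_i}{\phi_i}$ for the uniform mixture over the $L$ distinct states $\{\ket{\phi_i}\}$ with $\Knet(\ket{\phi_i})\le r-\Delta$, so that by definition $\Pi_{r,\gamma}$ projects onto the eigenspaces of $\sigma$ whose eigenvalue is at least $\gamma/L$. By Definitions~\ref{def:hbar} and~\ref{def:hbar-eps}, the hypothesis $\Hbar^0(\ket{\psi})\ge r$ unwinds to the single scalar inequality $\braket{\psi|\udm|\psi}\le 2^{-r}$, so it suffices to establish an operator bound $\Pi_{r,\gamma}\le c\,\udm$ and then evaluate it on $\ket{\psi}$.

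The key step is to bound $\sigma$ from above by $\udm$. For each state $\ket{\phi_i}$ I would fix one shortest program $p_i$ producing it, with $|p_i|=\Knet(\ket{\phi_i})\le r-\Delta$; since distinct states have distinct programs, the rank-one terms $2^{-|p_i|}\ketbra{\phi_i}{\phi_i}$ are a subcollection of the defining terms of $\udm=\sum_p 2^{-|p|}\ketbra{\psi_p}{\psi_p}$, and dropping the remaining positive semidefinite terms gives
\begin{equation*}
\udm \;\ge\; \sum_{i\in[L]} 2^{-|p_i|}\ketbra{\phi_i}{\phi_i} \;\ge\; 2^{-(r-\Delta)}\sum_{i\in[L]}\ketbra{\phi_i}{\phi_i} \;=\; 2^{-(r-\Delta)}L\,\sigma,
\end{equation*}
equivalently $\sigma \le \frac{2^{r-\Delta}}{L}\,\udm$. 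The crucial point is that one bounds $\sigma$ directly through the original low-complexity states $\ket{\phi_i}$, rather than trying to argue that the \emph{eigenstates} of $\sigma$ have low $\Knet$; the latter route would cost an additive $O(\log n)$ to index an eigenstate inside a ${\sim}\,2^{r-\Delta}$-dimensional space, and this is precisely the slack that the $\poly(n)$ factor in the statement absorbs.

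Finally, by the definition of the robust span, on the range of $\Pi_{r,\gamma}$ the operator $\sigma$ acts with eigenvalue at least $\gamma/L$, giving the complementary bound $\Pi_{r,\gamma}\le \frac{L}{\gamma}\,\sigma$. Chaining the two inequalities, the factors of $L$ cancel,
\begin{equation*}
\Pi_{r,\gamma} \;\le\; \frac{L}{\gamma}\,\sigma \;\le\; \frac{L}{\gamma}\cdot\frac{2^{r-\Delta}}{L}\,\udm \;=\; \frac{2^{r-\Delta}}{\gamma}\,\udm,
\end{equation*}
and evaluating on $\ket{\psi}$ together with $\braket{\psi|\udm|\psi}\le 2^{-r}$ yields $\braket{\psi|\Pi_{r,\gamma}|\psi}\le \gamma^{-1}2^{-\Delta}$, which is in fact stronger than the claimed $\poly(n)\,\gamma^{-1}2^{-\Delta}$. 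The only real obstacle is the justification that $\sum_i 2^{-|p_i|}\ketbra{\phi_i}{\phi_i}\le\udm$ as operators: this requires the $p_i$ to be genuinely distinct (so they index disjoint terms of the universal semi-density matrix) and the relation $\udm\ge 2^{-\Knet(\ket{\phi})}\ketbra{\phi}{\phi}$, which is immediate from the definition of $\udm$ (cf.\ the first display in the proof of Lemma~\ref{lem:U-upper-bound}). Any encoding or approximation constants hidden in $\udm$, such as the $1/\poly(n)$ weight on short programs noted after Definition~\ref{def:hbar-eps}, are comfortably covered by the $\poly(n)$ margin in the statement.
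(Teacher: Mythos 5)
Your proof is correct and takes essentially the same route as the paper's: both lower-bound $\udm$ by the disjoint shortest-program terms of the states with $\Knet \le r-\Delta$ (giving $\udm \ge 2^{-(r-\Delta)} L \sigma$ for the uniform mixture $\sigma$ over those states) and combine this with the eigenvalue threshold $\gamma/L$ defining the robust span. Your bookkeeping is tighter, though: by chaining the two operator inequalities so that $L$ cancels, you bypass the paper's lower bound $L \ge 2^{r-\Delta-O(\log n)}$ (which it obtains by embedding computational basis states, and which is the only place the hypothesis $r \le n$ is used), and you end with the stronger bound $\braket{\psi|\Pi_{r,\gamma}|\psi} \le \gamma^{-1} 2^{-\Delta}$, free of the $\poly(n)$ factor.
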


\begin{proof}
  We can bound $r$ by the number of qubits of $\ket{\psi}$: as there is a
  universal upper bound on $\Hbar$ for any states:
  $\Hbar(\ket{\psi}) \leq n+O(\log n)$ for any $n$-qubit state $\ket{\psi}$, it
  follows that $r\leq n+O(\log n)$.

  First, from the condition $\Hbar^0(\ket{\psi}) \geq r$ and the definition of
  $\Hbar$, we have
  $\Hbar^0(\ket{\psi}) = \Hbar(\ket{\psi}) = -\log \braket{\psi|\udm|\psi} \geq r$,
  which implies $\braket{\psi|\udm|\psi} \leq 2^{-r}$.

  Then we relate $\udm$ with the robust span $\Pi_{r,\gamma}$.
  By the definition of the universal semi-density matrix $\udm$, we have that
  $\udm \geq \sum_{\ket{\phi}} 2^{-\Knet(\ket{\phi})} \ketbra{\phi}{\phi} \geq 2^{-r+\Delta} \sum_{\Knet(\ket{\phi}) \leq r-\Delta} \ketbra{\phi}{\phi}$.
  Although this appears to sum over uncountably many quantum states, it is
  actually a sum over a countable family: there are only countably many quantum
  states with finite $\Knet$.

  Let $\rho = \E_{\Knet(\ket{\phi}) \leq r-\Delta} \ketbra{\phi}{\phi}$, where
  the expectation is taken uniformly at random over all the quantum states
  $\ket{\phi}$ with $\Knet$ at most $r-\Delta$.
  Let $L$ be the number of states with $\Knet$ at most $r-\Delta$.
  Note that we have a lower bound of $L$: any states
  $\ket{k\|0^{n-r+\Delta+O(\log n)}}_{k \in \{0,1\}^{r-\Delta-O(\log n)}}$ can
  be encoded by a prefix-free Turing machine that outputs $n$ and $k$ (note that
  $r\leq n+O(\log n)$ so the argument holds), so there are at least
  $2^{r-\Delta-O(\log n)}$ different states with $\Knet$ at most $r-\Delta$.
  As a result, we have
  \begin{equation*}
      \udm \geq 2^{-r+\Delta} \sum_{\Knet(\ket{\phi})\leq r-\Delta} \ketbra{\phi}{\phi} \geq 2^{-O(\log n)} \frac{1}{L} \sum_{\Knet(\ket{\phi})\leq r-\Delta} \ketbra{\phi}{\phi} = \frac{1}{\poly(n)} \rho.
  \end{equation*}
  Thus we can conclude that
  $\braket{\psi|\rho|\psi} \leq \poly(n)\braket{\psi|\udm|\psi} \leq \poly(n)\cdot 2^{-r}$.

  Let $\sum_i \lambda_i \ketbra{\psi_i}{\psi_i}$ be the
  eigendecomposition of $\rho$. The $\gamma$-robust span of $\rho$ can be
  expressed as $\Pi_{r,\gamma} = \sum_{\lambda_i \geq \gamma/L} \ketbra{\psi_i}{\psi_i}$.
  Thus $\frac \gamma L \Pi_{r,\gamma} \leq \rho$, and we have
  $\braket{\psi|\Pi_{r,\gamma}|\psi} \leq \frac L\gamma \braket{\psi|\rho|\psi} \leq \frac{L}{\gamma} \poly(n) 2^{-r} = \poly(n) \gamma^{-1} 2^{-\Delta}$.
\end{proof}

\begin{theorem}[Low $\Hbar$ complexity implies high overlap with robust span of easy states]\label{thm:low-complexity-span}
    If an $n$-qubit state $\ket{\psi}$ satisfies
    $\Hbar^{1-\eps}(\ket{\psi}) \leq r$, and $\Pi_{r,\gamma}$ is
    the projector of the $\gamma$-robust span of states with $\Knet$ at
    least $r+O(\log n)$, then we have $\braket{\psi|\Pi_{r,\gamma}|\psi}
    \geq 1 - \sqrt{2\eps} - \gamma \poly(n)$.
\end{theorem}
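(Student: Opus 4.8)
The plan is to run the argument of \Cref{thm:high-complexity-span} in reverse, recasting the projection argument of \cref{lem:H-low-entropy} with the robust span $\Pi_{r,\gamma}$ of the easy states (those of $\Knet$ at most $r+O(\log n)$) in place of the raw top eigenspace of $\udm$. Writing the complement as $\bar\Pi = I-\Pi_{r,\gamma}$, it suffices to show $\braket{\psi|\bar\Pi|\psi}\le \sqrt{2\eps}+\gamma\poly(n)$. I would reduce the whole proof to one spectral claim about the universal density matrix: no unit vector in the range of $\bar\Pi$ carries appreciable $\udm$-weight, i.e.\ $\bar\Pi\,\udm\,\bar\Pi$ has operator norm essentially below $2^{-r}$. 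Granting this, the bound follows as in \cref{lem:H-low-entropy}: a non-negligible overlap of $\ket{\psi}$ with $\bar\Pi$ would produce, via the normalized projection $\bar\Pi\ket{\psi}/\lVert\bar\Pi\ket{\psi}\rVert$, a state within trace distance $1-\eps$ of $\ket{\psi}$ whose $\udm$-overlap is below $2^{-r}$, hence whose $\Hbar$-value exceeds $r$, contradicting $\Hbar^{1-\eps}(\ket{\psi})\le r$; the $\gamma\poly(n)$ slack in the statement tracks the gap between $\bar\Pi$ and the exact low-$\Hbar$ subspace.

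To attack the spectral claim I would reuse the operator comparison from the proof of \Cref{thm:high-complexity-span}. Letting $\rho' = \E_{\Knet(\ket{\phi})\le r+O(\log n)}\ketbra{\phi}{\phi}$ be the uniform mixture of the easy states and $L$ their number, picking shortest programs gives $\udm \ge 2^{-(r+O(\log n))}\sum_{\Knet(\ket{\phi})\le r+O(\log n)}\ketbra{\phi}{\phi} = \tfrac{L}{2^{r+O(\log n)}}\,\rho' \ge \tfrac{1}{\poly(n)}\,\rho'$, using $L\ge 2^{r-O(\log n)}$. By definition of the $\gamma$-robust span, $\bar\Pi$ projects onto the eigenstates of $\rho'$ of eigenvalue below $\gamma/L$, so $\bar\Pi\,\rho'\,\bar\Pi \le \tfrac{\gamma}{L}\,\bar\Pi = \gamma\poly(n)\,2^{-r}\,\bar\Pi$.

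The crux, and the step I expect to be hardest, is to upgrade the available \emph{lower} bound $\udm \ge \tfrac{1}{\poly(n)}\rho'$ into the needed \emph{upper} bound on $\bar\Pi\,\udm\,\bar\Pi$. This is exactly where the low-complexity direction is genuinely harder than \Cref{thm:high-complexity-span}, which only ever used the lower bound to pass ``small $\udm$-overlap'' down to ``small robust-span overlap'': here high-$\Knet$ states contribute to $\udm$ while contributing nothing to $\rho'$, so the desired inequality fails globally. My plan is to split $\udm = \sigma + R$ with $\sigma = \sum_{\Knet(\ket{\phi})\le r+O(\log n)} 2^{-\Knet(\ket{\phi})}\ketbra{\phi}{\phi}$ supported on the span of easy states and $R\ge 0$, then bound the two pieces on $\bar\Pi$ separately. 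The part of $R$ on directions outside the span of easy states involves only programs of length exceeding $r+O(\log n)$, each of weight at most $2^{-r}/\poly(n)$, so it cannot reach $2^{-r}$ in any single direction; and on $\bar\Pi$ the over-weighting of the very easiest states in $\sigma$ ought to be harmless, because a heavily weighted easy state carries $\rho'$-overlap at least $1/L>\gamma/L$ and should therefore sit inside $\Pi_{r,\gamma}$ rather than $\bar\Pi$, making $\bar\Pi\,\sigma\,\bar\Pi$ comparable to $\bar\Pi\,\rho'\,\bar\Pi \le \gamma\poly(n)\,2^{-r}\,\bar\Pi$. Making this last comparison rigorous, without the relevant non-orthogonal easy states commuting, is the delicate point; the robustness parameter $\gamma$ is precisely the slack one feeds in to absorb the resulting $\poly(n)$ factors, which is what produces the $\gamma\poly(n)$ term in the conclusion.
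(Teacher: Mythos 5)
Your reduction hinges on the spectral claim that $\norm{\bar\Pi\,\udm\,\bar\Pi}$ is essentially below $2^{-r}$, and this claim is false, not merely hard to prove. A unit vector in the range of $\bar\Pi$ may retain overlap of order $\sqrt{\gamma}$ with a \emph{very} easy state, whose $\udm$-eigenvalue is of order $1/\poly(n)$ rather than $2^{-r}$: take $\ket{\phi}=\sqrt{\gamma}\,\ket{0^n}+\sqrt{1-\gamma}\,\ket{h}$ with $\ket{h}$ Haar-random. Its $\rho'$-weight is about $\gamma\braket{0^n|\rho'|0^n}=O(\gamma\poly(n)/L)$, so (after normalizing its projection onto $\bar\Pi$) it lies essentially in $\bar\Pi$; yet $\braket{\phi|\udm|\phi}\geq\gamma\braket{0^n|\udm|0^n}\geq\gamma/\poly(n)$, so $\Hbar(\ket{\phi})\approx\log(1/\gamma)+O(\log n)\ll r$. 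Since in every regime of interest $\gamma\gg 2^{-r}$, states in $\bar\Pi$ can have $\Hbar$ far below $r$, and your contradiction step --- which needs every such state to have $\Hbar>r$ --- collapses; an additive $\gamma\poly(n)$ slack in the final overlap cannot absorb an error sitting inside the exponent. The proposed repair fails in both pieces. For $R$: bounding each individual program's weight by $2^{-r}/\poly(n)$ does not bound $\norm{\bar\Pi R\bar\Pi}$, because weight from unboundedly many long programs can accumulate in a single direction, and there is no quantum coding theorem converting accumulated $\udm$-weight into one short program --- that is precisely the gap between $\Hbar$ and $\Knet$. For $\sigma$: the comparison with $\rho'$ is off by a factor of up to $2^{r}$, since $\sigma$ weights the easiest states by $\approx 1/\poly(n)$ while $\rho'$ weights every easy state by $1/L\approx 2^{-r}$; the best one can conclude is $\norm{\bar\Pi\sigma\bar\Pi}\lesssim\gamma\,\poly(n)$, again far above $2^{-r}$.

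The idea you are missing is the one the paper's proof is built on: do not substitute $\Pi_{r,\gamma}$ for the top eigenspace of $\udm$ in the argument of \cref{lem:H-low-entropy}; use \emph{both} projectors, in sequence. First run the contradiction argument verbatim with $\Pi_{\low}$, the span of eigenvectors of $\udm$ with eigenvalue at least $2^{-r}$ --- there the complement genuinely has $\udm$-norm below $2^{-r}$, by the spectral theorem --- concluding that $\ket{\psi}$ is $\sqrt{2\eps}$-close to $\ket{\psi_{\low}}=\Pi_{\low}\ket{\psi}/\norm{\Pi_{\low}\ket{\psi}}$. Then bridge from $\Pi_{\low}$ to the robust span: there are at most $2^{r}$ such eigenvectors, so each is specified by its index (together with $n$) and hence has $\Knet\leq r+O(\log n)$; every one of them is therefore an easy state, giving the operator inequality $\rho'\geq\frac{1}{L}\Pi_{\low}$. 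Applying this to $\ket{\tilde\psi}=\bar\Pi\ket{\psi_{\low}}/\norm{\bar\Pi\ket{\psi_{\low}}}$ yields
\begin{equation*}
  \frac{1}{L}\,\norm{\bar\Pi\ket{\psi_{\low}}}^{2}
  \;\leq\; \braket{\tilde\psi\,|\,\rho'\,|\,\tilde\psi}
  \;\leq\; \frac{\gamma}{L},
\end{equation*}
so $\braket{\psi_{\low}|\Pi_{r,\gamma}|\psi_{\low}}\geq 1-\gamma\,\poly(n)$, and the theorem follows by combining with the $\sqrt{2\eps}$-closeness. The counting/indexing fact that large-eigenvalue eigenvectors of $\udm$ themselves have small $\Knet$ is the only available bridge between accumulated weight ($\Hbar$) and single-program complexity ($\Knet$); any argument that omits it, as yours does, cannot close this gap.
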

\begin{proof}
      Let $\sum_{i} \mu_{i} \ketbra{\phi_{i}}{\phi_{i}} = \udm$ be the spectral
  decomposition of the universal density matrix $\udm$.
  Define $\Pi_{\low}$ to be the projection onto the span of
  eigenstates $\ket{\phi_{i}}$ of $\udm$ with eigenvalue $\mu_{i} \geq
  2^{-r}$,
  and $\Pi_{\high}$ to be the projection onto the span of high
  complexity eigenstates, i.e.\ those with eigenvalue less than $2^{-r}$.
  Since $\udm$ has trace at most $1$, the number of $\mu_{i}$'s at least
  $2^{-r}$ is at most $2^{r}$, and therefore the low-complexity space defined by
  $\Pi_{\low}$ has dimension at most $2^{r}$.

  For any state $\ket{\psi}$, define
  $\ket{\psi_{\high}} = \Pi_{\high} \ket{\psi} / \norm{\Pi_{\high} \ket{\psi}}$
  and
  $\ket{\psi_{\low}} = \Pi_{\low} \ket{\psi} / \norm{\Pi_{\low} \ket{\psi}}$.

  We first show that for any state $\ket{\psi}$ satisfying
  $\Hbar^{1-\eps}(\ket{\psi}) \le r$, the trace distance between $\ket{\psi}$
  and $\ket{\psi_{\low}}$ is at most $\sqrt{2\eps}$.
  To establish this, observe that
  \begin{equation*}
    D(\ket{\psi}, \ket{\psi_{\high}}) =
    \sqrt{1 - \abs{\braket{\psi | \psi_{\high}}}^{2}} = \norm{\Pi_{\low} \ket{\psi}}.
  \end{equation*}
  If $\norm{\Pi_{\low} \ket{\psi}} \le 1-\eps$, then by the definition of
  $\Hbar^{1-\eps}(\ket{\psi})$, we would have
  \[
    \Hbar^{1-\eps}(\ket{\psi}) \ge \Hbar(\ket{\psi_{\high}}) > r,
  \]
  since $\ket{\psi_{\high}}$ is $(1-\eps)$-close to $\ket{\psi}$ and has
  $\Hbar$-complexity greater than $r$.
  This contradicts the assumption that $\Hbar^{1-\eps}(\ket{\psi}) \le r$.

  Therefore, it must be that $\norm{\Pi_{\low} \ket{\psi}} > 1 - \eps$.
  Consequently,
  \begin{equation*}
    D(\ket{\psi}, \ket{\psi_{\low}}) = \sqrt{1 - \norm{\Pi_{\low}\ket{\psi}}^{2}}
    < \sqrt{1 - {(1 - \eps)}^{2}} < \sqrt{2\eps}.
  \end{equation*}

    Thus, $\ket{\psi}$ is $\sqrt{2\eps}$-close to a state $\ket{\psi_\low}$ lies in the projector $\Pi_\low$ where $\Pi_\low$ is the projector onto eigenstates of $\mu$ whose corresponding eigenvalues are at least $2^{-r}$. These states can be encoded with Turing machine of size $r+O(\log n)$.\footnote{More precisely, we can encode the eigenstates of $\Pi_\low$ with a Turing machine of size $r+O(\log n)$ that can output a circuit that is $2^{-2^{n}}$-close to the eigenstate. Such encoding can be implemented as, for example, the number of qubits and the index of the eigenstates. The double exponential error does not affect our result so we will ignore the error afterwards.}

    Let $\rho = \E_{\Knet(\ket{\phi}) \leq r+O(\log n)} \ketbra{\phi}{\phi}$ be the uniform mixture of states with $\Knet$ at most $r+O(\log n)$. Then as all the eigenstates of $\Pi_\low$ have $\Knet$ at most $r+O(\log n)$, we can conclude that $\rho \geq \frac 1{\poly(n)} 2^{-r} \Pi_\low \geq \frac 1{\poly(n)}2^{-r} \ketbra{\psi_\low}{\psi_\low}$.

    Let $\rho = \sum_i \lambda_i \ketbra{\psi_i}{\psi_i}$ be the
    eigendecomposition of $\rho$, and $L$ be the number of states with
    $\Knet$ at most $r+O(\log n)$. Then the $\gamma$-robust span of $\rho$
    can be expressed as $\Pi_{r,\gamma} = \sum_{\lambda_i \geq
    \gamma/L}\ketbra{\psi_i}{\psi_i}$. Let $\ket{\tilde\psi_{\low}} =
    \frac{(I-\Pi)\ket{\psi_\low}}{\norm{(I-\Pi)\ket{\psi_\low}}}$, then as
    $\ket{\tilde \psi_\low}$ lies in the subspace with spectrum bounded by
    $\frac \gamma L$, so $\braket{\tilde\psi_\low|\rho|\tilde \psi_\low}
    \leq \gamma/L$.
    But on the other hand, we have $\braket{\tilde\psi_\low | \rho | \tilde
    \psi_\low} \geq \frac{1}{\poly(n)}2^{-r} |\braket{\tilde \psi_\low|
    \psi_\low}|^2$, thus we can deduce that $|\braket{\tilde \psi_\low |
    \psi_\low}|^2 \leq \frac \gamma L \poly(n)2^r \leq \gamma \poly(n)$,
    where the last line follows from the fact that $L = 2^{r+O(\log n)}$.

    So we have $\braket{\psi_\low | \Pi_{r,\gamma} | \psi_\low} = 1 -
    |(I-\Pi_{r,\gamma})\ket{\psi_\low}|^2 = 1 - |\braket{\psi_\low|\tilde
    \psi_\low}|^2\geq 1-\gamma \poly(n)$. 
    Combined with $\TD(\ketbra{\psi}{\psi}, \ketbra{\psi_\low}{\psi_\low})
    \leq \sqrt{2\eps}$,
    we get that $\braket{\psi|\Pi_{r,\gamma}|\psi} \geq 1-\gamma \poly(n) -
    \sqrt{2\eps}$.
\end{proof}

\begin{theorem}\label{thm:robust-span-low-entropy}
    For any family of states $\{\ket{\psi_k}\}$ such that
    $\braket{\psi_k|\Pi_{r,\gamma}|\psi_k} \geq 1-\eps$ for all $k$, where
    $\Pi_{r,\gamma}$ is the $\gamma$-robust span of the states with $\Knet$
    at most $r$, then the mixed state $\E_k \ketbra{\psi_k}{\psi_k}$ is
    $\sqrt \eps$-close to a state with von Neumann entropy at most $r$.
\end{theorem}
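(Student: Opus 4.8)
The plan is to follow the same template used in \cref{lem:span-low-entropy} and its $\Hbar$- and $\Umin$-analogues \cref{lem:H-low-entropy,lem:U-low-entropy}: I would replace each $\ket{\psi_k}$ by its normalized projection onto the robust span, show that this perturbs the average state by at most $\sqrt{\eps}$ in trace distance, and then bound the von Neumann entropy of the projected mixture purely through the dimension of the robust span.

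Concretely, write $\Pi := \Pi_{r,\gamma}$ and for each $k$ set $\ket{\psi_{k,\low}} = \Pi\ket{\psi_k}/\norm{\Pi\ket{\psi_k}}$. Since $\Pi$ is a projector, $\norm{\Pi\ket{\psi_k}}^2 = \braket{\psi_k|\Pi|\psi_k} \ge 1-\eps$, so $\abs{\braket{\psi_k|\psi_{k,\low}}} = \norm{\Pi\ket{\psi_k}} \ge \sqrt{1-\eps}$, and hence
\begin{equation*}
D(\ket{\psi_k}, \ket{\psi_{k,\low}}) = \sqrt{1 - \norm{\Pi\ket{\psi_k}}^2} \le \sqrt{\eps}
\end{equation*}
for every $k$. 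Applying \cref{lem:distance-mix} (joint convexity of the trace distance) to the families $\{\ketbra{\psi_k}{\psi_k}\}$ and $\{\ketbra{\psi_{k,\low}}{\psi_{k,\low}}\}$ then yields
\begin{equation*}
D\Bigl(\E_k \ketbra{\psi_k}{\psi_k},\, \E_k \ketbra{\psi_{k,\low}}{\psi_{k,\low}}\Bigr) \le \sqrt{\eps}.
\end{equation*}

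It then remains to bound the entropy of $\E_k \ketbra{\psi_{k,\low}}{\psi_{k,\low}}$. By construction each $\ket{\psi_{k,\low}}$ lies in the range of $\Pi_{r,\gamma}$, so the entire mixture is supported on this subspace and its von Neumann entropy is at most $\log(\dim \Pi_{r,\gamma})$. The key quantitative step is to argue $\dim \Pi_{r,\gamma} \le 2^r$: the $\gamma$-robust span is by definition a subspace of the span of all states with $\Knet$ at most $r$ (discarding eigenvalues below $\gamma/L$ only shrinks it), and, as in the counting used in \cref{lem:span-low-entropy}, there are at most $2^r$ such states. Hence the entropy of the projected mixture is at most $r$, and combined with the trace-distance bound above this shows that $\E_k \ketbra{\psi_k}{\psi_k}$ is $\sqrt{\eps}$-close to a state of von Neumann entropy at most $r$, as claimed. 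The only genuinely delicate point — and the step I would verify most carefully — is precisely this dimension count, since one must ensure that restricting to the robust span together with the enumeration of low-$\Knet$ states gives the bound $2^r$ rather than the cruder $2^{r+1}$ appearing in \cref{lem:span-low-entropy}; the projection argument and the appeal to joint convexity are otherwise routine and mirror the earlier low-entropy lemmas verbatim.
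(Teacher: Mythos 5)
Your proposal is correct and follows essentially the same route as the paper's proof: the paper likewise sets $\ket{\tilde\psi_k} = \Pi_{r,\gamma}\ket{\psi_k}/\norm{\Pi_{r,\gamma}\ket{\psi_k}}$, bounds $D(\ket{\psi_k},\ket{\tilde\psi_k}) = \sqrt{1-\abs{\braket{\psi_k|\tilde\psi_k}}^2} \le \sqrt{\eps}$, passes to the averages, and then bounds the entropy by the dimension of $\Pi_{r,\gamma}$. The dimension count you flag as the delicate step is resolved in the paper by exactly the argument you sketch — $\dim \Pi_{r,\gamma} \le 2^{r}$ because the robust span is contained in the support of $\E_{\Knet(\ket{\phi})\le r}\ketbra{\phi}{\phi}$, whose rank is at most the number of states of $\Knet$-complexity at most $r$ (at most $2^r$ for a prefix-free machine by Kraft's inequality, and in any case the weaker count $2^{r+1}-1$ from \cref{lem:span-low-entropy} would only shift the entropy bound to $r+1$, which is immaterial downstream).
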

\begin{proof}
    We first show that any state $\ket{\psi}$ satisfying
    $\braket{\psi|\Pi_{r,\gamma}|\psi}\geq 1-\eps$ is $\sqrt \eps$-close to a
    pure state $\ket{\tilde \psi}$ which lies in the support of $\Pi_{r,\gamma}$.
    Indeed,
    let $\ket{\tilde \psi} = \frac{\Pi_{r,\gamma}
    \ket{\psi}}{\|\Pi_{r,\gamma}\ket{\psi}\|}$, then $\braket{\psi|\tilde
    \psi} =
    \frac{\|\Pi_{r,\gamma}\ket{\psi}^2\|}{\|\Pi_{r,\gamma}\ket{\psi}\|} =
    \sqrt{\braket{\psi|\Pi_{r,\gamma}|\psi}} \geq \sqrt{1-\eps}$, and thus
    \begin{align*} 
        \| \ketbra{\psi}{\psi} - \ketbra{\tilde \psi}{\tilde
    \psi} \| = \sqrt{1 - |\braket{\psi|\tilde \psi}|^2} \leq \sqrt{\eps}.
    \end{align*}
    Thus $\E_k\ketbra{\psi_k}{\psi_k}$ is $\sqrt{\eps}$-close to $\E_k
    \ketbra{\tilde \psi_k}{\tilde \psi_k}$ with $\ket{\tilde \psi_k}$ lies
    in the support of $\Pi_{r,\gamma}$. But we know that $\Pi_{r,\gamma}$
    is of dimension at most $2^{r}$ (because the rank of
    $\E_{\Knet(\ket{\psi})\leq r}\ketbra{\psi}{\psi} $ is bounded by
$2^r$), so $\E_k \ketbra{\tilde \psi_k}{\tilde \psi_k}$ is of von Neumann
entropy at most $r$.
\end{proof}

\begin{theorem}\label{thm:robust-span-high-entropy}
    For any samplable family of states $\{\ket{\psi_k}\}$ such
    that $\braket{\psi_k|\Pi_{r,\gamma}|\psi_k} \leq \eps$ where
    $\Pi_{r,\gamma}$ is the $\gamma$-robust span of the states with $\Knet$
    at most $s$, the mixed state $\E_k \ketbra{\psi_k}{\psi_k}$ is
    $\eps$-close to a state with min-entropy at least $s-\Gamma$ for some
    $\Gamma = \omega(\log n)$.
\end{theorem}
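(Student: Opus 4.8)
The plan is to mirror the proof of \cref{lem:span-high-entropy}, replacing the full span $\Pi_s$ by the $\gamma$-robust span and inserting one extra argument to handle the fact that an easy state need not lie inside the \emph{robust} span of the easy states. First I would write $\rho = \E_k \ketbra{\psi_k}{\psi_k}$ with spectral decomposition $\rho = \sum_i \beta_i \ketbra{\varphi_i}{\varphi_i}$, and let $\tilde{\Pi}_{\low}$ be the projector onto the eigenstates $\ket{\varphi_i}$ with $\beta_i \ge 2^{\Gamma - s - 1}$. As in \cref{lem:H-high-entropy,lem:span-high-entropy}, the unit trace of $\rho$ forces there to be at most $2^{s-\Gamma+1}$ such eigenstates, so each can be described by the constant-size sampling algorithm together with $n$ and its index in decreasing order of eigenvalue; this gives $\Knet(\ket{\varphi_i}) \le (s-\Gamma+1) + O(\log n) < s$ once $\Gamma = \omega(\log n)$. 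Hence every eigenstate charged by $\tilde{\Pi}_{\low}$ is itself an easy state (a state of $\Knet$ at most $s$).

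The key new step is to show that each such easy eigenstate $\ket{\varphi_i}$ lies almost entirely in the robust span, namely $\braket{\varphi_i | \Pi_{r,\gamma} | \varphi_i} \ge 1 - \gamma$. Let $\tau = \frac{1}{L} \sum_{\Knet(\ket{\phi}) \le s} \ketbra{\phi}{\phi}$ be the uniform mixture of the $L$ easy states defining the robust span. Since $\ket{\varphi_i}$ is one of these, $\tau \ge \frac{1}{L} \ketbra{\varphi_i}{\varphi_i}$. Writing $\ket{u} = (I - \Pi_{r,\gamma}) \ket{\varphi_i}$ and $a = \norm{u}^2 = \braket{\varphi_i | (I-\Pi_{r,\gamma}) | \varphi_i}$, the vector $\ket{u}$ lies in the span of eigenvectors of $\tau$ with eigenvalue below $\gamma/L$, so $\braket{u | \tau | u} < \frac{\gamma}{L} a$; on the other hand $\tau \ge \frac{1}{L}\ketbra{\varphi_i}{\varphi_i}$ gives $\braket{u | \tau | u} \ge \frac{1}{L} \abs{\braket{\varphi_i | u}}^2 = \frac{1}{L} a^2$. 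Sandwiching the two bounds yields $a^2 < \gamma a$, hence $a < \gamma$, which is exactly the claimed overlap.

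With this in hand the rest is bookkeeping. Summing over the heavy eigenstates, $\eps \ge \Tr(\Pi_{r,\gamma} \rho) = \E_k \braket{\psi_k | \Pi_{r,\gamma} | \psi_k}$ while $\Tr(\Pi_{r,\gamma} \rho) \ge (1-\gamma) \Tr(\tilde{\Pi}_{\low} \rho)$, so together they force $\Tr(\tilde{\Pi}_{\low} \rho) \le \eps/(1-\gamma)$. Feeding this into \cref{lem:min-entropy-eigenstate} with threshold $2^{-(s-\Gamma+1)}$ then gives $\Hmin^{2\eps/(1-\gamma)}(\rho) \ge (s-\Gamma+1) + \log(1 - \eps/(1-\gamma)) \ge s - \Gamma$ whenever $\eps/(1-\gamma) \le 1/2$. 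Thus $\rho$ is within the stated distance of a state of min-entropy at least $s-\Gamma$, treating $\gamma$ as a small constant bounded away from $1$ so that the $1/(1-\gamma)$ factor is absorbed into constants.

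I expect the main obstacle to be precisely the middle step: the non-robust proof of \cref{lem:span-high-entropy} needs only that the heavy eigenstates are easy and hence lie inside $\Pi_s$, whereas here one must additionally rule out that a significant-weight easy state leaks outside the \emph{robust} span. The sandwich inequality on $\braket{u | \tau | u}$ is what makes this work, and it crucially exploits that each easy state appears in $\tau$ with weight exactly $1/L$, so its own contribution already dominates the sub-threshold part of the spectrum; everything else is a straightforward adaptation of the earlier min-entropy estimate.
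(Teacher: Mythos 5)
Your proof is correct, and it reaches the paper's conclusion by a genuinely different middle step. Both arguments share the same skeleton: take the spectral decomposition $\rho = \sum_i \beta_i \ketbra{\varphi_i}{\varphi_i}$ of $\rho = \E_k \ketbra{\psi_k}{\psi_k}$, observe that the heavy eigenvectors (those with $\beta_i \ge 2^{\Gamma-s-1}$) can be indexed by short programs so that $\Knet(\ket{\varphi_i}) < s$, and finish with \cref{lem:min-entropy-eigenstate}. They differ in how $\Tr(\tilde\Pi_{\low}\,\rho)$ is bounded. The paper works pointwise on each sampled state: it expands $\braket{\psi_k|\tilde\Pi_{\low}|\psi_k}$ into four terms via $I = \Pi_{r,\gamma} + (I-\Pi_{r,\gamma})$, bounds the three terms touching $\Pi_{r,\gamma}$ by Cauchy--Schwarz (contributing $3\sqrt{\eps}$), and bounds the remaining term by $\gamma$ using $\tilde\Pi_{\low} \le L\tau$ (with $\tau$ the uniform mixture of easy states) together with $\norm{(I-\Pi_{r,\gamma})\,\tau\,(I-\Pi_{r,\gamma})} \le \gamma/L$; this yields $\Tr(\tilde\Pi_{\low}\,\rho) \le 3\sqrt{\eps}+\gamma$ and hence $(6\sqrt{\eps}+2\gamma)$-closeness. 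You instead prove the structural fact that each heavy eigenvector, being itself an easy state, satisfies $\braket{\varphi_i|\Pi_{r,\gamma}|\varphi_i} \ge 1-\gamma$ --- your sandwich $\frac{1}{L}a^2 \le \braket{u|\tau|u} \le \frac{\gamma}{L}a$ is valid precisely because $\ket{\varphi_i}$ contributes weight exactly $1/L$ to $\tau$ --- and then compare traces in the eigenbasis of $\rho$: $\eps \ge \Tr(\Pi_{r,\gamma}\,\rho) \ge (1-\gamma)\Tr(\tilde\Pi_{\low}\,\rho)$, where the second inequality holds term by term over the heavy eigenvalues. What your route buys is a quantitatively sharper bound, namely $2\eps/(1-\gamma)$-closeness (linear in $\eps$, with no additive $\gamma$ term), which is in fact closer to the theorem's stated ``$\eps$-close'' than the paper's own $6\sqrt{\eps}+2\gamma$; what the paper's route buys is that the hypothesis only ever interacts with the sampled states $\ket{\psi_k}$ and never with the eigenbasis of $\rho$, a pattern it also uses in \cref{lem:H-high-entropy,lem:U-high-entropy}. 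Your side condition $\eps/(1-\gamma)\le 1/2$ is harmless and mirrors the condition $3\sqrt{\eps}+\gamma\le 1/2$ implicit in the paper's final inequality.
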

\begin{proof}
     Let $\rho = \sum_{i} \beta_{i} \ketbra{\varphi_{i}}{\varphi_{i}}$ be the
  spectral decomposition of $\rho$.
  We prove the lemma by showing that removing the large eigenvalues does not
  significantly change the state.
  Let $\tilde{\Pi}_\low$ be the projection onto the subspace spanned by
  $\ket{\varphi_{i}}$ with $\beta_{i} \ge 2^{\Gamma - s -1}$.
  Each of these eigenvectors of $\rho$ has a short program of length at most
  $s - \Gamma + O(\log n)$.
  That is, for all such $\ket{\varphi_{i}}$,
  we have
  $\Knet(\ket{\varphi_{i}}) \le s - \Gamma + O(\log n) $.
  
  Now, for any state $\ket{\psi}$ with $\braket{\psi|\Pi_{r,\gamma} |\psi} \le \eps$, note that $\frac{1}{L} \tilde\Pi_{\low} \leq \frac 1L \sum_{\Knet(\ket{\psi})\leq s} \ketbra{\psi}{\psi}\coloneqq \rho$, where $L$ is the number of states with $\Knet$ at most $s$. So 
  \begin{align*}
  \braket{\psi|\tilde \Pi_\low | \psi} &= \braket{\psi|(I- \Pi_{r,\gamma}) \tilde \Pi_\low (I-\Pi_{r.\gamma}) |\psi } + \braket{\psi|(I- \Pi_{r,\gamma}) \tilde \Pi_\low \Pi_{r.\gamma} |\psi } \\&+ \braket{\psi|\Pi_{r,\gamma} \tilde \Pi_\low (I-\Pi_{r.\gamma}) |\psi } + \braket{\psi| \Pi_{r,\gamma} \tilde \Pi_\low \Pi_{r.\gamma} |\psi } \\
  &\leq 3\sqrt \eps + \braket{\psi|(I-\Pi_{r,\gamma}) \tilde \Pi_\low (I-\Pi_{r,\gamma}) |\psi}\\ 
  &= 3 \sqrt \eps + L\braket{\psi|(I- \Pi_{r,\gamma}) \rho (I-\Pi_{r.\gamma}) |\psi } \\
  &\leq 3\sqrt \eps + \gamma,
  \end{align*}
  where the second line is from the fact that $\|\Pi_{r,\gamma}\ket{\psi}\|\leq \sqrt \eps$, and the last inequality is from the fact that $\|(I-\Pi_{r,\gamma})\rho(I-\Pi_{r,\gamma})\|\leq \gamma/L$.
  So as a result,
  \begin{equation*}
    \Tr (\tilde \Pi_\low \rho) = \E_k \braket{\psi_k | \tilde \Pi_\low | \psi_k}
    \le 3\sqrt{\eps}+\gamma.
  \end{equation*}
  According to~\cref{lem:min-entropy-eigenstate}, we can conclude that $\Hmin^{6\sqrt{\eps}+2\gamma}(\rho) \geq s+\log (1-3\sqrt{\eps}-\gamma) -\Gamma +1\geq s-\Gamma$.
\end{proof}
So one can write a new proof for~\cref{lem:H-low-entropy}
and~\cref{lem:H-high-entropy} (probably with different bounds, but the
bound will be negligible as long as the gap is $\omega(\log n)$): any state
family with high $\Hbar$ will almost lies in the robust span according
to~\cref{thm:high-complexity-span}, whose mixture has a high smoothed
min-entropy according to~\cref{thm:robust-span-high-entropy}. On the other
hand, any state family with low $\Hbar^{1-\eps}$ almost lies in the robust
span according to~\cref{thm:low-complexity-span}, whose mixture have low
robust von Neumann entropy according to~\cref{thm:robust-span-low-entropy}.

\section{Equivalence with hardness of state complexity over unkeyed state families}
\label{sec:unkeyed}

In \cref{sec:gaph,sec:gapu}, the state families considered in the $\GapH$ and
$\GapU$ problems are keyed-samplable, meaning that it is possible to output
the state $\ket{\psi_{k}}$ given the key $k$ as input.
In this section, we show similar characterizations of $\EFI$ using the hardness
of Kolmogorov complexity for single-copy samplable state families. As defined in~\cref{def:single-copy-samplable-family},
a single-copy samplable state family is a family of key-state pairs $\{(k, \ket{\psi_{k}})\}$ and a distribution on it
which can be sampled by running a generation unitary $G$ on two systems $A$ and
$B$ and measuring $A$ in the computational basis.
System $A$ holds the key $k$, and $B$ holds the quantum state.
This is more general than a keyed state family, as it is not guaranteed that the
state $\ket{\psi_{k}}$ can be reproduced even given the key $k$.
Such unkeyed state families are relevant in quantum cryptography for quantum
money and quantum lightning.
Our main result in this section is the following \cref{thm:unkeyed}.

We need a variant of the $\GapH$ problem where in the high complexity case, the
measure is also smoothed.
\begin{definition}[The $\DGapH$ problem] Let $r, \Delta, n$ be functions of $\lambda$. We define $\DGapH^\eps(r, r+\Delta)$ as the following (promise) problem: given a \emph{single} copy of a state $\ket{\psi}$ on some number $n$ of qubits, decide whether
\begin{itemize}
    \item $\Hbar^{1-\eps}(\ket{\psi}) \leq r$, or
  \item $\Hbar^{\eps}(\ket{\psi}) \geq r+\Delta$.
\end{itemize}
\end{definition}

\begin{theorem}\label{thm:unkeyed}
  The following two statements are equivalent:
  \begin{itemize}
    \item $\EFI$ exist.
    \item There exists a single-copy samplable family of states $(\{\ket{\psi_k}\}, \{\Dist_n\})$ (namely there exists a QPT algorithm that can sample $\ket{\psi_k}$ according to the distribution $\Dist_n$), an (inefficiently computable) function
          $r(n) \in [n]$, and efficient function $\Delta(n) = \omega(\log(n))$ such that
          $\DGapH(r, r+\Delta)$ is hard on average over $\Dist_n$.
  \end{itemize}
\end{theorem}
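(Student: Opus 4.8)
The plan is to prove both implications by adapting the arguments for keyed families in \cref{thm:efi-nu-gaph}, exploiting two features of the single-copy setting: the solving algorithm never consumes more than one copy of the input, and the smoothing in the high-complexity branch of $\DGapH$ is exactly what is needed to accommodate states produced by measuring a purification rather than by explicit short programs. So I would reuse the extractor-based machinery (\cref{alg:estimate-gaph}, \cref{lem:H-low-entropy,lem:H-high-entropy,lem:distance-mix}) essentially unchanged, paying attention only to where the smoothed promise $\Hbar^{\eps}\ge r+\Delta$ differs from the unsmoothed one $\Hbar\ge r+\Delta$.

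For the direction that $\EFI$ implies hardness, I would start from the non-uniform $\OnePRS$ of stretch $\sqrt{n}$ with $O(\log\lambda)$ advice from \cref{cor:1prs-stretch}, and form the family that outputs a $\OnePRS$ state with probability $\tfrac12$ and a uniformly random basis state $\ket{j}$ with probability $\tfrac12$, exactly as in \cref{eq:efi-gaph-states}; read as a distribution $\Dist_n$ over pairs $(k,\ket{\psi_k})$, this is single-copy samplable, since we never regenerate the state from $k$. The low branch then follows from $\Knet(\ket{\phi_{k'}})\le n-\sqrt n+O(\log\lambda)$ together with \cref{lem:knet0-Heps}, giving $\Hbar^{1-\eps}\le r$; the high branch follows because $\sum_j\braket{j|\udm|j}\le 1$ forces, by Markov, most $\ket{j}$ to have tiny $\braket{j|\udm|j}$, hence $\Hbar^{\eps}(\ket{j})\ge\Hbar(\ket{j})\ge r+\Delta$. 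Indistinguishability of the two branches is inherited from the $\OnePRS$ property, since $\E_{k'}\ketbra{\phi_{k'}}{\phi_{k'}}\approx_c I/2^{n}=\E_j\ketbra{j}{j}$ is single-copy indistinguishable. The one genuinely new point is that the entropy estimate the $\OnePRS$ construction depends on must not enter the (uniform) sampler; I would aim to absorb it entirely into the threshold $r$, which the theorem permits to be only inefficiently computable, while keeping $\Delta=\omega(\log n)$ efficient.

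For the converse I would show that, if $\EFI$ do not exist, then \cref{alg:estimate-gaph} solves $\DGapH$ on any single-copy samplable family meeting the promise. Since the algorithm reads a single copy, the argument of \cref{thm:efi-nu-gaph} transfers, except in the high branch, where the promise is now $\Hbar^{\eps}(\ket{\psi_k})\ge r+\Delta$. There I would replace each $\ket{\psi_k}$ by a witness $\ket{\phi_k}$ with $D(\ket{\psi_k},\ket{\phi_k})\le\eps$ and $\Hbar(\ket{\phi_k})\ge r+\Delta$, apply \cref{lem:H-high-entropy} to $\{\ket{\phi_k}\}$ to get that $\E_k\ketbra{\phi_k}{\phi_k}$ is close to a state of min-entropy at least $r+\Delta-\Gamma$, and then use \cref{lem:distance-mix} to transfer this (up to the constant $\eps$) to $\rho_{\high}=\E_{k\in K_{\high}}\ketbra{\psi_k}{\psi_k}$. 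Combined with \cref{lem:H-low-entropy} on the low branch, the extractor $\Ext_{\ell}$ sends $\rho_{\high}$ to within $O(\eps)$ of maximally mixed while leaving $\rho_{\low}$ far from it, so $\Ext_{\ell}(\rho)$ and the maximally mixed state form an efficiently samplable, statistically far pair; the assumed non-existence of $\EFI$ then yields a distinguisher, which (as in \cref{thm:efi-nu-gaph}) gives the required advantage for $\DGapH$.

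The main obstacle is the forward direction's insistence on a \emph{uniform} sampler: the pseudorandom low-complexity states come from the $\OnePRS$ construction, whose extractor parameters depend non-uniformly on the von Neumann entropy of the underlying entropic $\EFI$ state, and naive attempts to delete this advice (for example, sampling the entropy estimate uniformly) destroy indistinguishability, since an over-estimate makes the extractor over-extract and produce states that are simultaneously low $\Knet$-complexity and statistically far from maximally mixed. The smoothed high branch of $\DGapH$ is precisely the device that lets the analysis tolerate sampled states that are only \emph{close} to genuinely high-complexity states rather than high-complexity themselves; reconciling this with a single efficient sampler, while relegating all entropy information to the inefficiently computable threshold $r$, is the delicate step, and is the reason the theorem is stated for single-copy samplable families with an inefficient $r$ rather than for uniformly keyed families with an efficient threshold.
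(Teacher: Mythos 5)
Your converse direction is sound, and in fact more careful than the paper's own treatment: you explicitly handle the smoothed promise $\Hbar^{\eps}\ge r+\Delta$ by passing to nearby witness states and invoking \cref{lem:H-high-entropy} together with \cref{lem:distance-mix}, whereas the paper simply asserts that the argument of \cref{thm:efi-nu-gaph} carries over. The genuine gap is in the forward direction. Your construction (a non-uniform $\OnePRS$ state with probability $1/2$, a random basis state with probability $1/2$) cannot satisfy the theorem as stated, because \cref{def:single-copy-samplable-family} requires a \emph{uniform} QPT sampler, while the $\OnePRS$ generator from \cref{cor:1prs-stretch} inherently needs its $O(\log\lambda)$ advice: that advice is the entropy estimate that fixes the extractor's output length inside the generator, so it determines \emph{which states get sampled}, not merely what their complexity is. It therefore cannot be ``absorbed into the threshold $r$'' --- $r$ enters only the promise, never the sampling procedure --- and, as you yourself observe, guessing or randomizing the advice destroys indistinguishability. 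You correctly name this obstacle as ``the delicate step,'' but you leave it unresolved, and it is precisely the step the theorem needs.

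The paper's proof avoids $\OnePRS$ entirely in this direction. It takes an entropic $\EFI$ pair $(\rho_0,\rho_1)$ with entropy gap amplified to $\sqrt{n}$ (a uniform construction, by \cref{thm:EEFI}), prepares a purification of $\rho_b^{\otimes m}$, and measures the purifying register in the computational basis; the pairs of measurement outcome and post-measurement state form a uniformly single-copy samplable family. The complexity promise then comes from structural lemmas rather than from pseudorandomness: every pure state in a decomposition of $\rho_0^{\otimes m}$ has $\Hbar^{1-\eps}$ at most roughly $m(S(\rho_0)+1)$, because a low-max-entropy support is spanned by few --- hence short-program --- eigenstates (\cref{lem:hbar-hmax-bound} and \cref{cor:support-state-low}); and most pure states in any decomposition of $\rho_1^{\otimes m}$ have $\Hbar^{\eps}$ at least roughly $m(S(\rho_1)-1)$, by a Markov argument against $\Tr(\udm\,\rho)$ (\cref{lem:hbar-hmin-bound} and \cref{cor:support-state-high}). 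This is also where the smoothing in $\DGapH$ is genuinely used --- the post-measurement states of $\rho_1^{\otimes m}$ are only \emph{close} to high-complexity states --- and the only inefficiency is the numerical value $r = m\,S(\rho_0)$, which sits purely in the promise threshold, exactly as the theorem statement allows. A $\DGapH$ solver then distinguishes single copies drawn from $\rho_0^{\otimes m}$ versus $\rho_1^{\otimes m}$, contradicting $\EFI$ security. Your proposal contains none of this machinery, so its forward direction is missing the essential idea.
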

\begin{remark}
    Note that in contrast to all other quantities the function $r$ here is not necessarily efficiently computable. In our upcoming proof, the $r$ will be the entropy of $\EFI$, which does not necessarily have an efficient QPT algorithm. So this is also a non-uniform characterization of $\EFI$.
\end{remark}
Before proving the theorem, we first establish several useful lemmas.

\begin{lemma}\label{lem:U-hmax-bound}
  For any (not necessarily efficiently) sampleable family of mixed states
  $\{\rho_n\}$, where $\rho_n$ is an $n$-qubit mixed state, and for any pure
  state $\ket{\psi}$ in the support of $\rho_n$ (namely there exists $\eps>0$
  such that $\eps \ket{\psi}\bra{\psi} \leq \rho_n$), we have
  $\Umin (\ket{\psi}) \leq \Hmax (\rho_n)+\log n + C$
  for some constant $C$.
\end{lemma}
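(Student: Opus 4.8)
The plan is to mirror the eigenvector-indexing argument already used in \cref{lem:U-high-entropy} and \cref{lem:H-high-entropy}, and then invoke the operator lower bound on $\udm$ from \cref{lem:U-bound}. First I would take the spectral decomposition $\rho_n = \sum_i \beta_i \ketbra{\varphi_i}{\varphi_i}$, which has exactly $\rank(\rho_n) = 2^{\Hmax(\rho_n)}$ nonzero terms by definition of $\Hmax$. The key observation is that \emph{every} eigenvector $\ket{\varphi_i}$ admits a short description: since $\{\rho_n\}$ is sampleable by some fixed (possibly inefficient) algorithm of constant description length $c$, a Turing machine can, on input $n$ together with an index $i$, compute $\rho_n$, diagonalize it, and output a circuit preparing the $i$-th eigenvector. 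Encoding $n$ costs $\log n$ bits and the index $i$ costs at most $\log(\rank(\rho_n)) = \Hmax(\rho_n)$ bits; crucially the bit-length of $i$ is determined by $n$ (as $\rho_n$, and hence its rank, is fixed once $n$ is known), so no extra self-delimiting overhead is needed for $i$ beyond a prefix-free encoding of $n$. Hence $\Knet(\ket{\varphi_i}) \le \Hmax(\rho_n) + \log n + c$ for all $i$.

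With this bound in hand, I would lower bound the universal density matrix by the eigenbasis of $\rho_n$:
\[
  \udm \;\ge\; \sum_i 2^{-\Knet(\ket{\varphi_i})}\ketbra{\varphi_i}{\varphi_i}
  \;\ge\; 2^{-(\Hmax(\rho_n) + \log n + c)} \sum_i \ketbra{\varphi_i}{\varphi_i}
  \;=\; 2^{-(\Hmax(\rho_n) + \log n + c)}\,\Pi,
\]
where $\Pi$ is the projector onto the support of $\rho_n$. Since $\ket{\psi}$ lies in the support of $\rho_n$ — the hypothesis $\eps\ketbra{\psi}{\psi} \le \rho_n$ forces $\ket{\psi}$ to have no component orthogonal to the nonzero eigenvectors — we have $\ketbra{\psi}{\psi} \le \Pi$, and therefore $\udm \ge 2^{-(\Hmax(\rho_n)+\log n + c)}\ketbra{\psi}{\psi}$ as operators. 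Applying \cref{lem:U-bound} with $\kappa = \Hmax(\rho_n) + \log n + c$ then yields $\Umin(\ket{\psi}) \le \Hmax(\rho_n) + \log n + c$, and setting $C = c$ completes the argument.

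The main subtlety — and the step I expect to require the most care — is the producibility of the eigenvectors. The diagonalization is an unbounded but computable routine, and a finite gate set can only prepare each $\ket{\varphi_i}$ up to some approximation; strictly, $\udm$ is supported on exactly-producible states, so the displayed lower bound holds with approximants $\ket{\phi_i}$ in place of $\ket{\varphi_i}$. Following the convention already adopted in the footnote of \cref{thm:low-complexity-span}, I would take these approximations to within $2^{-2^n}$ in trace distance, so that $\sum_i \ketbra{\phi_i}{\phi_i}$ differs from $\Pi$ by a negligible operator and the resulting double-exponentially small errors can be ignored. If one prefers to be fully rigorous rather than invoke this convention, the negligible deficit can be absorbed using the fact that $\udm \ge 2^{-n - O(\log n)} I$ on the $n$-qubit space (every computational basis state has $\Knet \le n + O(\log n)$); taking a convex combination of this bound with the approximate one pays only an additive $O(1)$ in $\kappa$ and restores the clean operator inequality $\udm \ge 2^{-\kappa'}\ketbra{\psi}{\psi}$, which is all that \cref{lem:U-bound} needs.
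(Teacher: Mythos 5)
Your proposal is correct and takes essentially the same route as the paper's own proof: spectrally decompose $\rho_n$, bound $\Knet$ of each eigenvector by $\Hmax(\rho_n) + \log n + C$ by indexing into the at most $2^{\Hmax(\rho_n)}$ eigenstates of the (fixed, possibly inefficient) sampler's output, conclude $\udm \ge 2^{-(\Hmax(\rho_n)+\log n + C)}\Pi$ for the support projector $\Pi$, and apply \cref{lem:U-bound}. Your closing paragraph on the gate-set approximation of eigenvectors is a level of care the paper's proof silently omits, but it does not change the argument.
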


\begin{proof}
  Let $r$ be the max-entropy of $\rho_n$, and let
  $\rho_n = \sum \lambda_i \ket{\psi_i}\bra{\psi_i}$ be a spectral
  decomposition.
  As the max-entropy of $\rho_n$ is $r$, there are at most $2^r$ different
  $\ket{\psi_i}$, so we can encode all these eigenstates with a program of size
  $r+\log n + C$ for some constant $C$ dependent on the state generation
  algorithm.
  We can therefore bound the $\Knet^0$ of all the eigenstates of $\rho_n$:
  $\Knet^0(\ket{\psi_i})\leq r+\log n+C$.
  So the projector $\Pi = \sum_{i} \ket{\psi_i}\bra{\psi_i}$ spanned by the
  support of $\rho_n$ can be bounded as $2^{-(r+\log n +C)}\Pi \leq \udm$.
  As a result, any state $\ket{\psi}$ in the support of $\rho_n$ also satisfies
  $2^{-(r+\log n+C)}\ket{\psi}\bra{\psi} \leq 2^{-(r+\log n +C)}\Pi \leq \udm$.
  By \cref{lem:U-bound}, we have $\Umin(\psi) \le r + \log n + C$.
\end{proof}

\begin{lemma}\label{lem:hbar-hmax-bound}
  For any (not necessarily efficiently) sampleable family of mixed states
  $\{\rho_n\}$, where $\rho_n$ is an $n$-qubit mixed state, and for any pure
  state $\ket{\psi}$ in the support of $\rho_n$ (namely there exists $\delta>0$
  such that $\delta \ket{\psi}\bra{\psi} \leq \rho_n$), we have
  $\Hbar^{1-\eps} (\ket{\psi}) \leq \Hmax (\rho_n)+\log n + \log \frac 1 \eps +C$
  for some constant $C$.
\end{lemma}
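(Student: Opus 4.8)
The plan is to obtain the bound by chaining two results already established in the excerpt, rather than redoing the support-based encoding argument directly. The key observation is that a direct argument — bounding $\Knet^0$ of the eigenstates of $\rho_n$ and deducing $\udm \ge 2^{-(\Hmax(\rho_n)+\log n + C)}\Pi$ on the support projector $\Pi$, as in the proof of \cref{lem:U-hmax-bound} — only controls $\Hbar(\ket{\psi})$ itself, since it crucially uses $\Pi\ket{\psi}=\ket{\psi}$. It does \emph{not} control $\Hbar^{1-\eps}(\ket{\psi})$, which is a maximum of $\Hbar$ over all states that are $(1-\eps)$-close to $\ket{\psi}$; such nearby states need not lie in the support of $\rho_n$, so their $\Hbar$ is not bounded by the same encoding. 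The smoothing is precisely what forces the detour through $\Umin$.

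Concretely, I would first invoke \cref{lem:U-hmax-bound}, which already performs the support/encoding step and gives $\Umin(\ket{\psi}) \le \Hmax(\rho_n) + \log n + C$ for any $\ket{\psi}$ in the support of $\rho_n$. I would then apply the second inequality of \cref{lem:U-lower-bound}, namely $\Hbar^{1-\eps}(\ket{\psi}) \le \Umin(\ket{\psi}) - \log\eps$, which is exactly the duality relating the smoothed G\'acs complexity to the (unsmoothed) relative-min-entropy complexity $\Umin$. Chaining the two bounds yields $\Hbar^{1-\eps}(\ket{\psi}) \le \Hmax(\rho_n) + \log n + C - \log\eps = \Hmax(\rho_n) + \log n + \log(1/\eps) + C$, as claimed.

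There is essentially no remaining obstacle, since both ingredients are proved earlier; the only thing to verify is that the hypotheses line up — \cref{lem:U-hmax-bound} requires $\ket{\psi}$ to be in the support of $\rho_n$ (i.e.\ $\delta\ket{\psi}\bra{\psi}\le\rho_n$ for some $\delta>0$), which is exactly the hypothesis here. The $\log(1/\eps)$ term is the price of smoothing, and it enters through the Cauchy--Schwarz step underlying \cref{lem:U-lower-bound}: any $\ket{\phi}$ with $D(\ket{\psi},\ket{\phi})\le 1-\eps$ satisfies $\abs{\braket{\psi|\phi}}^2 \ge \eps$, and the inequality $\braket{\phi|\udm|\phi}\braket{\psi|\udm^{-1}|\psi}\ge \abs{\braket{\psi|\phi}}^2$ then transfers the bound on $\Umin(\ket{\psi})$ to a uniform bound on $\Hbar(\ket{\phi})$ over the whole $(1-\eps)$-ball, giving the maximization in $\Hbar^{1-\eps}$ for free.
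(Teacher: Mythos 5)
Your proposal is correct and matches the paper's proof exactly: the paper also establishes this lemma by chaining \cref{lem:U-hmax-bound} with the second inequality of \cref{lem:U-lower-bound}. Your added explanation of why the smoothing forces the detour through $\Umin$ is a sound reading of the same argument.
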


\begin{proof}
  It follows from \cref{lem:U-hmax-bound,lem:U-lower-bound}.
\end{proof}

\begin{corollary}\label{cor:support-state-low}
  For any (not necessarily efficiently) samplable family of mixed states
  $\{\rho_n\}$, where $\rho_n$ is an $n$-qubit state, and for
  $m \geq 36 n^2 \log^2 n$ and any decomposition of
  $\rho^{\otimes m} = \sum p_k \ket{\psi_k}\bra{\psi_k}$ we have that except
  with probability at most $2^{-O(\log^2 n)}$ over $\ket{\psi_{k}}$ with
  probability $p_{k}$,
  $\Hbar^{1-\eps}(\ket{\psi_{k}}) \le m(S(\rho)+1) + \log n + \log \frac{1}{2\eps} + C$.
\end{corollary}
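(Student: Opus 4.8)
The plan is to reduce everything to \cref{lem:hbar-hmax-bound}, which already bounds $\Hbar^{1-\eps}$ of any pure state lying in the \emph{support} of a mixed state of small max-entropy. The difficulty is that the pure states $\ket{\psi_k}$ of interest lie in the support of $\rho^{\otimes m}$ itself, whose max-entropy $\Hmax(\rho^{\otimes m})$ may be as large as $mn$. So I would first replace $\rho^{\otimes m}$ by a nearby low-rank state, and then argue that almost all of the weight $p_k$ sits on states that are extremely close to the support of that replacement.

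First I would invoke the smoothed max-entropy bound of \cref{cor:min-entropy-explicit-bound} with trivial $B$ (via the Remark following it). Taking smoothing parameter $\eps_0 = 2^{-\log^2 n}$, the hypotheses $m \ge 36 n^2 \log^2 n$ and $n > 10$ yield
\begin{equation*}
  \Hmax^{\eps_0}(\rho^{\otimes m}) \le m S(\rho) + 6n\sqrt{m\log(1/\eps_0)} = mS(\rho) + 6n\sqrt{m}\,\log n \le m(S(\rho)+1),
\end{equation*}
the last inequality being exactly where $m \ge 36 n^2\log^2 n$ is used. By definition of $\Hmax^{\eps_0}$ there is a subnormalized $\rho'$ with $P(\rho^{\otimes m},\rho')\le\eps_0$ and $\log\rank(\rho') \le m(S(\rho)+1)$; let $\Pi$ be the projector onto its support.

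Next I would control the bad weight. Since $D \le P$ and $\Tr(\Pi^\perp\rho')=0$, we get $\Tr(\Pi^\perp\rho^{\otimes m}) \le 2\eps_0$, i.e.\ $\E_{k}\norm{\Pi^\perp\ket{\psi_k}}^2 \le 2\eps_0$, the expectation being over $k$ with probability $p_k$. Markov's inequality then gives that, except with probability at most $2\eps_0/\sqrt{\eps_0} = 2\sqrt{\eps_0} = 2^{-\Theta(\log^2 n)}$, a sampled state satisfies $\norm{\Pi^\perp\ket{\psi_k}}^2 \le \sqrt{\eps_0}$, so $\ket{\psi_k}$ is within trace distance $s := \eps_0^{1/4} = 2^{-\log^2 n/4}$ of $\ket{\tilde\psi_k} := \Pi\ket{\psi_k}/\norm{\Pi\ket{\psi_k}} \in \mathrm{supp}(\rho')$. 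For such a good $k$ I would transfer the bound using the robustness of $\Hbar^{1-\eps}$: by the triangle inequality on trace distance, any $\ket{\phi}$ with $D(\ket{\psi_k},\ket{\phi})\le 1-\eps$ obeys $D(\ket{\tilde\psi_k},\ket{\phi}) \le 1-(\eps-s)$, whence $\Hbar^{1-\eps}(\ket{\psi_k}) \le \Hbar^{1-(\eps-s)}(\ket{\tilde\psi_k})$. Since $\ket{\tilde\psi_k}\in\mathrm{supp}(\rho')$, \cref{lem:hbar-hmax-bound} applied to $\rho'$ bounds the right side by $\log\rank(\rho') + \log(mn) + \log\frac{1}{\eps-s} + C' \le m(S(\rho)+1) + \log\tfrac{1}{2\eps} + O(\log n) + C'$, using $s = 2^{-\log^2 n/4}\ll \eps$ so that $\log\frac{1}{\eps-s} = \log\frac1\eps + O(1)$. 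Absorbing the $O(\log n)$ terms and additive constants into the stated $\log n + C$ gives the claim.

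The hard part will be exactly this last transfer step: $\ket{\psi_k}$ is only \emph{close} to, not inside, $\mathrm{supp}(\rho')$, so \cref{lem:hbar-hmax-bound} cannot be applied to it directly. One must check both that the smoothing radius of $\Hbar^{1-\eps}$ degrades only by the negligible amount $s$ (kept tiny by choosing the Markov threshold $\sqrt{\eps_0}$), and that the interlocking choices of $\eps_0$, $s$, and $m$ simultaneously make the failure probability genuinely $2^{-\Theta(\log^2 n)}$ and keep the max-entropy overhead $6n\sqrt{m}\log n$ below $m$; all of this closes precisely at $m = 36 n^2\log^2 n$.
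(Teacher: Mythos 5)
Your proposal is correct and follows essentially the same route as the paper's proof: both smooth $\rho^{\otimes m}$ via \cref{cor:min-entropy-explicit-bound} with smoothing parameter $2^{-\log^2 n}$, project onto the support of the resulting low-rank state, use Markov to argue that all but a $2^{-\Omega(\log^2 n)}$ fraction of the $\ket{\psi_k}$ are close to their projections, apply \cref{lem:hbar-hmax-bound} to the projected states, and transfer the bound back through the robustness of $\Hbar^{1-\eps}$. If anything, your transfer step is slightly cleaner than the paper's: by taking the Markov threshold $\sqrt{\eps_0}$ you only degrade the smoothing radius by the negligible amount $s$ (with the triangle inequality applied in the correct direction), whereas the paper degrades $\eps$ to $2\eps$ and writes that containment of trace-distance balls in the reversed direction, a slip that is harmless only because the discrepancy is absorbable into the additive constant $C$.
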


\begin{proof}
  Using \cref{cor:min-entropy-explicit-bound} with $\xi = 2^{-\log^2 n}$, we
  deduce that
  \begin{equation*}
    \Hmax^\xi(\rho^{\otimes m}) \leq m \Bigl(S(\rho) +
    6n \sqrt \frac{\log 1/\xi }{m}\,\Bigr) < m \bigl(S(\rho) + 1\bigr).
  \end{equation*}
  Thus, $\rho^{\otimes m}$ is $\xi$-close to a semi-density matrix $\rho'$ with
  max-entropy at most $m \bigl(S(\rho)+1\bigr)$.
  Let $\Pi$ be the projector onto the span of $\rho'$.
  Then we have
  \begin{equation}\label{eq:support-state}
    \begin{split}
      \frac{\Pi \rho^{\otimes m} \Pi}{\Tr(\Pi \rho^{\otimes m})}
      & = \frac{1}{\Tr(\Pi \rho^{\otimes m})}
        \sum_{k} p_k \Pi \ketbra{\psi_k}{\psi_k}\Pi\\
      & = \sum_{k} p_k \frac{\norm{\Pi \ket{\psi_k}}^2}{\Tr(\Pi \rho^{\otimes m})}
        \frac{\Pi\ket{\psi_k}}{\norm{\Pi\ket{\psi_k}}}
        \frac{\bra{\psi_k}\Pi}{\norm{\Pi\ket{\psi_k}}}.
    \end{split}
  \end{equation}
  This gives a decomposition of $\Pi \rho^{\otimes m} \Pi$ composed of states
  $\frac{\Pi\ket{\psi_k}}{\norm{\Pi\ket{\psi_k}}}$.
  Therefore, by~\cref{lem:hbar-hmax-bound}, we have
  $\Hbar^{1-2\eps}\left(\frac{\Pi\ket{\psi_k}}{\norm{\Pi\ket{\psi_k}}}\right)
  \leq r+\log n + C + \log \frac{1}{2\eps}$
  for any $\ket{\psi_k}$.

  And we know that
  $\Tr \Pi \rho = \sum p_k \braket{\psi_k|\Pi|\psi_k}
  \geq 1 - \TD(\rho, \rho') = 1-2^{-\log^2n}$,
  so except for negligible probability, we have that
  $\norm{\Pi\ket{\psi_{k}}}^2 \geq 1- \eps^{2}$ and
  \begin{equation*}
    D\Bigl(\ket{\psi_{k}}, \frac{\Pi\ket{\psi_k}}{\norm{\Pi\ket{\psi_k}}}\Bigr) =
    \sqrt{1 - \norm{\Pi \ket{\psi_{k}}}^{2}} \le \eps.
  \end{equation*}
  Therefore, except for negligible probability,
  $\Hbar^{1 - \eps}(\ket{\psi_k}) \leq m(S(\rho)+1) + \log n + \log\frac{1}{2\eps} + C$.
\end{proof}

\begin{lemma}\label{lem:hbar-hmin-bound}
  For any $\delta > 0$, any state ensemble $\{(p_{k}, \ket{\psi_{k}})\}$ of
  $n$-qubit states, and $\rho = \sum_{k} p_{k} \ketbra{\psi_{k}}{\psi_{k}}$, we
  have $\Hbar^0(\ket{\psi_k}) \geq \Hmin(\rho)-\delta$ with probability at least
  $2^{-\delta}$ over $\ket{\psi_k}$.
\end{lemma}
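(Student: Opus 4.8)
The plan is to reduce the statement to a one-line Markov bound applied to the nonnegative random variable $X_k := \braket{\psi_k|\udm|\psi_k}$, exploiting the identity $\Hbar^0(\ket{\psi_k}) = -\log X_k$ together with the fact that the min-entropy of $\rho$ controls the average of $X_k$ against the universal density matrix $\udm$.

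First I would set $h := \Hmin(\rho)$, so that by definition of min-entropy we have the operator inequality $\rho \le 2^{-h} I$. The key computation is to bound the expectation of $X_k$ over the ensemble. Since $\rho = \sum_k p_k \ketbra{\psi_k}{\psi_k}$, linearity of the trace gives $\E_k[X_k] = \E_k \braket{\psi_k|\udm|\psi_k} = \Tr(\udm\rho)$. Because $\udm$ is positive semidefinite and $2^{-h}I - \rho \ge 0$, the trace of the product of these two positive operators is nonnegative, whence $\Tr(\udm\rho) \le 2^{-h}\Tr(\udm)$; and since $\udm$ is a semi-density matrix we have $\Tr(\udm) \le 1$. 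Combining, $\E_k[X_k] \le 2^{-h}$.

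Next I would apply Markov's inequality to $X_k \ge 0$ at the threshold $2^{-(h-\delta)}$: this gives $\Pr_k\bigl[X_k \ge 2^{-(h-\delta)}\bigr] \le \E_k[X_k]/2^{-(h-\delta)} \le 2^{-h}/2^{-(h-\delta)} = 2^{-\delta}$. Translating through $\Hbar^0(\ket{\psi_k}) = -\log X_k$, the complementary event $X_k < 2^{-(h-\delta)}$ is exactly $\Hbar^0(\ket{\psi_k}) > h - \delta$, so that $\Pr_k\bigl[\Hbar^0(\ket{\psi_k}) > \Hmin(\rho) - \delta\bigr] \ge 1 - 2^{-\delta}$. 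In the parameter regime relevant for the application, where $\delta$ is taken to be $\omega(\log n)$ and in particular $\delta \ge 1$, we have $1 - 2^{-\delta} \ge \tfrac12 \ge 2^{-\delta}$, which yields the claimed lower bound of $2^{-\delta}$ (and is in fact considerably stronger).

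I do not expect a serious obstacle here, as the heart of the argument is a single Markov estimate once the first moment is controlled. The only two points requiring a little care are: (i) justifying $\Tr(\udm\rho) \le 2^{-h}$, which is where the definition of $\Hmin(\rho)$ via $\rho \le 2^{-h}I$ and the sub-normalization $\Tr(\udm) \le 1$ of the universal density matrix both enter; and (ii) observing that the Markov bound actually produces the probability $1 - 2^{-\delta}$, so that the stated bound $2^{-\delta}$ holds comfortably (with room to spare) throughout the range of $\delta$ used in the subsequent reductions.
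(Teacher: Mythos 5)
Your proposal is correct and is essentially identical to the paper's own proof: set $r = \Hmin(\rho)$ so that $\rho \le 2^{-r}I$, bound $\E_k\braket{\psi_k|\udm|\psi_k} = \Tr(\udm\rho) \le 2^{-r}\Tr(\udm) \le 2^{-r}$, and apply Markov's inequality. Your closing observation is also well taken: the paper's proof in fact establishes the stronger probability bound $1-2^{-\delta}$ (which is what its later application in the unkeyed-family corollary uses), so the ``$2^{-\delta}$'' in the lemma statement is best read as a typo for $1-2^{-\delta}$, and your remark that the stated bound follows whenever $\delta \ge 1$ correctly bridges that discrepancy.
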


\begin{proof}
  Let $\Hmin(\rho) = r$; then we have $\rho \leq 2^{-r} I$.
  Consequently,
  \begin{equation*}
    \sum p_k \braket{\psi_k|\udm|\psi_k} =
    \Tr(\udm \rho) \leq 2^{-r} \Tr(\udm) \le 2^{-r}.
  \end{equation*}
  Thus, by Markov's inequality, with probability at least $1 - 2^{-\delta}$
  over the choice of $\ket{\psi_k}$,
  $\braket{\psi_k|\udm|\psi_k} \leq 2^{-r + \delta}$.
  In other words, $\Hbar^0(\ket{\psi_k}) \geq r - \delta$ with probability at
  least $1 - 2^{-\delta}$.
\end{proof}

\begin{corollary}\label{cor:support-state-high}
  For any $n$-qubit state $\rho$ and any $m\geq 36 n^2 \log \frac{1}{\eps}$ and
  any decomposition $\rho^{\otimes m} = \sum p_k \ket{\psi_k}\bra{\psi_k}$,
  except with probability $2^{-\log^2n}+2^{-\delta}$ we have
  $\Hbar^\eps(\ket{\psi_k}) \geq m(S(\rho)-1) - \delta$.
\end{corollary}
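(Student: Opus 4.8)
The plan is to prove a lower bound on $\Hbar^\eps$ for the typical pure states appearing in an arbitrary decomposition of $\rho^{\otimes m}$, as the exact dual of \cref{cor:support-state-low}. The basic tool is \cref{lem:hbar-hmin-bound}: a pure state drawn from a decomposition of a mixed state $\sigma$ has $\Hbar^0 \gtrsim \Hmin(\sigma)$ with high probability, via the Markov bound $\sum_k p_k \braket{\psi_k|\udm|\psi_k} = \Tr(\udm\,\sigma) \le 2^{-\Hmin(\sigma)}$. Since $\Hbar^\eps(\ket\psi) \ge \Hbar^0(\ket\psi)$ by \cref{def:hbar-eps}, it would suffice to apply this with $\sigma = \rho^{\otimes m}$ and $\Hmin(\rho^{\otimes m}) \approx m S(\rho)$. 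The obstacle is precisely that the \emph{raw} min-entropy $\Hmin(\rho^{\otimes m})$ can be far smaller than $m S(\rho)$ (a single large eigenvalue of $\rho$ already forces it down), so we must instead argue about a \emph{truncation} of $\rho^{\otimes m}$ whose raw min-entropy is large; this smoothing is exactly what forces the complexity measure in the statement to be $\Hbar^\eps$ rather than $\Hbar^0$.

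First I would invoke \cref{cor:min-entropy-explicit-bound} with the trivial conditioning system, at a smoothing parameter $\xi$, to obtain $\Hmin^\xi(\rho^{\otimes m}) \ge m S(\rho) - 6n\sqrt{m\log(1/\xi)} \ge m(S(\rho)-1)$ whenever $m \ge 36 n^2 \log(1/\xi)$; set $r = m(S(\rho)-1)$. Let $\Pi$ be the projector onto the eigenvectors $\ket{e_i}$ of $\rho^{\otimes m}$ with eigenvalue $\lambda_i \le 2^{-(r-2)}$, so that $\Pi\,\rho^{\otimes m}\,\Pi \le 2^{-(r-2)} I$ and the normalized truncation has min-entropy at least $r-O(1)$. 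The key quantitative step is to bound the ``heavy'' mass $w = \Tr\bigl((I-\Pi)\rho^{\otimes m}\bigr)$. Since the diagonal truncation $\min(\lambda_i, 2^{-r})$ maximizes fidelity with $\rho^{\otimes m}$ among sub-normalized states $\le 2^{-r} I$, it also lies within purified distance $\xi$, so its fidelity loss is at most $\xi^2$; writing $\lambda_i = 2^{-r} c_i$, each heavy eigenvalue ($c_i \ge 4$) contributes a fidelity defect $2^{-r}(c_i - \sqrt{c_i}) \ge \frac{1}{2}\lambda_i$, which yields $w \le 2\xi^2$ after the two-bit threshold shift.

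With the truncation in hand, define $\ket{\phi_k} = \Pi\ket{\psi_k}/\norm{\Pi\ket{\psi_k}}$ and the truncated mixture $\hat\rho = \Pi\,\rho^{\otimes m}\,\Pi / \Tr(\Pi\rho^{\otimes m})$, which decomposes as $\sum_k q_k \ketbra{\phi_k}{\phi_k}$ with $q_k = p_k \norm{\Pi\ket{\psi_k}}^2 / (1-w)$ and has $\Hmin(\hat\rho) \ge r - O(1)$. Applying \cref{lem:hbar-hmin-bound} to $\hat\rho$, all but a $2^{-\delta}$ $q$-fraction of the $\ket{\phi_k}$ obey $\Hbar^0(\ket{\phi_k}) \ge \Hmin(\hat\rho) - \delta \ge m(S(\rho)-1) - \delta$. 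Two elementary transfers then finish the argument: (a) since $\E_{k\sim p}\norm{(I-\Pi)\ket{\psi_k}}^2 = w \le 2\xi^2$, Markov's inequality gives $D(\ket{\psi_k},\ket{\phi_k}) = \norm{(I-\Pi)\ket{\psi_k}} \le \eps$ for all but an $O(\xi^2/\eps^2)$ $p$-fraction of $k$; and (b) for these close $k$, which satisfy $\norm{\Pi\ket{\psi_k}}^2 \ge 1-\eps^2 \ge \frac{1}{2}$, one has $p_k \le 2 q_k$, so the low-$\Hbar^0$ set of $q$-mass at most $2^{-\delta}$ has $p$-mass at most $2^{-\delta+1}$ among them. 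For every $k$ outside both bad sets we conclude $\Hbar^\eps(\ket{\psi_k}) \ge \Hbar^0(\ket{\phi_k}) \ge m(S(\rho)-1) - \delta$, since $\ket{\phi_k}$ is $\eps$-close to $\ket{\psi_k}$.

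The main obstacle, as noted, is the gap between raw and smoothed min-entropy; the projector $\Pi$ resolves it by supplying an \emph{explicit pointwise} correspondence $\ket{\psi_k}\mapsto\ket{\phi_k}$ between the decomposition of $\rho^{\otimes m}$ and that of its truncation (two nearby mixed states have no canonical matching of their decompositions, so \cref{lem:hbar-hmin-bound} cannot be applied to a decomposition of the smoothed state directly). The only delicate point is matching the failure probability $2^{-\log^2 n}$ to the smoothing: the closeness step (a) costs $O(\xi^2/\eps^2)$, so one takes $\xi = \eps\cdot 2^{-\Theta(\log^2 n)}$, which changes the threshold on $m$ only in lower-order $n^2\log^2 n$ terms; all remaining bookkeeping (the two-bit shift, the $p_k \le 2q_k$ bound, the additive $O(1)$ min-entropy losses) is absorbed into the slack $\delta$ and is routine.
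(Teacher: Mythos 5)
Your proposal follows the same route as the paper's proof: lower-bound $\Hmin^{\xi}(\rho^{\otimes m})$ via \cref{cor:min-entropy-explicit-bound}, truncate $\rho^{\otimes m}$ onto its small eigenvalues with a projector $\Pi$, apply \cref{lem:hbar-hmin-bound} to the induced decomposition $\bigl\{\Pi\ket{\psi_k}/\norm{\Pi\ket{\psi_k}}\bigr\}$ of the truncated state (the paper's \cref{eq:support-state}), and transfer back to $\ket{\psi_k}$ by Markov, using that $\Hbar^{\eps}(\ket{\psi_k})$ is a maximum over the $\eps$-ball (\cref{def:hbar-eps}). You in fact try to \emph{prove} the one step the paper merely asserts (the existence of a projector with $\Tr(\Pi\rho^{\otimes m})$ close to $1$ whose normalized truncation has raw min-entropy close to $m(S(\rho)-1)$), and your $p_k \le 2q_k$ bookkeeping for converting the bad set's $q$-mass into $p$-mass is a detail the paper glosses over. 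However, the quantitative justification of that truncation step fails as written, in two places.

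First, the claim that the diagonal truncation $\min(\lambda_i, 2^{-r})$ maximizes fidelity among sub-normalized states $\sigma \le 2^{-r} I$ is false: for a light eigenvalue $\lambda_i < 2^{-r}$, raising $\mu_i$ above $\lambda_i$ strictly increases $\sqrt{\lambda_i \mu_i}$, and the true maximizer is a water-filling solution $\min(c\lambda_i, 2^{-r})$ with $c \ge 1$. What you actually need (and what is true, by pinching the optimal smoother with the eigenprojectors of $\rho^{\otimes m}$, which preserves both the constraint $\sigma \le 2^{-r}I$ and, by data processing, the fidelity) is only that \emph{some} diagonal $\sigma \le 2^{-r}I$ has fidelity at least $\sqrt{1-\xi^2}$ with $\rho^{\otimes m}$. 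Second, even granting a diagonal smoother with fidelity defect $1 - \sum_i \sqrt{\lambda_i\mu_i} \le \xi^2$, your summation ignores that light eigenvalues can contribute \emph{negative} defect (again because $\mu_i$ may exceed $\lambda_i$ there): Cauchy--Schwarz only gives $\sum_{i\,\mathrm{light}} \bigl(\lambda_i - \sqrt{\lambda_i \mu_i}\bigr) \ge (1-w) - \sqrt{1-w} \approx -w/2$, which with your two-bit threshold (heavy means $c_i \ge 4$, heavy-side gain $\ge w/2$) cancels exactly, so $w \le 2\xi^2$ does not follow — the inequality obtained is vacuous. The repair is local: truncate at a larger constant multiple, e.g.\ $16 \cdot 2^{-r}$, so the heavy-side defect is at least $\bigl(1 - \tfrac14\bigr) w$ and one gets $w \le 4\xi^2$; alternatively, settle for $w = O(\xi)$ (which follows from trace distance alone at the cost of a $\log(1/\xi)$-bit threshold shift, absorbable into $\delta$) and re-tune $\xi$ against $\eps$ and the target failure probability $2^{-\log^2 n}$. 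With either fix the rest of your argument goes through and coincides with the paper's.
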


\begin{proof}
  According to~\cref{cor:min-entropy-explicit-bound}, setting
  $\xi = 2^{-\log^2 n}$, we have
  \begin{equation*}
    \Hmin^\xi(\rho^{\otimes m}) \geq m \Bigl(S(\rho)
    -6n \sqrt{\frac{\log 1/\xi}{m}}\, \Bigr) > m(S(\rho)- 1).
  \end{equation*}
  Thus $\rho^{\otimes m}$ is $\xi$-close to a state $\rho'$ with min-entropy at
  least $m(S(\rho)- 1)$.
  Thus we can find a projector $\Pi$ spanned by eigenstates of $\rho$ such that
  $\Tr \Pi\rho \geq 1-\xi$ and $\frac{\Pi\rho \Pi}{\Tr \Pi\rho}$ has min-entropy
  at least $m(S(\rho)- 1)$.
  Then, by the expansion in \cref{eq:support-state}, we have a decomposition of
  $\frac{\Pi\rho\Pi}{\Tr \Pi\rho}$ composed of
  $\frac{\Pi \ket{\psi_k}}{\norm{\Pi\ket{\psi_k}}}$.
  Thus according to~\cref{lem:hbar-hmin-bound}, we have that except with
  probability $2^{-\delta}$,
  $\Hbar^0(\frac{\Pi\ket{\psi_k}}{\norm{\Pi\ket{\psi_k}}})
  \geq m(S(\rho)-1) - \delta$.
  The distance
  $D \bigl(\ket{\psi_{k}}, \frac{\Pi\ket{\psi_k}}{\norm{\Pi\ket{\psi_k}}} \bigr)
  = \sqrt{1 - \norm{\Pi\ket{\psi_k}}^{2}}$,
  thus by Markov's inequality,
  \begin{equation*}
    \Pr\biggl[ D \Bigl(\ket{\psi_{k}},
    \frac{\Pi\ket{\psi_k}}{\norm{\Pi\ket{\psi_k}}} \Bigr) \le \eps \biggr] =
    \Pr\bigl[\norm{\Pi\ket{\psi_{k}}}^{2} \ge 1 - \eps^{2}\bigr]
    \ge  1 - \frac{\xi}{\eps^2}.
  \end{equation*}
  That is, except for negligible probability,
  $\Hbar^\eps(\ket{\psi_k}) \geq m(S(\rho)-1) - \delta$ holds for any
  $\delta = \omega(\log n)$.
\end{proof}

\begin{proof}[Proof of \cref{thm:unkeyed}]
  In the case where $\EFI$ exist, there exists an entropic $\EFI$ pair $\rho_0$ and
  $\rho_1$ such that $S(\rho_1) - S(\rho_0) \geq \sqrt{n}$ (if we have an
  entropy gap of at least $1$, we can generally boost it to $\sqrt{n}$ by taking
  multiple independent copies).
  We can prepare the purification of the state and measure the purification
  registers in the computational basis.
  This results in a distribution of pair $(k, \ket{\psi_k})$, where $k$ is the measurement
  outcome and $\ket{\psi_k}$ is the post-measurement state in the $\EFI$ registers. Name the distribution as $\Dist_n$.

  Then $(\ket{\psi_k}, \Dist_n)$ is a single-copy samplable state family and
  $\mathbb{E}_{k \sim \Dist_n} \ket{\psi_k}\bra{\psi_k} = \rho_0$.
  Thus, we can define a single-copy state family that with probability $1/2$
  samples according to $\rho_0$ and with probability $1/2$ samples according to
  $\rho_1$.
  According to \cref{cor:support-state-high,cor:support-state-high},
  $\Dist_n$ is a $\DGapH[r, r+\Delta]$ instance, where $r = mS(\rho)$ and
  $\Delta = \sqrt{m}$.
  Thus, according to the security of $\OnePRS$, $\DGapH$ is hard over
  $\Dist_n$ given a single copy of the state.

  Assuming $\EFI$ do not exist, we can apply the algorithm
  in~\cref{thm:efi-nu-gaph}.
  The argument is exactly the same, as we never rely on the fact that the state
  family is an efficiently samplable keyed family.
\end{proof}

\begin{remark}
    The same argument also works well for the $\GapU$ characterization of $\EFI$. With similar arguments, we can show that $\EFI$ exist if and only if $\mathsf{Double}\GapU[r, r+\Delta]$ is hard on average over some single-copy samplable state family, where $r$ is an inefficiently computable function.
\end{remark}

\paragraph*{Acknowledgements.}
Bruno C. is supported by the EPSRC project EP/Z534158/1 on "Integrated Approach
to Computational Complexity: Structure, Self-Reference and Lower Bounds". A.C. is thankful for support from the Google Research Scholar program.

\bibliographystyle{alpha}
\bibliography{bibliography}

\newpage

\appendix

\section{Quantum Kolmogorov complexities - Invariance and Equivalence}

In \Cref{sec:kolmogorov} we claimed that the definitions of $\Knet$ and $\Hbar$ are robust to changes in our choice of universal Turing machine and gate basis. We also claimed that our notion of $\Hbar$ is equivalent to the one introduced by G\'{a}cs. In this appendix we will give proofs for those three claims.

\subsection{\texorpdfstring{$\Knet$}{Knet} Invariance}\label{sec:robust-qk-gateset}

In \Cref{sec:kolmogorov}, we claimed that the definition of $\Knet$ only
changes by nearly a constant when $B$ and $U$ are universal. 
Because the definition of $\Knet$ goes through circuits which are
represented as strings, we can change our choice of universal Turing
machine while only incurring a fixed constant difference. However, because we
cannot guarantee that $B$ circuits can exactly simulate all the gates in
$B'$, it is possible that there exists states for which
$\Knet^{U,B,0}(\ket{\psi}) = \infty$ while $\Knet^{U',B',0}(\ket{\psi}) =
c$. 
Thus, an ``ideal'' invariance theorem of the following form
cannot hold:
$\forall
U,U',B,B'\exists c : \Knet^{U,B,0}(\ket{\psi}) \leq
\Knet^{U',B',0}(\ket{\psi})+c$.
However if we allow for
some very small additional error term $\delta$, and a just slightly super
constant difference, we are able to show an invariance, as the following
lemma states.

\begin{lemma}\label{lem:knet-invarience}
For any universal Turing machines $U$ and $U'$, universal quantum gate sets $B$, and $B'$ with computable amplitudes, and $m$ qubit state $\ket{\psi}$, we get that 
\begin{equation*}
\Knet^{U,B,\eps+\delta}(\ket{\psi}) \leq \Knet^{U',B',\eps}(\ket{\psi}) + O(1) + \min_{v >1/\delta}[\K(v |m)].
\end{equation*}
\end{lemma}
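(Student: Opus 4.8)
The plan is to take an optimal program $p'$ witnessing $\Knet^{U',B',\eps}(\ket\psi)$ and transform it into a program for $U$ that describes a $B$-circuit approximating the same target state, paying only a constant overhead plus the cost of naming a precision parameter. Concretely, let $p'$ be a shortest program with $U'(p') = C' \in \mathcal{C}^{B'}$ and $\abs{\bra\psi C' \ket{0^m}}^2 \geq 1-\eps$, so that $\abs{p'} = \Knet^{U',B',\eps}(\ket\psi)$.

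First I would build the program $p$ for $U$ as follows. Since $U$ and $U'$ are both universal prefix-free Turing machines, a constant-length prefix lets $U$ simulate $U'$; running $U'(p')$ recovers the description of $C'$ and, in particular, the number of qubits $m$ on which it acts. Next, $p$ carries a self-delimiting description of an integer $v$ that $U$ decodes \emph{conditioned on} $m$, so its length can be taken to be $\K(v \mid m)$. The machine $U$ then runs a fixed Solovay-Kitaev procedure to approximate each gate of $C'$ by a word in $B$, producing a $B$-circuit $C \in \mathcal{C}^B$ with $\norm{(C - C')\ket{0^m}} \le 1/v$; this step is effective precisely because $B$ is universal \emph{and} the gates of $B'$ have computable amplitudes, so the target unitary $C'$ itself can be computed to arbitrary precision. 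Finally $U$ outputs $C$. For correctness I would choose $v$ with $1/v \le \delta/2$ (any $v > 2/\delta$ works, and restricting the final minimum to $v > 1/\delta$ rather than $v > 2/\delta$ changes it by only $O(1)$, since $\K(2v \mid m) \le \K(v\mid m)+O(1)$). Then by Cauchy-Schwarz and the reverse triangle inequality for overlaps, $\abs{\bra\psi C\ket{0^m}} \ge \abs{\bra\psi C'\ket{0^m}} - 1/v \ge \sqrt{1-\eps} - 1/v$, hence $\abs{\bra\psi C\ket{0^m}}^2 \ge (1-\eps) - 2/v \ge 1 - \eps - \delta$. Thus $C$ is a valid $(\eps+\delta)$-witness for $U$ and $B$, giving $\abs{p} \le \abs{p'} + \K(v\mid m) + O(1)$; minimizing over admissible $v$ yields the claimed bound.

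The point I would stress is that although Solovay-Kitaev inflates the gate count of $C$ (to roughly $\operatorname{poly}\log(1/\delta)$ times the size of $C'$), this is irrelevant: $\Knet$ charges only for the length of the program $p$ that prints $C$, never for $\abs{C}$ itself. Consequently the only genuine cost beyond $\abs{p'}$ and the constant simulation overhead is telling $U$ the target precision, which is exactly the $\min_{v>1/\delta}[\K(v\mid m)]$ term.

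The main subtlety to get right is the prefix-free interleaving of $p'$ and the description of $v$: one reads the self-delimiting $p'$ first (so its end is detected without external markers), extracts $m$ from $C'$, and only then decodes $v$ conditioned on $m$. This ordering is what makes the \emph{conditional} complexity $\K(v\mid m)$ (rather than the unconditional $\K(v)$) the correct accounting, while keeping the gluing overhead at $O(1)$. A secondary point to verify is that the effectiveness of Solovay-Kitaev genuinely requires the computable-amplitude hypothesis on $B'$, since without it the target circuit $C'$ might not be approximable by an algorithm, which is also the reason the ideal (zero-error, constant-overhead) invariance can fail and a small $\delta$ must be tolerated.
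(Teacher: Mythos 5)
Your proof is correct and takes essentially the same route as the paper's: both transfer the optimal $U'$-program to $U$ at constant cost, pay $\min_{v>1/\delta}[\K(v\mid m)] + O(1)$ to name a precision parameter that is decoded only after $m$ is extracted from the circuit description, and then run the effective (computable-amplitude) Solovay--Kitaev algorithm gate by gate, concluding with the same overlap/error accounting. If anything, your write-up is tighter than the paper's on the two delicate points --- the Cauchy--Schwarz/reverse-triangle step bounding $\abs{\bra{\psi}C\ket{0^m}}^2 \ge 1-\eps-\delta$, and the $O(1)$ adjustment needed because the witness requires $v>2/\delta$ while the stated minimum ranges over $v>1/\delta$.
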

\begin{proof}

    For convenience we will label $\alpha = \Knet^{U',B',\eps}(\ket{\psi})$. By the definition of $\Knet$ we know that there exists some $B'$ circuit $C_{B'}$ such that $\K_{U'}(C_{B'}) = \alpha$ and $\abs{\braket{\psi|C_{B'}|0^m}}^2 \geq 1-\eps$. By the invariance of $K$ we know that $\K_{U}(C_{B'}) \leq \alpha + O(1)$.
    
    By the multi-qubit Solovay-Kitaev theorem and algorithm~\cite{DawsonN06}, for any pair of universal gate sets $B$, $B'$ with computable amplitudes there exists a constant length program $SK_{B,B'}(\eps,b)$ which takes in $\eps$ and $b \in B'$ and outputs a $B$ circuit $C_b$ approximately computing $b$ such that $\abs{\abs{C_b - b}}_\infty \leq \eps$. Define $C_B$ to be the circuit that results from replacing each of the $\abs{C_{B'}}$ gates with the circuit $SK_{B,B'}(\delta/3\abs{C_{B'}},b)$. The resulting circuit will approximate $C_{B'}$ such that $\abs{\abs{C_B - C_{B'}}}_\infty \leq \delta/3$ and consequently $\forall \ket{\psi}: \abs{\bra{\psi}C_{B}^\dagger C_{B'}\ket{\psi}} \leq 1-\delta/3$ and  $\abs{\bra{\psi}C_{B}^\dagger C_{B'}\ket{\psi}}^2 \leq 1-\delta$. 

    Let $\min_{n>1/\delta}[K(n)\mid m] = l$, then there exists a program with size $l$ that can output some $n_0>1/\delta$. If we consider an optimal program which first generates $C_{B'}$, then generates some number $1/3n_0 < \delta/3$ (this can either be done from no input or taking as input some value that the program has already generated such as the length of the output of the circuit i.e. $m$), then runs $SK_{B,B'}(1/3v\abs{C_{B'}},b)$ this will generate a state $\ket{\psi'}$ such that $|\braket{\psi'|C_{B'}|0}|^2 \leq 1/v \leq \delta$, and by the triangle inequality $|\braket{\psi'|C_{B'}|0}|^2 \leq \delta + \eps$. This program will will have length at most $\K_{U}(C_{B'}) \leq \alpha + O(1)$ to generate $C_{B'}$, plus $l + O(1)$ to generate a number ($1/3v$)  which is smaller than $1/3\delta$, plus $O(1)$ to implement $SK_{B,B'}$ and apply $SK_{B,B'}(1/3v\abs{C_{B'}},b)$ to each gate of the circuit.
\end{proof}

Note that as observed by Li and Vitanyi~\cite[sec 3.3]{LiV19},
the summand $\min_{n
>1/\delta}[\K(n)] \leq \min_{n >1/\delta}[\K(n) | m] $ $+ O(1)$ goes to
infinity more slowly than any unbounded monotonic computable function. 
Thus,
while the final term in the lemma above is not quite constant, it is
of the order
$o(f(1/\delta))$ for any monotonic unbounded computable function in
$1/\delta$, including $\log(\log(\ldots\log(1/\delta)))$ for any number of
composed logarithm's. Consequently, even for exceptionally small $\delta$,
the complexity
$\Knet^{U,B,\eps+\delta}(\ket{\psi})$ is only ever a just barely super
constant amount larger than $\Knet^{U',B',\eps}(\ket{\psi})$.

\subsection{\texorpdfstring{$\Hbar$}{H} Invariance}

As in our discussion of invariance for $\Knet$, the invariance of $\Hbar$'s
with respect to the choice of universal Turing machines follows
straightforwardly from the invariance of the prefix-free Kolmogorov
complexity of the
classical string describing a circuit. We can derive the invariance with
respect to the choice of gate set $B$ using similar ideas as those in
\Cref{lem:knet-invarience}, but here we are able to achieve the ideal version
of the invariance.

\begin{lemma}
    For any universal Turing machines $U$ and $U'$, and universal quantum gate sets $B$, and $B'$ with computable amplitudes we get that $$\Hbar^{U,B}(\ket{\psi}) \leq \Hbar^{U',B'}(\ket{\psi}) + O(1).$$
\end{lemma}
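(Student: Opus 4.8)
The plan is to reduce the statement to an operator inequality between the two universal semi-density matrices. Writing $\udm^{U,B}_n$ and $\udm^{U',B'}_n$ for the universal semi-density matrices of Definition~\ref{def:hbar} with respect to the two choices, it suffices to exhibit a constant $c>0$, independent of $n$ and of $\ket{\psi}$, with $\udm^{U,B}_n \ge c\,\udm^{U',B'}_n$ as operators. Indeed, this at once gives $\braket{\psi|\udm^{U,B}_n|\psi}\ge c\braket{\psi|\udm^{U',B'}_n|\psi}$ for every $\ket{\psi}$, and taking negative logarithms yields $\Hbar^{U,B}(\ket{\psi})\le\Hbar^{U',B'}(\ket{\psi})-\log c=\Hbar^{U',B'}(\ket{\psi})+O(1)$. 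The change of universal machine from $U'$ to $U$ is harmless, since $U$ simulates $U'$ with $O(1)$ overhead; so the only real content is the change of gate set, and from now on I focus on how $\udm^{U,B}_n$ can dominate the contribution of each $B'$-circuit state $\ket{\psi_{p'}}$.

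The construction fixes a single $O(1)$-size ``approximation routine'' and feeds it, for each $B'$-program $p'$ and each precision level $j\in\natural$, the concatenation of $p'$ with a self-delimiting code for $j$. Because $U'$ is prefix-free, the routine reads off $p'$ while consuming exactly $\abs{p'}$ bits, then reads $\gamma(j)=\log j+O(\log\log j)$ further bits for $j$ (so that $\sum_j 2^{-\gamma(j)}<\infty$); it simulates $U'(p')$ to recover the $B'$-circuit $C_{p'}$ and invokes the multi-qubit Solovay--Kitaev algorithm, exactly as in the proof of Lemma~\ref{lem:knet-invarience}, to output a $B$-circuit whose output $\ket{\tilde\psi_{p',j}}$ satisfies $\norm{\ket{\tilde\psi_{p',j}}-\ket{\psi_{p'}}}\le 2^{-j}$. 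These programs are pairwise distinct and have length $\abs{p'}+\gamma(j)+O(1)$, so they contribute $\sum_{p'}\sum_j 2^{-\abs{p'}-\gamma(j)-O(1)}\ketbra{\tilde\psi_{p',j}}{\tilde\psi_{p',j}}$ to $\udm^{U,B}_n$.

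The heart of the argument is then to show that, for each fixed target $\ket{\psi_{p'}}$, the scale-indexed mixture $M:=\sum_j 2^{-\gamma(j)}\ketbra{\tilde\psi_{p',j}}{\tilde\psi_{p',j}}$ dominates $c'\ketbra{\psi_{p'}}{\psi_{p'}}$ in operator order for some universal $c'>0$. Writing each approximant as $\cos\theta_j\ket{\psi_{p'}}+\sin\theta_j\ket{u_j}$ with $\theta_j\le 2^{-j}$ and $\ket{u_j}\perp\ket{\psi_{p'}}$, and block-decomposing $M$ along $\mathrm{span}\{\ket{\psi_{p'}}\}$ and its complement, a Schur-complement computation shows that the required bound is $c'\le M_{\phi\phi}-M_{\phi\perp}M_{\perp\perp}^{+}M_{\perp\phi}$; the geometric decay of $\theta_j$ keeps this Schur complement bounded away from zero by a constant gap (a quick computation with $\theta_j,2^{-\gamma(j)}$ geometric already gives a concrete positive value). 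Summing this over $p'$, and absorbing the $O(1)$ from the routine and the machine change, gives $\udm^{U,B}_n\ge c\,\udm^{U',B'}_n$, as desired.

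I expect this last operator-domination step to be the main obstacle, and it is precisely where $\Hbar$ behaves better than $\Knet$. No single approximation can work: for any fixed precision the approximate rank-one projector $\ketbra{\tilde\psi}{\tilde\psi}$ does not dominate the exact projector $\ketbra{\psi_{p'}}{\psi_{p'}}$ in operator order, since the two are supported on different lines. It is only the aggregate over all geometrically decreasing scales, weighted prefix-freely, that restores domination with a clean multiplicative constant. By contrast, for $\Knet$ one pays for a single circuit meeting a hard error threshold, and the cost of naming the precision cannot be amortized, which is exactly why it resurfaces as the $\min_{v>1/\delta}\K(v)$ term of Lemma~\ref{lem:knet-invarience} rather than as an $O(1)$.
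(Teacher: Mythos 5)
Your reduction to the operator inequality $\udm^{U,B}_n \ge c\,\udm^{U',B'}_n$ is legitimate (it is in fact equivalent to the lemma), and the prefix-free multi-scale encoding is fine. The gap is exactly at the step you yourself call the heart of the argument: the claim that $M=\sum_j 2^{-\gamma(j)}\ketbra{\tilde\psi_{p',j}}{\tilde\psi_{p',j}}$ dominates $c'\ketbra{\psi_{p'}}{\psi_{p'}}$ for a \emph{universal} constant $c'>0$ is false. Your Schur-complement computation is only valid when the Solovay--Kitaev error directions coincide (the rank-two case $\ket{u_j}=\ket{u}$ for all $j$), where Lagrange's identity indeed gives a constant gap. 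But Solovay--Kitaev controls only the \emph{norm} of the error, not its direction, and in the adversarial case where the $\ket{u_j}$ are pairwise orthogonal the Schur complement collapses. Concretely, with $\ket{\tilde\psi_{p',j}}=\cos\theta_j\ket{\psi_{p'}}+\sin\theta_j\ket{u_j}$ and $\braket{u_j|u_k}=\delta_{jk}$, take the unnormalized test vector $\ket{v_J}=\ket{\psi_{p'}}-\sum_{j\le J}\cot\theta_j\,\ket{u_j}$; then $\braket{\tilde\psi_{p',j}|v_J}=0$ for every $j\le J$, so
\begin{equation*}
  \frac{\bra{v_J}M\ket{v_J}}{\abs{\braket{v_J|\psi_{p'}}}^2}
  \;=\; \sum_{j>J} 2^{-\gamma(j)}\cos^{2}\theta_j \;\le\; \sum_{j>J}2^{-\gamma(j)},
\end{equation*}
which tends to $0$ as $J$ grows (and equals $0$ exactly if only finitely many scales are used, all with orthogonal error directions). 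On $n$ qubits one can have $J$ up to $2^{n}-1$, so with $\gamma(j)=\log j+O(\log\log j)$ the best possible $c'$ is forced down to roughly $\sum_{j>2^{n}}2^{-\gamma(j)}\approx 1/n$: there is no dimension-independent constant, and your argument supplies no lower bound even at that decaying scale. So the multi-scale aggregate does \emph{not} ``restore domination with a clean multiplicative constant''; that is precisely the point at which the proposal breaks.

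The paper avoids this by never trying to dominate individual projectors in operator order. It uses a \emph{single} approximation per program, at precision $2^{-2m}$ where $m$ is the number of qubits---readable from the circuit itself, so naming the precision costs only $O(1)$ bits (this, rather than amortization over scales, is the real contrast with \cref{lem:knet-invarience}). This gives, for every test state $\ket{\psi}$, the fidelity bound $\abs{\braket{\psi|\psi_{p'}}}^2 \ge \abs{\braket{\psi|\psi_{p}}}^2 - 2^{-2m+1}$, hence $\braket{\psi|\udm^{U,B}|\psi}\ge 2^{-c}\braket{\psi|\udm^{U',B'}|\psi}-2^{-2m+1}$, and the additive loss is then absorbed using the \emph{floor} $\braket{\psi|\udm^{U',B'}|\psi}\ge 2^{-m-O(1)}$, valid for every $m$-qubit state because $\Hbar$ is universally at most $m+O(1)$. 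In your operator language the same argument reads: $\ketbra{\psi_{p}}{\psi_{p}}\le\ketbra{\tilde\psi_{p}}{\tilde\psi_{p}}+2^{-2m+1}I$ and $I\le 2^{m+O(1)}\,\udm^{U,B}_m$, so summing over $p$ absorbs the error term against $\udm^{U,B}$ itself rather than against each rank-one projector. The crucial resource your proposal never invokes is this lower bound on the universal matrix; without it, no amount of multi-scale aggregation can produce a universal constant, and with it, a single scale slaved to $m$ suffices.
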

\begin{proof}
    We label the set of prefix free programs as $P \subset \{0,1\}^*$, the
    circuit given by $U(p)$ as $C_{U,B,p}$, and the state
    $C_{U',B',p}\ket{0^m}$ as $\ket{\psi_p}$.

    For any $U,U'$, there exists some constant length program $p_{U,U'}$
    such that $U(p_{U,U'},p) = U'(p)$. There also exists a constant length
    program which if given a quantum circuit will output the length $m$ of
    the quantum state it operates on. And by applying the multi-qubit
    Solovay-Kitaev theorem and algorithm to entire circuits we get that for
    any two computable gate sets $B,B'$ there exists a constant length
    program $SK_{B,B'}(2^{-2m}, C_{B'}) \rightarrow C_B$ such that $\forall
    \ket{\psi}: \abs{\bra{\psi}C_{B'}\ket{0} - \bra{\psi}C_{B}\ket{0}} \leq
    2^{-2m}$ and consequently $\abs{\bra{\psi}C_{B'}\ket{0}}^2 -
    \abs{\bra{\psi}C_{B}\ket{0}}^2 \leq 2^{-2m+1}$. 
    
    For each $U',B'$ program $p$ consider the following $U,B$ program $p'$: first run $U(p_{U,U'},p) = U'(p) = C_{U',B',p}$, second extract the length of the resulting state $m$ from  $C_{U',B',p}$, third run $SK_{B,B'}(2^{-2m}, U(p_{U,U'},p))$ and call the resulting $B$ circuit $C'_{U,B,p}$, finally compute and output $C'_{U,B,p}\ket{0}$ which will call $\ket{\psi_{p'}}$. Since all three steps are computable by constant length programs there exists some constant $c$ such that for each $p$, $\abs{p'} \leq \abs{p} + c$.

    By the definition of $\Hbar$ we know that

    \begin{equation*}
        \begin{split}
        \Hbar^{U,B}(\ket{\psi}) & \geq -\log\left(\sum_{p \in P} 2^{-\abs{p'}} \abs{\braket{\psi | \psi_{p'}}}^2 \right)\\
        & \geq -\log\left(\sum_{p \in P} 2^{-\abs{p}-c} \left(\abs{\braket{\psi | \psi_{p}}}^2 - 2^{-2m+1}\right)\right) \\
        & \geq -\log\left(-2^{-2m+1} + \sum_{p \in P} 2^{-\abs{p}-c} \left(\abs{\braket{\psi | \psi_{p}}}^2\right)\right) \\
        & \geq -\log\left(-2^{-2m+1} + 2^{-c}\sum_{p \in P} 2^{-\abs{p}} \left(\abs{\braket{\psi | \psi_{p}}}^2\right)\right) \\
        & \geq -\log\left(2^{-c-1}\sum_{p \in P} 2^{-\abs{p}} \left(\abs{\braket{\psi | \psi_{p}}}^2\right)\right) \\
        & = c+1+\Hbar^{U',B'}(\ket{\psi}).
        \end{split}
    \end{equation*}

    The second to last inequality follows from the fact that $\Hbar(\ket{\psi})$ is at most $m + c'$ for some global constant $c'$, meaning $\sum_{p \in P} 2^{-\abs{p}} \left(\abs{\braket{\psi | \psi_{p}}}^2\right) \geq 2^{-m-c'}$, and consequently the right term is at least $2^{-m-c'-c}$ which is more than twice $2^{-2m+1}$.
\end{proof}

\subsection{Equivalence of \texorpdfstring{$\Hbar$}{H} notions}

G\'{a}cs' original version of $\Hbar$ is defined by defining the universal semi-mixed state
$$\udm = \sum_{p \in P}2^{-|p|}\ketbra{\phi_p}{\phi_p},$$
where $\ket{\phi_p}$ is the state corresponding to reading the output of $U(p)$ as an amplitude vector with each amplitude being read as an algebraic number. 
The 
difference between our approaches is to interpret $U(p)$ as quantum circuits.

Given our argument for the previous lemma, it suffices to prove the equivalence between our notions to show that for any universal gate set $B$ there exists a constant length classical Turing machine which maps from algebraic amplitude vectors representing $\ket{\phi}$ to quantum circuits $C_B$ such that 

$$\abs{\bra{\phi}C_B\ket{0^m}} \geq 1 - 2^{-2m}.$$

There exists a constant length program $M$ which maps algebraic numbers in
$[0,1]$ and error parameters $(a,\eps)$ to standard binary rationals $x$
where $a-x \leq \eps$
The rational amplitude vector
$\ket{r}$ resulting from running $M(a,2^{-3m})$ on each of the $2^m$
algebraic amplitudes and normalizing the vector will satisfy $\braket{r |
\phi} \geq 1-2^{-3m+1}$. For each $B$, there exists a constant length
program which given a rational amplitude vector for $\ket{\phi}$ and error
term $\eps$ can iterate through all possible $B$-circuits until it finds
one such that $\bra{\phi}C_B\ket{0} \geq 1-\eps$, such a circuit is
guaranteed to exist by the Solovay-Kitaev theorem. Running this program on
$(\ket{r}, 2^{-3m})$ will give us the desired circuit.

\end{document}